\newcommand{\ie}{\emph{i.e.}, }
\newcommand{\eg}{\emph{e.g.}, }
\newtheorem{remark}{Remark}
\newtheorem{example}{Example}
\newtheorem{definition}{Definition}
\newtheorem{proposition}{Proposition}
\newtheorem{proof}{Proof}
\newtheorem{theorem}{Theorem}
\newcommand{\react}[1]{\ensuremath{\xrightharpoonup{\hbox{\makebox[.5cm][c]{\scriptsize $#1$}}}}}
\newcommand{\revreact}[2]{\ensuremath{  \xrightleftharpoons[\hbox{\makebox[.5cm][c]{\scriptsize$#2$}}]{\hbox{\makebox[.5cm][c]{\scriptsize$#1$}}}}}
\title{Robustness and resilience of dynamical networks in biology and epidemiology} 
\author[1]{Daniele Proverbio}
\author[1]{Rami Katz}
\author[1]{Giulia Giordano}
\affil[1]{\small Department of Industrial Engineering, University of Trento, Trento 38123, IT }
\date{}
\begin{document}

\maketitle

\begin{abstract}
Systems in nature are extremely robust and resilient, despite uncertainties and variability: they have evolved to preserve fundamental properties and qualitative behaviours that are crucial for survival, even in hostile and ever-changing environments. Studying their nonlinear dynamic behaviour is challenging, due to their complexity and to the many parameters at play, but is crucial to understand important phenomena at all scales, including for instance cellular dynamics, onset of diseases, and epidemic spreading.

Robustness and resilience capture a system's ability to maintain and recover its functions despite uncertainties, fluctuations and perturbations, both intrinsic and extrinsic. Here, we survey the competing definitions of these concepts adopted in different disciplines. Then, given a family of uncertain dynamical networked systems characterised by a structure (the interconnection topology, along with qualitative features of the individual dynamic units and interconnections) and by uncertain or unknown parameters, we provide an overview of methodologies to assess whether a property is structural (parameter-free) or robust (preserved for parameter variations within an uncertainty bounding set) for the family, and we discuss integrated structural and probabilistic approaches for biological and epidemiological systems. Also, we introduce possible formal definitions of resilience for a family of systems consisting of stochastic perturbations of a nominal deterministic system, to probabilistically quantify the system's ability to preserve a prescribed attractor; our proposed definitions generalise the notion of probabilistic robustness, and offer insight into well-known biological models.
Finally, we provide an overview of resilience indicators and of data-driven approaches to detect resilience loss and regime shifts that build upon bifurcation analysis.

These methodologies can strongly support the analysis and the control of complex uncertain systems in nature, including the analysis and the design of biomolecular feedback systems with a desired behaviour, the identification of therapeutic targets, the prediction and the control of epidemics, and the detection of tipping points and regime shifts in complex systems.
\end{abstract}

\subsubsection*{Published version}
The final publication is available from \textbf{now publishers} via \url{https://doi.org/10.1561/260000003}. 

\noindent \textbf{Suggested citation:} Daniele Proverbio, Rami Katz and Giulia Giordano (2025), “Robustness and Resilience of Dynamical Networks in Biology and Epidemiology”, Foundations and Trends in Systems and Control: Vol. 12, No. 2-3, pp 112–360. DOI: 10.1561/2600000036.

\subsubsection*{Acknowledgements}
This work was funded by the European Union through the ERC project INSPIRE  (project number 101076926). Views and opinions expressed are however those of the authors only and do not necessarily reflect those of the European Union or the European Research Council. Neither the European Union nor the granting authority can be held responsible for them.

\newpage
\tableofcontents

\newpage
\section{Introduction, motivation and preliminaries}
\label{ch:intro}

The importance of understanding the dynamics of life at all scales, from cells to organs, from organisms to
ecosystems, is matched by the challenge posed by their complexity and uncertainty -- paralleled, in turn, by
their astounding robustness and ability to maintain life-preserving properties in the most diverse operating
conditions, thanks to their underlying interconnection structure.
In view of their multifaceted complexity, systems in nature require joint efforts and methodological tools from various disciplines to be understood. In the past decades, the field of systems biology \citep{alon2019introduction, Doyle2006b, Kitano2002} has grown at the crossroads of physics, engineering, mathematics, biomedicine and biology. Its aim is to study natural systems, ranging from biomolecular interactions \citep{alon2019introduction,Wellstead2008} to dynamics in physiology \citep{keener2009mathematical} and medicine \citep{Barabasi2011,Belair1995,Piro2012}, up to population dynamics and epidemics \citep{anderson1991infectious,Brauer2012,edelstein2005}. Building on mathematical disciplines, systems biology promotes quantitative modelling and analysis to address pressing questions on the nature of systems in the life sciences, among which the investigation of their robustness \citep{Khammash2016,Kitano2004b} and resilience \citep{Dai2015} that can be found at all scales, from protein interactions to gene regulatory networks, from single cells to whole organisms, species and ecosystems.

Robustness and resilience concern the ability of a system to preserve its functions despite uncertainties, fluctuations and perturbations, both intrinsic and extrinsic. In addition to the concept of \emph{robustness}, fruitfully introduced and widely investigated in engineering to deal with parametric uncertainties and variability or with unmodelled dynamics, the analysis of systems in the life sciences often requires the complementary concept of \emph{resilience}, which addresses the system's ability to maintain a functional operational regime in spite of exogenous disturbances and noise. In particular, resilience often implies recovery after extreme or unforeseen events, which is conceptually different from robustness.

For their survival, biological systems evolved to preserve several properties \citep{Alon2003,Chen2007}, often mutually intertwined \citep{Sneppen2010}, while maintaining the ability to adapt and thrive in uncertain environments \citep{Siami2020}. Robust and resilient behaviours can be observed at all spatial and temporal scales: cells maintain properties related to, \eg homeostasis \citep{Aoki2019,Giordano2019,Palumbo2013,Tang2016}, signalling \citep{Shinar2007, Steuer2011}, protein transport regulation \citep{Giordano2018}, oscillations and multi-stability \citep{Angeli2008,Atkinson2003,Blanchini2014structural,Blanchini2015structuralclass,Blanchini2018homogeneous,Hori2013,Katz2025osc,Leite2010} and swarming \citep{proverbio2020assessing,proverbio2024chemotaxis,proverbio2025threshold,Reginato2025,romeralo}; complex organisms preserve regulatory functions such as circadian rhythms \citep{Doyle2006,Doyle2008,Goldbeter2012,Li2007,stelling2004robustness}, organism-level homeostasis \citep{Burlando2019} and mechanisms causing pathogenesis \citep{Burlando2020,Burlando2022,Demori2022,Demori2024,Mucci2020}; consistent patterns arise in multicellular organisms \citep{Arcak2013}, in population dynamics \citep{edelstein2005,Giordano2017qualitative} and in pathogen spreading that causes epidemics \citep{Alamo2021,Blanchini2021,Brauer2012,Giordano2020,Giordano2021,HernandezVargas2022}. How can we understand robustness and resilience in systems in the life sciences, and analyse the persistence of repeated patterns and functions? 

Along with experimental works addressing biological mechanisms, mathematical modelling and analysis are pivotal to identify the dynamical mechanisms ensuring robust and resilient behaviours. Crucial theoretical insights drive knowledge discovery (\eg extrapolation from sparse observations), prediction of unexpected or paradoxical outcomes, and synthesis of multiple results from reductionist models. Mathematical models allow us to provide interpretations and insight based on formal proofs and analytical results; they ``bridge the scales'', lifting results from micro-, to meso-, to macroscopic scales \citep{Almocera2018, Axer2022, Hart2020, Tegner2016}; they contribute to hypothesis building and validation \citep{Anderson2009,Bates2011,Giordano2018,proverbio2022classification}; they predict relevant properties and future outcomes, along with the analysis of scenarios and counterfactuals. In addition, models can unravel natural behaviours that are determined by global dynamical interplays, in networks of interactions, rather than by local (\eg biochemical) mechanisms \citep{Conrad2008}. Modelling enables not just the analysis, but also the (optimal) control of systems, to steer them towards the desired function and avoid undesired regime shifts within plausible spaces of operation \citep{CalaCampana2024,Cosentino2012,Doyle2008,Giordano2016convex,Hernandez-Vargas2010,Lenhart2007,Morris2021}.

Complex biological systems have an underlying \emph{network structure}, \ie can be seen as ensembles of entities (nodes, vertices) interacting through suitable interconnections (links, arcs, edges); consider for instance gene regulatory networks \citep{Andrecut2011a,Pio2022}, differentiation pathways \citep{Ghaffarizadeh2014}, metabolic regulation \citep{Amara2022, Arcak2007, Liu2017} and proteomics \citep{Hoffman2017}, neural wiring among and across brain regions \citep{alhourani2015network, Elam2021, Lynn2019}, human diseases \citep{Goh2007,Hammond2007, Royer2022}, as well as disease diffusion across connected regions \citep{alutto2024dynamic,arino2005multi, bansal2007individual,keeling2005networks}.
Interestingly, qualitative behaviours are often preserved even under huge parameter variations, uncertainties and environmental fluctuations \citep{Alon1999,Blanchini2011,Chesi2011b,Khammash2016,Kim2006,Kitano2004b,Shinar2009,Shinar2010,stelling2004robustness,Streif2016,Waldherr2011}, because they rely on the system's interconnection \emph{structure}, related to the network topology. Parameter-free structural approaches can check whether a property is preserved for a whole family of uncertain systems, exclusively due to its \emph{structure} \citep{blanchini2021structural}; when a property fails to hold structurally, robust analysis or probabilistic approaches can be used to understand why this occurs, which system features prevent the property from holding, and which key parameters must be finely tuned to enforce it.
Networks offer thereby a useful framework to study robustness and resilience \citep{Aldana2003,Li2022,Liu2020b,Vitkup2004}. Leveraging the concept of feedback loop and dynamic evolution, the systems-and-control community has developed versatile and powerful methodologies to formulate, investigate and solve relevant problems in the life sciences 
\citep{Atkinson1965,Blanchini2018,blanchini2021structural,samad2006,stelling2002}. In particular, tools to deal with uncertain systems and robustness \citep{Barmish1994,sanchezpena1998,zhou1998essentials} have found successful applications in the realm of systems biology and synthetic biology \citep{Baetica2019,DelVecchio2016}. Further insight emerges when bridging concepts from sibling disciplines, such as network science, statistical physics, mathematical biology: control-theoretic approaches can complement methods from other disciplines, and offer theoretical rigour to computational tools relying on parametric search and to data-driven statistical methods. A control-theoretic framework is also well-suited to embed interdisciplinary research questions and methodologies, and to promote modelling across scales \citep{Doyle2006,Doyle2008,Garabed2019} as well as the design of a system-level integration of interconnected modules that takes into account feedback and retroactivity effects \citep{DelVecchio2008,DelVecchio2009,Jayanthi2011,McBride2019}.

\subsection{Aims and scope}
\label{sec:scope}

Focusing on biological and epidemiological systems, this monograph presents a primer on the concepts of \emph{robustness} and \emph{resilience} of complex dynamical networks. These concepts are often used in the life sciences without a rigorous formal definition; here we review their different meanings in various disciplines and, in Chapters~\ref{ch:struct-an} and \ref{ch:rob-res-sta-mod}, we provide a comprehensive framework to address them with mathematical rigour. 
In biology and epidemiology, full knowledge of the system parameters is often impossible, because they cannot be measured or quantified, they are unobservable or they are highly uncertain and time-varying \citep{Bailey2001,Liu2013,Roda2020}; still, natural systems are able to exhibit extraordinarily robust and resilient behaviours, preserved in spite of parametric uncertainties, intrinsic variability, environmental fluctuations, as well as noise, perturbations and external disturbances.
Classic examples of biological robustness are the circadian clock, which keeps oscillating at the same frequency across orders of magnitude of variations in its parameters  \citep{Bagheri2008,Doyle2006,stelling2004robustness}, and the tumbling frequency in bacterial chemotaxis, which remains unaffected by step perturbations and chemical gradients \citep{Alon1999,Barkai1997}.
How can we rigorously explain the robustness and resilience of a system, or the loss thereof?
And which are the minimal \emph{structural} requirements to predict the emergence of robust and resilient properties in the \emph{absence of quantitative information} about the system, or in the presence of \emph{limited quantitative information} -- often of a probabilistic nature?

A thorough analysis of the robustness and resilience of biological model can be precious to: better understand how natural systems can preserve fundamental properties in different environmental conditions; untangle the interplay of multiple parameters, allowing the systematic analysis of several intertwined factors; identify the crucial parameters in the system dynamics or the crucial motifs \citep{Alon2007,Milo2002a,Stone2019,yeger2004} in the system structure, which can become candidate targets for control actions, therapies and interventions to guarantee the desired system behaviour; inspire the design of robust artificial systems having designated properties and functions; falsify models, by revealing discrepancies between model predictions and experimental observations \citep{Angeli2012}; and inform better modelling approaches for faithful ``digital twins'' -- namely, \emph{in-silico} surrogate models that emulate the behaviour of the original system -- to support analysis, hypothesis testing, predictions of system behaviours and synthesis of control interventions, drugs and treatments. The insight gained from the analysis of models is particularly precious to design control strategies at various levels, ranging from biomolecular controllers to pharmaceutical and non-pharmaceutical interventions to contrast epidemics.

In this monograph, we restrict our scope to dynamical models of ordinary differential equations (ODEs) characterised by a known interconnection structure and by unknown or uncertain parameters (see the examples in Section~\ref{sec:models}). We focus on qualitative structural (\ie parameter-free) methods \citep{Blanchini2012,blanchini2021structural,Jacquezt1993,Kaltenbach2009,Reder1988,Shinar2010}, designed to identify, understand and predict robustness \citep{Chesi2011b,Hara2019,Kim2006,Ma2002,Shinar2007,Shinar2009,stelling2004robustness,Steuer2011,Streif2016,Waldherr2011} and resilience \citep{bhamra2011resilience,carpenter2012general,Dakos2015,fraccascia2018resilience,fisher2015more,Gao2016,gunderson2000ecological,holling1996engineering,Liu2020b,meyer2016mathematical} of biological systems at all scales. Conversely, we do not focus on complementary quantitative and numerical methods to establish parametric robustness and sensitivity, nor on the abundant literature on network robustness in engineered systems (see, \eg \cite{weerakkody2019resilient}). Also, we do not dwell on the vast literature on synthetic biology \citep{Arpino2013,Batt2007,benner2005synthetic,Cosentino2012,Csete2002, DelVecchio2015,DelVecchio2016,DelVecchio2018, Hsiao2018,Koeppl2011,Sootla2016}, which nonetheless can leverage our presented approaches for the design and engineering of robust and resilient systems. Moreover, we do not review here work on network identification \citep{Schön2011,chiuso2012bayesian,Papachristodoulou2007,Porreca2008,Porreca2012,Yu2007}, including dynamical structure reconstruction \citep{gonccalves2008necessary} and reconstruction of the unknown topology in networks of dynamical systems \citep{materassi2010topological,materassi2012problem}, as well as work on system reconstruction or network reconstruction \citep{aalto2020gene,peixoto2019network,sutulovic2025differentiator,runge2018causal} tailored to biological systems, even though the identification of a precise network topology (\emph{structure}) is an indispensable prerequisite for the structural analysis discussed in Chapter~\ref{ch:struct-an}.

In our overview, we aim to establish a connection between control theory and sibling disciplines; hence, several pointers will be included to interdisciplinary studies. However, the literature on mathematical models and approaches for biology and epidemiology is too vast to be completely explored here. The interested reader is referred to the many dedicated special issues \citep{Albert2018,Allgower2011,Arcak2019,Chesi2011, Khammash2008, Yue2010}, surveys \citep{Ashwin2016d,Brockmann2013, Hethcote2000, Iglesias2007,Pastor2015, Sontag2005} and books \citep{J.S.Allen2014, allman2004mathematical,Blackwood2018, Brauer2008, edelstein2005, Iglesias2010, Ingalls2013,Jayaraman2009,Lenhart2007,martcheva2015introduction,Nowzari2016,Queinnec2007,Segel2013, strogatz2018nonlinear,Szallasi2006}, and the references therein. Similarly, we do not survey the whole biological, ecological and epidemiological literature unveiling robust system properties (see, \eg \cite{Alberts2015,Alon1999,Barkai1997,Bertolaso2019,Breindl2011,dethier2015positive,samad2005b,Faulon2004,felix2015,Ghaemi2009,Hernandez2009,Kitano2007,Little1999,Paulino2019,Qian2021,Venturelli2012}), although we provide pointers to some specific publications throughout the text.

We focus on ``classical'' low-dimensional networks \citep{Barabasi2013}, amenable to transparent examples and analysis, with continuous-time dynamics. Large-scale networks \citep{Hori2013,Prill2005}, higher-order networks \citep{Battiston2021,bianconi2021higher,cisneros2021multigroup}, as well as methods for model order reduction \citep{Feliu2012,Rao2013,schaft2015,Snowden2017} and dimension reduction \citep{Laurence2019, Vegue2023} are briefly overviewed in Chapter~\ref{ch:unch-dir}. Maps and discrete models, such as Boolean networks \citep{Chaves2005,Chaves2006,Saadatpour2011} and Petri nets \citep{Angeli2007,Angeli2011, Chaouiya2007,Koch2015}, are also not included in our survey. Moreover, we do not consider agent-based models, which are widely used to simulate systems in the life sciences, but typically not amenable for analytical treatment  \citep{Borshchev2004a,DG2023model,DG2023,DG2024,Fan2022,kerr2021covasim,leonard2024fast,Pagliara2021,PTG2025,Sturniolo2021}.

In addition to introducing the scope of this monograph, the present chapter:
presents the considered classes of models and networks, along with examples from biology and epidemiology;
starts introducing the concept of structure, in relation to graph theory and network theory;
discusses the different concepts of robustness, resilience and stability as understood in various disciplines.

Then, Chapter~\ref{ch:struct-an} briefly surveys the concepts of robustness and of structural analysis for uncertain dynamical systems, aimed at unveiling the mechanisms that enable robustness \emph{qualitatively}, \ie when only the parameter-free description of a \emph{family} of systems is available.

Chapter~\ref{ch:rob-res-sta-mod} compares the notions of robustness and resilience, and motivates the need for a concept of resilience that is distinct from that of robustness in order to assess important properties of natural systems. We introduce novel formal definitions of resilience that enable a more quantitative analysis of dynamical systems in the presence of an exogenous stochastic noise.

Chapter~\ref{ch:res-ind} builds upon the notion of resilience to discuss the development of resilience indicators, their efficacy and their applicability in different domains:
at first low-dimensional systems are considered, then the analysis is expanded to network resilience, and examples from biology and epidemiology are proposed.

Finally, Chapter~\ref{ch:conclusion} offers concluding remarks and outlines promising open challenges in the field, not fully surveyed in other chapters or under recent investigation.

\subsection{Considered classes of models}\label{sec:models}

Natural dynamics range from chemical reaction networks, gene regulatory networks and biomolecular systems, arising from the interplay of biochemical species and molecules \citep{alon2019introduction,Angeli2009c,Cosentino2012,DelVecchio2015,Feinberg2019,Iglesias2010,samad2006,Sontag2005}, to ecological networks, arising from the interactions among the species sharing an ecosystem \citep{Brauer2012,edelstein2005}, and epidemiological systems, describing the spread of infectious diseases \citep{Brauer2012,martcheva2015introduction,Nowzari2016,Pastor2015}. Simple examples of such systems are the Incoherent Feed-Forward Loop in gene regulation \citep{Goentoro2009,Kaplan2008}, the Lotka-Volterra predator-prey system \citep{edelstein2005}, the Susceptible-Infected-Recovered epidemic model \citep{Brauer2012}.
They all can be modelled as nonlinear dynamical networks, namely, interconnections of interacting dynamical systems, whose intrinsic nonlinearities can lead to emergent phenomena \citep{Epstein1998} such as persistent oscillations (circadian rhythms, see \cite{Bagheri2008,stelling2004robustness}; predator-prey dynamics, see \cite{edelstein2005}; epidemic waves, see \cite{Brauer2012}) and multi-stability (cell cycle, see \cite{Angeli2004detection}; pathogenesis, see \cite{Cloutier2012}; multiple ecosystem configurations and community compositions, see \cite{Brauer2012,edelstein2005}; disease-free or endemic equilibria, see \cite{Brauer2012,edelstein2005}).

To study dynamical systems in biology and epidemiology, ODE-based models are widely employed.
In fact, such systems aptly mimic the evolution of natural systems, at least under the assumption of high density of elements, such as a high copy number of expressed genes or proteins in reaction networks, or a sufficiently large number of (infectious) individuals in epidemiological models. Importantly, these models are often amenable to analytical results (and not just numerical simulation campaigns) that enable conceptual insight, verifiable predictions and control.

Let us denote the set of non-negative real numbers as $\mathbb{R}_{\geq 0}$ and of positive real numbers as $\mathbb{R}_{> 0}$. We consider families of nonlinear ODE systems of the form
\begin{equation}\label{eq:modelclass}
\dot x(t) = f(x(t),\vartheta) \, ,
\end{equation}
where $x \in \mathbb{R}_{\geq 0}^n$ is a vector of species concentrations or population densities, while the vector function $f \colon \mathbb{R}_{\geq 0}^n \times \mathbb{R}_{\geq 0}^q \to \mathbb{R}^n$ depends on the parameters $\vartheta \in \mathbb{R}_{\geq 0}^q$ (which may be in turn functions, \eg of time or temperature). Both $f$ and $\vartheta$ may be uncertain, or even unknown; in a more general formulation, they may also be time-varying.

As an example, protein production or degradation may be modelled as a function of one or more of the state variables $x_k$, whose form can be, \eg polynomial (mass action kinetics, $f_i=\kappa \prod_{k=1}^n x_k^{a_k}$, with $a_k \in \mathbb{N}_0$ and $\kappa \in \mathbb{R}_{> 0}$, see \cite{Feinberg2019}) or saturating (Michaelis-Menten or Hill kinetics, $f_i=\kappa \frac{x_k^\ell}{1+\alpha x_k^\ell}$, with $\ell, \alpha, \kappa \in \mathbb{R}_{> 0}$, see \cite{alon2019introduction}), and it may be affected by the temperature. As another example, the force of infection in an epidemiological model may depend on the density $x_h$ of infected individuals either linearly (\eg $f_i=\kappa x_j x_h$, where $x_j$ is the density of susceptible individuals and $ \kappa \in \mathbb{R}_{> 0}$) or nonlinearly (\eg through Holling-type functions, $f_i=\kappa x_j \frac{x_h^p}{1+\beta x_h^q}$, where $x_j$ is the density of susceptible individuals and $p,q,\beta, \kappa  \in \mathbb{R}_{> 0}$, see \cite{Capasso1978}), and it may be affected by environmental factors, seasonality, or non-pharmaceutical interventions.

Often, the functions $f_i$, $i=1,\dots,n$, are monotonic in all their variables: for all pairs $i$ and $j$, $\partial f_i(x,\vartheta) / \partial x_j$ is sign-definite (either non-negative or non-positive, or identically zero) for all $\vartheta$ in the domain of interest. This property expresses either inherent pairwise cooperation (activation, reinforcement) or inherent pairwise competition (inhibition, suppression) among the variables.

System \eqref{eq:modelclass} is endowed with a known \emph{network structure}, which captures the pattern of relations among the state variables.

Here, we provide some examples of ODE models in biology and epidemiology that will be used throughout the monograph to illustrate the discussed definitions and results, and we highlight their network structure through graph representations.

\subsubsection{Dynamical networks in biology}\label{sec:dninb}

Biological dynamics can be described by the ODE system
\begin{equation}
    \dot{x}(t) = Sg(x(t),\vartheta) + g_0(\vartheta) \, ,
    \label{eq:ode-system}
\end{equation}
where $S \in \mathbb{R}^{n \times m}$ is a constant matrix that represents the \emph{interconnection structure} of the system and is independent of both the states $x \in \mathbb{R}_{\geq 0}^n$ (representing species concentrations in (bio)chemical reaction networks, and population densities in epidemiological or ecological systems) and the uncertain parameters $\vartheta \in \mathbb{R}_{\geq 0}^q$; in the case of (bio)chemical reaction networks, $S \in \mathbb{Z}^{n \times m}$ is the stoichiometric matrix associated with the chemical reactions \citep{Angeli2009c,blanchini2021structural,Feinberg2019}. Moreover, $g \colon \mathbb{R}^n_{\geq 0} \times \mathbb{R}^q_{\geq 0} \to \mathbb{R}^m_{\geq 0}$ is a vector of uncertain ``reaction rate'' functions (which represent reaction rates strictly speaking in (bio)chemical reaction networks, and population growth rates in epidemiological or ecological systems). The components $g_i$, $i=1,\dots,m$, are non-negative functions, typically monotonic in the state variables. In the special case in which the law of mass action applies, the components $g_i$ are polynomial functions of the species concentrations; for instance, for a chemical reaction network $p_1 X_1 + p_2 X_2 + \dots + p_n X_n \react{g_i} q_1 X_1 + q_2 X_2 + \dots + q_n X_n$, the rate has the form $g_i = \vartheta_i \prod_{h=1}^n x_h^{p_h}$. Vector $g_0 \colon \mathbb{R}^q_{\geq 0} \to \mathbb{R}^n_{\geq 0}$ represents influxes from the external environment.

Systems of the form \eqref{eq:modelclass} can be cast in the form \eqref{eq:ode-system} by setting $S=I$ (the identity matrix), $g=f$, and $g_0=0$; and vice versa, systems of the form \eqref{eq:ode-system} can be written in the form \eqref{eq:modelclass} by setting $f = Sg+g_0$.

Chemical reaction network systems can be represented as \emph{flow graphs}, where the vertices are associated with chemical species and the edges are associated with reactions turning molecules of some species into molecules of other species \citep{blanchini2021structural}.

Below, we provide three examples of this class of systems.

\begin{figure}[h!]
	\centering
	\includegraphics[width=\textwidth]{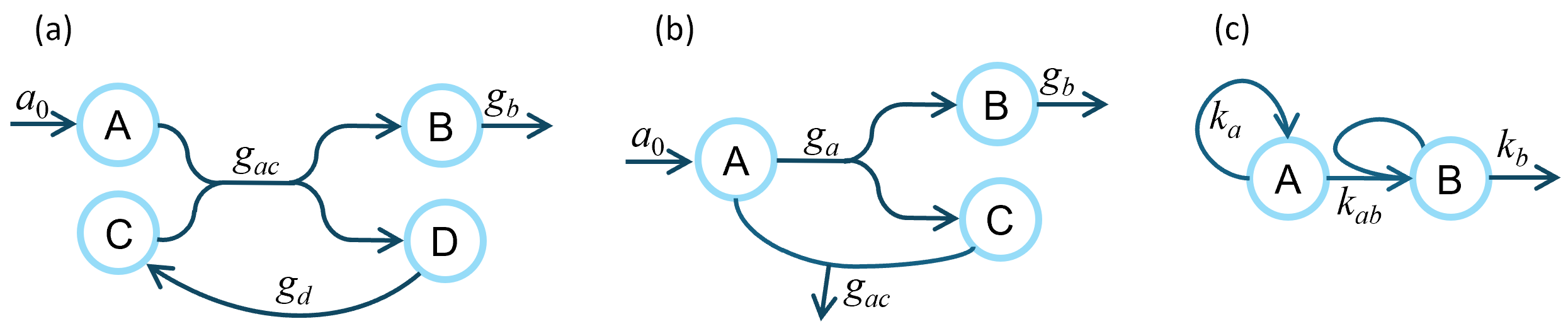}
	\caption{\footnotesize Flow graph representation of the chemical reaction networks associated with (a) the biomolecular network \eqref{eq:biomolec}, (b) the chemical reaction network \eqref{eq:biomolec2} and (c) the Lotka-Volterra system \eqref{eq:LV}.}
	\label{fig:CRNflowgraphs}
\end{figure}

\begin{example}[\textbf{Biomolecular network}]\label{ex:biomol}
The biomolecular network \citep{Chen2010} involving the reactions $\emptyset \react{a_0} A$, $A+C \react{g_{ac}} B+D$, $D \react{g_d} C$ and $B \react{g_b} \emptyset$ corresponds to the dynamical system
\begin{equation}\label{eq:biomolec}
\begin{cases}
\dot a(t) =  a_0 - g_{ac}(a(t),c(t),\vartheta)\, ,\\
\dot b(t) =  g_{ac}(a(t),c(t),\vartheta) - g_b(b(t),\vartheta)\, ,\\
\dot c(t) =  g_d(d(t),\vartheta)-g_{ac}(a(t),c(t),\vartheta)\, ,\\
\dot d(t) = g_{ac}(a(t),c(t),\vartheta)-g_d(d(t),\vartheta) \, ,
\end{cases}    
\end{equation}
where functions $g_k$ are unknown and assumed to be monotonic in each of their arguments; the system can be written in the form \eqref{eq:ode-system} with $x = \begin{bmatrix}
 a \\ b \\ c \\ d
\end{bmatrix}$, $S = \begin{bmatrix}
-1 & 0 & 0\\
1 & -1 & 0\\
-1 & 0 & 1\\
1 & 0 & -1
\end{bmatrix}$, $g(x,\vartheta) = \begin{bmatrix}
g_{ac}(a,c,\vartheta) \\ g_b(b,\vartheta) \\ g_d(d,\vartheta)
\end{bmatrix}$ and $g_0=\begin{bmatrix}
a_0 \\ 0 \\ 0 \\ 0
\end{bmatrix}$.
The flow graph associated with system \eqref{eq:biomolec} is shown in Figure~\ref{fig:CRNflowgraphs}a.
\end{example}

\begin{example}[\textbf{Chemical reaction network}]\label{ex:CRN}
The time evolution of the concentration of the chemical species involved in the reactions $\emptyset \react{a_0} A$, $A \react{g_{a}} B+C$, $A+C \react{g_{ac}} \emptyset$ and $B \react{g_b} \emptyset$ can be described by a system of the form \eqref{eq:ode-system}; in fact, it corresponds to the dynamical system
\begin{equation}\label{eq:biomolec2}
\begin{cases}
\dot a(t) =  a_0 - g_{a}(a(t),\vartheta)-g_{ac}(a(t),c(t),\vartheta)\, ,\\
\dot b(t) =  g_{a}(a(t),\vartheta) - g_b(b(t),\vartheta)\, ,\\
\dot c(t) =  g_{a}(a(t),\vartheta)-g_{ac}(a(t),c(t),\vartheta)\, ,
\end{cases}    
\end{equation}
so $x = \begin{bmatrix}
 a \\ b \\ c
\end{bmatrix}$, $S = \begin{bmatrix}
-1 & 0 & -1\\
1 & -1 & 0\\
1 & 0 & -1
\end{bmatrix}$, $g(x,\vartheta) = \begin{bmatrix}
g_a(a,\vartheta) \\ g_b(b,\vartheta) \\ g_{ac}(a,c,\vartheta)
\end{bmatrix}$, $g_0=\begin{bmatrix}
a_0 \\ 0 \\ 0
\end{bmatrix}$.
The flow graph associated with system \eqref{eq:biomolec2} is shown in Figure~\ref{fig:CRNflowgraphs}b.
\end{example}

\begin{example}[\textbf{Lotka-Volterra system}]\label{ex:LV}
Lotka-Volterra predator-prey dynamics can be associated with a chemical reaction network, where the prey $A$ reproduces through the auto-catalytic reaction $A \react{k_a} 2 A$, interactions between prey $A$ and predator $B$ cause a decrease in prey and a corresponding increase in predators through the conversion reaction $A+B \react{k_{ab}} 2B$, and predator $B$ decreases through the degradation reaction $B \react{k_b} \emptyset$. The resulting model under the law of mass action
\begin{equation}\label{eq:LV}
\begin{cases}
\dot a(t) = k_a a(t) - k_{ab} a(t) b(t) \, ,\\
\dot b(t) = -k_b b(t) + k_{ab} a(t) b(t) \, ,
\end{cases}    
\end{equation}
is a system of the form \eqref{eq:ode-system} with $x = \begin{bmatrix}
 a & b
\end{bmatrix}^\top$, $\vartheta = \begin{bmatrix}
k_a & k_{ab} & k_b
\end{bmatrix}^\top$, $S = \begin{bmatrix}
1 & -1 & 0\\
0 & 1 & -1
\end{bmatrix}$, $g(x,\vartheta) = \begin{bmatrix}
k_a a & k_{ab} a b & k_b b
\end{bmatrix}^\top$, $g_0=0$.
The flow graph associated with system \eqref{eq:LV} is shown in Figure~\ref{fig:CRNflowgraphs}c.
\end{example}

A special case within the class of models \eqref{eq:modelclass}, or equivalently \eqref{eq:ode-system}, is given by activation-inhibition networks, where $N$ species interact pairwise, each by either promoting or repressing the activity of some others. In gene regulatory networks \citep{alon2019introduction} the amount $x_i$ of active species $i$, $i =1,\dots, N$, evolves according to the equation
\begin{equation}
\dot x_i(t) = -\mu_i x_i(t) + \sum_{j \in \mathbb{J}_i} f_{ij}(x_j(t)) + \sum_{\ell \in \mathbb{L}_i} g_{i \ell}(x_\ell(t))+ \kappa_i \, ,
    \label{eq:gene-regulation}
\end{equation}
where $\mu_i$ represents the self-degradation rate of species $i$, $\mathbb{J}_i$ and $\mathbb{L}_i$ are respectively the sets indexing species that activate (promoters) and inhibit (repressors) species $i$, the increasing positive functions $f_{ij}(x_j)$ represent the activating effect of $x_j$ on $x_i$, the decreasing positive functions $g_{i \ell}(x_\ell)$ represent the inhibiting effect of $x_\ell$ on $x_i$, and $\kappa_i$ is an external input corresponding to a basal expression rate.

See Figure~\ref{fig:hill}a for an example of a gene activation circuit, with $N=1$, which can be represented by the equation
\begin{equation}
\dot x(t) = - x(t) + f(x(t)) + k = - x(t) + a \frac{x(t)^h}{1 + x(t)^h} + k \, .
    \label{eq:gene-regulation2}
\end{equation}

The activation and inhibition functions are often chosen as $f_{ij}(x_j) = \frac{A_{ij}(x_j/\beta)^h}{1+(x_j/\beta)^h}$ and $g_{i \ell}(x_\ell) = \frac{B_{i \ell}}{1+(x_\ell/\beta)^h}$, respectively (see Figure~\ref{fig:hill}b). These are Michaelis-Menten functions (if $h=1$) or sigmoidal Hill functions (if $h>1$), where $\beta$ is a dissociation constant, usually normalised to $1$, $A_{ij}$ and $B_{i \ell}$ are respectively the maximum production and degradation rates, and the exponent $h$ represents cooperative binding mechanisms of transcription factors \citep{santillan2008use}. Other regulating functions may be employed, representing logic operators such as \emph{AND} and \emph{OR} \citep{Chen2010}.

\begin{figure}[ht!]
	\centering
	\includegraphics[width=\textwidth]{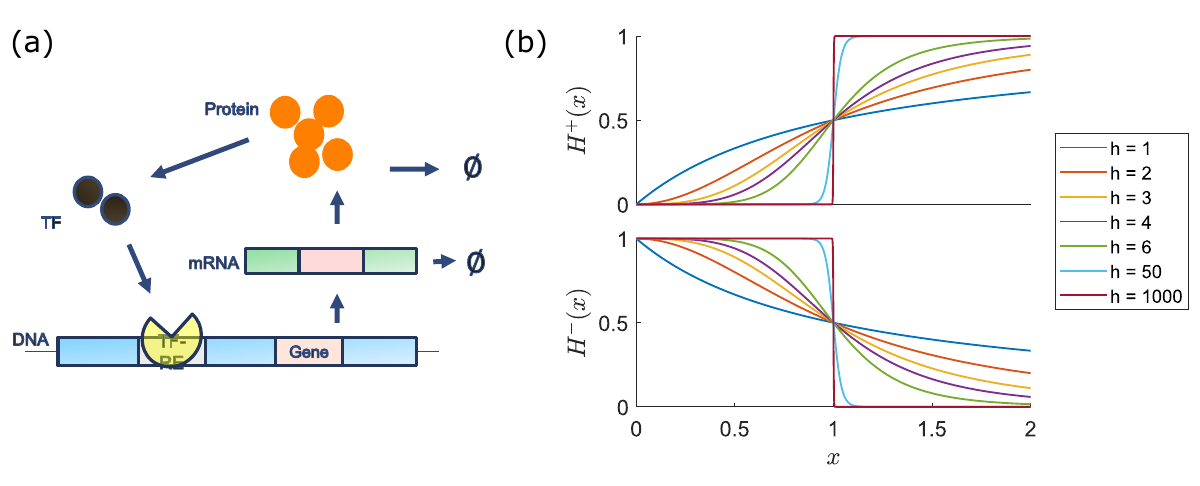}
	\caption{\footnotesize (a) Schematic representation of a gene activation circuit, corresponding to Eq. \eqref{eq:gene-regulation2}. TF stands for transcription factors, namely, proteins that regulate DNA transcription by binding promoters. (b) The family of activating $H^+(x)=\frac{(x/\beta)^h}{1+(x/\beta)^h}$ and inhibiting $H^-(x)=\frac{1}{1+(x/\beta)^h}$ Hill functions, for various values of $h$; $h = 1$ corresponds to Michelis-Menten functions, while the so-called ``logic approximation'' is $\lim_{h \to \infty} H^*(x) = \Theta^*(x - \beta)$, where $^*$ denotes either $^+$ or $^-$ and $\Theta^*$ is the (increasing or decreasing) Heaviside step function.}
	\label{fig:hill}
\end{figure} 

Activation-inhibition systems can be associated with \emph{signal graphs}, where the vertices represent chemical species and the signed edges correspond to inhibiting or activating interactions; the sign of the edges matches the sign pattern of the system Jacobian matrix \citep{blanchini2021structural}.

\begin{example}[\textbf{IFFL}]
The Incoherent Feed-Forward Loop (IFFL)
\begin{equation}\label{eq:IFFL}
\begin{cases}
\dot x_1(t) = -\mu_1 x_1(t) + \kappa_1\\
\dot x_2(t) = -\mu_2 x_2(t) + f_{23}(x_3(t)) +f_{21}(x_1(t))\\
\dot x_3(t) = -\mu_3 x_3(t) + g_{31}(x_1(t))
\end{cases}    
\end{equation}
captures the regulatory interactions among three genes: $X_1$ activates $X_2$ and inhibits $X_3$, which in turn activates $X_2$. The parameters $\mu_i$, $i=1,2,3$, are positive.
The signal graph corresponding to system \eqref{eq:IFFL} is shown in Figure~\ref{fig:IFFLsignal}.
\end{example}

\begin{figure}[ht!]
	\centering
	\includegraphics[width=0.4\textwidth]{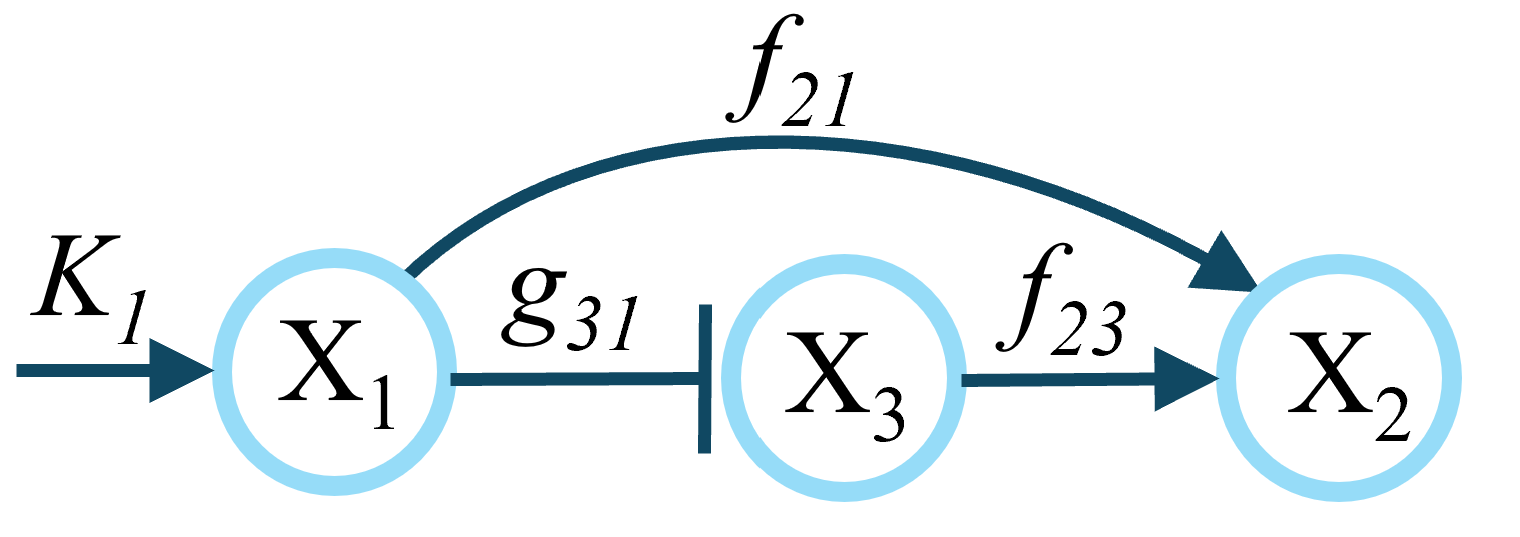}
	\caption{\footnotesize Signal graph representation of the Incoherent Feed-Forward Loop activation-inhibition network associated with system \eqref{eq:IFFL}. The negative self-loops for each node are not reported for simplicity.}
	\label{fig:IFFLsignal}
\end{figure}

Differently from reaction network systems, which are mechanistic models grounded on chemical laws, activation-inhibition networks are phenomenological models that do not correspond to a list of chemical reactions: they aim at reproducing empirical observations by considering interactions that only implicitly account for the true physical mechanisms. For instance, in the above IFFL model, the variables $x_i$ can be seen as the concentration of the proteins expressed by the corresponding genes $X_i$ and model \eqref{eq:IFFL} can be seen as the reduction of a more complex system involving gene-RNA-protein dynamics via timescale separation arguments \citep{Chen2010}. Notably, any system with a sign-definite Jacobian can be seen as an activation-inhibition network and thus associated with a signal graph, which is often referred to as \emph{Jacobian graph}.

\subsubsection{Dynamical networks in epidemiology}
\label{sec:epi_model}

Compartmental models in epidemiology are aggregate mean-field models, which partition the considered population into different compartments associated with relevant disease stages. Such models effectively describe contagion phenomena in a large population of well-mixed individuals (whose interaction network can be described by a complete graph), and can be visualised by flow graphs, where the vertices represent compartments and the edges illustrate the transition of individuals from one compartment to another \citep{Brauer2012,edelstein2005}; see, \eg Figure~\ref{fig:epi-model}.

Compartmental epidemiological models have a long tradition (\cite{kermack1927contribution}, see also the survey by \cite{Breda2012}). The simple Susceptible-Infectious-Susceptible (SIS) and Susceptible-Infectious-Removed (SIR) models (see Figure~\ref{fig:epi-model}) have been expanded over the span of several decades into more complex models with several compartments \citep{Arino2007,CalaCampana2024}, aimed at capturing specific disease dynamics \citep{Brauer2008, Giordano2020,proverbio2021dynamical}, as well as, for instance, age structure \citep{Martcheva2020}, latent infectivity \citep{Safi2013}, hospitalisation procedures \citep{herdImmunity}, vaccination \citep{brauer2011backward,Giordano2021,Krueger2022,PanPox2022}, viral shedding into the environment \citep{Proverbio2022}, seasonality \citep{aron1984seasonality}.

\begin{figure}[ht!]
	\centering
	\includegraphics[width=0.8\textwidth]{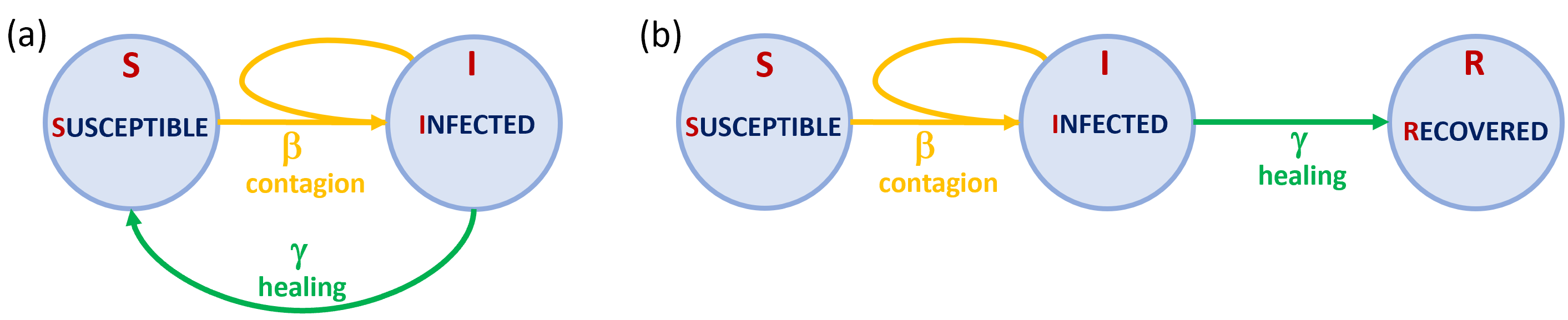}
	\caption{\footnotesize Graph representation of (a) the SIS model, Eq. \eqref{eq:sis}, and (b) the SIR model, Eq. \eqref{eq:sir}, akin to the flow graph representation of chemical reaction networks. Vertices represent compartments and edges represent flows of individuals governed by rate constants $\beta$ and $\gamma$.}
	\label{fig:epi-model}
\end{figure}

A general epidemiological model with an arbitrary number of infected and non-infected compartments \citep{CalaCampana2024} can be written as
\begin{equation}\label{eq:gencompmod}
\begin{cases}
\dot w(t) = G w(t) - \mbox{diag}(Cx(t))w(t) +Dx(t) + a\\
\dot x(t) = F x(t) + b\,w(t)^\top Cx(t) 
\end{cases}
\end{equation}
where $w(t), \, a\in \mathbb{R}_{\geq 0}^{n_2}$ and $x(t), \, b \in \mathbb{R}_{\geq 0}^{n_1}$; variables $w_i(t)$, $i=1,\dots,n_2$ represent population fractions in non-infected (\eg susceptible, recovered, vaccinated) compartments, while variables $x_i(t)$, $i=1, \dots n_1$, represent population fractions in different infected compartments. Matrix $G \in \mathbb{R}^{n_2 \times n_2}$ contains parameters related to vaccination and waning of immunity that drive the flows among non-infected compartments, while matrix $F \in \mathbb{R}^{n_1\times n_1}$ describes the flows among infected compartments, along with recovery.  Vector $a$ represents natality in the population, while $F$ and $G$ also include mortality rates. Matrices $F$ and $G$ are typically Metzler (\ie their off-diagonal entries are non-negative), to account for cooperativity; also, when the natality term $a$ is non-zero, $F$ and $G$ are typically Hurwitz (\ie all their eigenvalues have negative real part), to account for the dying out of the population in the absence of other terms, as well as the boundedness of the non-infected population (which is increased by $a$).
The infection matrix $C \in \mathbb{R}_{\geq 0}^{n_2 \times n_1}$ includes the contagion parameters,
while $D \in \mathbb{R}_{\geq 0}^{n_2 \times n_1}$ includes the recovery parameters.
The total population is often assumed constant over a finite time horizon $T>0$ (meaning that natality and background mortality compensate for each other, which is reasonable from an epidemiological perspective, provided $T$ is short enough and the population is large enough, so that small fluctuations in the population do not affect the spread of contagion): for all $t \in [0,T]$,
$\sum_{i=1}^{n_1} x_i(t) + \sum_{j=1}^{n_2} w_j(t) = 1$. Also, the system is positive: if $x(0), w(0) \geq 0$, then $x(t), w(t)\geq 0$ for all $t \geq 0$.

The SIS and SIR models, as well as more complex SIR-like models, can be seen as special cases of the general model \eqref{eq:gencompmod}.
In all the examples below, we assume that the population densities are normalised with respect to the total population size (\ie they amount to fractions of the population).

\begin{example}[\textbf{SIS model}]
The SIS model (Figure~\ref{fig:epi-model}a) can be written as
\begin{equation}\label{eq:sis}
\begin{cases}
    \dot{S}(t) = -\beta S(t) I(t) + \gamma I(t) \, ,\\
    \dot{I}(t) = \beta S(t) I(t) - \gamma I(t) \, ,
\end{cases}
\end{equation}
where $S$ is the fraction of susceptible individuals and $I$ of infected (and infectious) individuals (corresponding to \eqref{eq:gencompmod} with $w=S$, $x=I$, $G=0$, $C=\beta$, $D=\gamma$, $a=0$, $F=-\gamma$, $b=1$). The infection rate is modelled by $\beta$, while infected (and infectious) individuals recover, and become susceptible again, at rate $\gamma$. The total population is conserved, since $\dot S + \dot I \equiv 0$, and in particular $S(t) + I(t) = 1$ for all $t$.
\end{example}

\begin{example}[\textbf{SIR model}]
The SIR model (Figure~\ref{fig:epi-model}b) is given by
\begin{equation}\label{eq:sir}
\begin{cases}
    \dot{S}(t) = -\beta S(t) I(t) \, ,\\
    \dot{I}(t) = \beta S(t) I(t) - \gamma I(t) \, ,\\
    \dot{R}(t) = \gamma I(t) \, .
\end{cases}
\end{equation}
and includes the additional variable $R$ representing the fraction of removed individuals, no longer involved in the contagion process because either recovered (and permanently immunised) or dead; $\beta$ is the transmission rate, $\gamma$ the recovery rate. The system corresponds to \eqref{eq:gencompmod} with $w=\begin{bmatrix} S\\ R \end{bmatrix}$, $x=I$, $G=\begin{bmatrix} 0 & 0\\ 0 & 0 \end{bmatrix}$, $C=\begin{bmatrix} \beta\\ 0 \end{bmatrix}$, $D=\begin{bmatrix} 0\\ \gamma \end{bmatrix}$, $a=\begin{bmatrix} 0\\ 0 \end{bmatrix}$, $F=-\gamma$, $b=1$. Again, the total population is conserved, since $\dot S + \dot I + \dot R \equiv 0$, and in particular $S(t) + I(t) + R(t) = 1$ for all $t$.
\end{example}

\begin{example}[\textbf{Extended SIRV model}]\label{ex:SIRV}
The model
\begin{small}
\begin{equation}\label{eq_SIRV}
\begin{cases}
\dot{S}(t) =  \psi - u S(t) + \chi_1 R(t) - \beta S(t)I(t) + \chi_2V(t)- \mu_S S(t)  \\
\dot{I}(t) = \beta S(t) I(t) -\gamma I(t)  + \pi_1\beta R(t)I(t) + \pi_2 \beta V(t)I(t) -\mu_I I(t) \\
\dot{R}(t) = \gamma I(t) -\chi_1R(t)  - \pi_1\beta R(t)I(t) -\mu_R R(t)\\
\dot{V}(t) = u S(t) -\chi_2V(t) - \pi_2\beta V(t)I(t) - \mu_V V(t)
\end{cases}
\end{equation}
\end{small}
considers susceptible ($S$), infected ($I$), recovered ($R$) and vaccinated ($V$) individuals; $\psi$ is the birth rate, $\mu_i$ are the mortality rates within each compartment, $\beta$ is the transmission rate, $\pi_1,\pi_2 \in [0,1]$ are the (re)infection probabilities for recovered and vaccinated, $u$ is the vaccination rate, $\gamma$ is the recovery rate and $\chi_1,\chi_2$ are the rates at which immunity wanes. All the parameters are positive constants.
System \eqref{eq_SIRV} can be cast in the form \eqref{eq:gencompmod}, with $x = I$ and $w = [S~R~V]^{\top}$, by taking $F = -\gamma-\mu_I$, $b=1$, $C = [\beta~\pi_1 \beta~\pi_2 \beta]^{\top}$, $G =\begin{bsmallmatrix}
-u -\mu_S &\chi_1&\chi_2\\ 0&-\chi_1 -\mu_R&0\\ u&0&-\chi_2 -\mu_V
\end{bsmallmatrix} $, $D = [0~\gamma~0]^{\top}$, $a=[\psi~0~0]^\top$ \citep{CalaCampana2024}.
\end{example}

\begin{example}[\textbf{Extended SEIRV model}]
The model
\begin{small}
\begin{equation}\label{eq_SEIRV}
\hspace{-2mm}\begin{cases}
\dot{S}(t) =  \psi - u S(t) + \chi_1 R(t) - \beta S(t)I(t) + \chi_2V(t) -\mu_S S(t) \\
\dot{E}(t) = \beta S(t)I(t) + \pi_1 \beta R(t)I(t) +  \pi_2\beta V(t)I(t) - \rho E(t) -\mu_E E(t)  \\
\dot{I}(t) = \rho E(t) -\gamma I(t) -\mu_I I(t) \\
\dot{R}(t) = \gamma I(t) -\chi_1R(t)  - \pi_1\beta R(t)I(t)- \mu_R R(t) \\
\dot{V}(t) = u S(t) -\chi_2V(t)  - \pi_2\beta V(t)I(t) - \mu_V V(t)
\end{cases}
\end{equation}
\end{small}
considers susceptible ($S$), recovered ($R$) and vaccinated ($V$) individuals along with two infected compartments, exposed ($E$), who are not yet infectious (\ie contagious), and infectious ($I$).
Exposed individuals become infectious at rate $\rho>0$. The other coefficients have the same meaning as in the extended SIRV model in Example~\ref{ex:SIRV}.
System \eqref{eq_SEIRV} can be written as in \eqref{eq:gencompmod} with $x = [E~I]^{\top}$, $w = [S~R~V]^{\top}$,
$F = \begin{bsmallmatrix}
-\rho- \mu_E &0 \\ \rho & -\gamma -\mu_I
\end{bsmallmatrix}$, $G =\begin{bsmallmatrix}
-u -\mu_S &\chi_1&\chi_2\\ 0&-\chi_1-\mu_R&0\\ u&0&-\chi_2-\mu_V
\end{bsmallmatrix} $, $C = \begin{bsmallmatrix}
 0 &\beta \\0& \pi_1 \beta   \\ 0&\pi_2 \beta 
\end{bsmallmatrix}$, $D = \begin{bsmallmatrix}
0 &0 \\ 0&\gamma  \\ 0&0
\end{bsmallmatrix}$, $b=[1~0]^\top$ and $a=[\psi~0~0]^\top$ \citep{CalaCampana2024}.
\end{example}

\begin{example}[\textbf{Extended SIDARTHE-V model}]
The SIDARTHE model proposed by \cite{Giordano2020} is tailored to the specificities of the COVID-19 pandemic and discriminates between infected individuals depending on whether they have been diagnosed and on the severity of their symptoms, resulting in five infected compartments: asymptomatic undetected infected ($I$), asymptomatic diagnosed infected ($D$), symptomatic undetected infected ($A$), symptomatic diagnosed infected ($R$), diagnosed infected with life-threatening symptoms ($T$).
The model has been expanded to consider vaccination by \cite{Giordano2021}, so that the non-infected categories are, in addition to susceptibles ($S$), recovered ($H$) and deceased ($E$), also vaccinated ($V$). Here, we further introduce waning immunity with rates $\chi_i$ and (re)infection probabilities $\pi_i \in [0,1]$ for recovered and vaccinated, as well as a birth rate $\psi$ and non-COVID-related mortality rates $\mu_i$, and we denote by $u$ the vaccination rate. The resulting model \citep{CalaCampana2024} is

\begin{scriptsize}
     \begin{equation*}\label{eq:SIDARTHEV}
         \begin{cases}
             \dot{S} &= \psi -S(\alpha I +\beta D +\gamma A +\delta R) - uS +\chi_1 H + \chi_2V -\mu_S S\\
             \dot{I} &= S(\alpha I +\beta D +\gamma A +\delta R) -(\varepsilon +\zeta +\lambda)I +\pi_1\alpha HI +\pi_2\alpha VI + \pi_3\beta HD + \pi_4\beta VD \\
             &+ \pi_5 \gamma HA + \pi_6 \gamma VA + \pi_7 \delta HR + \pi_8\delta VR  -\mu_I I\\
             \dot{D} &= \varepsilon I -(\eta +\rho) D -\mu_D D\\
             \dot{A} &= \zeta I - (\theta +\mu+\kappa)A -\mu_A A\\
             \dot{R} &= \eta D +\theta A -(\nu +\xi +\tau_1)R -\mu_R R\\
             \dot{T} &= \mu A +\nu R -(\sigma +\tau_2)T -\mu_T T\\
             \dot{H} &= \lambda I +\rho D +\kappa A +\xi R +\sigma T -\chi_1H-\pi_1\alpha HI -\pi_3\beta HD -\pi_5\gamma HA -\pi_7 \delta HR -\mu_H H\\
             \dot{E} &= \tau_1 R +\tau_2 T\\
             \dot{V} &= u S -\chi_2V -\pi_2 \alpha VI -\pi_4\beta VD -\pi_6\gamma VA -\pi_8\delta VR -\mu_V V
         \end{cases}
     \end{equation*}
\end{scriptsize}

The parameters $\alpha$, $\beta$, $\gamma$ and $\delta$ denote transmission rates; $\varepsilon$ and $\theta$ diagnosis rates; $\zeta$ and $\eta$ symptom onset rates; $\mu$ and $\nu$ aggravation rates; $\tau_1$ and $\tau_2$ COVID-related mortality rates; $\lambda$, $\kappa$, $\xi$, $\rho$ and $\sigma$ recovery rates. Also this system can be written as in \eqref{eq:gencompmod}.

The state is $x = [I~D~A~R~T]^\top$ and $w=[S~H~E~V]^\top$, while
$a = [\psi~0~0~0~0]^\top$, $b = [1~0~0~0~0]^\top$, and the matrices are
$$F=
\begin{bsmallmatrix}
        -(\varepsilon+\zeta+\lambda +\mu_I) &0&0&0&0 \\
        \varepsilon & -(\eta+\rho+\mu_D) &0&0&0\\
        \zeta & 0& -(\theta+\mu+\kappa +\mu_A) & 0&0\\
        0&\eta & \theta & -(\nu+\xi +\tau_1 +\mu_R) & 0\\
        0&0&\mu&\nu& -(\sigma +\tau_2 +\mu_T)
    \end{bsmallmatrix},$$
$$C = 
    \begin{bsmallmatrix}
        \alpha & \beta & \gamma & \delta & 0 \\
        \pi_1\alpha & \pi_3 \beta& \pi_5\gamma & \pi_7\delta&0 \\
        0 & 0& 0&0&0 \\
        \pi_2\alpha & \pi_4\beta& \pi_6 \gamma& \pi_8\delta&0 
    \end{bsmallmatrix}, \qquad\qquad
    D = \begin{bsmallmatrix}
        0 & 0& 0&0&0 \\
        \lambda & \rho & \kappa & \xi & \sigma \\
        0&0&0&\tau_1&\tau_2\\
        0 & 0& 0&0&0 \\
    \end{bsmallmatrix},$$
 $$G = \begin{bsmallmatrix}
        -u-\mu_S &\chi_1&0&\chi_2 \\
        0 & -\chi_1-\mu_H& 0&0\\
        0 & 0& 0&0\\
        u & 0 & 0& -\chi_2-\mu_V\\
    \end{bsmallmatrix}.$$
\end{example}

\begin{remark}[\textbf{Epidemic models as chemical reaction systems}]\label{rem:epidemicCRN}
As is well known (see, \eg \cite{blanchini2021structural}), compartmental epidemic models can be interpreted in terms of chemical reactions (and visualised through the same type of flow graphs, as in Figure~\ref{fig:epi-model}) with mass action kinetics. For instance, the SIS model corresponds to chemical reactions $S + I \react{\beta} 2I$ and $I \react{\gamma} S$, while the SIR model corresponds to chemical reactions $S + I \react{\beta} 2I$ and $I \react{\gamma} R$. Hence, they can always be cast in the form \eqref{eq:ode-system}.
\end{remark}

Heterogeneous disease transmission in distinct geographical areas (\eg a network of cities connected by individual commuting and mobility), or in different age classes, can be captured by \emph{meta-population} or \emph{multi-patch} compartmental models, which describe the spreading process through SI-like models defined on a graph: each node (associated with a patch) represents the local dynamic evolution of the epidemic phenomenon through an SI-like compartmental model, while the interconnections represent population mobility and cross-interactions that couple contagion dynamics at a larger scale \citep{bichara2018multi,DellaRossa2020,Grenfell1997,HernandezVargas2022}; see, \eg Figure~\ref{fig:multipatch-model}. An example of meta-population SIS model \citep{boguna2013nature} is
\begin{equation}
\dot{I}_k(t) = -\gamma_k I_k(t) + \sum_{j=1}^N \beta_{kj} [1-I_k(t)] I_j(t), \quad k=1,\dots, N \, ,
\end{equation}
where $I_k$ represents the fraction of infected and $S_k=1-I_k$ the fraction of susceptibles in the $k$th patch, $\gamma_k$ is the recovery rate in the $k$th patch, and $\beta_{kj}$ is the transmission rate for contacts between an individual in the $k$th patch and an individual in the $j$th patch (and is possibly equal to zero if the two patches are non-communicating).

Multi-patch and meta-population models \citep{Aalto2025,Ball2014,Bertuzzo2020,DellaRossa2020,Gatto2020,Grenfell1997,Rowthorn2009} that account for the heterogeneity of the epidemic evolution in distinct geographical areas and for the mobility of individuals, thus capturing the hierarchical levels of the epidemic phenomenon at various spatial scales (city, province, region, country, continent), are fundamental to support the planning of coordinated interventions with a national or continental perspective \citep{Priesemann2020,Priesemann2021,Valdez2022}.

\begin{figure}[ht!]
	\centering
	\includegraphics[width=\textwidth]{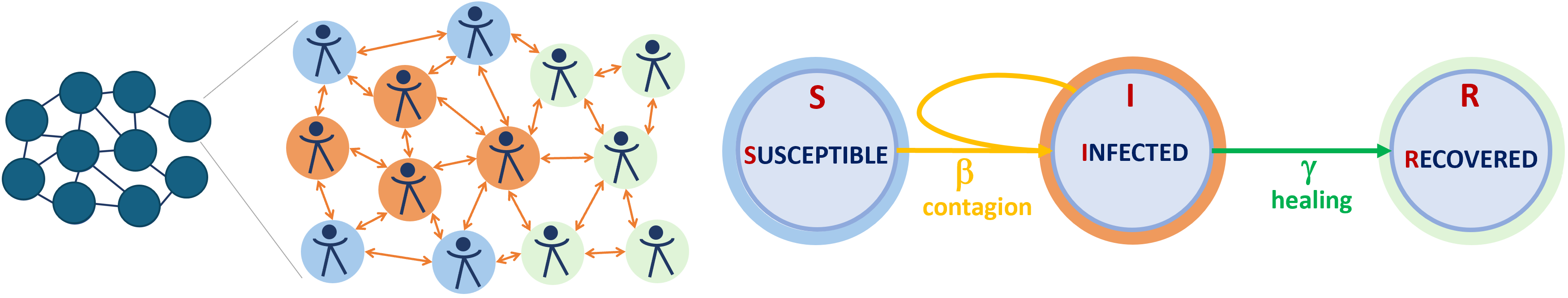}
	\caption{\footnotesize A multi-patch, or meta-population, networked model for epidemics capturing the heterogeneous disease transmission in different regions, or age classes (patches, associated with the blue nodes). The dynamics of each patch is described by a different SIR model with specific parameters, and patches are affecting one another through cross-coupling.}
	\label{fig:multipatch-model}
\end{figure}

In addition to between-host models that describe the evolution of contagion in a population, in-host models can be developed to describe the evolution of the infection within an individual \citep{Almocera2018,Almocera2019,Abuin2020,Hernandez-Vargas2020,HernandezVargas2022} and to support optimal treatment planning for bacterial \citep{deJong2023,AKCG2025} and viral \citep{Hernandez-Vargas2010,Hernandez2014} infections, also building upon results on the optimal control of compartmental models \citep{BBCDG2021,Blanchini2023}. Analogous approaches can be adopted, \eg for non-infectious diseases like cancer \citep{DG2019,Giordano2016convex}.
Multi-scale models can be developed that bridge the microscopic in-host scale and the macroscopic between-host scale \citep{Almocera2018,Almocera2019,Bellomo2020,Feng2011,Feng2015,Gandolfi2014,Garabed2019,Hart2020,Murillo2013}, and study how the former affects the latter \citep{Cai2017} by considering the interplay between the immunological mechanisms in the host and the epidemiological mechanisms of contagion between hosts.

\subsection{Structure, networks and graphs}

We call \emph{structure} the qualitative (parameter-free) description of a whole family of systems, defined without resorting to any arbitrary parameter bound \citep{blanchini2021structural}; each element of the family can be obtained via a specific choice of the involved functions and numerical parameter values (or bounds). We thus discriminate between qualitative and quantitative models depending on whether precise numerical values are specified, even though we always consider mathematical models (which are still quantitative in comparison to descriptive models used in biology and epidemiology, such as western blot analysis or verbal descriptions according to which a concentration may ``be increasing'', ``be lower than...'', see, \eg \cite{nemudryi2020temporal}). 

A structure can be effectively visualised using graphs, which are versatile and general tools \citep{Boccaletti2006,gross2005graph} to describe the complex interplay of interactions giving rise to biological and epidemiological systems. Here, we briefly recall some graph-theoretic terminology, which is sometimes employed with different meaning even in sibling disciplines. 

A \emph{graph} is a triple $(V,E,W)$, where $V=\{v_i\}_{i=1}^{n_V}$ and $E=\{e_i\}_{i=1}^{n_E}$ are finite sets of, respectively, $n_V$ vertices (or nodes) and $n_E$ edges (or arcs, or links), where edges correspond to ordered pairs of vertices, and $W \in \mathbb{R}^{n_V \times n_V}$ is the adjacency matrix, whose entry $W_{ij}$ corresponds to the weight of the edge $e_k=(i,j)$ with origin vertex $i \in V$ and destination vertex $j \in V$; if $(i,j) \not\in E$, then $W_{ij}=0$. This framework can represent both undirected and directed graphs, possibly including self-loops. A \emph{hyper-graph} $(V,E,M)$ also contains edges that involve more than two vertices (\emph{hyper-edges}); its topology is captured by the incidence matrix $M \in \mathbb{Z}^{n_V \times n_E}$, where $M_{ij}=-1$ if $i$ is an origin vertex for edge $j$, $M_{ij}=1$ if $i$ is a destination vertex for edge $j$, and $0$ otherwise; furthermore, the edges can be associated with weights. The concepts of hyper-graph and hyper-edge are fundamental for biological networks: for instance, flow graphs representing chemical reactions are hyper-graphs, unless all the reactions are unimolecular (namely, either of the form $X_i \react{} X_j$ or of the form $X_i \react{} \emptyset$); and, \eg the hyper-edge associated with reaction $X_1 + X_2 \react{} X_3 + X_4$ connects two origin vertices ($X_1$ and $X_2$) to two destination vertices ($X_3$ and $X_4$).

A \emph{labelled graph} is a graph with labelling functions $L_V \colon V \to D_V$ and $L_E \colon E \to D_E$, where $D_V$ and $D_E$ are arbitrary sets, called the vertex labels and the edge labels, respectively.
\emph{Signed graphs} are labelled graphs where each edge is associated with a positive or negative sign, depending on the nature of the interaction.

A \emph{network} is the applied counterpart of a mathematical graph, representing a model of a real entity with its inherent dynamics (associated with the nodes or with the links) and interaction functions.
A graph thus corresponds to the topology of a network, which also includes dynamical information (\eg parametrised evolution functions).
For a dynamical system endowed with a network topology, the \emph{structure} corresponds to the underlying graph, potentially augmented with qualitative information about the nature of the interactions (such as sign patterns, or monotonicity of the involved functions). Then, a \emph{family of systems} is characterised by a \emph{structure} if all the elements of the family are represented by the \emph{same graph} and their dynamics (\eg the involved functions) satisfy \emph{common qualitative assumptions}, such as positivity or monotonicity.

Network scientists often use a different terminology \citep{Barabasi2013}, as summarised in Table~\ref{table_netw_graph}, where the term \emph{network} covers both static representations and ``dynamical graphs''. The structure would still correspond to the topological, static wiring of nodes and links, \ie their qualitative pattern. We will adopt the definitions we previously proposed, and distinguish between structure (or graph structure) and network.

\begin{table}[ht!]
	\centering
	\begin{tabular}{c|c }
		\textbf{Network Science} & \textbf{Graph Theory}   \\ \cline{1-2}
		Network                  & Graph                   \\ 
		Node                     & Vertex                  \\
		Link                     & Edge (or arc)         \\
	\end{tabular}
\caption{\footnotesize Different terminology used in network science and in graph theory.}
\label{table_netw_graph}
\end{table}

As testified by the examples in this chapter, the meaning of graphs can vary depending on the context. Vertices are usually associated either with individual state variables (representing, \eg the concentration of a chemical species or the density of a population category) or with subsystems (\ie subsets of state variables, for instance in multi-patch systems). In the case of \emph{flow graphs}, edges represent the flow of species moving from some vertices to others (corresponding, \eg to chemical reactions in chemical reaction networks); flow graphs are the most common representation of compartmental models of epidemiological systems \citep{herdImmunity,kermack1927contribution}, and are used in systems biology to model, \eg chemical reactions \citep{Anderson2011} or molecular fluxes \citep{Sharma2016}. In the case of \emph{signal graphs}, which are signed graphs, the signed edges are associated with either inhibiting or activating interactions, so as to model gene regulatory networks \citep{Angeli2004detection, Edwards2015} or biological control loops \citep{alon2019introduction}.

\subsection{Stability, robustness and resilience across disciplines}
\label{sec:informal_defs}

Here we aim at exploring and comparing the concepts of stability, robustness and resilience for dynamical systems subject to uncertainties, stochastic noise and fluctuations, and at discussing their use in different research communities. Since perturbations are ubiquitous in any realistic application and the system's characteristic parameters are never known exactly, determining whether a system  can reliably maintain a certain performance (for example, reach a prescribed set in the state space) under the influence of perturbations or noise, or uncertain parameters,  is a long-standing research question in numerous disciplines, including physics, engineering and biology. A multitude of (either competing or complementary) concepts offer a lens to study these general questions. We now provide an informal overview of how these terms are understood by scholars working in different research fields: a literature review reveals that stability, robustness and resilience have conceptually different definitions and meanings across disciplines. Hence, an effort to unify these concepts within a single rigorous control-theoretic framework is of the utmost importance. 

\subsubsection{Stability}

The notion of stability introduced by Lyapunov in 1892 \citep{lyapunov1992general} is usually associated with the equilibrium points (\ie steady states) of a dynamical system, and captures the idea that trajectories emanating from points that are close enough to the equilibrium remain in its vicinity for all future times. Subsequent research in systems and control has been extremely successful in expanding the toolbox of stability concepts by incorporating additional ones, including asymptotic stability, finite-time stability, Bounded-Input-Bounded-Output stability \citep{carlson1998signal}, Input-to-State and Input-to-Output stability \citep{sontag2013mathematical}, and robust stability \citep{sanchezpena1998,zhou1998essentials}. However, these concepts have only partially permeated into other disciplines. For example, in ecology, the concept of stability is often used informally to cover other (eco)system properties, such as the ability to maintain function or performance, returning to a prescribed equilibrium state, despite disturbances \citep{arnoldi2016resilience,grimm1997babel, pimm1984complexity}, \ie notions similar to disturbance rejection in control theory. Similarly, biological studies such as that by \cite{Dai2015} associate stability with the \emph{rate} of return to the rest state, a concept that overlaps with the notion of resilience in engineering applications, as we discuss later. These are just two examples demonstrating the need for a unified framework of concepts across the span of multiple research communities.

\subsubsection{Robustness}

In systems and control theory, robustness roughly refers to the ability of a system to maintain a desired performance (or, more generally, the ability of a system property to persist), despite uncertainties in the model and/or in the system parameters (as well as despite the introduction of disturbances), provided an uncertainty (or disturbance) bounding set is assumed \emph{a priori} \citep{Barmish1994, ZHANG201041}. The underlying motivation is to guarantee that the system, or closed-loop system, in question is reliable in achieving the necessary performance. Robustness can be further extended to \emph{structural} robustness \citep{Blanchini2011}, leading to the concept of structural property \citep{Blanchini2012,blanchini2021structural}, which adopts a different perspective: it does not consider uncertainty bounding sets and focuses on the sole knowledge of the structure of a dynamical system so as to investigate its properties. A prototypical example of structural robustness can be found in networked systems, where the graph topology plays a crucial role in determining the dynamic behaviour and performance \citep{blanchini2021structural}; the concepts of robust and structural properties will be further discussed in Chapter~\ref{ch:struct-an}.

Robust control and estimation methods have been successfully used for control and surveillance of epidemic spreading \citep{Alamo2021,Alamo2022,HernandezVargas2022,Lee1995,Leitmann1998,Morris2021}, and biological systems are often studied within a robustness framework \citep{Khammash2016,Kim2006, Kitano2004,Kitano2007towards}. For instance, robustness of protein and genetic regulation is hypothesised to promote evolvability of living systems \citep{bloom2006protein, Kwon2008,masel2010robustness}. In other sub-fields of biology, however, the notion of robustness has been used to encompass primarily \emph{structural stability} \citep{Nikolov2007, shil2001methods}, \ie the guarantee of a unique stable steady state despite parametric uncertainty. In a wider (but often mathematically indefinite) sense, it is also interpreted as the ability of biological systems to maintain several properties, withstanding the influence of a great number of external perturbations \citep{Kartal2009, Kitano2004b, Lesne2008, Whitacre2012}. 

In network science, robustness characterises the ability of a system to withstand failures and perturbations, such as link removal or dismantling \citep{artime2024robustness}, without loss of function. Here, the notion is usually tied to the topology of the associated network, by quantifying its ability to maintain connectivity when a fraction of the nodes, or links, are damaged or removed \citep{liu2015vulnerability, ren2019generalized, zdeborova2016fast}. Robustness is thus defined in terms of the critical number of node (or link) removals that make the graph disconnected, and is associated with the property of connectivity as measured by, \eg the diameter or size of the largest connected component \citep{jeong2000large}. 
This perspective strongly overlaps with the control-theoretic notion of robustness. Indeed, networks can be seen as weighted directed graphs, where the edge weights represent the \emph{strength} of the connection between the corresponding nodes; then, a family of realisations of a given network can be seen as an uncertain family of graphs, with the uncertainty bounding set representing all possible weights that can be assigned to the different edges. For a rich description of recent works on network robustness from a network science perspective, we refer the interested reader to \cite{artime2024robustness} and \cite[Section 6]{Liu2020b}. 

\subsubsection{Resilience}
\label{sec:res_informal}

The notion of resilience is heuristically defined in several disciplines \citep{bhamra2011resilience,Crespi2021, fraccascia2018resilience,Krakovska2024,Vitousek2025} and is often used as an umbrella term \citep{baggio2015boundary,standish2014resilience}, to the point of covering up to 70 definitions \citep{fisher2015more}. Overall, its definitions span various nuances that fall somewhere between the following three key interpretations: 
\begin{itemize}
    \item \emph{System resilience} (often called ``\emph{ecological resilience}'', \cite{Holling1973}, see 	also \cite{dakos2022ecological,gunderson2000ecological}) focuses on the magnitude of the change or perturbation that a system can withstand while still returning to its original state, or maintaining its original function, without transitioning to another stable regime (\ie attractor). From a control-theoretic perspective, this interpretation resembles the notions of a stability radius under structured/unstructured perturbations and $\mu$-analysis (see, \eg \cite{doyle1982analysis}).
    \item \emph{Engineering resilience} corresponds to fault tolerance or robustness to extrinsic disturbances, usually measured in terms of recovery rate, \ie how quickly the disturbance is damped and the system goes back to its original equilibrium state  \citep{holling1996engineering}; see also the overview by \cite{Clement2021}.
    \item In ecology and evolutionary theory, the notion of \emph{adaptive resilience} corresponds to the capacity of a system to adapt or evolve (over possibly long time spans, as in Darwinian selection or evolution over generations) in response to unfamiliar events or changes in operating conditions. For example, \cite{carpenter2012general} study how ecological systems such as coral reefs survive and recover after extreme shocks like hurricanes.
\end{itemize}
The majority of other definitions lie in between these three, and sometimes overlap with the concept of robustness \citep{carpenter2001metaphor,Liu2020b}. For instance, ecological resilience \citep{scheffer2015generic} tends to overlap with biological robustness: in the case of systems with multiple stable equilibria and subject to exogenous disturbances, seen as externally-controlled parameters, the resilience of a noise-free (nominal) system may be quantified through the distance from a critical set of parameters for which the system experiences a transition from one equilibrium (or attractor) to another. 

Both in biology and epidemiology, resilience is often associated with the magnitude of an exogenous signal that allows a system to spring back to a prescribed (original) stable steady state, after the system has tipped onto another stable steady state \citep{lemmon2020achieving}; resilient systems are those requiring minimal effort, by external governors, to return to the original state. Conversely, the concept of engineering resilience does not usually consider the case of alternative stable states.
Finally, in network science, resilience is defined as the ability to avoid dysfunction despite node attacks \citep{Loppini2019, mugisha2016identifying, zhang2020resilience} and to maintain function despite noise and out of equilibrium dynamics \citep{Bhandary2021}. Through this lens, resilience relates to the dynamics associated with specific links and nodes, rather than with topological disruptions to the network.

The concept of resilience, along with the introduction of novel formal definitions from a control-theoretic perspective, will be at the heart of Chapter~\ref{ch:rob-res-sta-mod}.

\subsubsection{A comparative overview}

Our comparative discussion of the concepts of stability, robustness and resilience, including their differences and potential overlaps, is summarised in Table~\ref{tab:summary-concepts}, while Figure~\ref{fig:venn} provides a visualisation of how these concepts may intersect across disciplines. Despite the plethora of definitions and the partial overlaps between the concepts, some overarching themes can be identified. In systems biology, ecology and network science, stability generally refers to asymptotic behaviours and existence of equilibria, or attractors; robustness is associated with ``static'' properties, such as network topology; while resilience encompasses non-equilibrium phenomena, including noisy fluctuations and the existence of alternative regimes. Moreover, as also noted by \cite{zhu2024disentangling}, resilience and robustness address two distinct but intertwined goals in control: resilience covers unforeseen events and recovery, while robustness is intended to manage anticipated uncertainties. Consequently, the different concepts are not mutually exclusive, but are complementary to understand system properties. Moreover, having established one property among stability, robustness and resilience (\eg stability) may promote the inference of another of these three properties (\eg resilience), or enable the derivation of indicators to measure it, as we discuss in Chapter~\ref{ch:res-ind}. Although we made an attempt to associate existing notions of resilience with suitable concepts from systems and control theory, a formal and precise mapping between various definitions of resilience across different disciplines does not hitherto exist, as can be seen by the multitude of available definitions and interpretations. Indeed, from our survey, the need emerges for a unified set of (formal) definitions, which will foster multidisciplinary inquiries into novel research directions.

\begin{table}[ht!]
	\centering \footnotesize
\renewcommand{\arraystretch}{1.4}
 \resizebox{\textwidth}{!}{
	\begin{tabular}{p{0.12\textwidth} | p{0.21\textwidth} | p{0.21\textwidth}| p{0.21\textwidth}|p{0.21\textwidth}| }
\cline{2-5}
		                             & Popular definition     & Ecology, biology, epidemiology
                               & Network science  & Control theory  \\ \cline{1-5}
		Stability                 & Persistence of function. &  Rate of return to rest state.   & Persistence of an equilibrium. & Trajectories converging to (asymptotic stability), or remaining within a given distance from (Lyapunov stability), an equilibrium or set.
                 \\ \cline{1-5}
		Robustness                & Insensitivity to uncertainty or perturbations.   &   Ability to absorb perturbations and persist.  & Network ability to maintain structural functions despite perturbations such as link removal. &   Preservation of a property within a family of systems (in spite of uncertainty or perturbations).
         \\ \cline{1-5}
		Resilience                & Prompt recovery after shock.   & Distance from tipping point, or ability to spring back after dysfunction.    &  Ability to maintain function despite perturbation of dynamical properties. & Fault tolerance and readiness to recover after an external perturbation or disturbance. \\ \cline{1-5}
	\end{tabular}
 }
 \caption{\footnotesize Summary of concepts and their definitions in various disciplines. }
 \label{tab:summary-concepts}
\end{table}

\begin{figure}[ht!]
	\centering
	\includegraphics[width=0.45\textwidth]{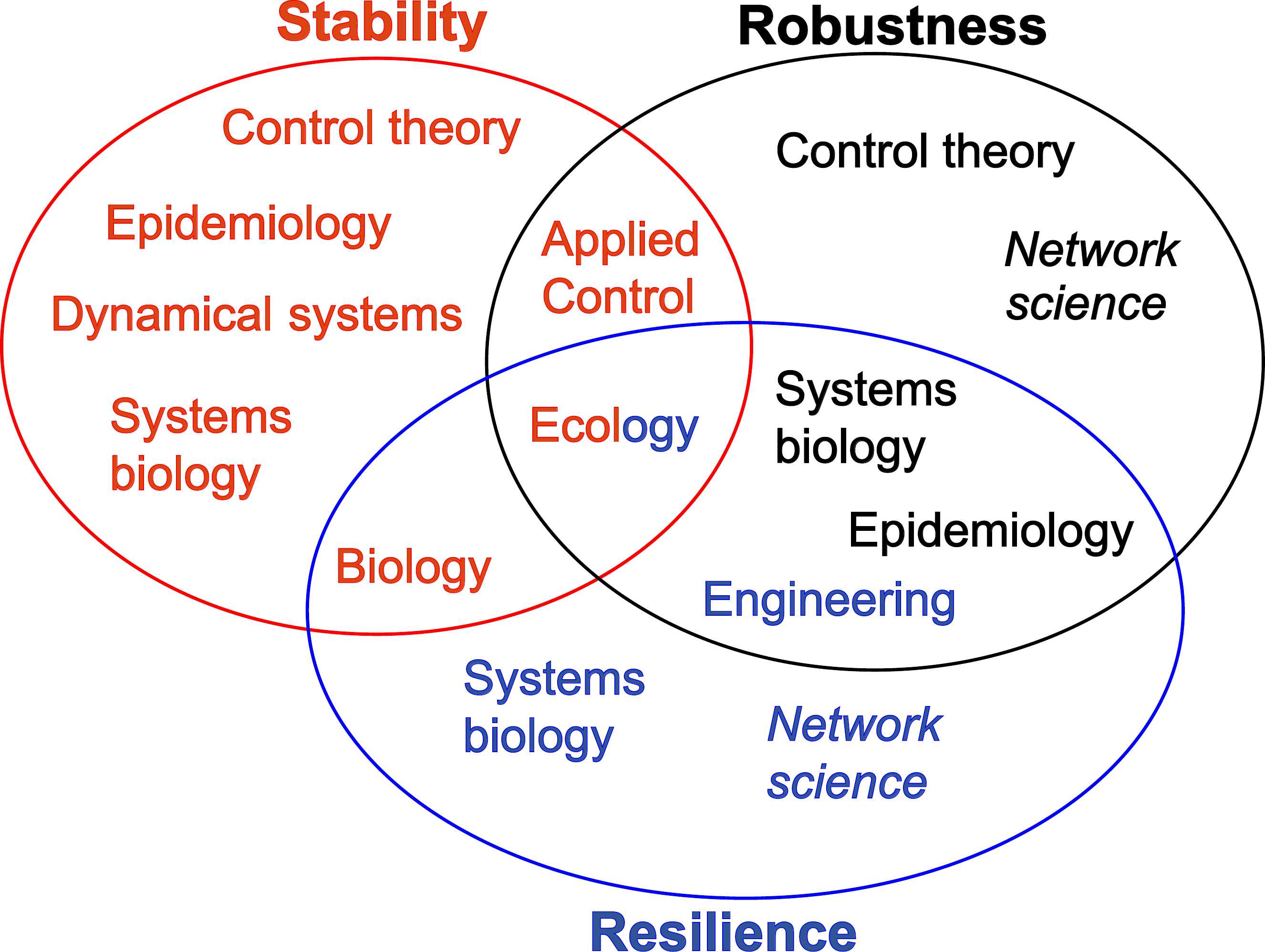}
	\caption{\footnotesize Venn diagram that visualises the overlap of concepts related to stability, robustness and resilience (according to our definitions) across disciplines. Certain disciplines may denote some concept with a nomenclature (denoted by the colour) that is overlapping with other meanings (denoted by the position in the diagram). In ecology, the three concepts are used, although the terms stability and resilience are often used interchangeably. In biology, the term stability is often used to denote resilience. In systems biology, the terms stability and resilience are mostly used properly, but the term robustness is often also used as a synonym for resilience. Also in epidemiology, the term robustness is often used to mean resilience. In control theory, robustness coherently refers to robust/structural properties, while formal notions of resilience are missing. When control strategies are designed outside the formal domains of control theory, e.g., in biology or applied physics (which we dub “applied control” for brevity), the term stability is often used to mean robustness. Branches of engineering outside control engineering may use the term resilience to mean robustness. Network theory uses the terms robustness and resilience without ambiguity, but with two clear distinct meanings that differ from the definitions considered in this monograph (hence, the italic font; see also Table~\ref{tab:summary-concepts}). The figure summarises mainstream trends in the considered disciplines, more deeply discussed in Table~\ref{tab:summary-concepts} and in the main text; individual authors and articles may use different nuances.}
	\label{fig:venn}
\end{figure}

We finally comment on the use of the above terms in management, which often percolates into public discourse and social healthcare \citep{Blanchet2017}. In this context, stability is often described as  persistence of performance, as opposed to growth; resilience corresponds to performance degradation followed by recovery; robustness involves insensitivity to uncertainty, and the additional concept of anti-fragility relates to the ability to prosper and benefit from adversity, \ie performance gain when exposed to adversity \citep{munoz2022resilience}. Intuitively, these definitions can also be associated with corresponding definitions in systems and control, through the introduction of a suitable performance index and a mathematical formalisation. However, we will not proceed with this inquiry, which is out of the scope of this monograph, and we leave it as a direction for future exploration.

It is important to stress that, when mathematical models are available, the concepts of stability, resilience and robustness can be rigorously formalised. The confusion between the terms arises when such models are \emph{not} available, and researchers must resort to verbal descriptions, where the distinctions between the different properties become blurred and difficult to assess without ambiguity. Although this ambiguity may be unavoidable in empirical research, in the next chapters we will show that, in the presence of mathematical models, the concepts of stability, robustness and resilience can be formally defined, and the resulting definitions can be leveraged to gain deeper insight into the system properties. In turn, these definitions will foster the formulation of testable hypotheses and predictions, and promote more precise modelling approaches in both qualitative and quantitative disciplines.

\newpage
\section{Structural analysis and robust analysis}
\label{ch:struct-an}

Systems in nature, at all scales, are characterised by a remarkable complexity and, at the same time, by an astounding robustness, in spite of huge uncertainties, variability, and/or environmental fluctuations. Their study is of fundamental importance to unveil the mechanisms of life and understand, \eg cellular dynamics, the onset of diseases, the spread of infections and epidemics, but it is also extremely challenging due to their inherent complexity, their intrinsically nonlinear behaviours and the many parameters at play, whose intertwined effects are often hard to disentangle.

In the life sciences, parameter-dependent numerical simulations are typically used to predict the behaviour of natural systems case by case. However, the exact models and parameter values are often poorly known, and extensive simulation campaigns only offer quantitative snapshots of the system behaviour within the parameter landscape, which cannot provide any theoretical guarantees or insight into the system functioning. On the other hand, qualitative behaviours are often preserved even with
huge parameter variations, because they rely on the system's \emph{interconnection structure}. 

Parameter-free \emph{structural approaches} are powerful to check whether a property is preserved for a whole \emph{family of uncertain systems}, exclusively due to its structure.
Rather than testing specific systems with given parameters, structural methods provide criteria that are independent of parameter values and specific modelling choices.
These methods help reveal cases in which crucial qualitative behaviours are maintained, despite time-varying, uncertain or unknown parameters, thanks to peculiar motifs \citep{Alon2007,Milo2002a,Stone2019,yeger2004} in the overall system structure: the interconnection topology alone, along with qualitative features of the individual dynamic units, guarantees that essential life-preserving properties hold regardless of parameter values. Structural analysis \citep{blanchini2021structural} has been successfully employed to provide a parameter-free assessment of properties, to better understand biological and physiological systems and to support the synthesis of biomolecular circuits.
Many properties of biological interest have been structurally assessed, including stability \citep{Al-Radhawi2016,Blanchini2014d,Blanchini2017f,Clarke1980}, bistability and multistability \citep{Angeli2004detection,Blanchini2014structural,Cosentino2012b,Kaufman2007,Mincheva2008,Snoussi1998,soule2003}, onset of persistent oscillations \citep{Blanchini2014structural,Culos2016,Katz2025osc,Mincheva2008,Snoussi1998}, and input-output influences among the dynamic components of the network \citep{Feliu2019,Giordano2016,Mochizuki2015,Sontag2014,Vassena2023},
as well as emergence and evolution of patterns in the space-time dynamics \citep{Arcak2013,Hori2015}, possibly leading to synchronous behaviours \citep{Bagheri2008,Scardovi2010}. 
Several methods have been developed for structural analysis, based for instance on algebraic geometry and graph theory \citep{Brualdi1997,gross2005graph,Mincheva2008,Radde2009,Radde2010,schaft2013,schaft2015},
including graphical methods such as signed loop analysis \citep{Angeli2009b,Gouze1998,Radde2010,Snoussi1998,soule2003},
qualitative methods \citep{Jacquezt1993,Maybee1969},
sparsity and sign patterns \citep{Domijan2012,Culos2016,Katz2025osc}, topological degree theory \citep{Lloyd1978},
as well as parametric robustness \citep{Barmish1994,sanchezpena1998,zhou1998essentials},
set-theoretic and Lyapunov methods \citep{Al-Radhawi2016,Blanchini2014d,Blanchini1999,BM2015},
dynamic immersion in differential inclusions \citep{AubinCellina},
positive \citep{BermanPlemmons1994} and monotone systems \citep{Hirsch2006b,Leenheer2007,Smith1995}.
The theory of chemical reaction networks has provided many structural results \citep{Angeli2009c,Feinberg2019,Horn1972,Horn1973}, including the celebrated Deficiency Zero Theorem by \cite{Feinberg1987}.

Structural analysis aims at checking whether a property of interest holds \emph{structurally}, for all the elements of a family of systems characterised by a common structure. The answer is either ``yes'' or ``no''. A ``yes'' answer offers a powerful and rigorous proof of the extraordinary parameter-free robustness of a whole family of systems, along with very strong theoretical guarantees on the behaviour of all systems within the family, independent of parameter values.
However, the answer is often ``no'', and then some (but not all) elements of the family may still enjoy the property. How many? With which probability? In this case, structural analysis leaves us clueless and robustness analysis -- based on either worst-case \citep{Barmish1994} or probabilistic \citep{Metropolis1953,tempo2013randomized,Sagar1998,Sagar2011} approaches -- is needed to assess whether the property is satisfied \emph{robustly} (\eg within a given subset of the parameter space, characterised by lower and upper bounds for the parameters) and quantify the degree of robustness (related, \eg to the size of the parameter subset in which the property holds, or, given some probability distributions for the parameters, by the probability with which the property holds). Also, in some cases a property consistently holds when repeatedly testing the system with random parameter values, but it cannot be proven to hold structurally, and it is then crucial to understand why; this situation suggests the existence of a zero measure set in the parameter space for which the property does not hold.

Here, we survey methodologies for the structural analysis of systems in nature and we highlight their ability to: offer a deeper insight into the fundamental properties of biological phenomena, which explains their robustness and the connection between dynamic behaviours and network topology \citep{Palumbo2005,stelling2002}; suggest the identification of therapeutic targets for disease treatment \citep{Saadatpour2011}; unravel the design principles and motifs \citep{alon2019introduction} selected by evolution to ensure the robustness and resilience of living organisms; distil structure-based design guidelines for biomolecular circuits in synthetic biology \citep{Baetica2019,DelVecchio2016,Hsiao2018,Koeppl2011} so that they exhibit the desired behaviour in view of their structure, regardless of parameter values and environmental fluctuations.
Also, since structural approaches reveal which dynamic behaviours are necessarily stemming out of the system structure, or incompatible with such a structure, they can enable model falsification, \ie model invalidation \citep{Anderson2009,Angeli2012,Bates2011,Porreca2012}. In fact, whenever experimental results yield a behaviour that is incompatible with the model structure, then the model is structurally unsuitable to represent the considered phenomenon: no matter how the model parameters are chosen, the behaviour observed in experiments cannot be reproduced. Analogous techniques can be used to compare alternative models proposed for the same phenomenon \citep{Anderson2009,Giordano2018,Hamadeh2011,Polynikis2009}.

Finally, this chapter discusses the limitations of structural approaches and outlines future research directions to integrate structural, robust and probabilistic methods, in order to gain a deeper understanding of systems for which no structural guarantees can be provided for the given property of interest. In these cases, it is fundamental to understand why the (desired or expected) property does not hold structurally, and to identify the system features that enable or prevent the property and the key parameters that must be finely tuned for the property to hold.

\subsection{Capture the structure: sign patterns and decompositions}

To assess properties that are independent of the specific parameter values and functional expressions, and exclusively rely on the \emph{system structure} (graph topology along with qualitative properties), we first need to formally capture the system structure.
As mentioned in Section~\ref{sec:dninb}, graphs -- both signal graphs and flow graphs -- are precious to visualise the structure of a system, and graphical methods are useful for its analysis; see, \eg \cite{Radde2009,Radde2010,Thomaseth2017}. Here, we briefly illustrate two possible mathematical characterisations of the system structure, in association with these two types of graphs.

\begin{remark}[\textbf{Embracing biology and epidemiology}]
Although the focus is often on chemical reaction networks and biological models, the methodologies discussed throughout this chapter embrace systems in a wide range of domains, including both biochemical and epidemiological models, as well as biological and ecological models. In fact, as recalled in Remark~\ref{rem:epidemicCRN}, any compartmental epidemic model can be seen as a chemical reaction network system of the form \eqref{eq:ode-system}, and hence all the mathematical results of chemical reaction network theory also apply to compartmental models in epidemiology. The same holds for ecological models, which can typically be modelled either as chemical reaction networks (see, \eg the prey-predator model discussed in Example~\ref{ex:LV}) or as activation-inhibition systems of the form \eqref{eq:modelclass}, to capture mutualism as well as competition among species.
\end{remark}

\subsubsection{System structure: sign pattern}\label{sec:signpattern}

For systems of the form \eqref{eq:modelclass}, namely $\dot x(t) = f(x(t),\vartheta)$, where $\vartheta$ is a vector of uncertain parameters, it can often be reasonably assumed that $f_i$, $i=1,\dots,n$, are monotonic functions of the state variables; an example is provided by activation-inhibition systems discussed in Section~\ref{sec:dninb}. Then, the system Jacobian $J(x,\vartheta)$ is a sign-definite matrix and can be associated with a \emph{sign pattern} matrix $\Sigma$, having entries $\Sigma_{ij} =\mbox{sign}(J_{ij}(x,\vartheta)) = \mbox{sign}(\partial f_i(x,\vartheta) / \partial x_j)$. Matrix $\Sigma$ captures the \emph{system structure} and is an equivalent representation of the system's signal graph: its positive (respectively, negative) entries are associated with positive (respectively, negative) edges in the signal graph, corresponding to activating (respectively, inhibiting) interactions.

\begin{example}[\textbf{Structure of the IFFL}]
The structure of the IFFL activation-inhibition network system \eqref{eq:IFFL}, whose signal graph is shown in Figure~\ref{fig:IFFLsignal}, can be described by the sign pattern matrix
\begin{equation*}
\Sigma_{\mbox{IFFL}} = \begin{bmatrix}
- & 0 & 0 \\
+ & - & + \\
- & 0 & -
\end{bmatrix}.
\end{equation*}
\end{example}

\begin{example}[\textbf{Repressilator and promotilator}]\label{ex:represspromot}
The repressilator (respectively, promotilator), described, \eg by \cite{alon2019introduction,DelVecchio2015,Elowitz2000,samad2005}, is an activation-inhibition network formed by a loop in which each node inhibits (respectively, activates) the next. In particular, the three-node repressilator system is
\begin{equation*}
\begin{cases}
\dot x_1(t) = -\mu_1 x_1(t) + g_{13}(x_3)\\
\dot x_2(t) = -\mu_2 x_2(t) + g_{21}(x_1)\\
\dot x_3(t) = -\mu_3 x_3(t) + g_{32}(x_2)
\end{cases}
\end{equation*}
and has sign pattern matrix
\begin{equation*}
\Sigma_{R} = \begin{bmatrix}
- & 0 & - \\
- & - & 0 \\
0 & - & -
\end{bmatrix},
\end{equation*}
while the three-node promotilator system is
\begin{equation*}
\begin{cases}
\dot x_1(t) = -\mu_1 x_1(t) + f_{13}(x_3)\\
\dot x_2(t) = -\mu_2 x_2(t) + f_{21}(x_1)\\
\dot x_3(t) = -\mu_3 x_3(t) + f_{32}(x_2)
\end{cases}
\end{equation*}
with sign pattern matrix
\begin{equation*}
\Sigma_{P} = \begin{bmatrix}
- & 0 & + \\
+ & - & 0 \\
0 & + & -
\end{bmatrix}.
\end{equation*}
\end{example}

\subsubsection{System structure: $BDC$-decomposition}

Consider a system of the form \eqref{eq:ode-system}, namely $\dot{x}(t) = Sg(x(t),\vartheta) + g_0(\vartheta)$, associated with a flow graph, and assume that $g_i$, $i=1,\dots,m$, are continuously differentiable functions with \emph{sign-definite partial derivatives}.
Then, the system structure can be effectively captured by its $BDC$-decomposition \citep{Blanchini2014d,Blanchini2017f,BG2019,blanchini2021structural,Giordano2016,GiuliaPhD2016}.

\begin{definition}[\textbf{Local $BDC$-decomposition}]\label{def_adm_BDC}
System \eqref{eq:ode-system} admits a \emph{local $BDC$-decomposition} if, for any $x \in \mathbb{R}^n_{\geq 0}$, the system Jacobian $J(x,\vartheta) = \frac{\partial Sg(x,\vartheta)}{\partial x}$ can be written as the positive linear combination of constant rank-one matrices:
\begin{equation}\label{eq:BDC_BDC_rank1}
J(x,\vartheta)= \sum_{h=1}^q B_h \Delta_h(x,\vartheta) C_h^\top= \sum_{h=1}^q R_h \Delta_h(x,\vartheta),
\end{equation}
where $B_h$ and $C_h^\top$ are constant column and row vectors, respectively, so that $R_h = B_h C_h^\top$ are constant rank-one matrices, while $\Delta_h(x,\vartheta)$, $h=1,\dots,q$, are positive scalar functions.
\end{definition}
We can compactly write $J(x,\vartheta) = B\Delta(x,\vartheta)C$, where $\Delta(x,\vartheta)$ is a diagonal matrix with positive diagonal entries $\Delta_h(x,\vartheta)$, $B$ is the matrix formed by the columns $B_h$ and $C$ is the matrix formed by the rows $C_h^\top$.

The definition holds for any non-negative $x$, hence, in particular, for any equilibrium point $\bar x(\vartheta)$ such that $S g(\bar x(\vartheta), \vartheta) + g_0(\vartheta) = 0$, which allows for a $BDC$-decomposition-based analysis around the equilibrium. 

As shown by \cite{Giordano2016}, a system \eqref{eq:ode-system} always admits a $BDC$-decomposition: matrices $B$ and $C$ can be built systematically, based on matrix $S$ and on qualitative information about $g(\cdot)$.

\begin{proposition}
Any system \eqref{eq:ode-system} admits a $BDC$-decomposition, $J(x,\vartheta) = B\Delta(x,\vartheta)C$, according to Definition~\ref{def_adm_BDC}.
\end{proposition}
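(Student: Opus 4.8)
The plan is to construct the decomposition explicitly from the product structure of the Jacobian, absorbing the sign of each (sign-definite) partial derivative into a constant vector so that the remaining scalar factor is non-negative. First I would compute the Jacobian of \eqref{eq:ode-system}. Since the influx term $g_0(\vartheta)$ does not depend on $x$, it contributes nothing upon differentiation, and we obtain
\begin{equation*}
J(x,\vartheta) = \frac{\partial\, Sg(x,\vartheta)}{\partial x} = S\,\frac{\partial g(x,\vartheta)}{\partial x},
\end{equation*}
where $\partial g/\partial x$ denotes the $m\times n$ matrix with entries $\partial g_k/\partial x_j$.

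Next I would expand this matrix product into a sum of rank-one contributions. Writing $s_k \in \mathbb{R}^n$ for the $k$-th column of $S$ and $e_j \in \mathbb{R}^n$ for the $j$-th canonical basis vector, linearity of differentiation gives
\begin{equation*}
J(x,\vartheta) = \sum_{k=1}^m s_k\,(\nabla_x g_k)^\top = \sum_{k=1}^m\sum_{j=1}^n \frac{\partial g_k}{\partial x_j}(x,\vartheta)\, s_k e_j^\top,
\end{equation*}
so that each summand is a scalar multiple of the \emph{constant} rank-one matrix $s_k e_j^\top$.

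The decisive step is the sign absorption. By assumption each $\partial g_k/\partial x_j$ is sign-definite on the whole domain: either non-negative, non-positive, or identically zero. I would discard the identically-zero terms and enumerate the surviving index pairs $(k,j)$ as $h = 1,\dots,q$. For each retained $h$, set $\Delta_h(x,\vartheta) := \lvert \partial g_k/\partial x_j(x,\vartheta)\rvert \geq 0$ (strictly positive wherever the corresponding partial derivative does not vanish), and let $B_h := \sigma_{kj}\, s_k$ and $C_h := e_j$, where $\sigma_{kj}\in\{+1,-1\}$ is the constant sign of $\partial g_k/\partial x_j$. Then $R_h := B_h C_h^\top = \sigma_{kj}\, s_k e_j^\top$ is a constant rank-one matrix, and $B_h \Delta_h(x,\vartheta) C_h^\top = (\partial g_k/\partial x_j)\, s_k e_j^\top$ reproduces exactly the corresponding summand above. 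Summing over $h$ recovers $J(x,\vartheta)$ in the form \eqref{eq:BDC_BDC_rank1}, which establishes the claim; stacking the $B_h$ as columns, the $C_h^\top$ as rows, and the $\Delta_h$ along a diagonal yields the compact form $J = B\Delta(x,\vartheta)C$.

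The main obstacle, and the only place where the hypotheses are genuinely used, is precisely this sign absorption: constancy of $B_h$ (equivalently $R_h$) hinges on the constancy of $\sigma_{kj}$ over $(x,\vartheta)$, which is guaranteed exactly by the sign-definiteness of the partial derivatives. Were some $\partial g_k/\partial x_j$ to change sign on the domain, no single constant rank-one matrix could represent that Jacobian entry with a non-negative scalar multiplier, and the construction would fail. Everything else—the constancy of $s_k$ and $e_j$, and the fact that the sum telescopes back to $J$—is routine bookkeeping.
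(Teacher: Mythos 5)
Your proof is correct and takes essentially the same constructive route as the paper: expand $J(x,\vartheta) = S\,\partial g/\partial x$ into rank-one terms $s_k e_j^\top$ weighted by the partial derivatives, then use sign-definiteness to set $\Delta_h = \bigl|\partial g_k/\partial x_j\bigr|$ with constant $B_h$, $C_h$. The only cosmetic difference is that you absorb the constant sign $\sigma_{kj}$ into $B_h$, whereas the paper places it in the single non-zero entry of $C_h^\top$; the two choices are interchangeable.
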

\begin{proof}
The statement is proved constructively by \cite{Giordano2016}: equation~\eqref{eq:ode-system} is rewritten as
$\dot x =  \sum_{j=1}^s S_j~g_j(x,\vartheta) + g_0$,
where $S_j$ is the $j$th column of matrix $S$. The corresponding Jacobian is
\[J(x,\vartheta)= \sum_{j=1}^s S_j~ 
\left[ \frac{\partial g_j(x,\vartheta)}{\partial x_1}
~~\frac{\partial g_j(x,\vartheta)}{\partial x_2} ~~ \dots ~~
\frac{\partial g_j(x,\vartheta)}{\partial x_n} \right ] \ .\]
Denoting the absolute values of all the non-zero partial derivatives by $\Delta_1(x,\vartheta), \Delta_2(x,\vartheta),\dots, \Delta_q(x,\vartheta)$, we can write
$J(x,\vartheta)=  \sum_{h=1}^q B_h \Delta_h(x,\vartheta) C_h^\top$,
where
\begin{itemize}
\item $\Delta_h(x,\vartheta) = \left |\frac{\partial g_j(x,\vartheta)}{\partial x_i}\right|$ for some $i$ and $j$;
\item $B_h = S_j$, the column of $S$ associated with $f_j$;
\item $C_h^\top$ has a single non-zero entry in the $i$th position, equal to the sign of $\frac{\partial g_j(x,\vartheta)}{\partial x_i}$.
\end{itemize}
\end{proof}

\begin{example}\label{ex:BDC_CRN}
Consider the chemical reaction network in Example~\ref{ex:CRN}, assume that all the partial derivatives are positive and denote them as
$\alpha = \frac{\partial g_{a}(a,\vartheta)}{\partial a}$,
$\beta = \frac{\partial g_{b}(b,\vartheta)}{\partial b}$,
$\gamma = \frac{\partial g_{ac}(a,c,\vartheta)}{\partial a}$,
$\zeta = \frac{\partial g_{ac}(a,c,\vartheta)}{\partial c}$.
The Jacobian matrix and its $BDC$-decomposition are
\begin{equation*}
J =
\begin{bsmallmatrix}
-(\alpha+\gamma) & 0 & -\zeta \\
\alpha & -\beta & 0 \\
\alpha-\gamma & 0 & -\zeta
\end{bsmallmatrix} =
\underbrace{\begin{bsmallmatrix}
-1 & ~~0 & ~-1 & -1\\
~~1 & -1 & ~~0 & ~~0\\
~~1 & ~~0 & -1 & -1 &
\end{bsmallmatrix}}_{=\mbox{\normalsize $B$}} \underbrace{\small \mbox{diag}\{ \begin{smallmatrix}\alpha, \beta, \gamma, \zeta\end{smallmatrix}\}}_{=\mbox{\normalsize $\Delta$}} \underbrace{\begin{bsmallmatrix}
1 & ~~0 & ~~0\\
0 & ~~1 & ~~0\\
1 & ~~0 & ~~0\\
0 & ~~0 & ~~1
\end{bsmallmatrix}}_{=\mbox{\normalsize $C$}}.
\end{equation*}
The $BDC$-decomposition is unique up to permutations of the state variables: an order must be chosen for the positive partial derivatives. Then, for instance, since the first derivative is $\alpha = \frac{\partial g_{a}(a,\vartheta)}{\partial a}$, the first column of $B$ corresponds to $S_1$, associated with the reaction rate function $g_{a}$, and the first row of $C$ has a $1$ entry in the first position, corresponding to variable $a$. Column $S_3$ is repeated twice in matrix $B$ because $g_{ac}$ depends on two state variables.
\end{example}

The constant matrices $B$ and $C$ can be immediately built based on the network topology, \ie the graph structure, and the diagonal matrix $\Delta$ carries on the diagonal the absolute values of all the non-zero partial derivatives in the system. Essentially, any nonlinear system whose Jacobian matrix is the \emph{positive} linear combination of constant rank-one matrices (which are independent of the parameters and of the state) admits a $BDC$-decomposition; hence, not only chemical reaction networks, but also most flow networks in general, not just in the life sciences, but also in the physical sciences and in engineering, admit a $BDC$-decomposition.

The local $BDC$-decomposition in Definition~\ref{def_adm_BDC} is related to the system Jacobian. However, admitting a $BDC$-decomposition is not just a local property associated with the linearised system, but also a global property of the original nonlinear system \citep{GiuliaPhD2016}, as can be shown based on the integral variational equation.

\begin{proposition}\label{integralformula}
Given a continuously differentiable function $g \colon {\cal D} \subset \mathbb{R}^n \rightarrow \mathbb{R}^n$, where ${\cal D}$ is a convex domain with non-empty interior, it holds that $g(x_2) - g(x_1) =\int_0^1 \frac {\partial g}{\partial x}(\xi (x_2-x_1)+x_1) d \xi  (x_2-x_1)$ for all $x_1,x_2 \in {\cal D}$.
\end{proposition}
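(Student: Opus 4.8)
The plan is to reduce this multivariate identity to a one-dimensional application of the fundamental theorem of calculus, by restricting $g$ to the line segment joining $x_1$ and $x_2$. First I would exploit the convexity of ${\cal D}$: for every $\xi \in [0,1]$ the point $\gamma(\xi) := x_1 + \xi(x_2 - x_1)$ lies in ${\cal D}$, so the composite map $\phi \colon [0,1] \to \mathbb{R}^n$ defined by $\phi(\xi) := g(\gamma(\xi))$ is well defined on the entire interval and inherits continuous differentiability from $g$ (with one-sided derivatives understood at the endpoints).

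Next I would compute $\phi'$ via the chain rule. Since $\gamma'(\xi) = x_2 - x_1$ is constant, this gives $\phi'(\xi) = \frac{\partial g}{\partial x}(\gamma(\xi))\,(x_2 - x_1)$, where $\frac{\partial g}{\partial x}$ denotes the Jacobian of $g$; the $i$th component of $\phi'$ is the directional derivative of $g_i$ along $x_2 - x_1$, which is continuous in $\xi$ because $g$ is $C^1$.

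Then I would apply the fundamental theorem of calculus componentwise: for each index $i$, the scalar function $\phi_i$ is $C^1$ on $[0,1]$, so $\phi_i(1) - \phi_i(0) = \int_0^1 \phi_i'(\xi)\,d\xi$. Collecting the $n$ components into a vector yields $\phi(1) - \phi(0) = \int_0^1 \phi'(\xi)\,d\xi$. Substituting $\phi(1) = g(x_2)$, $\phi(0) = g(x_1)$ and the expression for $\phi'$ obtained above gives $g(x_2) - g(x_1) = \int_0^1 \frac{\partial g}{\partial x}(x_1 + \xi(x_2 - x_1))\,(x_2 - x_1)\,d\xi$, which is precisely the claimed identity; the constant factor $x_2 - x_1$ may be pulled outside the integral, matching the form written in the statement.

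I do not expect a serious obstacle here, as this is essentially the vector-valued Taylor/mean-value identity along a segment. The only points requiring care are the use of convexity to guarantee that the \emph{whole} segment stays inside ${\cal D}$ (so that $\phi$ is defined and differentiable on all of $[0,1]$), and the fact that the fundamental theorem of calculus must be invoked componentwise for the vector-valued, continuous integrand $\phi'$; both are immediate consequences of the $C^1$ hypothesis on $g$.
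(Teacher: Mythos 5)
Your proof is correct and follows essentially the same route as the paper's: both define the restriction of $g$ to the segment $\xi \mapsto g(x_1 + \xi(x_2-x_1))$, differentiate it via the chain rule, and apply the fundamental theorem of calculus to recover the integral identity. Your version merely spells out the role of convexity and the componentwise application of the fundamental theorem, which the paper leaves implicit.
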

\begin{proof}
Denoting by $\varphi(\xi) \doteq g(\xi (x_2-x_1)+x_1)$ for $0 \leq \xi \leq 1$, since
$\frac{d\varphi}{d\xi}(\xi) =\frac{\partial g}{\partial x}(\xi (x_2-x_1)+x_1) \frac{d(\xi (x_2-x_1)+x_1)}{d\xi}=\frac{\partial g}{\partial x}(\xi (x_2-x_1)+x_1) (x_2-x_1)$,
we can write the difference
$g(x_2) - g(x_1) = \varphi(1) - \varphi(0) = \int_0^1 \varphi'(\xi) d \xi = \int_0^1 \frac {\partial g}{\partial x}(\xi (x_2-x_1)+x_1) d \xi (x_2-x_1)$.
\end{proof}

Then, we can define a global $BDC$-decomposition as follows.
Given system \eqref{eq:ode-system} (namely, $\dot x(t) = S g(x(t),\vartheta) + g_0$) along with the equilibrium condition $0 =  S g(\bar x(\vartheta), \vartheta) + g_0$, denoting $z \doteq x - \bar x(\vartheta)$ and subtracting the two equations yields the shifted system
 \begin{equation}\label{eq:shifted_z}
\dot z = S [g(z + \bar x(\vartheta)) - g(\bar x(\vartheta))].
  \end{equation}
Since the system admits a \emph{local} $BDC$-decomposition, for any fixed equilibrium $\bar x(\vartheta)$ we can consider $z = x - \bar x(\vartheta)$ and the system can be \emph{equivalently rewritten} through the \emph{global} $BDC$-representation
 \begin{equation}\label{eq:shiftedBDCrepres}
\dot z = [B D(z) C] z.
  \end{equation}
In fact, applying the integral formula in Proposition~\ref{integralformula} to the right-hand side of system \eqref{eq:shifted_z} provides
$\dot z = \left[ \int_0^1 J(\xi z + \bar x) d\xi \right] z$.
In view of the local $BDC$-decomposition, equivalently
$\dot z = \left[ B \left( \int_0^1 \Delta(\xi z + \bar x) d \xi \right) C \right] z = \left[ B \left( \int_0^1 \mbox{diag}\left\{ \frac{\partial g_i (\xi z + \bar x)}{\partial x_j} \right\} d\xi \right) C \right] z$, where we explicitly use the fact that matrices $B$ and $C$ are parameter independent, and hence are the same for all $x$.

Therefore, matrices $D$ (in the global $BDC$-decomposition) and $\Delta$ (in the local $BDC$-decomposition) are connected by the integral relationship
$D(z) = \int_0^1 \Delta(\xi z + \bar x) d \xi = \int_0^1 \mbox{diag}\left\{ \frac{\partial g_i (\xi z + \bar x)}{\partial x_j} \right\} d\xi = \mbox{diag}\left\{ \int_0^1  \frac{\partial g_i (\xi z + \bar x)}{\partial x_j} d\xi \right\} = \mbox{diag}\left\{ \Gamma_{ij}(z) \right\}$,
where $\Gamma_{ij}(z) := \int_0^1 \frac{\partial g_i (\xi z + \bar x)}{\partial x_j} d\xi.$
Due to the monotonicity of the functions $g_i(\cdot)$, whose integral is computed on a non-zero interval, $\Gamma_{ij}(z)$ is strictly positive; it also admits a maximum and a minimum in any closed and bounded domain:
$0 < \nu \leq \nu_{ij} \leq \Gamma_{ij}(z) \leq \mu_{ij} \leq \mu$.

Hence, a system admits a global $BDC$-representation \eqref{eq:shiftedBDCrepres} \emph{if and only if} it admits a local $BDC$-decomposition \eqref{eq:BDC_BDC_rank1}, as shown by \cite{GiuliaPhD2016}.
\begin{proposition}[\textbf{Global $BDC$-decomposition}]
System \eqref{eq:ode-system} admitting an equilibrium $\bar x(\vartheta)$ can be equivalently written in the form $\dot z = B D(z) C z$,
where $z = x-\bar x(\vartheta)$, if and only if it admits a local $BDC$-decomposition according to Definition~\ref{def_adm_BDC}.
\end{proposition}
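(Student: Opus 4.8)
The plan is to prove the two implications separately. The forward direction, that a local $BDC$-decomposition implies the global $BDC$-representation, is essentially the computation already carried out in the paragraphs preceding the statement; the reverse direction, that the global representation implies the local decomposition, is obtained by inverting the averaging relation between $D$ and $\Delta$. Throughout, the decisive fact is that $B$ and $C$ are constant matrices, independent of both the state $x$ and the parameters $\vartheta$.

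For the forward implication, I would start from the shifted system \eqref{eq:shifted_z} and apply the integral formula of Proposition~\ref{integralformula} to $g(z+\bar x(\vartheta),\vartheta) - g(\bar x(\vartheta),\vartheta)$, obtaining $\dot z = \left[\int_0^1 J(\xi z + \bar x(\vartheta),\vartheta)\,d\xi\right] z$. Substituting the local decomposition $J = B\Delta(x,\vartheta)C$ and pulling the constant factors $B$ and $C$ out of the integral yields the matrix identity $\int_0^1 J(\xi z + \bar x(\vartheta),\vartheta)\,d\xi = B D(z) C$ with $D(z) \doteq \int_0^1 \Delta(\xi z + \bar x(\vartheta),\vartheta)\,d\xi$, hence $\dot z = B D(z) C z$. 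It then remains to verify that $D(z)$ is diagonal with strictly positive entries: diagonality is inherited from $\Delta$, while positivity together with the bounds $0 < \nu \le \Gamma_{ij}(z) \le \mu$ follows from the sign-definiteness of the $g_i$ integrated over the nondegenerate interval $[0,1]$, exactly as established above.

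For the reverse implication, I would read the global representation at the level of the matrix factorization $M(z) \doteq \int_0^1 J(\xi z + \bar x(\vartheta),\vartheta)\,d\xi = B D(z) C$ and recover the local decomposition by differentiating along rays. Evaluating at $z = 0$ already gives $J(\bar x(\vartheta),\vartheta) = B D(0) C$, i.e. Definition~\ref{def_adm_BDC} at the equilibrium. For an arbitrary point $x = \bar x(\vartheta) + s v$, I would set $P(s) \doteq \int_0^s J(u v + \bar x(\vartheta),\vartheta)\,du$; the substitution $u = \xi s$ shows $P(s) = s\,M(sv) = B\,[s D(sv)]\,C$, and since $\frac{dP}{ds}(s) = J(sv + \bar x(\vartheta),\vartheta)$ by the fundamental theorem of calculus, I obtain $J(x,\vartheta) = B\,\Delta(x,\vartheta)\,C$ with $\Delta(x,\vartheta) \doteq \frac{d}{ds}\big[s D(sv)\big]$, a diagonal matrix because $D$ is diagonal. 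Its strict positivity again follows from sign-definiteness of the $g_i$, completing the equivalence.

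The step I expect to be the main obstacle is precisely this reverse direction, because a naive differentiation of the vector field $B D(z) C z$ with respect to $z$ produces, via the product rule, an unwanted term proportional to $\partial D/\partial z$ times $C z$ that vanishes only at the equilibrium $z = 0$. The resolution is to work with the averaged matrix $M(z)$ rather than with the vector field directly, so that the ray-wise fundamental-theorem-of-calculus argument cleanly inverts the relation $D(z) = \int_0^1 \Delta(\xi z + \bar x(\vartheta),\vartheta)\,d\xi$ and recovers a genuinely diagonal $\Delta$ at every point; the constancy of $B$ and $C$ is what guarantees that the same pair of structural matrices serves at all $x$.
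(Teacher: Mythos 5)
Your forward direction is correct and coincides with the paper's proof: apply the integral formula of Proposition~\ref{integralformula} to the shifted system \eqref{eq:shifted_z}, use the constancy of $B$ and $C$ to pull them out of the integral, and inherit positivity of $D(z)=\int_0^1 \Delta(\xi z + \bar x(\vartheta),\vartheta)\,d\xi$ from the positivity of $\Delta$ required by Definition~\ref{def_adm_BDC}. In the converse, your distrust of naive differentiation is well placed — the paper's own one-line argument (``its linearisation yields $J(z)=B\Delta(z)C$'') silently drops the product-rule term involving $\partial D/\partial z$, which vanishes only at $z=0$ — and your ray-wise fundamental-theorem device is the natural way to invert the averaging relation. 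Note, however, that it needs the \emph{matrix} identity $\int_0^1 J(\xi z+\bar x(\vartheta),\vartheta)\,d\xi = BD(z)C$, whereas the hypothesis of the proposition is only the \emph{vector-field} identity $S[g(z+\bar x(\vartheta),\vartheta)-g(\bar x(\vartheta),\vartheta)]=BD(z)Cz$; from the latter, differentiating along the ray through $v$ yields $J(\bar x + sv)\,v = B\,\tfrac{d}{ds}\bigl[sD(sv)\bigr]\,C\,v$ in that single direction only, not an equality of matrices.

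The decisive gap is your final positivity claim. In the converse direction, sign-definiteness of the partial derivatives $\partial g_j/\partial x_i$ is not available — by the paper's constructive proposition it is essentially equivalent to what you are trying to prove — so invoking it is circular. And positivity of $\Delta(x)=\tfrac{d}{ds}\bigl[sD(sv)\bigr] = D(sv)+s\,\tfrac{d}{ds}D(sv)$ genuinely does not follow from $D \succ 0$: the second term is sign-indefinite and can dominate. A scalar example makes this concrete: take $S=1$, $g_0=0$, $g(x)=x+2\sin x$, a smooth rate function, non-negative on $\mathbb{R}_{\geq 0}$, with equilibrium $\bar x=0$. Then $\dot z = D(z)z$ with $B=C=1$ and $D(z)=1+2\sin(z)/z \geq 1-2\cdot 0.22>0$ for all $z$, so the system admits the global form with a positive diagonal $D$; yet $\tfrac{d}{dz}\bigl[zD(z)\bigr]=g'(z)=1+2\cos z$ changes sign, so the rate is not monotonic and no structural local decomposition (here $B=S=1$, $C=\pm 1$, $\Delta=|g'|$) exists. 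Hence, with $D$ required only to be positive diagonal, the converse implication is false and your argument cannot be completed by any differentiation scheme; it closes only under the stronger reading in which $D$ is assumed to carry the ray-average structure $D(z)=\int_0^1\Delta(\xi z+\bar x(\vartheta),\vartheta)\,d\xi$ for some positive diagonal $\Delta$ — in which case your computation simply returns that integrand — which, together with the standing assumption of sign-definite partial derivatives made at the start of the subsection, is how the paper (and the thesis it cites) implicitly understands the global representation.
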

\begin{proof}
If $J(x) = B\Delta(x)C$ for all $x$ in the domain, then $J(z) = B\Delta(z)C$ for all $z=x-\bar x(\vartheta)$, and integration exploiting the integral formula in Proposition~\ref{integralformula} entails the result.
Conversely, if system \eqref{eq:ode-system} is equivalent to $\dot z = BD(z)C z$, then its linearisation yields $J(z) = B\Delta(z)C$, hence $J(x) = B\Delta(x)C$.
\end{proof}

The constant matrices $B$ and $C$ are the very same matrices in both the local and the global $BDC$-decomposition of a given system, and they capture the system structure. The unknown parameter values and functional expressions are isolated in the positive definite diagonal matrix ($\Delta$ in the local case, $D$ in the global case), whose entries are unknown, or uncertain; when assessing structural properties, we do not require any knowledge about the diagonal matrix $\Delta \succ 0$, or $D \succ 0$, because we seek properties that hold for the whole family of systems characterised by a given pair $(B,C)$.

\begin{remark}[\textbf{Generality of the $BDC$-decomposition}]
Also any system of the form \eqref{eq:modelclass}, with $f_i$ ($i=1,\dots,n$) monotonic functions of the state variables, admits a $BDC$-decomposition. In fact, as observed in Section~\ref{sec:dninb}, a system of the form \eqref{eq:ode-system} can always be cast in the form \eqref{eq:modelclass}, and vice versa.
As an example, here is the $BDC$-decomposition of the Jacobian of the IFFL system in \eqref{eq:IFFL}:
\begin{equation*}
J = \begin{bsmallmatrix}
-\mu_1 & 0 & 0\\
\alpha & -\mu_2 & \beta\\
-\gamma & 0 & -\mu_3
\end{bsmallmatrix}=
\underbrace{\begin{bsmallmatrix}
0 & 0 & 0 & -1 & 0 & 0\\
1 & 1 & 0 & 0 & -1 & 0\\
0 & 0 & -1 & 0 & 0 & -1
\end{bsmallmatrix}}_{=\mbox{\normalsize $B$}} \underbrace{\small \mbox{diag}\{ \begin{smallmatrix} \alpha, \beta, \gamma, \mu_1, \mu_2, \mu_3 \end{smallmatrix}\}}_{=\mbox{\normalsize $\Delta$}} \underbrace{\begin{bsmallmatrix}
1 & 0 & 0 \\
0 & 0 & 1 \\
1 & 0 & 0 \\
1 & 0 & 0 \\
0 & 1 & 0 \\
0 & 0 & 1\end{bsmallmatrix}}_{=\mbox{\normalsize $C$}},
\end{equation*}
where $\alpha = \frac{d f_{21}(x_1)}{d x_1}$, $\beta = \frac{d f_{23}(x_3)}{d x_3}$ and $\gamma = - \frac{d g_{31}(x_1)}{d x_1}$ are positive scalars.
The $BDC$-decomposition can capture a much broader class of structures than the sign pattern matrix $\Sigma$: every sign-definite Jacobian admits a $BDC$-decomposition, but $BDC$-decomposable Jacobian matrices are not necessarily sign definite (as can be seen, \eg in Example~\ref{ex:BDC_CRN}, where the entry $\alpha-\gamma$ is not sign definite). The $BDC$-decomposition can capture much more general system structures by taking into account cross-constraints and couplings among the matrix coefficients.
\end{remark}

\begin{remark}[\textbf{System in reaction rate coordinates}]\label{rem:EDF}
The same type of decomposition can be adopted for chemical reaction network systems where the state variables are, instead of the species concentrations, the reaction rates, thus yielding the so-called $EDF$-decomposition \citep{Blanchini2016,Blanchini2017f,blanchini2021structural}. Given the state vector $r(t) = g(x(t))-g(\bar x)$, where $\bar x$ is the equilibrium such that $0 = S g(\bar x) + g_0$, the resulting system is
\begin{equation*}
\dot r(t) = \frac{\partial g}{\partial x} \dot x(t) = \frac{\partial g}{\partial x} [Sg(x(t))+g_0] = \left[ \frac{\partial g}{\partial x} S \right] r(t).
\end{equation*}
Then, the system Jacobian can be written as $J_r = E \Delta(x) F$ and the overall nonlinear system can be equivalently rewritten as
\begin{equation*}
\dot r(t) = E D(x(t)) F r(t),
\end{equation*}
where $\Delta$ and $D$ are positive definite diagonal matrices, while matrix $F$ is formed by possibly repeated rows of the stoichiometric matrix $S$ and the $i$th column of $E$ has a single non-zero entry, in position $j$, that is equal to the sign of the derivative $\Delta_i = \frac{\partial g_h}{\partial x_k}$ with $r_j = g_h(x)-g_h(\bar x)$. The $BDC$-decomposition and the $EDF$-decomposition of the same system are intimately connected; \eg the equality $CB = FE$ always holds \citep{Blanchini2016}.
For instance, consider the chemical reaction network in Example~\ref{ex:CRN}, whose $BDC$-decomposition is shown in Example~\ref{ex:BDC_CRN}. By ordering the reaction rates as $g_a$, $g_b$ and $g_{ac}$, the system in reaction rate coordinates can be written as
\begin{equation*}
\dot r(t) =
\underbrace{\begin{bsmallmatrix}
\alpha & 0 & 0 \\
0 & \beta & 0 \\
\gamma & 0 & \zeta
\end{bsmallmatrix}}_{=\mbox{\normalsize $\frac{\partial g}{\partial x}$}}
\underbrace{\begin{bsmallmatrix}
-1 & ~~0 & ~-1 \\
~~1 & -1 & ~~0 \\
~~1 & ~~0 & -1 
\end{bsmallmatrix}}_{=\mbox{\normalsize $S$}} r(t) =
\begin{bsmallmatrix}
-\alpha & 0 & -\alpha \\
\beta & -\beta & 0 \\
-\gamma+\zeta & 0 & -(\gamma+\zeta)
\end{bsmallmatrix} r(t),
\end{equation*}
and its Jacobian admits the following $EDF$-decomposition:
\begin{equation*}
\begin{bsmallmatrix}
-\alpha & 0 & -\alpha \\
\beta & -\beta & 0 \\
-\gamma+\zeta & 0 & -(\gamma+\zeta)
\end{bsmallmatrix} =
\underbrace{\begin{bsmallmatrix}
~~1 & ~~0 & ~~0 & ~~0\\
~~0 & ~~1 & ~~0 & ~~0\\
~~0 & ~~0 & ~~1 & ~~1
\end{bsmallmatrix}}_{=\mbox{\normalsize $E$}} \underbrace{\small \mbox{diag}\{ \begin{smallmatrix}\alpha, \beta, \gamma, \zeta\end{smallmatrix}\}}_{=\mbox{\normalsize $\Delta$}} \underbrace{\begin{bsmallmatrix}
-1 & ~~0 & -1\\
~~1 & -1 & ~~0\\
-1 & ~~0 & -1\\
~~1 & ~~0 & -1
\end{bsmallmatrix}}_{=\mbox{\normalsize $F$}}.
\end{equation*}

\end{remark}

\subsection{Robust and structural properties}

The sign pattern matrix $\Sigma$, as well as the matrices $B$ and $C$ of the $BDC$-decomposition, are examples of \emph{structures} of a system, which are preserved because of the underlying (signal or flow) graph topology and the qualitative features of the system, and are not affected by quantitative aspects and by uncertainty or variability in the parameters.
If a property holds exclusively due to the peculiar structure of the system, for all possible (physically meaningful) parameter values and functional expressions, then it is a \emph{structural property}. Conversely, if, given the system structure, the property holds for (large) prescribed parameter variations within an assigned uncertainty bounding set, then it is a \emph{robust property}. A whole family of systems can be characterised by a common structure, if all the elements of the family can be represented by the same (\eg flow or signal) graph and the involved functions satisfy common qualitative assumptions (\eg monotonicity). We consider the following definition \citep{blanchini2021structural}. 

\begin{definition}[\textbf{Robust property and Structural property}]\label{def:robustness}
Given a family of systems $\mathcal{F} = \{G_{\lambda}\}_{\lambda \in \mathcal{I}}$ and a property $\mathcal{P}$, the property $\mathcal{P}$ is \emph{robustly satisfied} (\emph{robust}) if  $G_{\lambda}$ enjoys the property $\mathcal{P}$ for every $\lambda\in \mathcal{I}$. If, in addition, the family $\mathcal{F}$ is specified qualitatively by a structure, without resorting to numerical bounds, the property $\mathcal{P}$ is \emph{structurally satisfied} (\emph{structural}). 
\end{definition}

Requiring some quantities or parameters to be non-negative in view of their physical nature, as it often happens when modelling phenomena in the life sciences (\eg concentrations, number of individuals, mass, rates cannot be negative), is not considered to be a numerical bound (because it is not an arbitrary choice, but a necessity for the consistency of the mathematical model with the natural phenomenon that it aims at representing) and can still characterise a structure.

Given a system structure, our analysis aims at assessing whether a property of interest $\mathcal{P}$ holds either structurally (for all possible, physically meaningful, parameter values) or robustly (in a subset of the parameter set, and in this case we seek the largest possible subset of the parameter set where the property holds). If the property $\mathcal{P}$ is incompatible with the given structure, meaning that there exists no possible parameter choice for which the property holds, then the property $\neg \mathcal{P}$ holds structurally.

Following in part the recent survey by \cite{blanchini2021structural}, we provide in this section an overview of several properties of interest and possible approaches to assess them structurally, or robustly.

It is worth noting that many properties are relevant across different domains. An example is the case of \emph{spiking systems}, characterised by a specific qualitative behaviour: the system output first increases, up to a maximum, and then decreases \citep{BG2024}. Such a behaviour is recurrent in epidemics, in relation to the peak of infections. It is also widespread in biology \citep{Levine2013,Nakamura2024}. In neurosciences, an analogous concept is that of excitable systems, characterised by spike trains \citep{izhikevich2000neural,Izhikevich2007,Meisel2015a}. For example, consider the trajectories of the output variable of the Incoherent Feed-Forward Loop (IFFL) motif \eqref{eq:IFFL} in systems biology, and of the infected variable in a compartmental Susceptible-Infected-Recovered (SIR) model \eqref{eq:sir} in epidemiology; as shown in Figure~\ref{fig:IFFL-SIR}, both exhibit a qualitatively spiking behaviour for suitable choices of the parameters.

\begin{figure}[ht!]
	\centering
		\includegraphics[width=\textwidth]{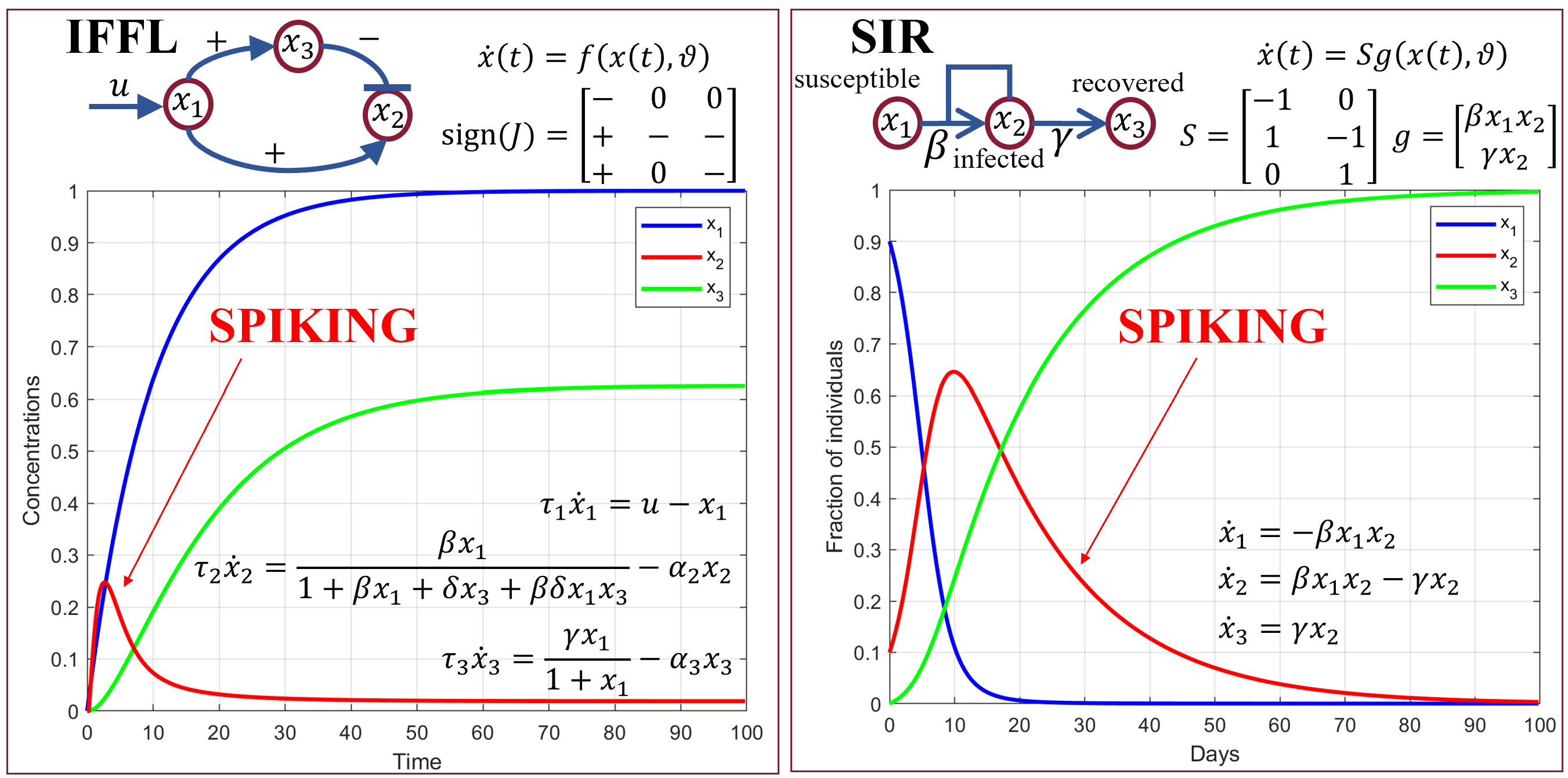}
	\caption{\footnotesize \small The IFFL model and the SIR model both exhibit a spiking dynamic behaviour (qualitatively meant as the system output first increasing up to a maximum, and then decreasing) for appropriate choices of the parameter values. }
	\label{fig:IFFL-SIR}
\end{figure}

\subsubsection{Positivity and boundedness}

Models in the life sciences typically need to be intrinsically positive and bounded. For instance, biochemical species concentrations and population densities cannot be negative. At the same time, they cannot grow unbounded \citep{Anderson2011,Angeli2011,August2010}, due to the presence of degradation or secretion phenomena in living cells, for biochemical species, and of death rate and finite carrying capacity, for populations.

A system $\dot x(t) = f(x(t),\vartheta)$ is \emph{positive} if, given the initial condition $x(t_0) \geq 0$, the system evolution $x(t) \geq 0$ for all $t \geq t_0$, where the inequalities are interpreted componentwise; equivalently, the non-negative orthant is positively invariant. In view of Nagumo's theorem (\cite{Nagumo1942}, see also \cite{Blanchini1999,BM2015}), positivity of the system is equivalent to the fact that, when $x_i=0$, then $f_i(x,\vartheta)|_{x_i=0}\geq 0$; in the linear case $\dot x(t)= M(\vartheta) x(t)$, positivity amounts to requiring that $M(\vartheta)$ is a Metzler matrix, namely, $M_{ij}(\vartheta) \geq 0$ for $i \neq j$. If the positivity condition holds for all possible values of $\vartheta$, then the system is \emph{structurally positive}. For instance, a system of the form \eqref{eq:ode-system} is structurally positive provided that $g_0(\vartheta) \geq 0$ for all $\vartheta$ and that, if $S_{ij}<0$, then the reaction rate function $g_j$ depends on $x_i$ and is zero when $x_i=0$ (meaning that species $X_i$ is one of the reagents in the associated chemical reaction).

For a given $\vartheta$, system $\dot x(t) = f(x(t),\vartheta)$ is uniformly bounded if, given any $H>0$, there exists $K>0$ such that $\| x(0) \| \leq H$ implies $\| x(t) \| \leq K$ for all $t\geq0$; it is bounded if, for any initial condition, there exists $K>0$ such that $\| x(t) \| \leq K$ for all $t\geq0$. A positive system is bounded if, for any initial condition $x(0) \geq 0$, there exists $K>0$ such that $0 \leq x_i(t) \leq K$ for all $t \geq 0$ and for all $i$.
The system is \emph{structurally} bounded (or uniformly bounded) if the condition holds irrespective of the choice of $\vartheta$. Even though the value $K$ depends on the parameters, the existence of such a bound for all $\vartheta$ is a structural property.

Boundedness is typically ensured also by the presence of conservation laws. For instance, in (bio)chemical reaction network systems of the form \eqref{eq:ode-system}, the state evolves in the stoichiometric compatibility class $\mathcal{S}(x_0) = \{ x \geq 0 \colon x = x_0 + \mbox{col}([S \, g_0])\}$, where $\mbox{col}$ denotes the column space of a matrix; $\mathcal{S}(x_0)$ is the intersection of an affine manifold depending on the initial conditions with the positive orthant, and is often a bounded set for all $x_0 \geq 0$ \citep{Angeli2009c}. As a consequence, given the initial condition $x_0$, convergence to an equilibrium $\bar x$ is only possible if $\bar x \in \mathcal{S}(x_0)$.
Also, in SIR-like epidemiological models, the total number of individuals (\ie the sum of the state variables) is often constant (because all the derivatives in the system add up to zero), which guarantees boundedness.
More in general, the boundedness of biological systems \citep{Anderson2011} has been structurally assessed through algorithmic approaches \citep{Angeli2011}, positively invariant sets and Lyapunov (or Lyapunov-like) functions. In fact,    if a bounded subset $\mathcal{S}$ of the state space is positively invariant (namely, $x(0) \in \mathcal{S}$ implies that $x(t) \in \mathcal{S}$ for all $t \geq 0$), then all the system trajectories originating within $\mathcal{S}$ are bounded; under suitable regularity assumptions, positive invariance can be verified through the equivalent conditions provided by Nagumo's theorem. Also Lyapunov-based algorithms to structurally assess boundedness of the system solutions have been proposed \citep{Blanchini2014d,Blanchini2017f}.

\subsubsection{Equilibria: existence, number and stability}\label{sec:equilibria}

When studying the dynamics of nonlinear systems $\dot x(t) = f(x(t),\vartheta)$, also in the life sciences, it is fundamental to assess the existence of equilibria, as well as their number and their stability properties.

\subsubsection*{Existence and uniqueness of equilibria}

Structurally proving ultimate boundedness of the system solutions within a compact and convex set $\mathcal{S}$ (meaning that, for each initial condition $x(0)=x_0$, there exists a time $T(x_0)$ such that the corresponding system solution $ x(t) \in \mathcal{S}$ for all $t\geq T(x_0)$) is particularly important, because it implies the existence of at least one equilibrium point in $\mathcal{S}$ (\cite{Srzednicki1985}, see also \cite{Richeson2002,Richeson2004}). Also positive invariance of a compact and convex set $\mathcal{S}$ implies the existence of at least one equilibrium point in $\mathcal{S}$ \citep{BM2015}.
If an equilibrium exists for $\dot x(t) = f(x(t),\vartheta)$, with $f$ smooth, its uniqueness can be guaranteed if (i) the convex and compact set $\mathcal{S}$, having a non-empty interior, is positively invariant for the system, (ii) there are no equilibria on the boundary $\partial \mathcal{S}$, and (iii) the system Jacobian is structurally non-singular within $\mathcal{S}$, namely $\det(J(x,\vartheta)) \neq 0$ for all $x$ and for all $\vartheta$ \citep{Kellogg1976,Zampieri1992}.

Interesting approaches to structurally assess the existence, uniqueness and possibly stability of equilibria are offered by topological degree theory \citep{Lloyd1978}. An important result is the following: if the convex and compact set $\mathcal{S}$, having a non-empty interior, is positively invariant for the generic system $\dot x(t) = f(x(t))$, with $f$ smooth, and if the system admits $N \geq 1$ equilibria $\bar x_h$ that are in the interior of $\mathcal{S}$ and are non-degenerate (namely, $\det(J(\bar x_h)) \neq 0$), then the system \emph{index} is
\begin{equation}\label{eq:degreetheory}
\sum_{h=1}^N \mbox{sign}[\det(-J(\bar x_h))]=1.
\end{equation}
As a consequence, it is impossible to have an even number of non-degenerate equilibria in the interior of $\mathcal{S}$.

For an odd number of non-degenerate equilibria, the index relation must be satisfied. For instance, if $\det(-J(x))>0$ for all possible values of $x \in \mathcal{S}$, then the equilibrium must be unique. Conversely, if $\det(-J(\bar x_1))<0$, at least other two equilibria with $\det(-J(\bar x_k))>0$, $k=2,3$, must exist.
In the presence of exactly three equilibria, precisely one must have $\det(-J(\bar x_k))<0$ and hence it must be unstable; in fact, $\det(-J(\bar x_k)$ is the constant term of the associated characteristic polynomial, and all the coefficients of a monic Hurwitz stable polynomial must be positive \citep{Gantmacher}.

\begin{example}[\textbf{Gene regulation system: equilibria}]
We can apply the result from degree theory to the equilibrium analysis of the gene regulation system $\dot x(t) = - x(t) + a \frac{x(t)^h}{1 + x(t)^h} + k$, namely \eqref{eq:gene-regulation2}, with $h=2$ and $k=0.1$, also with the help of Figure~\ref{fig:gene-reg-phase-portrait}. For varying $a$, the system admits a different number of equilibria, depending on the number of intersections between the sigmoid $a \frac{x(t)^h}{1 + x(t)^h}$ and the line $x-k$. Either the equilibrium is unique, non-degenerate and stable (for $a<a_{c,1} \approx 1.77$ and $a > a_{c,2} \approx 2.63$), or there are three equilibria that are non-degenerate, two stable and one unstable (for $a_{c,1}<a < a_{c,2}$), or there is a degenerate equilibrium (for $a=a_{c,1}$ and $a=a_{c,2}$, namely when the line is tangent to the sigmoid) and thus the result from degree theory does not apply.
This gene regulation example, associated with system \eqref{eq:hill-equation}, will be examined more thoroughly in Section~\ref{ex:bio}.
\end{example}

\begin{figure}[ht!]
	\centering
	\includegraphics[width=0.5\textwidth]{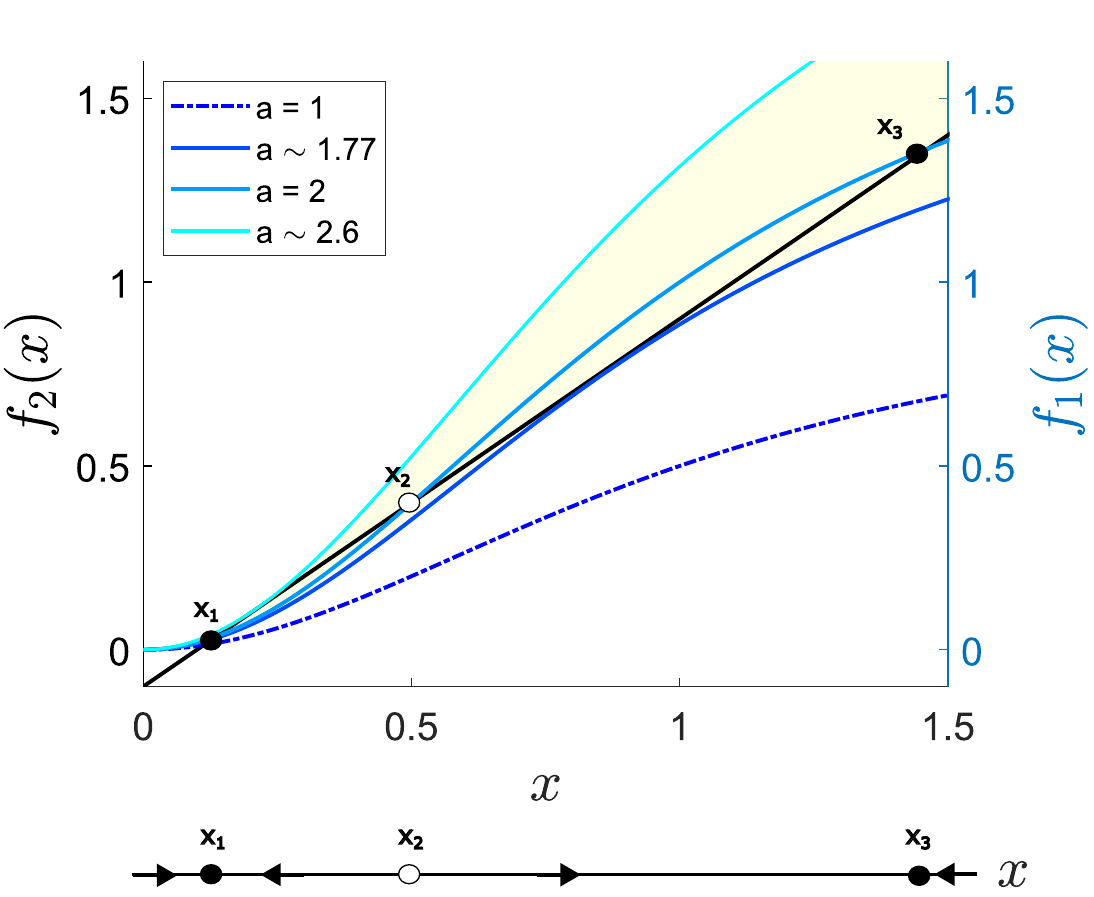}
	\caption{\footnotesize Top: the intersections of the two functions $f_1(x) = a x^h / (1+x^h)$ and $f_2(x) = x - k$ correspond to setting $f_G(x)=0$ and identify the equilibria of system \eqref{eq:hill-equation}; we set $h=2$ and $k=0.1$, and vary $a$. At the critical values $a = a_{c,1} \approx 1.77$ and $a = a_{c,2} \approx 2.63$, $f_1$ is tangent to $f_2$; the bistable region $a\in (a_{c,1},a_{c,2})$ is shaded. Bottom: the 1D phase space for system \eqref{eq:hill-equation}, with stable (full circle) and unstable (empty circle) equilibria and vector flows (arrows), for $a = 2$, $h=2$ and $k=0.1$. }
	\label{fig:gene-reg-phase-portrait}
\end{figure}

The theory of chemical reaction networks has provided several structural conditions for the equilibrium analysis of chemical reaction networks, typically with mass action kinetics \citep{Feinberg1987,Feinberg1995,Feinberg1995multiple,Feinberg2019}.
An important example is the deficiency-one theorem by \cite{Feinberg1987}, which relies on the assessment of a quantity, the network \emph{deficiency}, that exclusively depends on the interconnection structure.

To define the deficiency of a general chemical reaction network system of the form \eqref{eq:ode-system}, we need to introduce some concepts. A \emph{complex} is the integer linear combination of chemical species corresponding to either the reagents or the products of a reaction; for instance, in Example~\ref{ex:biomol} the complexes are $A$, $B$, $C$, $D$, $A+C$, $B+D$, while in Example~\ref{ex:CRN} the complexes are $A$, $B$, $B+C$, $A+C$, and in Example~\ref{ex:LV} the complexes are $A$, $B$, $2A$, $2B$, $A+B$.
A \emph{linkage class} is a (sub)set of complexes that are connected by reactions through \emph{undirected} paths (\ie the reaction arrows can be followed in either direction). We denote the number of complexes by $c$ and the number of linkage classes by $\ell$. As an example, the chemical reaction network with reactions $A \revreact{}{} 2B$, $A+C \revreact{}{} D$, $D \react{} B+E$, $B+ E \react{} A+C$ has $c=5$ complexes ($A$, $2B$, $A+C$, $D$ and $B+E$) and $\ell=2$ linkage classes ($\{A,2B\}$ and $\{A+C,D,B+E\}$); see \cite{Feinberg2019}. The \emph{reaction rank} is $r=\mbox{rank}(S)$, where $S$ is the stoichiometric matrix in \eqref{eq:ode-system}.
Given the number $c$ of complexes, the stoichiometric matrix can be decomposed as
\begin{equation*}
S=N M,
\end{equation*}
where entry $(i,\ell)$ of matrix $N \in \mathbb{Z}^{n \times c}$ is the stoichiometric coefficient of species $i$ in complex $\ell$, while $M \in \{-1,0,1\}^{c \times m}$ is the complex-reaction incidence matrix, whose $(\ell,j)$ entry is $-1$ if complex $\ell$ is the reagent in reaction $j$, $1$ if complex $\ell$ is the product of reaction $j$, and $0$ otherwise. Then, the deficiency of the chemical reaction network is the integer
\begin{equation*}
\delta = \mbox{dim}\{\mbox{ker}(N) \bigcap \mbox{col}(M)\}.
\end{equation*}
For networks that are \emph{weakly reversible} (\ie for which all the complexes in each linkage class are connected through \emph{directed} paths, namely by following the reaction arrows only in the direction in which they point), the deficiency can be equivalently computed as
\begin{equation*}
\delta = c - \ell - r.
\end{equation*}
As a general property, it holds that $\delta \geq 0$ \citep{Angeli2009c,Feinberg2019}.

For networks with deficiency $\delta=1$, conditions for the existence and uniqueness of equilibria are indeed given by the deficiency-one theorem \citep{Feinberg1987} and the existence of multiple equilibria has been investigated, \eg by \cite{Feinberg1995multiple}.

Structural conditions to assess or rule out the presence of multiple equilibria have also been proposed by \cite{Banaji2007,Banaji2009,Craciun2005,Craciun2006}, based on injectivity properties and on graph-theoretic tools.
A survey of methods to assess the multistationarity of chemical reaction networks is provided by \cite{Joshi2015}.

Also, the robustness of equilibria has been investigated by \cite{Breindl2011}, who quantified how well a given model structure can reproduce a desired equilibrium pattern when the kinetic mechanisms and parameters are poorly known.

Existence and uniqueness of the equilibrium can be shown for systems of the form $\dot x(t) = S g(x(t)) + g_0$ based on the $BDC$-decomposition \citep{Blanchini2017f}. In particular, if $J=B \Delta C$ is structurally non-singular (\ie $\det(B \Delta C) \neq 0$ for all diagonal $\Delta \succ 0$) and the entries of matrix $\Delta$ are bounded as $0 < \varepsilon \leq \Delta_h(x) \leq \mu$, for all $h$ and for all $x$ (which implies that the vector function $g$ is radially unbounded: $\|g(x_k)\| \to \infty$ for any sequence $x_k$ such that $\|x_k\| \to \infty$), then an equilibrium exists for every $g_0$. Moreover, if $J=B \Delta C$ is structurally non-singular, then the equilibrium must be unique.

\subsubsection*{Stability of equilibria}

The local stability of equilibria is often analysed through Lyapunov's first method, \ie by computing the eigenvalues of the linearised system, or through the Routh-Hurwitz criterion. However, in general, applying this methodology requires the knowledge of the parameter values.

\paragraph{Qualitative stability and sign stability.}
In special cases, methods from qualitative stability \citep{Clarke1975,edelstein2005,Jeffries1974,Levins1974,Levins1977,May1973,May1974,Maybee1969,Quirk1965} or sign stability \citep{Jeffries1977,Logofet1982} can guarantee the structural stability of a whole family of systems characterised by a Jacobian matrix with a given sign pattern. Approaches for qualitative modelling and (stability) analysis have a long tradition in mathematical biology, and in particular in ecology (see also \cite{Dambacher2002,Dambacher2003stability,Dambacher2003,Dambacher2005,Dambacher2007,Dambacher2009,Marzloff2011}). As an example, any matrix with the sign pattern
\begin{equation*}
\Sigma = \begin{bmatrix}- & -\\ + & -\end{bmatrix},
\end{equation*}
corresponding to the negative feedback interconnection (activation-inhibition) of two chemical species, is structurally Hurwitz stable; also, all matrices with the sign pattern
\begin{equation*}
\Sigma = \begin{bmatrix}0 & + & +\\ - & 0 & 0\\ - & 0 & -\end{bmatrix},
\end{equation*}
corresponding to an ecosystem with one predator and two prey species, are structurally Hurwitz stable \citep[Example 2, page 239]{edelstein2005}.

\paragraph{Structural stability.}
For special classes of systems, such as compartmental systems \citep{Jacquezt1993}, which are akin to chemical reaction networks exclusively composed of unimolecular reactions of the form $X_i \react{} X_j$ or $X_i \react{} \emptyset$ (and whose flow graph is thus a graph strictly speaking, instead of being a general hyper-graph), strong structural stability results are available \citep{Maeda1978} based on the $1$-norm as a Lyapunov function.

Structural stability has been long studied in the context of chemical reaction network theory \citep{Clarke1980,Feinberg1987,Feinberg2019,Horn1972,Horn1973,Horn1973b,Reder1988,Sontag2001}.
A pivotal result on the structural stability of chemical reaction networks with mass action kinetics is the celebrated deficiency-zero theorem by \cite{Feinberg1987}: a weakly reversible chemical reaction network with mass action kinetics and deficiency $\delta=0$ admits a unique equilibrium point in each stoichiometric compatibility class, and such an equilibrium is (locally) asymptotically stable (within the stoichiometric compatibility class). The result holds independent of the parameter values (\ie the positive reaction rate constants) and its proof relies on employing, as a Lyapunov function, the system entropy $V(x)=\sum_{i=1}^n x_i \log \left( \frac{x_i}{\bar x_i}\right) - x_i - \bar x_i$.
Several subsequent developments have been provided \citep{Hangos2010,Ke2019,schaft2013}, also proving global stability within the stoichiometric compatibility class for significant classes of systems \citep{Anderson2008,Rao2017} and extending the deficiency-zero theorem to generalised mass action kinetics where $x_i^{p_i}$ is replaced by $h_i(x_i)^{p_i}$ \citep{Sontag2001}.

As a side note, also different concepts of stability, such as Input-to-State stability, have been investigated for biological systems \citep{Chaves2006b,Chaves2008,nilssongiordano2025}.

\paragraph{Lyapunov methods.}
Lyapunov approaches have been frequently adopted for the stability analysis of biological and biochemical systems \citep{Al-Radhawi2016,Blanchini2011,Blanchini2014d,Chesi2008,Clarke1980,Franco2013,Grognard2005,Pasquini2020,Rao2017}.
In particular, local stability of chemical reaction networks has been assessed through parameter-dependent quadratic Lyapunov functions \citep{Clarke1980}. For generic $BDC$-decomposable systems, $BDC$ is Hurwitz for all positive definite diagonal matrices $D$ if and only if the Lyapunov equation $[BDC]^\top P(D) + P(D) [BDC] = -I$ admits a positive definite solution $P(D)$ for all $D$, which yields a parameter-dependent Lyapunov function $V(x) = x^\top P(D) x$. The methods by \cite{Chesi2008,Chesi2011b} allow to solve this problem efficiently through sum-of-squares polynomials and LMI-based convex optimisation \citep{BCCG2020}.

Quadratic functions are known to be conservative \citep{Molchanov1986,Molchanov1989}. Non-quadratic Lyapunov functions have also been proposed, such as polynomial Lyapunov functions \citep{Chesi2011c}, as well as polyhedral (\ie piecewise-linear) Lyapunov functions \citep{Al-Radhawi2016,Blanchini2011,Blanchini2014d,Grognard2005} that generalise those introduced by \cite{Maeda1978} for compartmental systems \citep{Jacquezt1993}. In fact, even in simple cases, quadratic Lyapunov functions can only certify the stability of individual systems of the form $\dot x = S g(x)+g_0$ \emph{with known parameters}, because they depend on the reaction rates. In these cases, the only possible structural (parameter-independent) Lyapunov function is polyhedral \citep{Blanchini2015c} and there are no other rate-independent Lyapunov functions \citep{Blanchini2017f}.
Piecewise quadratic Lyapunov functions have been considered by \cite{Pasquini2018,Pasquini2020}. Lyapunov approaches have also been adopted, more in general, to rule out oscillatory behaviours \citep{Angeli2022}.

\paragraph{Structural Lyapunov functions for $BDC$-decomposable systems.}
Based on the $BDC$- decomposition, we can \emph{systematically} build \emph{structural} piecewise-linear Lyapunov functions. A class of polyhedral Lyapunov functions has been introduced by \cite{Blanchini2014d} to address the structural stability analysis in the case of \emph{general} monotonic reaction rate functions $g$ (thus going beyond mass-action-kinetics networks).
Polyhedral Lyapunov functions \citep{BM2015} can be expressed either in terms of the vertices $\pm P_i$ of a symmetric convex polytope including the origin in its interior, so that the Lyapunov function is
\begin{equation}
V_P(x) = \inf \{ \| w \|_1 \colon P w = x, \, w \in \mathbb{R}^s \},
\end{equation}
whose unit ball has vertices $\pm P_i$, where $P_i$ is a column of the full-row-rank matrix $P \in \mathbb{R}^{n \times s}$,
or in terms of the set of inequalities $\pm F_i^\top x \leq 1$ that describe the polytope, so that the Lyapunov function is
\begin{equation}
V^H(x) = \| H x \|_\infty,
\end{equation}
whose unit ball has facets on the planes $\pm H_i^\top x = 1$, where $H_i^\top$ is a row of the full-column-rank matrix $H \in \mathbb{R}^{s \times n}$.
Given the graph structure, \ie matrices $B$ and $C$ of the $BDC$-decomposition, a systematic algorithm can generate the unit ball of the polyhedral Lyapunov function -- if any -- that guarantees the structural stability of the whole family of systems, regardless of the parameters (\ie for all possible choices of the positive diagonal entries of matrix $\Delta$, or $D$).
In particular, \cite{Blanchini2014d} absorb the nonlinear system $\dot z(t) = B D(z(t)) C z(t)$ in a linear differential inclusion $\dot z(t) = B D(t) C z(t)$, where the state dependent matrix $D$ becomes time dependent (see also \cite{Angeli2009} for linear differential inclusion approaches to chemical reaction networks); any trajectory of the original system is also a trajectory of the linear differential inclusion. A polyhedral Lyapunov function is sought by associating the linear differential inclusion with a suitable discrete difference inclusion having equivalent stability properties; then, an iterative numerical algorithm (starting from the diamond $\|x\|_1 \leq 1$ and iteratively expanding the set according to the structural directions of growth for the family of systems, contained in matrices $B$ and $C$) converges to the unit ball of the polyhedral Lyapunov function that guarantees the structural stability of the system, provided that such a function exists. The test for structural stability can fail, due to the lack of convergence, even when some individual elements of the family of systems are stable; however, if the test fails, we cannot claim that all the elements of the family are stable.
If the iterative algorithm does provide a polyhedral Lyapunov function, then \emph{local} Lyapunov stability is structurally certified, for any choice of the monotonic reaction rate functions, \ie for all the systems in the family.

In addition, if the system admits a polyhedral Lyapunov function, then local \emph{asymptotic} stability is ensured if and only if $J=B\Delta C $ is structurally non-singular (\ie $\det(-B\Delta C) >0$ for all positive definite diagonal matrices $\Delta$). Also \emph{global} asymptotic stability and convergence are ensured if and only if $BDC$ is structurally non-singular \citep{Blanchini2017f}.

Piecewise-linear-in-rate Lyapunov functions proposed by \cite{Al-Radhawi2016,Al-Radhawi2020} can be seen as the \emph{dual} class of functions: a piecewise-linear-in-rate Lyapunov function for a system in concentration coordinates, $\dot x(t) = S g(x(t)) + g_0$, can be seen as a piecewise-linear Lyapunov function for the system in reaction coordinates $\dot r(t) = \left[ \frac{\partial g}{\partial x} S \right] r(t)$ \citep{Blanchini2016}, discussed in Remark~\ref{rem:EDF}.
An interesting insight comes from duality, as shown by \cite{BG2022}. Given a chemical reaction network
\begin{equation*}
\dot x(t)=Sg(x(t)),
\end{equation*}
with an equilibrium at zero ($g(0)=0$), we define the \emph{dual} system as
\begin{equation*}
\dot y(t) = S^\top h(y(t)),
\end{equation*}
where the dual stoichiometric matrix is $S^\top$ (so the two systems have different size in general, and their Jacobian matrices are not the transpose of each other), while $h$ is a vector of new reaction rate functions (associated with the species of the \emph{primal} system), and zero is still an equilibrium ($h(0)=0$).

Given a primal chemical reaction network with $n$ species and $m$ reactions, the dual network has $m$ species and $n$ reactions. The stoichiometric matrix of the dual network has size $m \times n$ and is the transpose of the stoichiometric matrix of the primal network (which has size $n \times m$); the reaction rate vector of the dual network has dimension $n$ and the $k$th reaction rate function depends on all the variables $i$ for which entry $(i,k)$ of the stoichiometric matrix is negative.

\begin{example}[\textbf{Dual chemical reaction network}]
The dual of the network $2X_1 + X_2 \react{y_1} X_3$, $X_3 + X_4 \react{y_2} X_1 + X_2$, $X_3 \react{y_3} X_4$, with
$S = \begin{bsmallmatrix}
-2 & 1 & 0\\
-1 & 1 & 0\\
1 & -1 & -1\\
0 & -1 & 1
\end{bsmallmatrix}$ and $g(x)=
\begin{bsmallmatrix}
y_1(x_1,x_2)\\
y_2(x_3)\\
y_3(x_3)
\end{bsmallmatrix}$, is the network $2 Y_1 \react{h_1} Y_2$, $Y_1 \react{h_2} Y_2$, $Y_2 + Y_3 \react{h_3} Y_1$, $Y_3 \react{h_4} Y_4$, which has stoichiometric matrix
$S^\top = \begin{bsmallmatrix}
-2 & -1 & 1 & 0\\
1 & 1 & -1 & -1\\
0 & 0 & -1 & 1
\end{bsmallmatrix}$ and reaction rate vector $h(y)=
\begin{bsmallmatrix}
h_1(y_1)\\
h_2(y_1)\\
h_3(y_2,y_3)\\
h_4(y_3)
\end{bsmallmatrix}$.
\end{example}

Then, within the stoichiometric compatibility class possibly induced by the presence of conservation laws, the primal system is structurally locally stable if and only if the dual system is structurally locally stable \citep{BG2022}.
Moreover, consider the corresponding systems in reaction rate coordinates, where the state variables are no longer the concentrations of the chemical species, $x$ and $y$, but are now the reaction rates $r=g(x)$ and $w=h(y)$:
\begin{equation*}
\dot r(t) = \left[ \frac{\partial g}{\partial x} S \right] r(t)
\end{equation*}
and
\begin{equation*}
\dot w(t) = \left[ \frac{\partial h}{\partial y} S^\top \right] w(t).
\end{equation*}
These two systems are also the dual of each other.
The four systems (primal and dual, in concentration coordinates and in reaction rate coordinates) can be equivalently rewritten according to their global $BDC$-decomposition (or $EDF$-decomposition); it turns out that the dual chemical reaction network in concentration coordinates, $\dot y(t) = F^\top D(\cdot) E^\top y(t)$, has the same structure as the transpose of the primal chemical reaction network in reaction rate coordinates, $\dot r(t) = E D(\cdot) F r(t)$, while the dual chemical reaction network in reaction rate coordinates, $\dot w(t) = C^\top D(\cdot) B^\top w(t)$, has the same structure as the transpose of the primal chemical reaction network in concentration coordinates, $\dot x(t) = B D(\cdot) C x(t)$ \citep{BG2022}.
Then, we can adopt the iterative numerical procedure by \cite{Blanchini2014d} to compute a structural polyhedral Lyapunov function, if any, for each of these systems.
An interesting symmetry emerges in the architecture of these systems also in relation to global stability assessed through Lyapunov norms: the primal chemical reaction network in concentration coordinates admits a polyhedral Lyapunov function if and only if the dual chemical reaction network in reaction rate coordinates admits a polyhedral Lyapunov function, which is the dual polyhedral norm; the same holds for the primal chemical reaction network in reaction rate coordinates and the dual chemical reaction network in concentration coordinates \citep{BG2022}.
Hence, for any network admitting a piecewise-linear Lyapunov function in concentrations \citep{Blanchini2014d,Blanchini2017f} there exists a dual network admitting a piecewise-linear Lyapunov function in rates \citep{Al-Radhawi2016,Al-Radhawi2020,Blanchini2016}, and vice versa.
Duality can thus be useful to assess the structural stability of particularly challenging network topologies through the analysis of their dual structure.

\paragraph{Parametric robustness for $BDC$-decomposable systems.}
The $BDC$-decomposition can also be used for robust stability analysis, when the unknown diagonal entries of matrix $\Delta =\mbox{diag} \{\Delta_1, $ $\dots,\Delta_q\}$ are bounded in known intervals:
\begin{equation}\label{eq:bounds}
0 < \Delta_i^- \leq \Delta_i \leq \Delta_i^+.
\end{equation}
In this case, the robust Hurwitz stability of $B\Delta C$ can be assessed thanks to fundamental results from the literature on parametric robustness, such as the zero exclusion theorem and the mapping theorem \citep{Barmish1994}.
The zero exclusion theorem guarantees that matrix $B\Delta C$, with the diagonal entries of $\Delta$ as in \eqref{eq:bounds}, is robustly Hurwitz stable if and only if $B \Delta^* C$ is Hurwitz for some $\Delta^*$ and, for all $\omega \geq 0$, the value set $\mathcal{V}_\omega = \{z \in \mathbb{C} \colon z = \det[\jmath \omega I - B\Delta C], \, \Delta_i \mbox{ as in \eqref{eq:bounds}}\}$ does not include the origin. Computing the precise shape of the value set is hard. However, since $\det[\jmath \omega I - B \Delta C]$ is a multi-affine function of the parameters $\Delta_i$, the mapping theorem ensures that the convex hull of the value set at the frequency $\omega$ is the convex hull of the points $\det[\jmath \omega I - B \hat \Delta_k C]$, where $\hat \Delta_k$ are the vertices of the hyper-rectangle defined by \eqref{eq:bounds}. Then, the convex hull of the value set $\mathcal{V}_\omega$ can be easily drawn and a computable sufficient condition for robust Hurwitz stability is that such a convex hull never includes the origin.
Such methodologies, relying on the $BDC$-decomposition and on parametric robustness, can be employed in general for the robustness analysis of biological systems \citep{BCGZ2020,BCGZ2022}.

\paragraph{Structural topology-independent stability.}
Strong structural stability results can also be obtained for special classes of systems, such as Michaelis-Menten networks. Consider an activation-inhibition biological network of the form \eqref{eq:gene-regulation}, where an arbitrary number of species interact pairwise through either activating or inhibiting interactions (this happens, \eg in gene regulatory networks, where gene expression levels are regulated through DNA, RNA and proteins, often by means of kinetic mechanisms for activation and inhibition based on enzymatic reactions). Activation-inhibition regulatory networks typically include myriads of interactions and their topology is often not fully known; it would therefore be precious to obtain results that are not only structural (\ie parameter-independent) but also topology-independent (\ie independent of the precise network topology; see also \cite{Blanchini2017,DG2020,DG2021,DG2021ecc}). We call Michaelis-Menten network a system of the form \eqref{eq:gene-regulation} where the interaction functions are of the Michaelis-Menten type, $f_{ij}(x_j) = \frac{A_{ij}(x_j/\beta)}{1+(x_j/\beta)}$ and $g_{i \ell}(x_\ell) = \frac{B_{i \ell}}{1+(x_\ell/\beta)}$. As shown by \cite{BLANCHINI2023110683}, under suitable assumptions, Michaelis-Menten networks structurally admit a single positive equilibrium, which is locally asymptotically stable, not only regardless of parameter values (\ie structurally), but also regardless of the network topology (and of the size of the network), and even in the presence of arbitrary constant delays in the interaction functions. Hence, the stability result is not only structural and topology-independent, but also delay-independent.

\subsubsection{Sensitivity and input-output influences}\label{sec:influences}

When an external stimulus is applied to a biological system resting at an equilibrium state, how does the system react? How is the equilibrium perturbed? Which new equilibrium is reached when a persistent perturbation is applied? Assessing the signs of sensitivities of steady states to perturbations is fundamental and several approaches have been proposed (see \cite{Sontag2014}, as well as \cite{Mochizuki2015,Fiedler2015,Brehm2018} and \cite{Feliu2019,Vassena2023}). In particular, the $BDC$-decomposition can be leveraged to structurally predict the sign of input-output influences at steady state through a vertex algorithm (\cite{Giordano2016}, see also \cite{BG2019}).
Consider a general nonlinear system
\begin{equation}\label{eq:influences}
\begin{cases}
\dot x(t) = f(x(t),u(t))\\
y(t)=h(x(t))
\end{cases}
\end{equation}
with state $x \in \mathbb{R}^n$, input $u \in \mathbb{R}$ and output $y \in \mathbb{R}$, resting at an asymptotically stable equilibrium point $(\bar x, \bar u)$ such that $f(\bar x, \bar u)=0$ and corresponding to the steady-state output $\bar y = h(\bar x)$. Assume that a positive constant perturbation $\delta \bar u>0$ is added to the input, which becomes $\bar u + \delta \bar u$, and is sufficiently small to still guarantee existence and stability of an equilibrium. Then, after a transient, the new steady-state value of the output is $\bar y + \delta \bar y$. The \emph{steady-state input-output influence} of $u$ over $y$ is then $\mbox{sign}(\delta \bar y)$, and it is \emph{structurally} positive `$+$' (or negative `$-$', or zero `$0$') if, regardless of the system parameters, $\delta \bar y>0$ (or $\delta \bar y<0$, or $\delta \bar y=0$), while it is indeterminate `?' if its sign depends on the values of the system parameters \citep{Giordano2016}. Precisely, by defining the implicit input-output (SISO) function
\begin{equation}
y = \Phi(u) \colon f(x,u)=0, \, y=h(x),
\end{equation}
the input-output influence can be written as
\begin{equation}
\frac{\delta \bar y}{\delta \bar u} = \frac{d}{du} \Phi(u) = - \frac{\partial h(x)}{\partial x} \left[ \frac{\partial f(x,u)}{\partial x} \right]^{-1} \frac{\partial f(x,u)}{\partial u}
\end{equation}
and its structural assessment yields $\mbox{sign}\left[\frac{\delta \bar y}{\delta \bar u}\right] \in \{+,-,0,?\}$.

When the steady-state input-output influence is structurally zero, the system structurally exhibits \emph{perfect adaptation}, which is a widely observed phenomenon in natural systems. Perfect adaptation is the ability of a system, resting at a stable equilibrium and subject to a persistent (step) perturbation (of sufficiently moderate magnitude), to asymptotically recover (after a transient) the very same steady-state value and output value that it admitted before the perturbation was applied; if the pre-stimulus configuration is recovered not exactly, but just approximately, the property is denoted as \emph{adaptation} \citep{alon2019introduction,Araujo2018,Blanchini2012,Cosentino2012,Drengstig2008,Ma2009,Sontag2003,Waldherr2012}.
Perfect adaptation has been observed, \eg in yeast osmoregulation \citep{Muzzey2009}, in calcium homeostasis \citep{samad2002} and in bacterial chemotaxis \citep{Alon1999,Barkai1997,Clausznitzer2010,Levchenko2002,Spiro1997,Yi2000}, and suitable structures, such as the antithetic integral feedback controller, have been proposed to ensure robust perfect adaptation in biomolecular networks \citep{Aoki2019,Briat2016,Gupta2022,Steel2018}.

The analysis of steady-state input-output influences has a long history in ecology, to robustly predict the effect of perturbations in communities of interacting species \citep{Dambacher2002,Dambacher2003,Dambacher2007}; \emph{press perturbation} experiments are carried out by applying a persistent perturbation to a species in the community, and assessing the ensuring variation of the density of the various species at the new equilibrium \citep{Bender1984,Giordano2017interaction,Giordano2017qualitative,Montoya2009,Novak2011,Novak2016,Schmitz1997,Yodzis1988}.

The $BDC$-decomposition allows us to assess influences structurally, as discussed by \cite{Giordano2016}. Consider the linearisation of system \eqref{eq:influences}, where $J = \frac{\partial f(x,u)}{\partial x} = B \Delta C$, $E = \frac{\partial f(x,u)}{\partial u}$, $H = \frac{\partial h(x)}{\partial x}$, and $z = x - \bar x$:
\begin{equation}\label{eq:influences_local}
\begin{cases}
\dot z(t) = B \Delta C z(t) + E \delta \bar u\\
\delta y(t) = H z(t)
\end{cases}
\end{equation}
Then, the sign of
$$\frac{\delta \bar y}{\delta \bar u} = \frac{d}{du} \Phi(u) = H (-B \Delta C)^{-1} E = \frac{\det \begin{bmatrix} -B \Delta C & -E\\ H & 0 \end{bmatrix}}{\det(-B \Delta C)}$$ is the same as the sign of
\begin{equation}
\mathcal{I}(\Delta) = \det\begin{bmatrix} -B \Delta C & -E\\ H & 0 \end{bmatrix},
\end{equation}
since $\det(-B \Delta C)>0$ structurally (for all $\Delta \succ 0$) in view of the asymptotic stability assumption.
$\mathcal{I}(\Delta)$ is a multi-affine function of the uncertain parameters $\Delta_i$. Without loss of generality, let us rescale the parameters in the unit hypercube, so that $\Delta \in \mathcal{D} = \{\Delta \colon 0 < \Delta_i \leq 1 \,\, \forall i\}$, and define the set $\hat{\mathcal{D}} = \{\Delta \colon \Delta_i \in \{0,1\} \,\, \forall i\}$ of the vertices of $\mathcal{D}$. Denote by $I$ the identity matrix. Then, since a multi-affine function of the parameters admits its minima and maxima at the vertices of the parameter space, we have that: $\mbox{sign}\left[\frac{\delta \bar y}{\delta \bar u}\right]=+$ if and only if $\mathcal{I}(I)>0$ and $\mathcal{I}(\Delta) \geq 0$ for all $\Delta \in \hat{\mathcal{D}}$; $\mbox{sign}\left[\frac{\delta \bar y}{\delta \bar u}\right]=-$ if and only if $\mathcal{I}(I)<0$ and $\mathcal{I}(\Delta) \leq 0$ for all $\Delta \in \hat{\mathcal{D}}$; $\mbox{sign}\left[\frac{\delta \bar y}{\delta \bar u}\right]=0$ if and only if $\mathcal{I}(\Delta) = 0$ for all $\Delta \in \hat{\mathcal{D}}$; otherwise, $\mbox{sign}\left[\frac{\delta \bar y}{\delta \bar u}\right]= \, ?$ (\cite{Giordano2016}, see also \cite{BG2019}).
Therefore, the structural steady-state influence can be computed through a vertex algorithm.
Note that the same algorithm can be adopted to assess the sensitivity of an output of interest (which can also be an individual state component) with respect to variations in the parameter values; it suffices to consider $\bar u$ as the nominal parameter value and $\delta \bar u$ as its variation.

The approach can also be used to support the design of synthetic architectures guaranteeing that relevant steady-state influences are structurally signed \citep{Giordano2016negative}, \eg in rate-regulatory biomolecular circuits \citep{Franco2014,Giordano2013}.

\begin{remark}[\textbf{Leveraging multi-affinity}]\label{rem:multiaff}
By exploiting the multi-affinity of a suitable function associated with the property to be investigated, the $BDC$-decomposition allows us to derive vertex results to assess whether a property holds in a whole hyper-rectangle in the parameter space just by checking it on the vertices.
For instance, in the presence of a constant input perturbation affecting the system,
\begin{equation}
\begin{cases}
\dot x(t) = Sg(x(t)) + E u(t), \qquad J(\Delta(x))=B \Delta(x) C\\
y(t)=Hx(t)
\end{cases}
\end{equation}
we can perform a robust sensitivity analysis: we can leverage vertex results to determine robust lower and upper bounds for the input-output sensitivity $\mathcal{F}=-HJ(\Delta)^{-1}E$, by exploiting the fact that it is a multi-affine function of the uncertain parameters in $\Delta$. Similarly, when the system is affected by a \emph{periodic input perturbation}, we can perform a robust frequency analysis and obtain robust lower and upper bounds for the magnitude and phase of the transfer function $W(s,\Delta)=H(sI-J(\Delta))^{-1}E$.
Assuming that the uncertain parameters lie in known intervals \eqref{eq:bounds}, methodologies relying on the $BDC$-decomposition, and on multi-affinity of the function of interest in the uncertain parameters, can be employed for the robustness analysis of biological systems \citep{BG2019,BCGZ2020,BCGZ2022}. For instance, they allow us to derive robust conditions for microphase separation in the presence of diffusion and chemical reactions \citep{BFGO2023}. In view of the generality of this approach, this type of analysis can be easily extended to other types of natural and engineered systems.
\end{remark}

The \emph{structural steady-state influence matrix} (SSIM) can be built by taking vectors $E$ and $H$ in system \eqref{eq:influences_local} with a single non-zero entry, say the $j$th entry for $E$ and the $i$th entry for $H$, which we can set to $1$ without loss of generality. This choice of $E$ and $H$ corresponds to assessing the steady-state influence on variable $i$ due to a step input affecting the equation of variable $j$. Then, considering all possible $(i,j)$ pairs, with $i,j \in \{1,\dots, n\}$, reveals the effect of all possible input perturbations, acting on one of the state equations, on the steady state of each of the state variables taken as an output \citep{Giordano2016}. Hence, entry $(i,j)$ of the SSIM does not only include the direct influence associated with entry $(i,j)$ in the system Jacobian matrix: each entry of the SSIM expresses the overall net influence that takes into account both the direct effects and the indirect effects propagating through the network interconnection.
In ecology, the steady-state influence matrix has been widely studied as the adjoint of the negative community matrix (\cite{Levins1974,May1974}, see also \cite{Dambacher2002,Dambacher2003,Dambacher2003stability,Dambacher2005,Dambacher2007}), so that its $(i,j)$ entry quantifies the effect on species $i$ of a press perturbation on species $j$, or equivalently as the negated inverse of the community matrix \citep{Bender1984,Schmitz1997,Yodzis1988}, which has the same sign pattern under stability assumptions.

\begin{figure}[ht]
	\centering
	\includegraphics[width=0.4\textwidth]{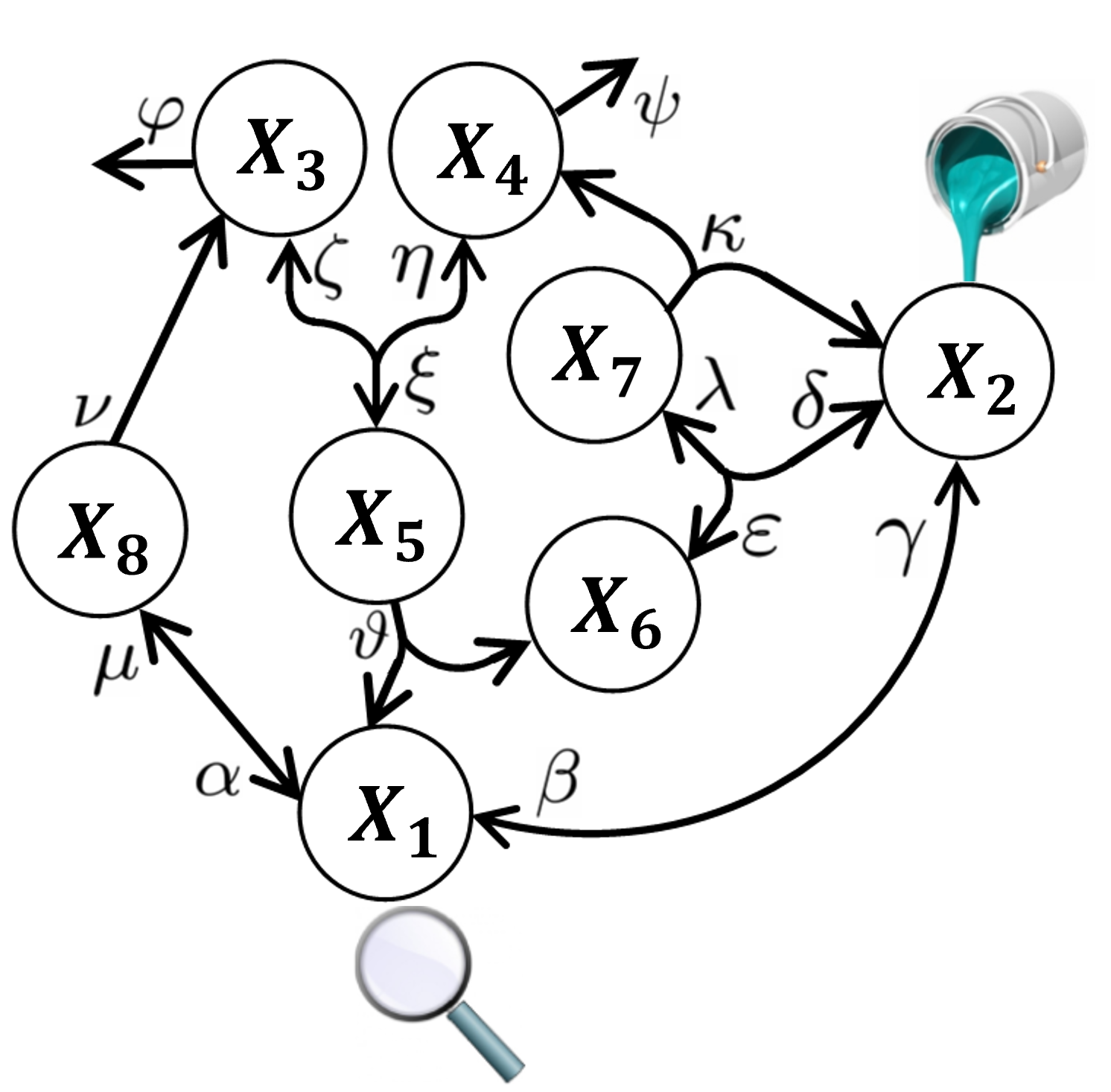}
	\caption{\footnotesize Flow graph corresponding to the EnvZ-OmpR osmoregulation system in \emph{E. coli} \citep{Shinar2010}; the positive entry $(1,2)$ in the structural steady-state influence matrix in \eqref{eq:SSIMosmo} means that, when a positive persistent perturbation acts as an input on species $X_2$, then the new steady-state concentration of species $X_1$ is always larger than its pre-perturbation value, regardless of the system parameters; figure adapted from \cite{Giordano2016}. }
	\label{fig:osmo}
\end{figure}

Computing the structural SSIM for real biological systems can help reveal their extraordinary robustness: the fraction of indeterminate entries is often surprisingly small. As an example, the EnvZ-OmpR osmoregulation system in \emph{E. coli} \citep{Shinar2010}, shown in Figure~\ref{fig:osmo}, has the influence matrix \citep{Giordano2016}
\begin{equation}\label{eq:SSIMosmo}
\mbox{SSIM}_{\mbox{osmo}} = \begin{bsmallmatrix}
+ & + & + & + & + & + & + & + \\
+ & + & + & + & + & + & + & + \\
+ & + & + & 0 & + & 0 & + & + \\
0 & 0 & 0 & + & + & + & + & 0 \\
+ & + & + & + & + & + & + & + \\
? & ? & ? & ? & ? & ? & ? & ? \\
+ & + & + & + & + & + & + & + \\
+ & + & + & + & + & + & + & +
\end{bsmallmatrix}.
\end{equation}
Comparing the structural SSIM with experimental results is a precious tool for model falsification. For instance, entry $(1,2)$ of the structural SSIM shows the structural sign of the steady-state variation in the concentration of the first species ensuing a positive step input perturbation acting on the second species: after a transient, the new equilibrium concentration of the first species will always be larger than it was before the perturbation. Then, if we happen to observe experimentally that the new concentration of the first species is smaller than before (even in a single experiment), we can conclude that the model is unsuitable to represent the phenomenon, because it will never be able to reproduce the behaviour we have observed, no matter how the functional expressions and parameter values are chosen.

\begin{example}[\textbf{Protein-mediated ceramide transfer}]\label{ex:Golgi}
Another interesting case study is offered by the mechanism of protein-based intracellular ceramide transfer from the endoplasmic reticulum to the trans-Golgi network \citep{Giordano2018}, for which two alternative models have been proposed \citep{Hanada2010,Weber2015}: the short distance shuttle (SDS) model, where the protein CERT moves back and forth between the two membranes so as to deliver ceramide, like a ferry or a shuttle, and the neck swinging (NS) model, where CERT needs to be attached to both membranes at the same time in order to deliver ceramide by bending its neck back and forth. A graphical visualisation of the two models is provided in Figure~\ref{fig:Golgi}. Can the two models be structurally compared?
Both models can be seen mathematically as flow-inducing networks (see \cite{Giordano2017}), \ie compartmental dynamical systems where some flows between compartments can be driven by the state of other species in the system, which tune the flow without being affected by it.
Then, using the $BDC$-decomposition and the vertex algorithm (see the methodology described by \cite{Giordano2016} and adopted by \cite{Giordano2018} to analyse this case study), the structural SSIM can be computed for the two systems:
\begin{equation}
\hspace{-4mm} \mbox{SSIM}_{\mbox{SDS}} = \begin{bsmallmatrix}
? & ? & ? & ? & - & - & - \\
+ & + & ? & ? & + & + & + \\
- & - & ? & ? & - & - & - \\
+ & + & ? & ? & + & + & + \\
? & ? & ? & ? & ? & ? & ? \\
+ & + & ? & ? & ? & ? & ? \\
? & ? & ? & ? & ? & ? & ?
\end{bsmallmatrix} \mbox{ and } 
\mbox{SSIM}_{\mbox{NS}} = \begin{bsmallmatrix}
? & ? & - & - & - & - & - \\
+ & + & + & + & + & + & + \\
- & - & ? & ? & - & - & - \\
+ & + & + & + & + & + & + \\
? & ? & - & - & ? & ? & ? \\
? & ? & - & - & ? & ? & ? \\
? & ? & + & + & + & + & + 
\end{bsmallmatrix}
\end{equation}
The two models are not structurally incompatible: influences associated with the same $(i,j)$ entry either have the same sign, or are both undetermined, or one is signed and the other is undetermined. In the latter case, we gain insight that would allow us to design experiments in order to discriminate between the two models. For instance, if an experiment revealed a positive influence corresponding to entry $(1,3)$, where the SDS model has an undetermined entry while the NS model has a structurally negative entry, then the NS model would be falsified, while the SDS model would still be valid.
\end{example}

\begin{figure}[ht!]
	\centering
	\includegraphics[width=0.8\textwidth]{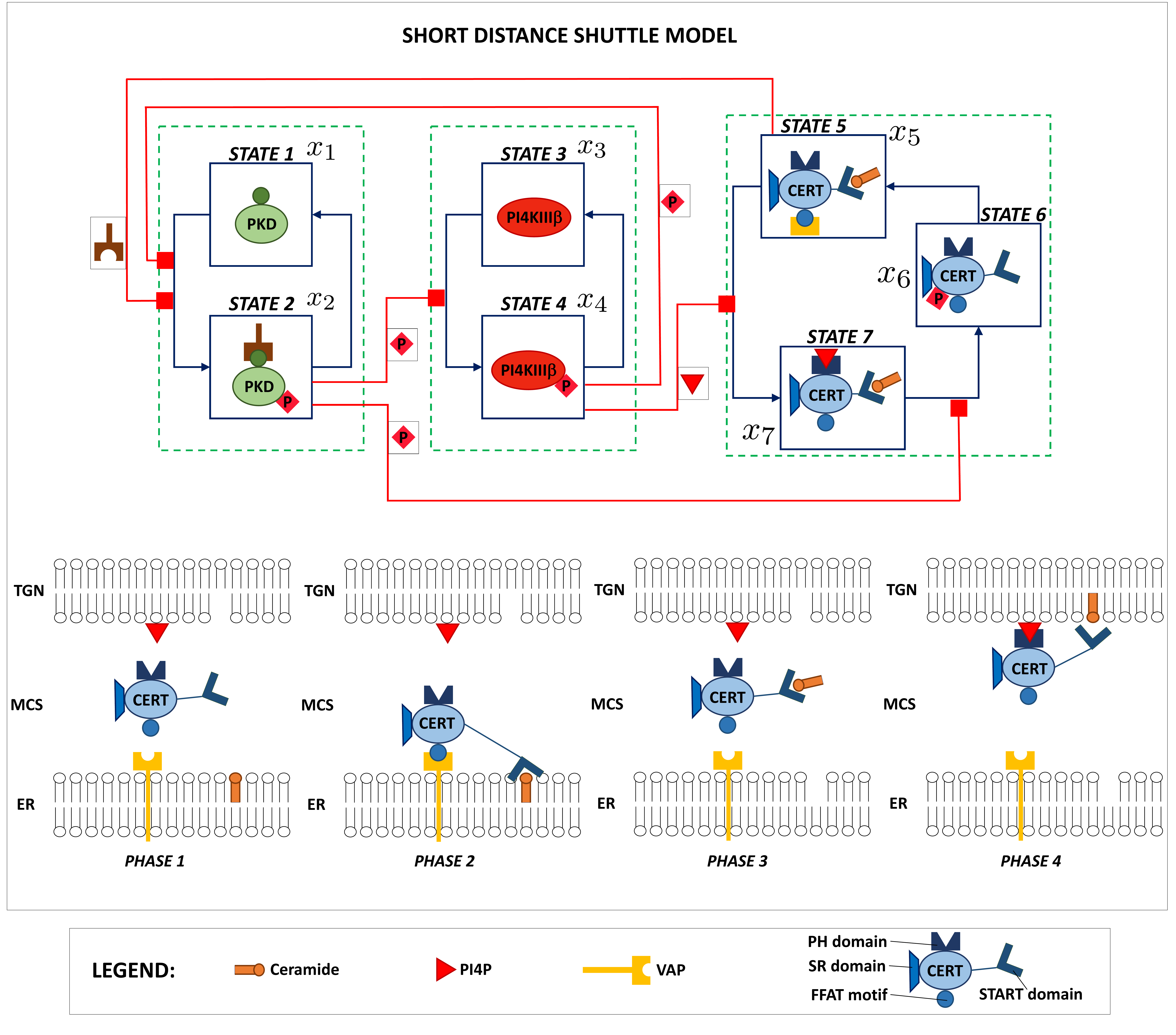} \\[-2pt]
	\includegraphics[width=0.8\textwidth]{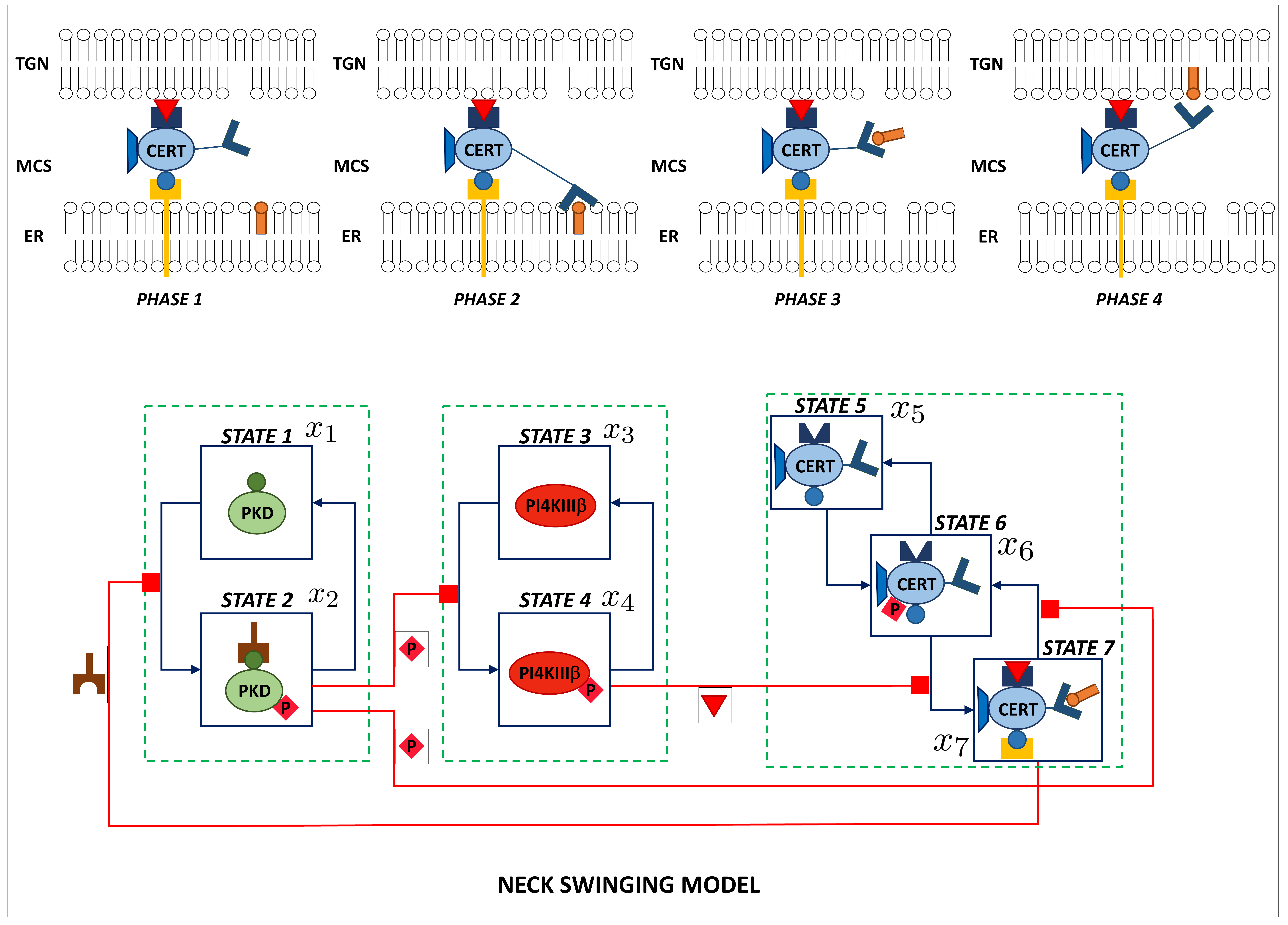}
	\caption{\footnotesize The short distance shuttle (SDS) model, top, and the neck swinging (NS) model, bottom, for CERT-mediated ceramide transfer \citep{Hanada2010,Weber2015} visualised as flow-inducing networks \citep{Giordano2017}; figure adapted from \cite{Giordano2018}. }
	\label{fig:Golgi}
\end{figure}

So far, we have considered the effect of step (constant) inputs; however, the analysis can be extended to systems subject to periodic inputs \citep{Cournac2009,Fiore2018,Ingalls2004,Rahi2017,Russo2010}. Robust results based on the $BDC$-decomposition can be provided also in the case of periodic inputs \citep{BCGZ2022}, as mentioned in Remark~\ref{rem:multiaff}. Robustness analysis in the frequency domain has also been performed by \cite{Hara2019}.

As a side note, the system sensitivity to parameter variations can also be investigated through bifurcation analysis, and structural approaches to this aim have been proposed by \cite{Okada2018}.

\subsubsection{Oscillations and multistability}

Oscillatory and multistable behaviours are ubiquitous in biological systems at all scales.
Oscillations are fundamental within cells, spanning from periods of seconds in calcium signalling to 24 hours in the circadian clock -- fundamental examples are calcium oscillations, glycolytic oscillations, cAMP oscillations in pancreatic cells, oscillations in NF-$\kappa$B, p53 and Wnt signalling, oscillations in the kinases driving the cell cycle, the segmentation clock that drives pattern generation in vertebrate embryonic development \citep{Doyle2006,Epstein1998,Goldbeter2012,MENGEL2010,Yu2007,Zak2005} -- and determining their robustness is crucial \citep{Ghaemi2009}. Oscillations are also observed in whole organisms: some examples are heartbeat, sleep phases, circadian rhythms, menstrual cycle, seasonal changes in the coat or plumage of some animal species that allow camouflage into the environment \citep{Ebenhoh2013}, and even adaptation to wave motion \citep{Burlando2022,Mucci2020}. Synchronised oscillations and entrainment often occur as well \citep{Ashwin2016d,Bagheri2008,Bagheri2008,balazsi2003increased,Gunawan2006,Hammond2007,Katz2024,Kuramoto1984,Li2006,Russo2010,Scardovi2010,Uriu2012}.
On the other hand, multistability is crucial, \eg in cell differentiation and in ruling cell fate, including life cycle, cell division, and apoptosis \citep{Craciun2006b,Eissing2004,Ferrell2012,Xiong2003,Zorzan2019}. It is also linked to pattern formation \citep{Arcak2013,Epstein1998,Hori2015} and pathogenesis \citep{Burlando2020,Cloutier2012,Demori2022,Demori2024}.

Building robust and tunable biomolecular oscillators \citep{Atkinson2003,Blanchini2014c,Samaniego2016,Samaniego2017b,CGF2020,Elowitz2000,Jayanthi2012,Landau2023,Montagne2011,novak2008,Purcell2010,samad2005,Tsai2008,Wang2006,Wilhelm1995,Wilhelm1996} and switches \citep{Atkinson2003,Samaniego2016,Gardner2000,sabouri2008,samad2005,Wilhelm2009} is a sought-after objective in synthetic biology. In highly uncertain environments, design principles based on structural insight are precious to guarantee that artificial biological circuits will keep exhibiting the desired qualitative behaviour thanks to their structure, and regardless of environmental fluctuations, perturbations and variability.

Structural explanations for the onset of oscillatory and multistationary behaviours have been widely studied in systems with a sign-definite Jacobian matrix, whose structure can be captured by the sign pattern matrix $\Sigma$ defined in Section~\ref{sec:signpattern} or, equivalently, by the associated signal graph where each node represents a state variable and each edge represents one of the non-zero entries of the Jacobian matrix, along with its positive or negative sign. Then, it is of interest to look at the sign of the cycles in the signal graph, which is defined as the sign of the product of the Jacobian entries associated with the edges forming the cycle; the sign of the cycles can be assessed without knowing the exact parameter values associated with the strengths of the interactions, as it only requires knowing their sign, and is therefore a structural characteristic of the system that describes a whole family.

A pioneering conjecture by \cite{Thomas1981} is that the presence of (at least one) positive cycle is a necessary condition for multistationarity, while the presence of (at least one) negative cycle is a necessary condition for sustained oscillations. The conjecture was subsequently proved by \cite{Angeli2009b,Gouze1998,Snoussi1998}. It is worth pointing out that these conditions are known not to be sufficient. For instance, the presence of positive cycles (associated, \eg with autocatalysis) can destabilise the system equilibrium and favour the onset of sustained oscillations \citep{Tsai2008}.

Graph-theoretic conditions for oscillations and multistability (and in some cases even pattern formation) have been studied \citep{Domijan2012,Mincheva2008}, also in connection to bifurcations \citep{Domijan2009,Leite2010}.

Graph-based conditions for multistationarity have been also provided by \cite{Craciun2006b,Kaufman2007,soule2003},
while structural conditions based on injectivity properties and on graph-theoretic tools to assess or rule out the presence of multiple equilibria have been proposed by \cite{Banaji2007,Banaji2009,Craciun2005,Craciun2006}. 
On the other hand, graph-based conditions for oscillations have been proposed by \cite{Culos2016,Mincheva2011,Richard2011}, also in connection to Hopf bifurcations \citep{Angeli2014}.

A structural classification of  systems with a sign-definite Jacobian, based on the sign of the cycles, has been proposed by \cite{Blanchini2014structural}.  A strong (respectively, weak) candidate oscillator can exclusively (respectively, possibly) transition to instability due to the presence of a complex pair of eigenvalues, while a strong (respectively, weak) candidate multistationary system can exclusively (possibly) transition to instability due to the presence of a real eigenvalue. Then, a system is a strong candidate oscillator if and only if all the cycles in the associated Jacobian graph are negative, and a strong candidate multistationary system if and only if all the cycles are positive. Strong candidate oscillators (respectively, multistationary systems) do not necessarily exhibit sustained oscillations (respectively, multiple equilibria); however, any transition to instability possibly induced by parameter variations is necessarily associated with the local appearance of sustained oscillations, which can be possibly associated with a Hopf bifurcation and with a limit cycle (respectively, of more equilibria, which can be possibly associated with a pitchfork bifurcation). The result is exemplified next.

\begin{example}[\textbf{Repressilator and promotilator: Oscillations and multistability}]
The repressilator and promotilator circuits in Example~\ref{ex:represspromot} are characterised by a single negative cycle and a single positive cycle, respectively. For both systems, it can be seen that a compact and convex set $\mathcal{S} = \{x \colon x_i \geq 0, \, i=1,2,3, \mbox{ and } \sum_{i=1}^3 x_i \leq K \}$, where $K$ is a positive constant, is positively invariant and therefore, as discussed in Section~\ref{sec:equilibria}, an equilibrium exists; moreover, there are no equilibria on the boundary of $\mathcal{S}$. The repressilator admits a unique equilibrium, in view of degree theory and in particular of the equality in \eqref{eq:degreetheory}, because its Jacobian $J_R$ is structurally non-singular and $\det(-J_R)>0$ structurally. This is however not the case for the promotilator. Consider the ``transition to instability'' in which the system parameters are perturbed so that the Jacobian is marginally stable and has either (a) a real eigenvalue equal to zero, or (b) a pair of purely imaginary eigenvalues, while all other eigenvalues have a negative real part. The structural classification by \cite{Blanchini2014structural} implies that only the scenario (b) is possible for the repressilator, while only the scenario (a) is possible for the promotilator. In fact, for the repressilator, $J_R$ is structurally non-singular, which rules out scenario (a), and can only admit imaginary eigenvalues (all the coefficients of the characteristic polynomial are positive); in this case, the linearised system exhibits local sustained oscillations. Conversely, for the promotilator, the Jacobian $J_P$ is a Metzler matrix (\ie it has non-negative off-diagonal entries) and therefore, in view of Perron-Frobenius theory (see, \eg \cite{BermanPlemmons1994,Bullo2024}), its dominant eigenvalue (\ie the one with the largest real part) must be real, which rules out scenario (b); in this case, at the considered equilibrium $\bar x(\vartheta^*)$, $\det[J_P(\bar x(\vartheta^*))]=0$. If the parameters vary so that $\det[J_P(\bar x(\vartheta))]<0$, then the equality in \eqref{eq:degreetheory} implies the appearance of (at least) two stable equilibria, and hence multistability.
\end{example}

The classification by \cite{Blanchini2014structural} (suitably extended to sign-definite interconnections of structurally stable monotone subsystems, see \cite{Blanchini2015structuralclass}, or of structurally stable positive-impulse-response systems, see \cite{Blanchini2017e}) has been applied to gain insight into physiological mechanisms, such as regulation of blood pressure and volume (where two concurrent candidate oscillator structures are present, one acting through blood volume and one through vasoconstriction, see \cite{Burlando2019}), into the onset of diseases such as Mal de Debarquement syndrome \citep{Burlando2022,Mucci2020}, fibromyalgia \citep{Demori2022,Demori2024} and amyotrophic lateral sclerosis \citep{Burlando2020}, and into ecological systems related to honey bee health where bistability leads to two possible extreme outcomes in terms of survival or extinction of the bee colony \citep{breda2022deeper}.

\paragraph{Timescale separation and oscillations.} In the case of a negative loop of several subsystems, each associated with its own time constant, the insertion of a single element that is slower than the others has a stabilising effect (``strong separation of timescales counteracts the tendency to oscillate'', see \cite{alon2019introduction}, page 100) and conversely similar timescales promote oscillations \citep{alon2019introduction,Blanchini2018homogeneous,novak2008}. Analysing the system's timescales is therefore important to predict the onset of oscillations.

\paragraph{Cooperative systems.} A monotone system, or cooperative system, $\dot x(t) = f(x(t))$ generates an order-preserving flow \citep{Smith1988}; namely, if $x_1(0) \geq x_2(0)$, then $x_1(t) \geq x_2(t)$ for all $t \geq 0$, where inequalities hold componentwise \citep{Hirsch2006b,Smith1995}. Monotone (and near-monotone, \cite{Sontag2007}) systems are commonly encountered in nature: many models of biological phenomena and of chemical reaction networks are monotone \citep{Angeli2006,Angeli2010,Leenheer2007} and monotone system theory can be used to invalidate biological models \citep{Angeli2012}. A nonlinear system is monotone if its Jacobian matrix $J(x)$ is Metzler for all $x$. In view of their order-preserving properties, monotone systems cannot have attractive periodic orbits \citep{Smith1995} (nor chaotic behaviours) and hence cannot yield persistent oscillations. The same dynamic properties are enjoyed by systems that become monotone after a state transformation that only alters the signs of the state variables; note that such a transformation would not alter the sign of the cycles in the Jacobian graph. This fact is consistent with the classification discussed above, since, for systems associated with a strongly connected Jacobian graph, monotonicity is equivalent to the absence of negative cycles \citep{Sontag2007}; a monotone system (or a system that can be cast into the monotone system framework) is thus a candidate multistationary system. If the graph is weakly connected, monotonicity is equivalent to the fact that all oriented paths connecting the same pair of nodes have the same sign.
The concept of monotone system has been extended to systems with inputs and outputs \citep{Angeli2003b}: a system $\dot x(t)=f(x(t),u(t))$, with output $y(t)=g(x(t))$, is input/output monotone if, given ordered initial conditions $x_1(0) \geq x_2(0)$ and inputs $u_1(t) \geq u_2(t)$,  both the states and the outputs preserve the order, $x_1(t) \geq x_2(t)$ and $y_1(t) \geq y_2(t)$ for all $t \geq 0$ (all inequalities hold componentwise). This concept is precious to investigate interconnections of monotone systems; in such a context, conditions for multistability have been investigated for monotone systems \citep{Enciso2005b} and monotone input/output systems \citep{Angeli2004}, and also conditions for oscillations (as opposed to convergence to a stable equilibrium) have been provided for input/output monotone systems under negative feedback \citep{Angeli2008}. As mentioned earlier, also the classification of candidate oscillatory and multistationary systems has been extended to sign-definite interconnections of structurally stable monotone subsystems \citep{Blanchini2015structuralclass}.

\paragraph{2-cooperative systems.} Structural oscillations can be observed in strongly 2-cooperative systems, which are characterised by a Jacobian matrix with a specific sign structure. A system of size $n\geq 3$ is strongly 2-cooperative if its Jacobian is an irreducible matrix with sign pattern
\begin{equation*}
\Sigma_{S2C} = \begin{bsmallmatrix}
    * & +& 0&\dots  &0&-  \\
    +& * & +&\dots&0&0\\
    0& + & *& \dots&0&0\\
    \vdots & \vdots & \vdots & \ddots & \vdots &\vdots\\
     0& 0 &   0& \dots&*&+\\
    -& 0 &0&    \dots&+&*
\end{bsmallmatrix},
\end{equation*}
where $*$ entries are allowed to have any sign, while $+$ entries (respectively, $-$ entries) must be non-negative (respectively, non-positive).
For a strongly 2-cooperative system, under suitable technical assumptions, we can explicitly characterise a set of initial conditions from which the system solutions asymptotically converge to a non-trivial periodic orbit \citep{Katz2024osc,Katz2025osc}. Biological models that exemplify this result are the Goodwin model (\cite{Goodwin1965}, see also \cite{gonze2021,Griffith1968,sanchez2009,Tyson1975}) and the Field-Noyes model for the Belusov-Zhabotinskii chemical reaction (\cite{FieldNoyes1974}, see also \cite{hastings1975existence,murray1974}), as well as biomolecular oscillator models (\eg \cite{Blanchini2014c,Samaniego2016,Samaniego2017b}), which indeed yield sustained oscillations for appropriate initial conditions, regardless of the values of the system parameters (\ie structurally), thanks to the sign pattern of their Jacobian matrix.

\subsection{Divide et impera: managing large scale and complexity}\label{sec:divideetimpera}

Most biological and epidemiological systems are large-scale and complex networks, with a tight interplay between network structure and system dynamics. They are typically very difficult to investigate through closed-form analytical approaches (which are more suitable for small-size systems) and are frequently studied by resorting to vast numerical simulation campaigns. How can one manage such a large scale and complexity, and still carry out a rigorous mathematical structural analysis? To achieve this goal, divide-and-conquer approaches are precious.

Analysing large-scale systems efficiently is also fundamental to untangle the interplay of multiple parameters in highly complex and uncertain settings, and hence enable a systematic multi-factorial analysis \citep{Simmons2021,Catford2022} to single out the individual role of each parameter in enabling or preventing a property of interest. Identifying crucial parameters and key structural motifs can guide the design of tailored experiments to study the effect of particular factors or stressors associated with a parameter, or to falsify or contrast competing models by comparing experimental observations with structural model predictions.

\paragraph{Network motifs.} Often, system behaviours and dynamic properties are highly correlated with the relative abundance of small subnetworks known as \emph{network motifs} \citep{Alon2007,Blokhuis2020,Drengstig2008,Milo2002a,Prill2005,Sneppen2010,Stone2019,Wong2017,yeger2004}. Such small subnetworks are easy to analyse and their presence within a large-scale network can shape the resulting emergent behaviour. In biological networks, some characteristic motifs -- such as negative autoregulation, incoherent and coherent feedforward loops, double inhibition, double-activation and activation-inhibition, and positive and negative feedback loops in general -- happen to be much more frequent than they would be in randomly generated network \citep{alon2019introduction}, and hence appear to have evolutionary advantages thanks to the key dynamic behaviours that they induce. Thus, identifying a well-known motif within a complex network can be enough to predict its overall dynamics, or to differentiate between the characteristic behaviours of alternative models (see, \eg \cite{Cloutier2012,Giordano2018,Goentoro2009,Kaplan2008,Rahi2017}).

\paragraph{System simplification.} To simplify the system, one can reduce its size through model order reduction techniques; see, \eg \cite{Feliu2012,Rao2013,schaft2015,Snowden2017}, as well as the brief discussions in Section~\ref{sec:ind-on-net-dimred} and Section~\ref{sec:dimred}). Alternatively, singular perturbation arguments \citep{KhalilBook2002} can be leveraged, which often rely on timescale separation; this topic extends beyond the scope of this monograph and the interested reader is referred, \eg to \cite{Chaves2006,Chaves2019,franci2016three,Jayanthi2011,meyer2003}; see also the books by \cite{alon2019introduction,DelVecchio2015}.

\paragraph{System decomposition and aggregation.} Also decomposing the system into subsystems that enjoy special structural properties, and can thus be collapsed into a single \emph{aggregated node}, enables the structural analysis of large-scale complex network systems. By lumping whole subsystems with a peculiar structural behaviour into a single aggregated node, a huge network with myriads of nodes can be partitioned into the interconnection of a smaller number of aggregated nodes, each of which structurally enjoys a relevant property (see Figure~\ref{fig:AggregationScaling}, left). A classic property that ensures that a subsystem is ``well-behaved'' is monotonicity: due to its order-preserving property, a monotone system approximately behaves as a first-order system (and can thus be replaced by a single aggregated node). The decomposition of biological systems into monotone subsystems has been investigated, \eg by \cite{DasGupta2007}.
For instance, as noted earlier, the structural classification of candidate oscillatory and multistationary systems by \cite{Blanchini2014structural} can be generalised to more complex systems resulting from the sign-definite interconnection of subsystems that are structurally stable and monotone \citep{Blanchini2015structuralclass} or that are structurally stable and have a positive impulse response \citep{Blanchini2017e}. Also the methods for the computation of the structural influence matrix \citep{Giordano2016} can be analogously generalised to this case \citep{Blanchini2018aggregates}.

\paragraph{System scaling.} An interesting research question is whether -- or under which conditions on the interconnection topology -- a system interconnecting several subsystems that structurally enjoy a certain property $\mathcal{P}$ exhibits the same property $\mathcal{P}$. How does the topology of the \emph{aggregated} graph affect the global behaviour? Is the structural property of the individual subsystems preserved regardless of the interconnection topology, or under appropriate conditions? For instance, the sign-definite interconnection of monotone subsystems is in turn a monotone system if all the interactions among subsystems are activating, or more in general if there are no negative cycles in the \emph{aggregated graph} (whose \emph{aggregated nodes} correspond to the subsystems). In this way, one can analyse the properties of \emph{networks of networks}; an interesting case is that of networks of networks whose components share the same structure (\ie belong to the same family, possibly enjoying a structural property), but are arbitrarily heterogeneous in the parameters, as shown in Figure~\ref{fig:AggregationScaling}, right. Sometimes it is possible to ``scale up'' properties that hold for the interconnected subsystems and project them onto the whole network of networks, even regardless of the network topology; interesting results have been obtained for the topology-independent stability of networks of networks \citep{Blanchini2017,DG2020,DG2021,DG2021ecc}, and networks interconnecting first order dynamics with Michaelis-Menten interaction functions, under suitable assumptions, have been shown to be stable regardless of the interconnection topology \citep{BLANCHINI2023110683}.
Scaling is particularly important when considering systems of akin-but-heterogeneous systems that often arise
in biology: cell populations, interacting individuals, metapopulation models in ecology and in epidemiology.

\begin{figure}[ht!]
	\centering
		\includegraphics[width=\textwidth]{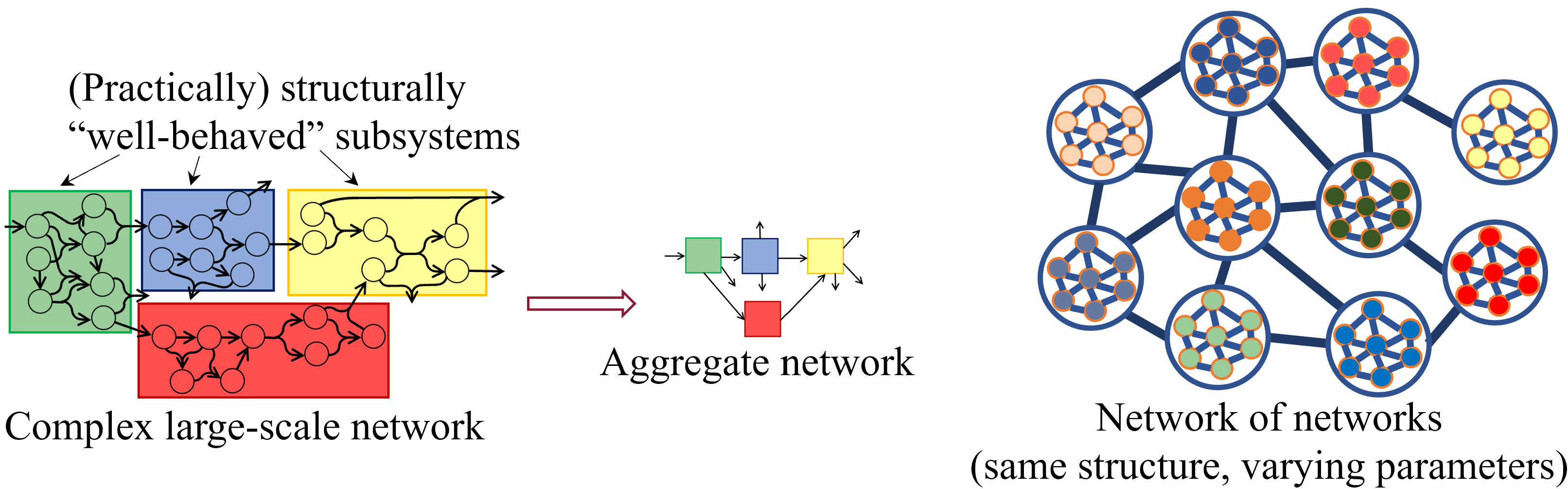}
	\caption{\footnotesize \small Left: aggregate network obtained by collapsing ``well-behaved'' subsystems into a single aggregated node. Right: network of networks, all with the same structure but with arbitrarily different parameter values.}
	\label{fig:AggregationScaling}
\end{figure}

\subsection{Beyond structural properties}

Consider again the case study in Example~\ref{ex:Golgi}. To explain the same phenomenon, two different models have been proposed: which is the right one? Structural analysis is powerful and captures the essential (qualitative) differences between the two models \citep{Giordano2018}. However, to fully distinguish among them, we also need to consider parameter-dependent properties and quantitative aspects.

In addition, although surprisingly often steady-state input-output influences are structurally signed, as discussed in Section~\ref{sec:influences}, undetermined influences (`?') can also be observed, and structural approaches cannot cast any further light on undetermined influences, or quantify how far the influence is from being structurally signed. A quantitative assessment can be obtained, for instance, through the integration of structural and probabilistic methodologies. Computing a probability triplet $\{p_+, p_-, p_0\}$, with $p_+ + p_- + p_0=1$ (where $p_*$ denotes the probability of the influence having sign $*$), by suitably sampling the parameter space -- \eg via rigorous uncertainty randomisation and Monte Carlo methods, see \cite{BarmishLagoa1997,Metropolis1953,tempo2013randomized,Sagar1998,Sagar2011} -- can shed light on undetermined influences, and for instance reveal that the influence is practically positive ($p_+=0.98$), or that all signs are equally likely ($p_+ = p_- = p_0=1/3$).

Another critical case is that of a family of systems characterised by a structure for which a property $\mathcal{P}$ of interest
cannot be proven to be structural, even though it consistently holds for randomly picked elements of the family.
For instance, some chemical reaction network structures considered by \cite{Blanchini2014d}, for which structural stability cannot be guaranteed through a structural polyhedral Lyapunov function, still have all their eigenvalues in the left half plane when the system functions and parameters are randomly generated. Rigorous probabilistic and randomised approaches can appropriately sample the family of systems, so as to unveil the stability probability and the key features and parameters that can prevent the system from being stable.

\subsubsection{Structural analysis: strengths and limitations}
Structural analysis offers strong insight into parameter-free properties of systems, but cannot provide any sharper information when a property fails to hold structurally because it is parameter-dependent.
In particular, a structural assessment provides a binary ``yes'' versus ``no'' answer to the question: does the property $\mathcal{P}$ of interest hold structurally?
When the answer is ``yes'', discovering that a property is structural is an exciting achievement: it offers very strong guarantees on the behaviour of a whole family of systems with the same interconnection structure, independent of parameter values, and provides powerful insight into the mechanisms underlying its remarkable robustness.
However, often a property does not hold structurally (and, even when it does, this is typically very hard to prove). For some models, one may conjecture that a property holds structurally, because it consistently holds for random parameter values, yet no proof is available; and the conjecture may be false, because the property may not hold in a zero measure set. When the answer to the question is ``no'', structural approaches cannot provide any insight.

When a property -- especially an expected one, based on repeated experimental observations -- does not hold structurally, it is crucial to understand why. Paraphrasing Tolstoy’s quote from \emph{Anna Karenina} (\emph{Happy families are all alike; every unhappy family is unhappy in its own way}): given a property, system families enjoying it structurally are all alike as far as the property of interest is concerned; every system family that does not enjoy it structurally lacks it in its own way.
Perhaps the property is not structural because it just holds in a (small) subset of the parameter space; or it
always holds when some parameters satisfy a specific relation; or it holds not always, but with a sufficiently high probability;
or it would be structural if a more appropriate model were adopted to describe the real phenomenon. Or a property may be practically structural in probability, in the sense that it holds in the whole parameter space apart from a zero measure set.
Hence, given a family of systems characterised by a structure (specified without resorting to numerical bounds), understanding why an expected property fails to hold structurally is a challenge for which new methodologies need to be developed, going beyond structural analysis.

While structural methods can assess whether a property of interest holds for all systems in a given family, computational approaches can test whether the property holds for one specific system in the family; however, case-by-case numerical simulations focusing on single snapshots of the parameter space cannot reveal the essence of mechanisms, nor provide rigorous theoretical guarantees. To bridge the gap between purely qualitative (structural) and purely quantitative (numerical) analysis, an integration of structural \citep{blanchini2021structural} and probabilistic \citep{tempo2013randomized} methods, possibly tailored to exploit the peculiar features of systems in biology and epidemiology, is required.
The aforementioned gap, as we have seen, is partially filled by approaches for robustness analysis, well studied in the realm of control theory \citep{Barmish1994,sanchezpena1998,zhou1998essentials}.
Still, the classic robustness paradigm checks whether a property holds for all the systems within a (possibly quantitatively) specified sub-family, and therefore again provides a binary ``yes'' versus ``no'' answer, without offering any insight as to why the (desired or expected) property is, or is not, verified.
A paradigm shift is needed to enable a whole \emph{continuum} of answers, checking whether the property holds structurally, or robustly, or with a guaranteed probability.  In particular, it is fundamental to understand why the property does not hold for some parameter choices, by identifying the system features or motifs that are crucial to enable or prevent the property, and the key parameters that must be finely tuned to enforce the property. For instance, the property may hold only in specific portions of the parameter space, or only for specific values of certain parameters, or only if specific relations among the parameter values are satisfied.

\subsubsection{Integrating structural and probabilistic approaches}

Cutting-edge probabilistic and randomised algorithms for the analysis and the control of uncertain systems are illustrated, \eg by \cite{tempo2013randomized}. These methodologies have solid theoretic foundations and provide instructions on how (and how much) to sample randomly, when running extensive numerical simulations, so as to achieve rigorous guarantees in probability. The uncertainty is confined within a set, as in the worst-case approach of robust analysis, and, in addition, it is a random variable with a given probability distribution. Then, rigorous uncertainty randomisation allows to explore the parameter space. Probabilistic techniques such as Monte Carlo methods \citep{Metropolis1953} can identify the good (or bad) set, where the desired property is (or is not) verified. The confidence of the prediction can be quantified rigorously through probability inequalities, and the sample complexity required to draw conclusions with a desired confidence can be derived, \eg via the Chernoff bound \citep{Chernoff1952} or based on statistical learning theory \citep{Sagar1998}.

Probabilistic robustness has been fruitfully introduced and investigated in engineering \citep{tempo2013randomized}, to characterise the likelihood of a system property to hold, given the probability distribution of model parameters, and to synthesise controllers that achieve a desired objective with arbitrarily high probability. Here we focus however on the \emph{analysis} of natural systems, and not on control synthesis. The randomised approaches described by \cite{BarmishLagoa1997,tempo2013randomized} have not been specifically tailored -- so far -- to the analysis of uncertain biological dynamics. Studies in this direction exist in genomics and proteomics (see, \eg \cite{Komurov2010,Sagar2011,Zaki2012}). Probabilistic studies of biological systems with uncertain parameters have been proposed (see, \eg \cite{Liebermeister2005,Weisse2010}), but they often lack formal guarantees and mostly rely on extensive simulation campaigns; recently, numerical methods have been proposed, for the probabilistic analysis of complex biological systems with parametric uncertainty, that significantly reduce the required computational burden in exploring the parameter space \citep{sutulovic2024efficient}.

Here, we sketch a possible framework to integrate structural and probabilistic approaches.

\paragraph{An integrated structural and probabilistic (ISP) framework.} Given a qualitative family $\mathcal{F}$ of systems, associated with a structure, a property $\mathcal{P}$ of interest can hold with varying prevalence, as visualised in Figure~\ref{fig:StructuralProbabilistic} (top row): for all systems in $\mathcal{F}$, and hence with probability $p=1$ (\emph{structural}); for all systems in a quantitatively specified sub-family $\mathcal{F}_r \subset \mathcal{F}$ (\emph{robust}); with large enough probability $p>p^*$ for a system in $\mathcal{F}_r$ (\emph{practically robust}) or $\mathcal{F}$ (\emph{practically structural}); for at least one system in $\mathcal{F}$, which is guaranteed if it holds with probability $p>0$ (\emph{compatible}); for no system in $\mathcal{F}$, and hence with probability $p=0$ (\emph{structurally incompatible}).
Importantly, in a probabilistic framework, the above statements apply up to a zero-measure set: $p=1$ guarantees that the property holds for \emph{all} systems \emph{possibly excluding a set of measure zero}; $p=0$ entails that the property holds \emph{at most on a set of measure zero}.

\begin{figure}[ht!]
	\centering
		\includegraphics[width=\textwidth]{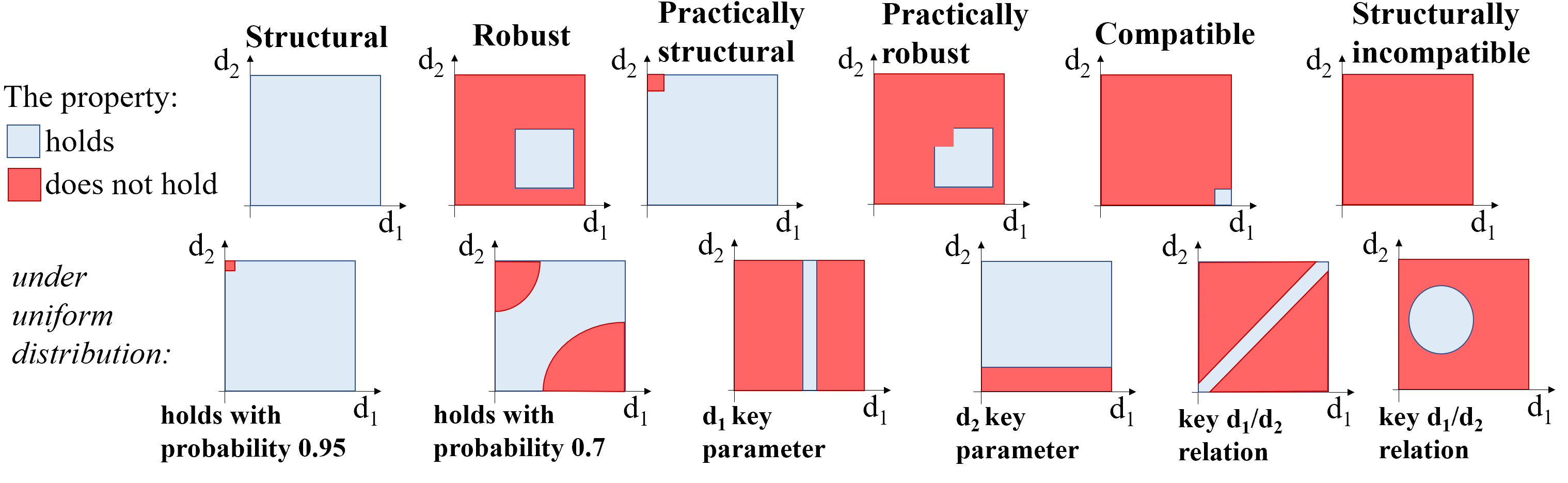}
	\caption{\footnotesize \small Illustration of varying prevalence with which a property can hold in a parameter space. The top row shows a progression of scenarios where the property of interest holds with various prevalence. Left to right, the property holds: always, for all values in the parameter space (structural); for all values in a subset of the parameter space (robust); almost always, for almost all values in the parameter space (practically structural); for almost all values in a subset of the parameter space (practically robust); for some values in the parameter space (compatible); for no value in the parameter space (structurally incompatible). The bottom row illustrates examples of probabilistic assessment (the property holds with probability $0.95$ or with probability $0.7$ under uniform distribution), as well as examples of cases when there is a key parameter that needs to be fixed to a precise value or restricted to a precise interval for the property to hold, and examples of cases when the parameters must be in a key relation for the property to hold (\eg $d_1=d_2$, or the point $(d_1,d_2)$ belongs to a circle in the parameter space).}
	\label{fig:StructuralProbabilistic}
\end{figure}

To bridge the gap between numeric and structural approaches (which is only partially filled by
worst-case robust approaches that also yield ``yes'' versus ``no'' answers), we can look for a continuum of possible answers.
For example, consider the generic family of nonlinear systems $\dot x(t)=f(x(t),d)$, endowed with an underlying network structure (which we assume is known) and affected by the uncertainty $d$ (a vector of parameters or functions). The uncertain parameters can be assumed to be bounded within a known support set (which, for structural assessments, corresponds to the entire parameter space) and characterised by a known probability distribution, or probability density function. For randomised analysis, considering different distributions (\eg Gaussian, Poisson, Exponential) can deeply affect the results \citep{tempo2013randomized}; when \emph{a-priori} knowledge about the probability distribution is not available, uncertainty can be assumed to have the worst-case distribution, which is often the uniform distribution \citep{BarmishLagoa1997}.
Numerical approaches can assess whether a property $\mathcal{P}$ holds for one element of $\mathcal{F}$, corresponding to a known $d=d^*$. However, in general, $d \in \{d^* + \rho \mathcal{D}\}$, where $\mathcal{D}$ is an uncertainty shape set (\eg the unit ball of some norm) and $\rho$ is the \emph{uncertainty radius}, quantifying the magnitude of the uncertainty.
Then, as visualised in Figure~\ref{fig:Probabilistic}, the property $\mathcal{P}$ is probabilistically (\ie up to a zero-measure set) \emph{structural} if it is verified with probability $p=1$ for all $\rho>0$, \emph{robust} if it is verified with probability $p=1$ for $0 < \rho < \rho_T$ and with lower probability for $\rho \geq \rho_T$, \emph{practically structural} (respectively, \emph{practically robust}) if it holds with probability $p > p^*$, with $1 > p^* > 0$ large enough according to the considered application, for all $\rho > 0$ (respectively, for $0 < \rho < \rho_T$).
Identifying the threshold $\rho_T$ can quantify the system's fragility (or resilience) and further investigations can reveal which are the crucial
components of $d$, or relations between components of $d$ (see Figure~\ref{fig:StructuralProbabilistic}, bottom row), or the crucial motifs in the interconnection structure, that can either enable $\mathcal{P}$ or prevent it from holding. This insight can be achieved by exploring the uncertainty space with randomised sampling techniques, to disentangle the effect of the various parameters and of important structural motifs; this type of analysis can help explain why the property does (or does not) hold with a certain probability, and uncover the crucial parameters, or
structural motifs, that enable or prevent $\mathcal{P}$, which can thus be targeted when planning interventions aimed at enforcing $\mathcal{P}$.

\begin{figure}[ht!]
	\centering
		\includegraphics[width=0.7\textwidth]{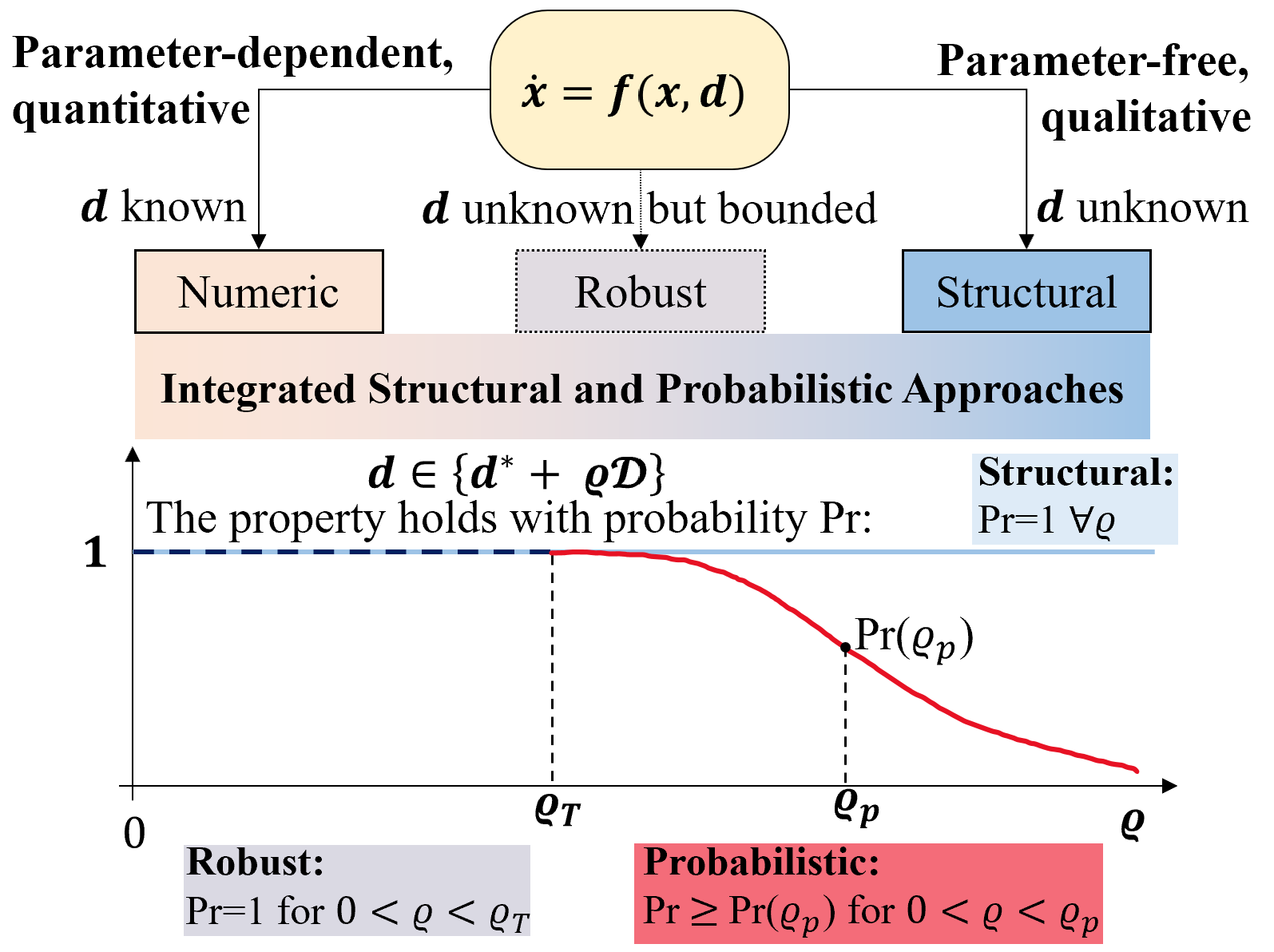}
	\caption{\footnotesize \small Integrated structural, robust and probabilistic assessment of properties of systems.}
	\label{fig:Probabilistic}
\end{figure}

\paragraph{ISP hypercube analysis.} Realistic parameter values in biological and epidemiological systems can vary across several orders of magnitude. Still, they can be assumed to lie within known intervals: the support set $\mathcal{C}$ is thus the Cartesian product of all the intervals where the (positive) uncertain parameters are confined,
\begin{equation*}
\mathcal{C} = \{d \colon d_i^- \leq d_i \leq d_i^+, \, i=1,\dots,q\}.
\end{equation*}
Then, a property is robust if it is satisfied for all $d \in \mathcal{C}$, and it is structural if $d_i^-=0$ and $d_i^+=\infty$ (assuming that only non-negative parameters are of interest).
Note that for $d_i^+ < \infty$, starting from the parameter hyperrectangle $\mathcal{C}$, the parameters can always be normalised as $d_i = d_i^- + \delta_i (d_i^+ - d_i^-)$, with $\delta$ in the unit hypercube, $0 \leq \delta_i \leq 1$ for all $i=1,\dots,q$.
An analysis of the parameter space that integrates probabilistic methods with parametric robustness can assess whether a property $\mathcal{P}$ is (practically) structural, or (practically) robust.
When $\mathcal{P}$ is not structural, by suitably slicing and sampling the parameter hypercube through integrated structural and probabilistic (ISP) approaches, one can identify crucial parameters that enable/prevent the property (\eg $\mathcal{P}$ holds provided that $d_k$ is fixed to a particular value, $d_k=d_k^*$) and subsets of the parameter set (characterised by special parameter relations) where the property holds practically structurally, robustly, or with a guaranteed probability.

Exploring the parameter space through randomised algorithms \citep{tempo2013randomized} can help identify crucial parameters, or parameter relations, for the property $\mathcal{P}$ to hold robustly. Fixing one parameter at a time to a given value, and checking whether the property becomes structural for the remaining parameters, can reveal crucial parameters. Probabilistic approaches can also help find correlations between parameters, associated with special relations that must hold to enable the property. Some parameters may need to satisfy specific constraints (for instance, for the Goodwin oscillator with $n$ stages and Hill coefficient $N$, there exists at least one choice of the parameter values for which the linearised system admits complex eigenvalues with non-negative real part if and only if $\cos(\pi/n)\sqrt[n]{N}>1$, otherwise the equilibrium is structurally stable; see \cite{Blanchini2018homogeneous}), or one parameter may need to be larger than another (\eg $\beta > \gamma$ to enable spiking in the SIR epidemic model \eqref{eq:sir}, otherwise the infected variable is monotonically decreasing).
Integrating probabilistic approaches allows us to quantify the indeterminacy when a property $\mathcal{P}$ is \emph{not} structural (or robust), and yield a probability assessment with a given confidence, depending on the number of samples. Randomised sampling can guide the identification of hyperplanes, or generic hypersurfaces, that separate the subsets where the property $\mathcal{P}$ holds (good set) and where it does not (bad set), as shown in Figure~\ref{fig:Hypercube}; it may help reveal a variety where the property is robustly verified, when the points of the good set accumulate in its neighbourhood (\eg if a property only holds for small/high enough values of a certain parameter, the randomisation will generate points of the good set only close to the corresponding face in the hypercube).

\begin{figure}[ht!]
	\centering
		\includegraphics[width=0.4\textwidth]{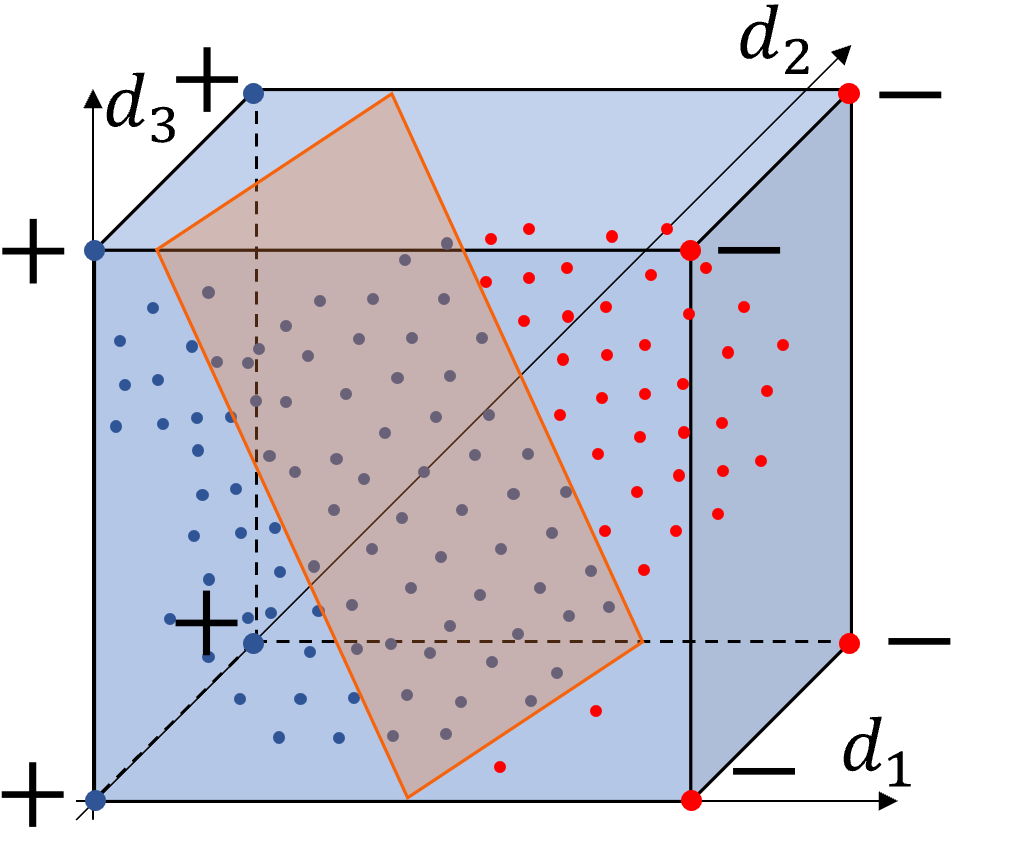}
	\caption{\footnotesize \small Exploration of the parameter space (hypercube) through randomised sampling.}
	\label{fig:Hypercube}
\end{figure}

As discussed earlier (see, \eg Remark~\ref{rem:multiaff}), multiaffine functions defined on a hypercube (without loss of generality) have their minima and maxima at the vertices, and then parametric robustness approaches yield powerful vertex results: a property holds in the whole parameter hypercube if it holds at all its vertices (as is the case when studying structural steady-state input-output influences, as discussed in Section~\ref{sec:influences}, see also \cite{Giordano2016}). Such vertex approaches can help disentangle the effect of multiple uncertainty factors in complex large-scale systems with myriads of interconnections, where detecting and explaining the effect of multiple parameters is challenging and a multifactorial analysis that singles out the role of each parameter is crucial.
Hence, efficient approaches are possible when the property of interest can be expressed by an index $\varphi(d)$ that is a multiaffine function of the uncertain parameters, to be evaluated for $d \in \mathcal{C}$ (\eg for Hurwitz stability, $\varphi(d)$ is the largest real part of the Jacobian eigenvalues, and hence it must be $\varphi(d)<0$ for all $d \in \mathcal{C}$).
Structural approaches lead to algorithms with exponential complexity: $2^q$ vertices must be checked in general, where $q$ is the number of parameters. Performing the vertex exploration efficiently is
crucial. Iterative approaches that integrate structural and probabilistic methodologies could allow to first perform fast randomised sampling of the whole hypercube, and then exhaustively explore the vertices for a progressively reduced number of parameters; we propose this type of approaches as a promising avenue for future research.

Integrated approaches for structure randomisation are also possible, aimed at identifying, in a complex network, the subnetworks and motifs whose sole presence can guarantee that a property of interest is structurally verified, or whose introduction can compromise the structural property. Starting from a given system structure and a property $\mathcal{P}$, the structure can be randomly varied, \eg by introducing
``unitary'' changes and modifications (\ie changing or removing one interconnection at a time), and then reassessing the property each time. This operation can identify key interconnections (\eg chemical reactions or pathways in cell biology, or between-host interactions in epidemiology) whose presence enables $\mathcal{P}$, or which need to be removed for $\mathcal{P}$ to hold.

Probabilistic approaches have been preliminarily explored for the qualitative analysis of large-scale agent-based opinion formation models, whose complexity does not allow for closed-form analysis \citep{DG2023model}, in order to assess the probability of emergent qualitative outcomes in the opinion distribution (such as perfect consensus, consensus, polarisation, clustering and dissensus) when the network topology is unknown and the model parameters and the initial agent opinions are also unknown, but have a known distribution \citep{DG2022,DG2023,DG2024}. This type of approach could also be applied to complex large-scale networked systems in biology and epidemiology (\eg \cite{kerr2021covasim}).

In general, the combination of structural and probabilistic/randomised approaches looks very promising to obtain a complete qualitative and quantitative picture of the system behaviour, highlighting important relations among the system parameters and offering rigorous guarantees.
This is an exciting research direction, which is further explored in the next chapter.
The novel integration of structural, robust and probabilistic methods (possibly tailored to the peculiarities of natural systems) can enable not only analysis, but also control paradigms \citep{Azuma2022,zanudo2017} for complex uncertain dynamical networks in nature, which leverage the obtained insight to guarantee a desired global property through targeted local interventions.

\subsubsection{Uncertainty quantification and generalised polynomial chaos}\label{sec:uq-gpc}

As mentioned earlier, Monte Carlo methods, which require running many simulations with sampled random variables and corresponding random process realisations, are the most common tool for probabilistic analysis. However, they suffer from poor scalability and require extensive computational power and time to gain information about even relatively simple models. They are thus impractical when simulating large models with uncertain parameters, since their sample complexity scales poorly with model dimensionality and grows rapidly with increase in the desired accuracy, leading to the so-called ``curse of dimensionality'' \citep{tempo2013randomized}. Developing flexible and efficient alternative methods to quantify probabilistic robustness is crucial to scale up the investigation of parametric uncertainties in complex models, allow the extensive exploration of large parameter spaces, and enable wider adoption of robustness analysis even when low computational power is a requirement.
An initial study in this direction is offered by \cite{sutulovic2024efficient}, who first propose, to quantify probabilistic robustness in complex models affected by parametric uncertainties,  the use of generalised polynomial chaos (gPC) for uncertainty quantification, as an alternative to computationally expensive Monte Carlo methods, so as to make the analysis more efficient and scalable. They then assess the performance of gPC methods in providing surrogate models for the probabilistic robustness analysis of widely used complex systems in neuroscience.

Generalised polynomial chaos is a spectral-decomposition-based method for dimensionality reduction, which approximates the solution of differential equations with parametric uncertainties having a known probability distribution that belongs to a prescribed class; the uncertainty is meant to capture the degree of confidence of the modeller about the actual parameter values. First developed in the realm of uncertainty quantification \citep{pepper2019multiscale, xiu2002wiener}, gPC also found applications in other fields, such as model predictive control for stochastic systems \citep{kim2013generalised}, to replace or accelerate Monte Carlo methods. Analysing uncertain systems by means of gPC is computationally efficient \citep{fisher2011optimal} and the investigation of large parameter spaces can be drastically accelerated.
In particular, \cite{sutulovic2024efficient} consider autonomous ordinary-differential-equation systems of the form
\begin{equation}
    \dot x = f(x,Z),
    \label{eq:general nonlinear uncertain system}
\end{equation}
where $x \in \mathbb{R}^n$ is the state of the system and $Z(\omega) \in \mathbb{R}^q$ is a parametric uncertainty.
Differently from stochastic differential equations (such as stochastic models of chemical reaction networks, \cite{Anderson2015, Anderson2017, Anderson2015b}), where the dynamics are driven by a stochastic process, the source of stochasticity in \eqref{eq:general nonlinear uncertain system} is induced by parametric uncertainties; in fact, $Z \colon \Omega \to I_Z \subset \mathbb{R}^q$ is assumed to be a vector of mutually-independent (without loss of generality) random variables defined on some probability space $(\Omega,\mathcal{F},\mathbb{P})$, where $\Omega$ denotes a sample space containing samples $\omega \in \Omega$, $\mathcal{F}\subseteq 2^\Omega$ is a $\sigma$-algebra and $\mathbb{P}$ is a probability measure. Also, $Z$ is assumed to be absolutely continuous with respect to the Lebesgue measure, and to have a probability density function $\rho_Z(z)$.  
Due to the parametric uncertainty, the state in \eqref{eq:general nonlinear uncertain system} is a stochastic process $x(t;Z)$, as each stochastic realisation of the parameters yields a different system trajectory.
For this class of deterministic ODE systems with the inclusion of a stochastic noise in the form of parametric uncertainty, \cite{sutulovic2024efficient} propose the use of various gPC methods to estimate the first moments (summary statistics) of $x(t;Z)$ without resorting to extensive numerical simulations (from which the summary statistics could be subsequently computed). They consider both intrusive (\eg Galerkin) and non-intrusive (\eg collocation) gPC approaches, which turn out to be scalable and allow for a fast and comprehensive exploration of parameter spaces. They also explore the trade-off between efficiency and accuracy of gPC methods; in particular, the collocation method emerges as an overall better approach for the study of neural dynamics, which conjugates efficiency and accuracy.
Using surrogate models based on gPC, or more in general on uncertainty quantification methods, is thus a promising avenue to scale-up studies of probabilistic robustness, and is expected to bear fruitful insight in future research.

\subsection{Impact on biology and epidemiology}

Natural systems in biology, ecology and epidemiology are being investigated with various methodologies, to unravel and predict their properties, behaviours and underlying functioning mechanisms.
Understanding natural phenomena despite their intrinsic uncertainty is crucial to identify the key elements that ensure their robustness, or lead to fragility and indeterminacy. Disentangling the effect of the numerous uncertain parameters and components in a system is also paramount to guide targeted interventions, by identifying the most appropriate species or interactions to be controlled so as to reliably enforce a desired property.
Coping with uncertainty in the most efficient way is fundamental, in particular for natural systems at all scales, from molecules to cells and whole organisms, up to population dynamics in ecosystems and to the evolution of spreading infectious diseases. In systems biology, extremely complex uncertain phenomena need to be untangled to explain the functioning of pathways and identify the mechanisms that can originate diseases, which is essential to enable targeted therapies. In synthetic biology, aimed at constructing biomolecular circuits with robust and tuneable behaviours, uncertain controllers need to be built for uncertain processes. In the control of epidemics, despite the highly uncertain conditions, fast and crucial decisions need to be made, by conjugating sound insight, grounded on proven theoretical guarantees, with the need for speed and efficiency.

The ISP (integrated structural and probabilistic) methods discussed in this chapter -- which include structural and robust approaches, along with their integration with probabilistic methods -- are tailored to help in this endeavour.
In systems and synthetic biology, ISP approaches can offer a better understanding of biological mechanisms in networks of biomolecules and of cells, by allowing us to investigate the robustness of biological systems and identify essential parameters and key structures that enable them to behave consistently in diverse environmental conditions.
These approaches support both the analysis and the systematic design of synthetic biomolecular feedback systems with a desired behaviour that is guaranteed to manifest due to the circuit's structure, in spite of uncertainty, noise and variability; examples are provided by biomolecular oscillators and switches \citep{Atkinson2003,Samaniego2016,Samaniego2017b,CGF2020,Elowitz2000,Gardner2000,Landau2023}, feedback controllers \citep{Aoki2019,Briat2016,BrittoBisso2025,Chevalier2019,Lugagne2017}, pulsing circuits \citep{Levine2013,Nakamura2024}, and biological neural networks \citep{Pandi2019,Okumura2022}, including classifiers for pattern recognition \citep{Cuba2024}.

ISP approaches to studying physiological models and pathways in organs and organisms, both in their healthy functioning and in their pathological state, can help identify the crucial mechanisms that lead to the onset of diseases, and thereby detect therapeutic targets to suggest new treatments and pharmaceutical interventions (see, \eg \cite{Burlando2020,Mucci2020}).

In neuroscience, initial studies have applied ISP methods to gain a deeper insight into the robustness of neural dynamics and to obtain its probabilistic quantification \citep{sutulovic2024efficient}. Systems of various complexity have been considered, ranging from mechanistic models such as the Jansen-Rit model for multiple connected neurons \citep{jansen1995electroencephalogram} to semi-phenomenological models of single-neuron dynamics such as the Hindmarsh-Rose model \citep{hindmarsh1984model,innocenti2007dynamical,Montanari2022}, 
up to phenomenological models of epileptic activity in brain regions such as the Epileptor model \citep{jirsa2014nature}.
Models from neuroscience provide a challenging case study, because they display multiple regimes (\ie different long-term behaviours of solutions) depending on the chosen parameters; this complexity is effectively tackled by \cite{sutulovic2024efficient}, who assess the persistence of neuronal signalling regimes subject to parametric uncertainties through approaches based on generalised polynomial chaos, which display remarkable scalability and computational efficiency.

On broader scales, untangling the structure and parameter uncertainties in ecological networks \citep{Dambacher2003stability,Dambacher2009,Jeffries1974,Levins1977,May1973,Marzloff2011} can support strategies for sustainable agriculture (see, \eg \cite{deJong2024}) and ecosystem preservation.

The prediction and control of epidemic phenomena can benefit from ISP methodologies, which can be tailored to analyse epidemic models of various complexity, ranging from SIR-like compartmental models, such as SIDARTHE and its extensions \citep{Giordano2020,Giordano2021,CalaCampana2024}, to network-based models describing the spread of
infectious diseases in an interacting population, where nodes are associated with individuals and arcs with their
interactions \citep{Aalto2025,alutto2024dynamic,Nowzari2014,Pastor2015}, possibly including both between-host contagion dynamics and in-host infection dynamics that model interactions between pathogen and immune system \citep{Hernandez-Vargas2019}.
For instance, when describing large-scale pandemics across multiple regions through network-based metapopulation approaches, aggregation methods (briefly discussed in Section~\ref{sec:divideetimpera}) are fundamental.
An ISP characterisation of possible contagion dynamics and patterns can help outline what-if scenarios and identify key factors to predict and to contain the disease spread. ISP methods can suggest how to prevent undesired behaviours, such as recurrent epidemic waves, and promote desired outcomes, such as quasi-steady-states with low infection numbers. The achieved insight can facilitate the holistic design of multi-pronged strategies to contain the contagion via optimal combinations of vaccination plans and non-pharmaceutical interventions, also taking into account resource constraints, thus supporting decision-making with rigorous guarantees.

Overall, integrated structural and probabilistic approaches can be embedded in the analysis of natural systems at all scales, and contribute to their study, design and control. Many of the methods described in this chapter can be readily applied, and future developments that successfully bridge the gaps we have outlined will foster further contributions from the control community to the life sciences.

\newpage
\section{Resilience and robustness}
\label{ch:rob-res-sta-mod}

Structural analysis \citep{blanchini2021structural} and robustness analysis \citep{Barmish1994}, as discussed in Chapter~\ref{ch:struct-an},
aim at guaranteeing that a property is preserved by a whole family of uncertain systems independent of parameter values, or for all parameter values within a specified set, respectively. However, some properties of interest hold neither structurally nor robustly, but with a certain probability, and in such cases structural or robust approaches just provide a negative qualitative outcome and cannot quantify to which degree the property holds.
Also, systems in nature are subject not only to parametric uncertainties, but also to stochastic perturbations, which may yield regime shifts \citep{scheffer2012anticipating}, mostly understood as shifts among basins of attraction \citep{ashwin2012tipping}. To address these phenomena, multiple concepts of resilience for networks and dynamical systems have been introduced \citep{Liu2020b}, but formal definitions are so far lacking in the literature, with the exception of very recent contributions \citep{Proverbio2024,PROVERBIO2024445}. Resilience indicators 
\citep{Dakos2015, kuehn2011mathematical,Krakovska2024} have also been proposed to quantify the ability of a system to withstand perturbations, while maintaining properties of a given attractor. Their design is still relatively at its infancy and different methods have been suggested recently \citep{Liu2020b, proverbio2023systematic}. However, most indicators rely on simplistic surrogate models and lack rigorous and widely applicable mathematical definitions; see Chapter~\ref{ch:res-ind} for more details. Also, appropriate resilience indicators cannot themselves be equated to a definition of resilience; otherwise, different resilience indicators that yied opposite results may lead to lack of consensus as to whether a given system is resilient or not. Resilience needs to be defined rigorously prior to the design of indicators.

To address these open questions, this chapter discusses new rigorous definitions of resilience aimed at quantifying the probability that the behaviour of a nominal deterministic system with respect to a prescribed attractor is preserved under stochastic perturbations.
First, we analyse an example from systems biology to motivate the conceptual difference between robustness and resilience in (deterministic) dynamical models, and we show that the control-theoretic notion of robustness may be too restrictive to capture some of the properties that biological systems exhibit. Then, we complement the concept of robustness with new formal notions of resilience. In particular, for a family of ODE systems consisting of stochastic perturbations of a nominal deterministic system, we characterise the concept of resilience through the properties of an attractor-basin pair $(A,B(A))$ induced by the nominal deterministic system, where $A$ is an attractor of interest and $B(A)$ is the corresponding basin of attraction, and require that the system trajectories remain sufficiently close to the deterministic attractor $A$ even in the presence of stochastic perturbations, which perturb the nominal, deterministic, dynamics.
Also, a suitably defined attraction time is presented as a formal resilience indicator.

\begin{remark}[\textbf{Attractors and regimes}]
Stable invariant sets (\ie attractors) map to the notion of ``basic sets'' discussed by \cite{lemmon2020achieving} and generalise the concept of equilibrium point to the case of non-equilibrium dynamics \citep{ashwin2012tipping}. They are also powerful tools to formalise the concept of ``regime'' (used in ecology and biology), ``property'' (used in control theory) or ``function'' (used in systems biology). These notions often refer to state properties/behaviours that are preserved over certain time frames even in the presence of varying conditions. In fact, a regime can be characterised by the state of a system remaining within a prescribed basin of attraction \citep{lemmon2020achieving}. As we will show in this chapter, this association allows us to establish a connection between qualitative properties (including those considered in other research fields) and quantitative dynamics.
\end{remark}

Our definitions of resilience generalise the notion of probabilistic robustness, in the spirit of \cite{tempo2013randomized} as described in the previous chapter, by considering dynamical disturbances in addition to (static) parametric uncertainty, and enable a quantitative assessment of the persistence of system properties, beyond qualitative answers offered by robustness analysis.
The validity and efficacy of these definitions are showcased in applications to models of biological systems, including a bistable gene regulation model, and a reaction-diffusion system for plankton dynamics that exhibits Turing patterns.

We start by discussing resilience in relation to stability properties, which are of paramount importance in the life sciences as well as in control theory; extensions to other properties will be discussed as an avenue for future work.
We believe that the proposed rigorous definitions can be useful for scientists across a wide array of disciplines, such as network theory, systems biology, ecology, epidemiology, and can be easily modified to capture other desired phenomena aside from stability. We hope that such a set of fundamental definitions can be the starting point to a fruitful research effort that will build new bridges across scientific communities. 

\subsection{The need for resilience}

To illustrate the difference between the concepts of robustness and resilience, and to motivate the need for a notion of resilience that differs from that of robustness, we introduce the potential landscape method, and then proceed with a motivating example from systems biology. 

\subsubsection{The potential landscape}
\label{sec:potential_land}

A useful tool to visualise the difference between robustness and resilience is the potential landscape, adopted in the literature in addition to the state space representation of a dynamical system \citep{Berglund2006, freidlin1998random}. In models of mechanical and electrical systems, the potential can be immediately associated with notions of energy, as they are defined in physics; also, recently, the potential has been successfully associated with biological concepts such as \emph{creodes} and stability landscapes \citep{Ferrell2012, Wang2010}, quasi-potentials \citep{ Nolting2016, Zhou2012} and epigenetic landscapes in developmental biology  \citep{Eugenio2014,Moris2016}. In general, given the autonomous ODE system 
\begin{equation*}
		\begin{aligned}
			& \dot{x}(t) = f(x(t)),\ \ x(t_0) = x_0\, ,
		\end{aligned}
	\end{equation*}
with $x(t)\in \mathbb{R}^n$, we assume that its solutions exist and are extensible for all $t\in \mathbb{R}$. The function $f$ is \emph{integrable} if there exists a scalar function $V(x)$ such that $f(x)=-\nabla V(x)$; then, $V(x)$ is the \emph{potential function} of the ODE system and the system describes a motion of the state $x$ from high to low potential values. The potential is useful to visualise the behaviour of the trajectories, in a Lyapunov-like fashion, by studying $V(x(t))$ along the system solutions, with $V(x(t,x_0))$ being the potential along the system solution emanating from $x(t_0)=x_0$.

\begin{remark}[\textbf{Langevin equations}]
The notion of adjoint stochastic potential is also well-defined for stochastic Langevin equations \eqref{eq:langevin_eq} through the associated Fokker-Planck equation \eqref{eq:fokker_plank}; see Appendix~\ref{sec:ODESDE}. Hence, a similar analysis to the one employed here is also possible for some SDEs. We do not pursue this topic here; for additional details, see \cite{proverbio2022buffering, Sharma2016, Su2019} and the references therein. 
\end{remark}

The study of the behaviour of $V(x(t))$ is not only relevant to show convergence of the system trajectories to equilibrium points, but it can also be extended to general attractors and applications to non-equilibrium dynamics, such as persistent oscillations observed, \eg in circadian rhythms and brain activities \citep{Doyle2006,khona2022attractor,Montanari2022}, seasonal epidemic patterns \citep{Grassly2006}, or transients in multivariate data \citep{heino2022attractor}.

Thinking in terms of the potential landscape is useful in the context of robustness and resilience, in particular for stability analysis; it offers tools to complement robust stability analysis \citep{Iglesias2010} by including noise and quantifying resilience. Unfortunately, an analysis based on the potential landscape is not always possible when complex networks are involved and detailed mechanistic models are not available \citep{meyer2016mathematical, schultz2017potentials}. Nonetheless, under suitable assumptions, certain measures extracted from the potential landscape representation can be used as indicators for resilience changes (see, \eg \cite[Section 5.2]{Krakovska2024}); resilience indicators described in the literature, along with their range of applicability and their performance, are discussed in Chapter~\ref{ch:res-ind}.

We now present a motivating example from systems biology, aimed at pointing out the difference between the notions of robustness and resilience, where the potential landscape is used to analyse how the system attractor changes due to parameter variations and/or to the addition of exogenous noise to the right-hand side of the ODE system.

\subsubsection{An example from systems biology: gene regulation}\label{ex:bio}
A well-known model from systems biology helps us highlight the difference between the two notions of robustness and resilience, by showing various changes in attractors that can be experienced by a family of \emph{deterministic} systems \citep{Proverbio2024,Proverbio2025cdc}.
In the rest of the chapter, we will consider stochastic perturbations (\ie noise) suitably added to deterministic ODEs, and our ensuing definitions of resilience will employ a probabilistic framework (see Section~\ref{Sec:ResDef}).
For now, we consider the nonlinear system 
\begin{equation}
    \dot{x}(t) = f_G(x(t)) = - x(t) + a \frac{x(t)^h}{1+x(t)^h} + k,\ x(0)=x_0\,,
    \label{eq:hill-equation}
\end{equation}
where $x(t)\in \mathbb{R}_{\geq0}$, $t\in [0,\infty)$, represents the concentration of a biochemical species.
System \eqref{eq:hill-equation} is a particular case of gene regulatory network system, belonging to the class of systems presented in Chapter~\ref{ch:intro}, Eq. \eqref{eq:gene-regulation}; see \eqref{eq:gene-regulation2}. Indeed, the one-dimensional system \eqref{eq:hill-equation} can be obtained from \eqref{eq:gene-regulation} by assuming that only activating Hill-type interactions are present in the network and applying, to the full Hill-type dynamics on the whole network, suitable dimension reduction techniques such as weighted dimension reduction (\emph{cf.} Section~\ref{sec:ind-on-net-dimred} and \cite{Gao2016}).

The potential function $V(x)$ for system \eqref{eq:hill-equation}, such that $f_G(x)=-\nabla V(x)$, is shown in Figure~\ref{fig:pot_land}a for different choices of the system parameter $a$.

\begin{figure}[ht!]
	\centering
	\includegraphics[width=\textwidth]{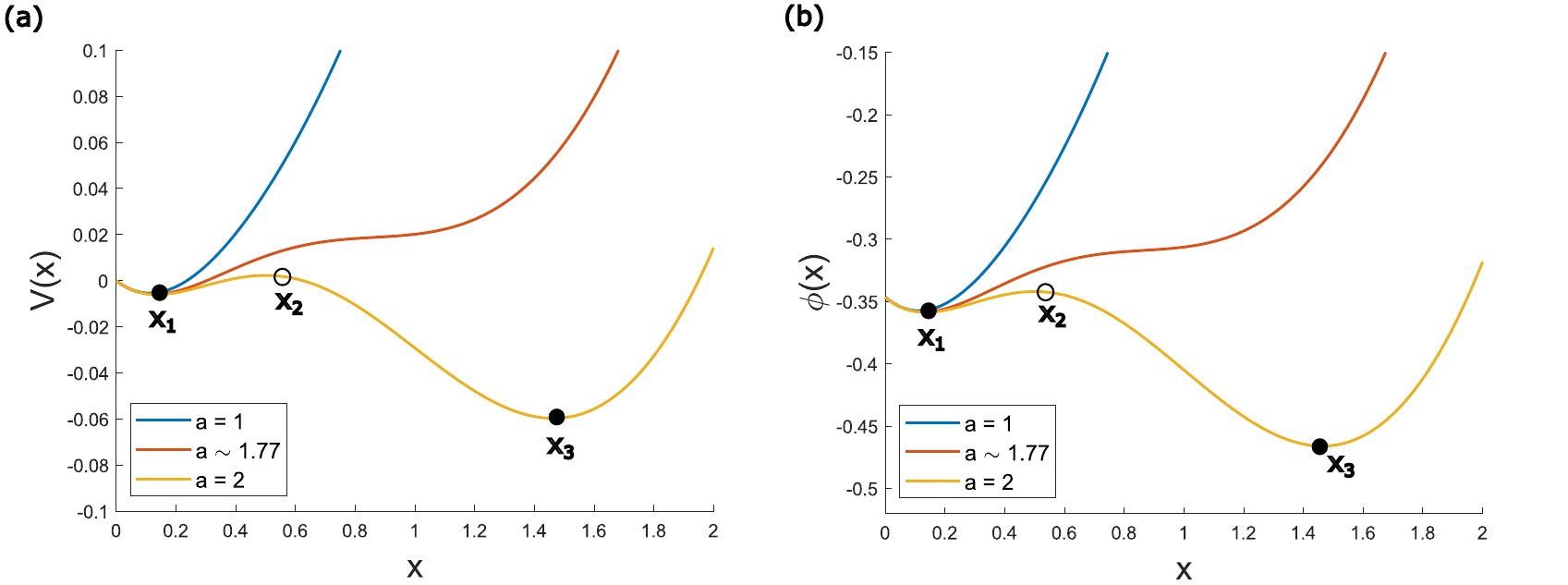}
	\caption{\footnotesize (a) Deterministic potential function for Eq. \eqref{eq:hill-equation}. (b) Stochastic potential function for Eq. \eqref{eq:hill-equation-stoch}. Different values of $a$ are considered, while the other parameters are chosen as $\sigma = 0.5$, $k=0.1$ and $h=2$. The equilibria (stable, full circle; unstable, empty circle) refer to the case $a=2$ and relate to those in Figure~\ref{fig:gene-reg-phase-portrait}. }
	\label{fig:pot_land}
\end{figure}

We now consider different combinations of the parameters $a$, $h$ and $k$ in equation \eqref{eq:hill-equation}.
If $h = 1$, then the Hill function is actually a Michaelis-Menten function, and the system is \emph{robustly} stable according to Definition~\ref{def:robustness}; in fact, regardless of how the values of the parameters $a$ and $k$ are chosen in their biologically meaningful range (provided that, however, the third parameter is constrained as $h=1$), there is a unique globally asymptotically stable equilibrium:
\begin{equation}
    \mathbf{x_1} = \frac{k+a-1 + \sqrt{(k+a-1)^2+4k}}{2} \, , \; \ \forall a \geq 0, \, k \geq 0 \, ,
\end{equation}
such that $f_G(\mathbf{x_1})=0$.
This result is valid for \emph{any} gene regulatory network that exclusively involves Michaelis-Menten interactions (namely, with $h=1$), regardless of the parameter values and of the system structure \citep{BLANCHINI2023110683}, and it is therefore a structural property for such a class of Michaelis-Menten networks.

Conversely, if $h \geq 2$, robust stability no longer holds. In fact, the system may be bistable, and thus have two locally asymptotically stable equilibria. In particular, when $h = 2$ and $0 < k < \frac{1}{3\sqrt{3}}$, the system may admit three equilibria, of which two ($\mathbf{x_{1}}$ and $\mathbf{x_{3}}$) are stable and one ($\mathbf{x_2}$) is unstable \citep{weber2013stochastic}. For different values of $h > 2$, different combinations of the parameters $a$ and $k$ can be found for which the system admits multiple equilibria \citep{Proverbio2025cdc}. 

Analysing the vector field and phase portrait (see Figure~\ref{fig:gene-reg-phase-portrait}, bottom) immediately highlights the system equilibria, whose number and whose values depend on the system parameter $a$ for given values of $h$ and $k$. In particular, given $h = 2$ and a fixed value of $k$ with $0 < k < \frac{1}{3\sqrt{3}}$, there exist $0<a_{c,1} < a_{c,2}$ such that for $ a \in (a_{c,1}, a_{c,2})$ the system is bistable, yielding two contrasting regimes associated with two different stable attractors (equilibria).
Hence, when $a\in (0,a_{c,1})$ or $a \in (a_{c,2}, +\infty)$, the system is robustly stable, for the given $k$ and $h$, while it is \emph{not} robustly stable (it does \emph{not} have a unique globally attractive equilibrium) when $a\in (a_{c,1},a_{c_2})$. This is shown in Figure~\ref{fig:cusp-gen}. Identifying these parameter-bounding sets provides \emph{a-priori} conditions guaranteeing that the system has one versus two attractors.

For a given $h \geq 2$, which combinations of values of $k$ and $a$ yield a single equilibrium and which yield three equilibria? With the help of Figure~\ref{fig:gene-reg-phase-portrait}, we can identify the borderline critical case of two equilibria, which is achieved when the line $f_2(x) = x - k$ is tangent to the sigmoid $f_1(x)= a x^h / (1+x^h)$, where $f_G(x)=f_1(x)-f_2(x)$, by solving
\begin{equation}
    \begin{cases}
        f_G(\bar{x}) = 0 \, , \\
         \frac{df_G(x)}{dx} \Big\rvert_{\bar{x}} = 0 \, .
        \label{eq:to-cusp}
    \end{cases}
\end{equation}
The first equation implies that $\bar{x}$ is an equilibrium point (\ie an intersection of $f_1(x)$ and $f_2(x)$), while the second equation requires $f_2$ to be tangent to $f_1$ at the point, \ie $\frac{df_1(x)}{dx}\Big\rvert_{\bar{x}}=\frac{df_2(x)}{dx}\Big\rvert_{\bar{x}}$. 

In general, system \eqref{eq:to-cusp} provides an explicit relationship between the two free parameters $a$ and $k$. For $h=2$, system \eqref{eq:to-cusp} can be solved explicitly, yielding two curves
\begin{footnotesize}
\begin{multline*}
	a_1(k) = -k + \tfrac{3^{1/3} + -9 k + \sqrt{-3 + 81 k^2})^{2/3}}{2 \left(-3 k + \sqrt{-(1/3) + 9 k^2}\right)^{1/3}} \\
	 -\tfrac{1}{36 k}\left( -3 + \tfrac{3^{4/3}}{(-9 k + \sqrt{-3 + 81 k^2})^{2/3}} + 3^{2/3} \left(-9 k + \sqrt{-3 + 81 k^2} \right)^{2/3}  \right),
\end{multline*}
\begin{multline*}
	a_2(k) = \tfrac{1}{12 k} \left(1-12k^2 + \left(1 + 24 \sqrt{3} \sqrt{k^2 (-1 + 27 k^2)^3} - 108 k^2 (5 + 54 k^2)\right)^{1/3} \right. \\
	\left. + \tfrac{1 + 216 k^2}{(1 + 24 \sqrt{3} \sqrt{k^2 (-1 + 27 k^2)^3} - 
	108 k^2 (5 + 54 k^2))^{1/3}}   \right),
\end{multline*}
\end{footnotesize}
that identify the parameter combinations leading to the critical case (yellow and blue curves, respectively, in Figure~\ref{fig:cusp-gen}); for intermediate values of $a(k)$ such that $a_2(k) \geq a(k) \geq a_1(k)$ (in the shaded region in Figure~\ref{fig:cusp-gen}), bistability occurs; for external values of $a(k)$ (in the white regions in Figure~\ref{fig:cusp-gen}), monostability occurs.
It is worth noting that, when $h=2$ and $k>\frac{1}{3\sqrt{3}}$, the system has a single, globally asymptotically stable, equilibrium for all possible values of $a \geq 0$. Otherwise, multiple regimes (in terms of number of equilibria) are possible, depending on the values of $a$ and $k$.

\begin{figure}[ht!]
	\centering
	\includegraphics[width=0.5\textwidth]{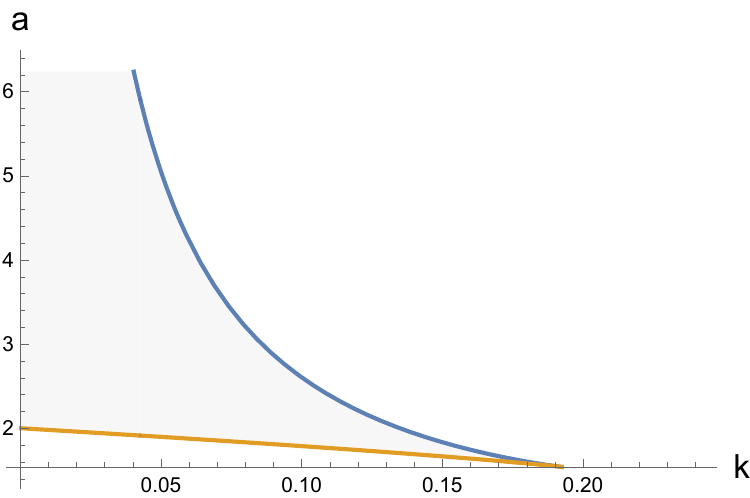}
	\caption{\footnotesize The blue and yellow lines $a = a(k)$ represent parameter combinations yielding bifurcation points for Eq. \eqref{eq:hill-equation} with $h=2$, and identify a region (shaded) where bistability occurs. Outside that region, monostability occurs.}
	\label{fig:cusp-gen}
\end{figure}

So far, this example has demonstrated the connection between parameter values, and parameter-bounding sets, and the corresponding regimes, in terms of number of attractors that the system may have. Considering the question through the prism of robust stability, only two answers are possible: either the system is monostable for all possible parameter values within the given parameter-bounding set, or it is not.
In our particular case, for $a\in (a_{c,1},a_{c,2})$, and $k \in \left(0, \frac{1}{3\sqrt{3}}\right)$, the system does \emph{not} have a unique globally attractive equilibrium when $h = 2$. When $h>2$, similar bounds for $a$ and $k$ can be found numerically; for the same values of $a$, variable $k$ may take larger values while still allowing for bistability if $h>2$, and hence we will consider the interval $0 < k < \frac{1}{3\sqrt{3}}$ throughout the following comparative analysis. Although the qualitative \emph{robustness} analysis, for $a>0$ and $k>0$, always yields a negative answer whenever $h\geq 2$, the behaviour of the system \emph{when subject to disturbances} changes significantly for different values of $h\geq 2$.
To capture this phenomenon, a more nuanced characterisation of changes in system properties is required, which calls for the notion of resilience.

\begin{example}[\textbf{Noise}]
With additive white noise, Eq. \eqref{eq:hill-equation} becomes
\begin{equation}
    \dot{x}(t) = f_G(x(t)) + \eta(t) = - x(t) + a \frac{x(t)^h}{1+x(t)^h} + k + \eta(t) \, ,
    \label{eq:hill-equation-stoch1}
\end{equation}
where $\eta(t)$ is uncorrelated white noise with (implicit) intensity $\sigma$. In the literature, this situation is often analysed resorting to the \emph{stochastic} potential of the system \citep{Gardiner1985}, given by
\begin{equation}
     \phi(x)  = \frac{1}{2}\ln \sigma - \frac{1}{\sigma}\int_0^x f(x')dx' \, ,
\end{equation}
which is shown in Figure~\ref{fig:pot_land}b. The overall shape of $\phi(x)$ is similar to the deterministic $V(x)$, but the numerical values are different and, in particular, $\Delta \phi = |\phi(\mathbf{x_3}) - \phi(\mathbf{x_2})|$ is larger than $\Delta V = |V(\mathbf{x_3}) - V(\mathbf{x_2})|$, which means that more energy is required to steer the system from the equilibrium $\mathbf{x_3}$ to the equilibrium $\mathbf{x_2}$: according to this criterion, noise improves the \emph{resilience} of the equilibrium $\mathbf{x_3}$ (this observation complements that by \cite{weber2013stochastic}, who conduct a bifurcation analysis pointing out that noise preserves the existence of the equilibrium $\mathbf{x_1}$ for a larger interval of parameter values, and enlarges the bistable region). However, it is important to stress that there is an intrinsic difference between $\phi$ and $V$: while $V$ is always decreasing along the deterministic trajectories and is therefore a Lyapunov-like function for the deterministic system, function $\phi$ does not have analogous properties in relation to the stochastic system. For stochastic differential equations, the stochastic potential is related to the stationary probability distribution and determines state-space regions where the state has the highest probability to be found (potential wells), while random noise allows the state to fluctuate and escape such wells. For details, see \cite{Proverbio2025cdc}.
\end{example}

As we mentioned, considering different values of $h\geq 2$ does not affect the stability and robustness properties of the system with $a>0$ and $k>0$, but it does affect the resilience of the system. In fact, varying $h\in (2,\infty)$ alters the potential function associated with the system: precisely, it alters the geometric properties (shape) of the basins of attraction around the equilibria, thereby modifying the \emph{resilience} property of the locally asymptotically stable equilibria, understood as the system's ability to reject disturbances while remaining close to equilibrium states, which is necessary for prompt responses to external stimuli and survival of biological systems \citep{proverbio2022buffering}.
In particular, for $0 < k < \frac{1}{3\sqrt{3}}$ and $a_{c,1} < a < a_{c,2}$, we can plot the potential function obtained from Eq. \eqref{eq:hill-equation} for different values of $h$ (see Figure~\ref{fig:hypergeometric}b). The analytical solution involves the Gauss hypergeometric function $_2 F _1(a,b;c;z)$, expressed in terms of $x$ and $h$ for any fixed $a$ and $k$; see Figure~\ref{fig:hypergeometric}a. The changes in the potential function that can be observed in Figure~\ref{fig:hypergeometric}b for different values of $h$ yield different values of $\Delta V(x)$ and thus different levels of system resilience, associated with a different configuration and depth of the basin of attraction of the equilibria \citep{lundstrom2007dynamic}.

\begin{figure}[ht!]
	\centering
	\includegraphics[width=\textwidth]{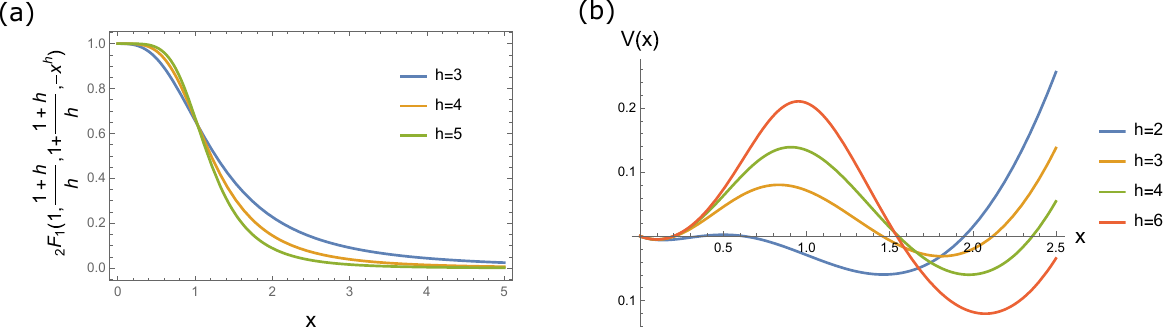}
	\caption{\footnotesize (a) Gauss hypergeometric function $_2 F _1(1, (1+h)/h;1+(1+h)/h;z^{-h})$, part of the integral for Eq. \eqref{eq:hill-equation}. (b) Potential function $V(x)$ for system \eqref{eq:hill-equation}, with $k=0.1\in \left(0, \frac{1}{3\sqrt{3}}\right)$ and $a=2\in (a_{c,1},a_{c,2})$, leading to bistability, for different choices of $h \geq 2$.}
	\label{fig:hypergeometric}
\end{figure}

Overall, our analysis has shown that the system \eqref{eq:hill-equation} is structurally stable when $h=1$ (for all $a \geq 0$ and all $k \geq 0$), while when $h \geq 2$ it is robustly stable only for specific intervals of $a$ and $k$, depending on the precise value of $h$. On the other hand, the resilience of the system is determined by combinations of $a$, $h$ and $k$, which shape the basins of attraction of the equilibria.
A purely qualitative assessment of the robustness of system \eqref{eq:hill-equation} with respect to the existence of a unique globally asymptotically stable equilibrium does not capture all possible attractor configurations for the system, nor their geometric properties, and a quantitative assessment of the persistence of system properties in the face of disturbances is required. To complement robustness, the notion of resilience is thus necessary; yet, to date, the numerous definitions of resilience introduced in the literature, which we surveyed in Section~\ref{sec:informal_defs}, are mostly heuristic and do not rely on rigorous mathematical formulations.

In this chapter, following \cite{Proverbio2024,PROVERBIO2024445}, we propose formal definitions of resilience for a class of autonomous ODE systems subject to stochastic noise (see Section~\ref{Sec:RorR}), and demonstrate their applicability to the analysis of systems in the life sciences.

To this aim, we will need some background from control theory and dynamical systems, and in particular fundamental definitions and results related to ordinary and stochastic differential equations, which are summarised in Appendices~\ref{App:A0} and \ref{sec:ODESDE}.

\subsection{Randomness and noise in biology and epidemiology}

Random phenomena can be modelled by introducing uncertainty into the dynamics in the form of a stochastic process. The definition of stochastic processes \citep{freidlin1998random,Gardiner1985} builds on probability theory; see Appendix~\ref{sec:ODESDE} for a brief summary.
Many modelling approaches embed stochastic processes \citep{Ullah2011} using different mathematical tools that include, \eg Boolean networks \citep{Chaves2005,Saadatpour2011} or discrete processes \citep{J.S.Allen2014}. Since this monograph focuses on continuous dynamics, we do not review these additional approaches, but we restrict ourselves to including noise terms into differential equations so as to introduce and discuss the concept of resilience.

In biological systems and models, noise is ubiquitous \citep{su2019phenotypic,Tsimring2015a}. Systems biology mostly copes with uncertainty in intrinsic biochemical processes (``intrinsic'' noise, \cite{Hasty2000,norman2015stochastic}) or cell-cell variation (``extrinsic'' noise, \cite{Zhang2012}), as well as stochasticity due to small copy numbers of molecular or biological species \citep{Gillespie2000}. In case of timescale separation, uncertain fast variables are often modelled as noise \citep{o2018stochasticity,Tian2004}. Additional case studies are surveyed by \cite{J.S.Allen2014,Ullah2011}. 

In epidemiology, noise also plays a relevant role in modelling uncertain dynamics \citep{Aleta2020, Becker1977}, \eg due to low numbers of infectious individuals \citep{brett2017anticipating}. Noise can model the uncertainty in the epidemic evolution \citep{Diekmann2000, Hellewell2020} driven by human behaviours or intervention strategies \citep{Buonomo2018}, or associated with fast and unknown unmodelled processes \citep{Proverbio2022}.

\subsection{Robustness or resilience?}\label{Sec:RorR}

To distinguish between the notions of resilience and robustness, we need to provide adequate definitions for both concepts. 

Definition~\ref{def:robustness} of robustness considers \emph{families of systems}, whence it is especially useful for capturing structured perturbations (\ie perturbations that preserve the underlying system structure), including perturbations that can be attributed to noise and/or randomness. An example of structural analysis is the investigation of structural stability, mentioned in Chapter~\ref{ch:struct-an} (see also \cite[Section~6]{blanchini2021structural}). In the biological case study of Section~\ref{ex:bio}, the property $\mathcal{P} =$ `the system admits a unique globally asymptotically stable equilibrium' is \emph{robust} for the family of systems $\mathcal{F}=\{ G_{\lambda}\}_{\lambda \in \mathcal{I}}$ if, for each $\lambda \in \mathcal{I}$, $G_{\lambda}$ has a \emph{singleton global attractor} $\left\{\bar x(\lambda) \right\}$.

Here, we propose a possible definition of resilience relying on attractors. To this aim, we first characterise the considered class of systems.

\begin{definition}[\textbf{Family of systems}]\label{def:family}
We consider a family of systems $\mathcal{F} = \left\{G_{\lambda} \right\}_{\lambda \in \mathcal{I}}$ subject to the following assumptions:
\begin{enumerate}
    \item There exists $\lambda_0 \in \mathcal{I}$ such that $G_{\lambda_0}$ is a \emph{deterministic system} corresponding to an autonomous ODE system
\begin{equation}\label{eq:ODESys}
 \dot{x} = f(x),\ \ x(0) = x_0, \quad t\geq 0, 
\end{equation}
where $x\in U \subseteq \mathbb{R}^n$, $U$ is an open set and $f \colon U \to \mathbb{R}^{n}$ is a smooth and globally Lipschitz function. We consider system $G_{\lambda_0}$ as the \emph{nominal} (deterministic) system.
    \item For $\lambda \in \mathcal{I} \setminus \{\lambda_0\}$, the system $G_{\lambda}$ is obtained from the nominal system $G_{\lambda_0}$ by adding the \emph{stochastic} term $g(x)\eta_{\lambda}$ to the right hand side of the ODE system in \eqref{eq:ODESys}, where $\eta_{\lambda}$ is a stochastic stationary noise term and $g \colon U\to \mathbb{R}^n$ is a smooth and globally Lipschitz function.      
Hence, the state of $G_{\lambda}$ at time $t>0$ is a random variable, defined on a suitable probability space with probability $\mathbb{P}_{\lambda}$ that is induced by the stochastic noise $\eta_{\lambda}$.
\end{enumerate}
With a slight abuse of notation, we consider $\eta_{\lambda}$ to be of the form $\eta_{\lambda} = \lambda \frac{\mathrm{d}\mathrm{B}_t}{\mathrm{d}t}$, where $\mathrm{B}_t$ is a one-dimensional Brownian motion and $\lambda\geq 0$ is a noise intensity parameter.
We assume that the solutions to all the systems in the family $\mathcal{F}$ satisfy existence, uniqueness and extensibility to all times, in the appropriate sense (e.g., \cite[Theorem 5.5]{oksendal2013stochastic} guarantees the existence of strong solutions in the Itô sense).
\end{definition}

Let $A$ be a closed attractor of the nominal system $G_{\lambda_0}$ and consider the associated basin of attraction
\begin{equation*}
B(A)=\{\chi \in U \colon \lim_{t \to \infty} \operatorname{dist}(x(t;\chi,0),A) = 0 \},
\end{equation*}
where $x(t; \chi,0)$ denotes the trajectory of the nominal system $G_{\lambda_0}$, with $\eta_{\lambda_0} \equiv 0$ ($\lambda_0=0$), starting from initial condition $\chi$.
Given $\lambda \in \mathcal{I}\setminus \left\{\lambda_0 \right\}$,
a natural question to ask is whether the attractor-basin pair $(A,B(A))$ preserves its properties for some/all the systems $G_{\lambda}$ in $\mathcal{F}$.
Since the trajectories of $G_{\lambda}$ are stochastic processes, the answer to this question will have a probabilistic nature. To ensure the question is well posed (see Chapter 5 in \cite{khasminskii2012stochastic} for a discussion on \emph{stability in probability} of SDEs), we assume that function $g$ satisfies $g\vert_{A}\equiv 0$, namely, $g(x)=0$ for all $x \in A$.

If for all $\lambda \in \mathcal{I}$ it is true that, for any initial condition $x_0\in B(A)$, we have 
\[\lim_{t\to \infty }\operatorname{dist}(x(t;x_0,\eta_{\lambda}),A)=0\] 
almost surely, then $(A,B(A))$ is a robust attractor-basin pair for the family $\mathcal{F}$. 
Through the prism of robustness analysis as in Chapter~\ref{ch:struct-an},
since we are interested in an attractor-basin pair $(A,B(A))$ induced by the nominal deterministic system $G_{\lambda_0}$, the property $\mathcal{P}$ is given by
\begin{equation}\label{eq:pattractor1}
\hspace{-2.4mm}\mathcal{P}: (A,B(A)) \text{ is an attractor-basin pair almost surely,}
\end{equation}
and $\mathcal{P}$ is robust for $\mathcal{F}$ if, for any initial condition $x_0\in B(A)$ and for all $\lambda \in \mathcal{I}\setminus \left\{\lambda_0 \right\}$, we have 
\begin{equation}\label{eq:pattractor2}
\mathbb{P}_{\lambda} \left( \lim_{t\to \infty }\operatorname{dist}(x(t;x_0,\eta_{\lambda}),A)=0 \right)= 1, 
\end{equation}
where $x(t;x_0,\eta_{\lambda})$ is the solution of system $G_{\lambda}$ that emanates from $x_0$ and $\mathbb{P}_{\lambda}$ is the probability measure that is associated with the system $G_{\lambda}$, with dynamics driven by $\eta_{\lambda}$.
When $A=\left\{e \right\}$, where $e$ is an equilibrium point of the nominal system $G_{\lambda_0}$, \eqref{eq:pattractor2} can be rewritten as 
$\mathbb{P}_{\lambda}\left(\lim_{t\to \infty}x(t;x_0,\eta_{\lambda}) = e \right)=1$, and the robustness of property $\mathcal{P}$ in \eqref{eq:pattractor1} for $\mathcal{F}$ is related to \emph{asymptotic stability in probability} of $e$ for all $x_0\in B\left(\left\{e \right\} \right)$ and all $\lambda\in \mathcal{I}\setminus \left\{\lambda_0 \right\}$, as defined in Section 5.4 of \cite{khasminskii2012stochastic}. Therefore, the definitions of resilience proposed in this chapter can be viewed as an extension of the notions of stability in probability.
For the class of systems in Definition~\ref{def:family}, in this chapter we define rigorously the concept of resilience, which (differently from robustness, capturing the system's ability to withstand parameter variations) is aimed at capturing the noise rejection property of a system.

In fact, the viewpoint of robust or structural analysis is insufficient to address cases in which a property of interest holds neither structurally nor robustly, but with (high) probability.
In these cases, structural and robust approaches simply provide a negative qualitative outcome and cannot quantify to which degree (\ie with which probability) the property holds.
Also, as discussed earlier, dynamical systems originating in the life sciences are often extraordinarily robust with respect to huge uncertainties and environmental fluctuations that induce parameter variations, but do not exhibit an analogous ``robustness'' with respect to noise. It is thus necessary to go beyond the concept of robust, or structural, property. In particular, the noise-rejection and function-preserving property of a dynamical system can be captured by the alternative concept of resilience, of which we now aim at proposing rigorous and formal definitions.

Before proceeding, however, we discuss two examples from the biological literature that help us clarify the concept of attractor and basin of attraction.

\begin{example}[\textbf{Gene regulation - continued}]
For the biological example \eqref{eq:hill-equation-stoch1}, the family of systems $\mathcal{F} = \left\{ G_{\lambda}\right\}$, where $\lambda \in \mathcal{I} = [0,\lambda_*),\ \lambda_*>0$, is such that $G_{\lambda}$, $\lambda \in \mathcal{I}$, is given by the stochastic ODE
\begin{equation}
    \dot{x}  = f_G(x)+ \eta_\lambda(t)g(x)= - x + a \frac{x^h}{1+x^h} + k + \lambda \tfrac{dB_t}{dt} g(x) \, ,
    \label{eq:hill-equation-stoch}
\end{equation}
where $h \geq 2$, $k \in (0,\tfrac{1}{3 \sqrt{3}})$ and $a\in (a_{c,1},a_{c,2})$, while $ \lambda \tfrac{dB_t}{dt}$ is an uncorrelated white noise with intensity $\lambda \in \mathcal{I}$. We set $\lambda_0 = 0$, while the attractor $A$ is chosen as a singleton containing one of the two locally stable equilibrium points of $G_{0}$, with $B(A)$ being the corresponding basin of attraction. Moreover, $g(x)$ is 
a smooth transition function satisfying $g(x)=0$ for $x \in A + \varphi B(0,1)$ and $g(x)= 1$ for $x\notin A+2\varphi B(0,1)$, where $\varphi>0$ is small and $B(0,1)$ is the unit ball around zero.
\end{example}

\begin{example}[\textbf{Turing patterns}]
\cite{bashkirtseva2021stochastic} consider pattern formation and stochastic sensitivity of the pattern-attractors of the following deterministic partial-differential-equation (PDE) Levin-Segel population model with diffusion \citep{levin1976hypothesis}:
\begin{equation}\label{eq:PopMNodel}
\begin{cases}
u_t = au+eu^2-buv +D_u u_{xx}\,,\\
v_t = cuv - dv^2 +D_vv_{xx}\,,
\end{cases}
\end{equation}
having zero-flux boundary conditions
\begin{equation*}
\begin{array}{lll}
&u_x(0,0) = u_x(0,L) = v_x(0,0)=v_x(0,L) = 0.
\end{array}
\end{equation*} 
System \eqref{eq:PopMNodel} is composed of two coupled one-dimensional nonlinear diffusion PDEs, where the state variables $u(t,x)$ and $v(t,x)$ represent, respectively, the density of phytoplankton (activator) and herbivore (inhibitor). The space variable is $x\in [0,L]$, $D_u$ and $D_v$ are the diffusion coefficients, the parameters $a$, $b$, $c$, $d$ and $e$ are positive.
Given initial conditions for the deterministic model \eqref{eq:PopMNodel}, its solutions exhibit pattern formation (\ie convergence to periodic wave-like behaviours of the graphs of $u(t,\cdot)$ and $v(t,\cdot)$ as $t\to \infty$), provided that the following inequalities hold:
\begin{equation}\label{eq:PattParam}
\frac{D_u}{D_v}< \left( \sqrt{\frac{b}{d}} - \sqrt{\frac{b}{d} - \frac{e}{c}}\right)^2,\quad bc>ed\,.
\end{equation}

To fit \eqref{eq:PopMNodel} into our framework, we perform semi-discretization of \eqref{eq:PopMNodel} on a finite set of uniformly spaced grid points $0=x_1<x_2<\dots<x_N=L,\ x_{j+1}-x_j\equiv h>0$, thereby replacing \eqref{eq:PopMNodel} with the following ODE system of dimension $2N$:
\begin{equation}\label{eq:TurPatSemiDisc}
\dot{w}(t) = Mw(t)+H(w(t))=: f(w(t)),\quad t\geq 0,
\end{equation}
with
$w(t) = \begin{bsmallmatrix}
w_1(t)\\ \vdots\\ w_N(t)
\end{bsmallmatrix}$, $w_j(t) \approx \begin{bsmallmatrix}
    u(t,x_j)\\ v(t,x_j)
\end{bsmallmatrix}$,
$H(w) = \begin{bsmallmatrix}
    h(w_1) \\ \vdots \\ h(w_N)
\end{bsmallmatrix}$,\\ $h(\begin{bsmallmatrix}\xi\\ \rho \end{bsmallmatrix}) = \begin{bsmallmatrix}
    e\xi^2 - b \rho \xi \\
    c \rho \xi -d\rho^2
\end{bsmallmatrix}$, $N_ 0 = \begin{bsmallmatrix}
    a -\frac{2D_u}{h^2} & 0 \\
    0 & -\frac{2D_v}{h^2}
\end{bsmallmatrix}$, $N_1 = \frac{2}{h^2}\begin{bsmallmatrix}
    D_u & 0 \\ 
    0 & D_v
\end{bsmallmatrix}$,\\ $M  = \begin{bsmallmatrix}
    N_0 & N_1 & 0 & 0 &\dots & 0 & 0 & 0\\
    \frac{1}{2}N_1 & N_0 & \frac{1}{2}N_1 & 0 & \dots & 0 & 0 & 0\\
    \vdots & \vdots & \vdots & \vdots & \ddots & \vdots & \vdots & \vdots \\
    0 & 0 & 0 & 0 & \dots & \frac{1}{2}N_1 & N_0 & \frac{1}{2}N_1\\
    0 & 0 & 0 & 0 & \dots & 0 & N_1 & N_0
\end{bsmallmatrix}$.

We assume that the parameters satisfy \eqref{eq:PattParam}.
The Turing pattern of \eqref{eq:PopMNodel} is approximated by computing an equilibrium point $\mathbf{w} \in (0,\infty)^{2N}$ of \eqref{eq:TurPatSemiDisc} with small step-size $h$ and separately interpolating the values that correspond to each of the functions $u$ and $v$. Let the system \eqref{eq:TurPatSemiDisc} be the nominal deterministic system with the attractor $A = \left\{ \mathbf{w} \right\}$, corresponding to the Turing pattern generated by the nominal deterministic system; we also consider the associated basin of attraction $B(A)$. Then, we consider the family $\mathcal{F} = \left\{ G_{\lambda} \right\}_{\lambda\in [0,\lambda_*)}$ with $\lambda_*>0$, such that
\begin{equation}\label{eq:turingadditive}
\dot{w}(t) = f(w(t)) + \lambda \tfrac{\mathrm{d}B_t^{2N}}{\mathrm{d}t}g(w(t)),
\end{equation}
where $g(w)$ is a smooth transition function satisfying $g(w)=0$ for $w \in A + \varphi B(0,1)$ and $g(w)= 1$ for $w\notin A+2\varphi B(0,1)$, where $\varphi>0$ is small and $B(0,1)$ is the unit ball around zero.
Thus, $\lambda \frac{\mathrm{d}B_t^{2N}}{\mathrm{d}t}g(w(t))$ is a vector of stationary noises that drives the dynamics of $G_{\lambda}$, provided $w(t)$ is outside of a small ball around the attractor; $\frac{\mathrm{d}B_t^{2N}}{\mathrm{d}t}$ denotes a $2N$-dimensional vector of uncorrelated white Gaussian noise terms; $\lambda$ is the noise intensity, and the choice $\lambda = \lambda_0 = 0$ recovers the nominal deterministic system \eqref{eq:TurPatSemiDisc}.
For $\lambda>0$, the stochastic solutions of the system venture away from the deterministic pattern-attractor and fall around it with an appropriate probability distribution.
\end{example}

\subsection{Resilience definitions}\label{Sec:ResDef}

Given a family of systems $\mathcal{F} = \left\{ G_{\lambda}\right\}_{\lambda \in \mathcal{I}}$, as specified in Definition~\ref{def:family}, we consider $\lambda_0\in \mathcal{I} $ and recall that $G_{\lambda_0}$ is a deterministic nominal system, whose dynamics is described by an ODE, with attractor $A$ and a corresponding basin of attraction $B(A) \supseteq A$. Let $\lambda \in \mathcal{I}\setminus \left\{\lambda_0 \right\}$, along with the corresponding stochastic term $\eta_{\lambda}$ and probability measure $\mathbb{P}_{\lambda}$.

For such a family of systems, we present novel definitions of resilience \citep{Proverbio2024,PROVERBIO2024445}.

\begin{definition}[\textbf{Practical resilience}]\label{def:PrRes}
Consider a family of systems $\mathcal{F} = \left\{G_{\lambda} \right\}_{\lambda \in \mathcal{I}}$ and let $(A,B(A))$ be an attractor-basin pair corresponding to $G_{\lambda_0}$.
Let the time horizon $\tau\in (0,\infty]$, the distance $\delta \in [0,\infty)$ and the confidence level $\gamma \in (0,1]$ be fixed. Consider the set $A_\varepsilon = \{ \chi \colon \operatorname{dist}\left(\chi, A \right) \leq \varepsilon \}$, with $0 \leq \varepsilon \leq \delta$. The system $G_{\lambda}$ is $(\tau,\gamma,\delta, \varepsilon)$-\emph{practically resilient} if, for all $x_0\in A_\varepsilon \cap B(A)$,
\begin{equation*}
\mathbb{P}_{\lambda}\left( \sup_{ t\in [0,\tau)} \operatorname{dist}\left({x}(t;x_0,\eta_{\lambda}), A \right) \leq \delta \right)\geq \gamma.
\end{equation*}
The family $\mathcal{F}$ is $(\tau,\gamma,\delta,\varepsilon)$-\emph{practically resilient} if, for all $x_0\in A_\varepsilon \cap B(A)$,
\begin{equation*}
\inf_{\lambda\in \mathcal{I}\setminus \left\{\lambda_0 \right\}}\mathbb{P}_{\lambda}\left( \sup_{ t\in [0,\tau)} \operatorname{dist}\left({x}(t;x_0,\eta_{\lambda}), A \right) \leq \delta \right)\geq \gamma.
\end{equation*}
\end{definition}

Intuitively, the system $G_{\lambda}$, $\lambda\in \mathcal{I}\setminus \left\{\lambda_0 \right\}$, is $(\tau,\gamma,\delta, \varepsilon)$-practically resilient if, subject to the stochastic noise $\eta_{\lambda}$, the trajectory $x(t;x_0,\eta_{\lambda})$, emanating from an arbitrary point $x_0$ that lies within an $\varepsilon$-distance from the attractor $A$, and within its basin of attraction $B(A)$, remains within a $\delta$-distance from $A$ with probability at least $\gamma$ over the interval $t\in [0,\tau)$; see the visual representation in Figure~\ref{fig:pract_res}.
The noise $\eta_{\lambda}$ may prevent the state $x(t;x_0,\eta_{\lambda})$ of the \emph{perturbed} system from converging to the set $A$ (attractor of the \emph{unperturbed} nominal system), but at least the nominal dynamics of the system keeps the state $\delta$-close to the attractor $A$ of the nominal system, with probability at least $\gamma$. In our considered framework, almost sure ISS (Input-to-State Stability) would correspond to practical resilience with $\gamma=1$. Also, practical resilience corresponds to \emph{stability in probability} as defined in Section 5.3 of \cite{khasminskii2012stochastic}, provided that $\gamma < 1$, with $A=\{0\}$ and an appropriately chosen $\delta$.

\begin{figure}[ht]
    \centering
    \includegraphics[width=0.4\textwidth]{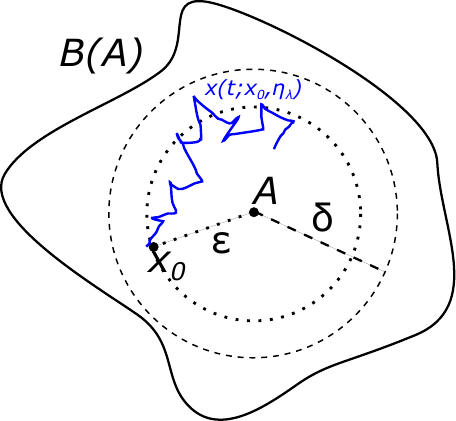}
    \caption{\footnotesize Intuitive schematic representation for the  concept of $(\tau,\gamma,\delta, \varepsilon)$-\emph{practical resilience} in Definition~\ref{def:PrRes}. In the considered realisation, subject to the stochastic noise $\eta_{\lambda}$, the trajectory $x(t;x_0,\eta_{\lambda})$ in blue, emanating from a point $x_0$ that lies within an $\varepsilon$-distance from the attractor $A$, and within its basin of attraction $B(A)$, remains within a $\delta$-distance from $A$ over the interval $t\in [0,\tau)$. The system $G_\lambda$ is $(\tau,\gamma,\delta, \varepsilon)$-\emph{practically resilient} if this happens with probability at least $\gamma$; the family $\mathcal{F}$ is $(\tau,\gamma,\delta, \varepsilon)$-\emph{practically resilient} if this happens with probability at least $\gamma$ for \emph{all} systems in the family.}
    \label{fig:pract_res}
\end{figure}

In particular, the family $\mathcal{F}$ is $(\infty,1,\delta,\delta)$-practically resilient if, for all $x_0\in A_\delta \cap B(A)$,
\begin{equation}
\inf_{\lambda\in \mathcal{I}\setminus \left\{\lambda_0 \right\}}\mathbb{P}_{\lambda}\left( \sup_{ t>0} \text{dist}\left({x}(t;x_0,\eta_{\lambda}), A \right) \leq \delta \right)= 1.
\end{equation}  

If we assume that $\eta_{\lambda}$ is deterministic (\eg an exogenous disturbance), we can omit $\mathbb{P}_{\lambda}$ and reinterpret $(\infty,1,\delta, \delta)$-practical resilience as equivalent to $x(t;x_0,\eta_{\lambda})\in A_{\delta}$ for all $t>0$ and $\lambda \in \mathcal{I}\setminus \left\{\lambda_0 \right\}$. In comparison to the definition of $A$ being a \emph{robust} attractor for the trajectories $x(t;x_0,\eta_{\lambda})$ of the family $\mathcal{F}$, as per Definition~\ref{def:robustness}, $(\infty,1,\delta, \delta)$-practical resilience of the family means that for any \emph{deterministic} perturbation in the class $\{ \eta_\lambda\}_{\lambda \in \mathcal{I}}$, the state   $x(t;x_0,\eta_{\lambda})$ remains in $A_\delta$ for all $t>0$, \ie no deterministic perturbation in $\{ \eta_\lambda\}_{\lambda \in \mathcal{I}}$ can drive the state away from a $\delta$-neighbourhood of $A$. 

We can associate the notion of resilience as intended in systems ecology (\emph{cf.} Table~\ref{tab:summary-concepts}) with $(\infty,1,\delta_*, \delta_*)$-practical resilience, where $\delta_* = \text{dist}(A, \partial B(A))$ represents the distance between $A$ and the boundary of $B(A)$. In this case, resilience offers sufficient probabilistic guarantees for the trajectories to remain within the basin of attraction.

To allow the system state $x(t;x_0,\eta_{\lambda})$ to exhibit damped oscillations and arbitrary large detours away from the attractor $A$ (this is what happens, \eg in the case of excitable systems, such as those discussed by \cite{wieczorek2011excitability,izhikevich2000neural,Izhikevich2007}, or spiking systems, such as those discussed by \cite{BG2024}), provided that the system state returns to a neighbourhood of the attractor $A$ with high enough probability, we introduce the additional definition of \emph{asymptotic practical resilience}.
\begin{definition}[\textbf{Asymptotic practical resilience}]\label{def:AsympPrRes}
Consider a family of systems $\mathcal{F} = \left\{G_{\lambda} \right\}_{\lambda \in \mathcal{I}}$ and let $(A,B(A))$ be an attractor-basin pair corresponding to $G_{\lambda_0}$.
Let the distance $\delta \in [0,\infty)$ and the confidence level $\gamma \in (0,1]$ be fixed. The system $G_{\lambda}$ is $(\gamma,\delta)$-\emph{asymptotically practically resilient} if, for all $x_0\in B(A)$,
\begin{equation*}
\mathbb{P}_{\lambda}\left( \limsup_{ t\to \infty} \operatorname{dist}\left({x}(t;x_0,\eta_{\lambda}), A \right) \leq \delta \right)\geq \gamma.
\label{eq:practical-res}
\end{equation*}
The family $\mathcal{F}$ is $(\gamma,\delta)$-\emph{asymptotically practically resilient} if, for all $x_0\in B(A)$,
\begin{equation*}
\inf_{\lambda\in \mathcal{I}\setminus \left\{\lambda_0 \right\}}\mathbb{P}_{\lambda}\left( \limsup_{ t\to \infty} \operatorname{dist}\left({x}(t;x_0,\eta_{\lambda}), A \right) \leq \delta \right)\geq \gamma.
\end{equation*}   
When $\delta=0$, the system (respectively, the family) is $\gamma$-\emph{asymptotically resilient}.
\end{definition}

Also, when $\delta=0$ and $\gamma=1$, Definition~\ref{def:AsympPrRes} of asymptotic resilience reduces to robustness of the property $\mathcal{P}$ in \eqref{eq:pattractor1} according to Eq. \eqref{eq:pattractor2}.

Differently from Definition~\ref{def:PrRes}, which deals with the transient behaviour of the system trajectories, Definition~\ref{def:AsympPrRes} deals with their asymptotic properties and requires the trajectories emanating from $x_0 \in B(A)$ to asymptotically converge to a $\delta$-neighbourhood (with $\delta>0$) of the attractor $A$ with probability at least $\gamma>0$. 

Definition~\ref{def:AsympPrRes} has a strong analogy with the condition in Eq. (5.15) in Section 5.4 of \cite{khasminskii2012stochastic}, in relation to \emph{asymptotic stability in probability}: such a condition would guarantee asymptotic resilience, provided that $\gamma < 1$, with $A=\{0\}$ and $\delta=0$, but \emph{for all initial conditions in a sufficiently small neighbourhood of $A$}, and not anywhere in $B(A)$. Moreover, the asymptotic condition in Definition 5.1 of \cite{khasminskii2012stochastic}, concerning \emph{asymptotic stability in the large}, would imply asymptotic resilience, provided that $\gamma < 1$, with $A=\{0\}$ and $\delta=0$.

\begin{remark}[\textbf{Alternative definitions of resilience}]
The proposed definitions of resilience heavily rely on the structure of the family $\mathcal{F}$. In particular, the attractor-basin pair $(A,B(A))$ is determined (and fixed) by reference to the nominal deterministic system $G_{\lambda_0}$. The resilience of $G_{\lambda}$ for $\lambda\neq \lambda_0$ is then determined by relating the behaviour of the trajectories $x(t;x_0,\eta_{\lambda})$ to a neighbourhood of $A$. Consider, for example, Definition~\ref{def:AsympPrRes} for $\lambda\neq \lambda_0$. Even when $\eta_{\lambda}$ is deterministic, its addition to the right-hand side of the ODEs may alter the attractor $A$ to a new attractor $A(\lambda)$, which will be close to $A$, subject to appropriate assumptions on $\eta_{\lambda}$. Hence, allowing for $\delta>0$ in Definition~\ref{def:AsympPrRes} is essential to obtain a meaningful definition.
An alternative approach to defining resilience would be to require the \emph{property of existence of an attractor-basin pair} (which may be different from the pair $(A,B(A))$, induced by the nominal system $G_{\lambda_0}$) to be preserved, with high enough probability.
\end{remark} 

Other concepts surveyed in Table~\ref{tab:summary-concepts} can be interpreted through the lenses of Definitions~\ref{def:PrRes} and \ref{def:AsympPrRes}. For example, the recovery or approximate recovery after (random) shocks, with high probability, can be interpreted via Definition~\ref{def:AsympPrRes}. To some extent, the entire idea of a dynamical system preserving its qualitative behaviour despite (random) perturbations is associated with the concept of residency within an attractor: network resilience as introduced by \cite{Gao2016} is a special case of system resilience assessed through bifurcation analysis (see Chapter~\ref{ch:res-ind}).

Building upon the provided resilience definitions, we consider, as a specific resilience indicator, the \emph{attraction time} of a system, which probabilistically quantifies the time it takes for a perturbed or noisy system to reach a neighbourhood of a prescribed equilibrium state. Here, we show how such an indicator can be formally defined, for the family of systems in Definition~\ref{def:family}, relying on asymptotic practical resilience in Definition~\ref{def:AsympPrRes}.

\begin{definition}[\textbf{Attraction time}]\label{Def:RetTime}
Consider the family $\mathcal{F}= \left\{ G_{\lambda}\right\}_{\lambda \in \mathcal{I}}$ and let $(A,B(A))$ be an attractor-basin pair of the nominal system $G_{\lambda_0}$. Let $\lambda \neq \lambda_0$ and assume that $G_{\lambda}$ is $(\gamma,\delta)$-asymptotically practically resilient for some $\delta\in [0,\infty), \gamma\in (0,1]$, as per Definition~\ref{def:AsympPrRes}. Let $x_0\in B(A)$, $\nu \in [0,\infty)$, $\tau \in (0,\infty)$ and $\mu \in (0,1]$. Then, the system $G_{\lambda}$ has a \emph{$(\tau,\mu,\nu)$-attraction time} with respect to $x_0$ if 
\begin{equation*}
\mathbb{P}_{\lambda}\left( \sup_{t\in [\tau,\infty)} \operatorname{dist} \left(x(t;x_0,\eta_{\lambda}),A \right) \leq \nu\right)\geq \mu. \label{eq:attr_time}
\end{equation*}
Similarly, given $x_0\in B(A)$, the family $\mathcal{F}$ has a \emph{$(\tau,\mu,\nu)$-attraction time} with respect to $x_0$ if 
\begin{equation*}
\inf_{\lambda\in \mathcal{I}\setminus \left\{\lambda_0 \right\}} \mathbb{P}_{\lambda}\left( \sup_{t\in [\tau,\infty)} \operatorname{dist} \left(x(t;x_0,\eta_{\lambda}),A \right) \leq \nu\right)\geq \mu. 
\end{equation*}
\end{definition}

Intuitively, $G_{\lambda}$ has a \emph{$(\tau,\mu,\nu)$-attraction time} if the probability of the state to return to a $\nu$-neighbourhood of the attractor $A$ after time $\tau$ is at least $\mu$. 

Note that, given $\nu$, smaller $\tau$ and/or larger $\mu$ for which $G_{\lambda}$ has $(\tau,\mu,\nu)$-attraction time imply higher resilience of the attractor $A$ of $G_{\lambda_0}$ when the dynamics is perturbed by $\eta_{\lambda}$. 
In fact, consider the case of $\mu$ fixed. Then reducing $\tau$ in Definition~\ref{Def:RetTime} to $\tau'<\tau$ means that the state of $G_{\lambda}$ returns to the $\nu$ neighborhood of the attractor $A$ faster (with probability at least $\mu$), which indicates higher resilience of the \emph{nominal} system $G_{\lambda_0}$ to the addition of the stochastic perturbation $\eta_{\lambda}$ to its right-hand side. 

Importantly, the \emph{$(\tau,\mu,\epsilon)$-attraction time} is well-defined, in the sense that a $(\gamma,\delta)$-asymptotically practically resilient system also has $(\tau,\mu,\nu)$-attraction time for some $\tau$, $\mu$, and $\nu$, as the following proposition shows.
\begin{proposition}
    Let $G_{\lambda}$ be $(\gamma,\delta)$-asymptotically practically resilient  for $x_0\in B(A)$ and some $\delta\in [0,\infty), \gamma\in (0,1]$. Then, there exist some $\nu,\tau \in (0,\infty)$ and $\mu \in (0,1)$ such that $G_{\lambda}$ has a \emph{$(\tau,\mu,\nu)$-attraction time}. 
\end{proposition}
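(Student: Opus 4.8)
The plan is to extract, from the $\limsup$ condition defining $(\gamma,\delta)$-asymptotic practical resilience, a deterministic tail time $\tau$ after which the trajectory stays within a slightly enlarged neighbourhood of $A$ with positive probability. First I would fix any $\nu > \delta$ (so in particular $\nu \in (0,\infty)$, since $\delta \geq 0$), and introduce the random process $d(t) := \operatorname{dist}(x(t;x_0,\eta_{\lambda}), A)$, which is almost surely continuous because $x(\cdot;x_0,\eta_{\lambda})$ is a strong It\^o solution with continuous sample paths; this continuity lets me replace suprema over uncountable time intervals by suprema over rationals and guarantees that all the events below are measurable.

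Next I would define the tail events $E_\tau := \{\sup_{t \geq \tau} d(t) \leq \nu\}$ for $\tau > 0$ and observe that they are nondecreasing in $\tau$ (a smaller integration window can only lower the supremum), so $E_\tau \uparrow \bigcup_{\tau>0} E_\tau$ as $\tau \to \infty$. The crucial inclusion is $\{\limsup_{t\to\infty} d(t) \leq \delta\} \subseteq \bigcup_{\tau>0}E_\tau$: on the event that $\limsup_{t\to\infty}d(t) \leq \delta < \nu$, the decreasing quantities $\sup_{t\geq s}d(t)$ converge to a value strictly below $\nu$, hence drop below $\nu$ for some finite $s$, placing the sample path in some $E_s$. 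Combined with the hypothesis $\mathbb{P}_{\lambda}(\limsup_{t\to\infty} d(t)\leq\delta)\geq\gamma$, this gives $\mathbb{P}_{\lambda}(\bigcup_{\tau>0} E_\tau) \geq \gamma$.

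I would then invoke continuity of probability from below for the increasing family $E_\tau$, yielding $\lim_{\tau\to\infty}\mathbb{P}_{\lambda}(E_\tau) = \mathbb{P}_{\lambda}(\bigcup_{\tau>0} E_\tau) \geq \gamma > 0$. Choosing any $\mu \in (0,\min\{\gamma,1\})$ (a nonempty interval, since $\gamma > 0$), there exists a finite $\tau \in (0,\infty)$ with $\mathbb{P}_{\lambda}(E_\tau)\geq\mu$, which is precisely the statement that $G_{\lambda}$ has a $(\tau,\mu,\nu)$-attraction time with respect to $x_0$, as required.

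I expect the only genuinely delicate points to be (i) justifying the measurability of $\sup_{t\in[\tau,\infty)}d(t)$ and $\limsup_{t\to\infty}d(t)$, for which path continuity is the clean workhorse, and (ii) being careful with strict versus non-strict inequalities so that the inclusion $\{\limsup_{t\to\infty} d(t) \leq \delta\}\subseteq\bigcup_{\tau>0} E_\tau$ actually holds --- this is exactly why I enlarge the radius from $\delta$ to some $\nu > \delta$ rather than keeping $\nu = \delta$, the latter choice being generally insufficient since the supremum over a tail need not attain the $\limsup$ value. Everything else is a routine application of continuity of measure.
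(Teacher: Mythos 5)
Your proof is correct and follows essentially the same route as the paper's: enlarge the radius to some $\nu > \delta$, embed the $\limsup$ event into the increasing union of tail events $\{\sup_{t\geq \tau}\operatorname{dist}(x(t;x_0,\eta_\lambda),A)\leq \nu\}$, and apply continuity of probability from below to extract a finite $\tau$ with the desired probability bound. The only differences are cosmetic (you index the tail events by real $\tau$ rather than integers and explicitly discuss measurability via path continuity), so the two arguments coincide in substance.
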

\begin{proof}
Choose $\nu > \delta$ and define the sets
\begin{equation*}
E_0  = \left\{\limsup_{ t\to \infty} \operatorname{dist}\left({x}(t;x_0,\eta_{\lambda}), A \right) \leq \delta \right\}
\end{equation*}
and
\begin{equation*}
E_{n} = \left\{\sup_{t\geq n} \operatorname{dist}\left(x(t;x_0,\eta_{\lambda}),A \right)\leq \nu \right\}.
\end{equation*}
By definition of the limit superior, $E_0 \subset \bigcup_{n\in \mathbb{N}}E_n$. Also, $E_n \subseteq E_{n+1}$ for all $n\in \mathbb{N}$. Hence, by monotone convergence
\begin{equation*}
\gamma \leq \mathbb{P}_{\lambda}(E_0) = \lim_{n\to \infty}\mathbb{P}_{\lambda}\left(E_0\cap E_n \right)\leq \lim_{n\to \infty}\mathbb{P}_{\lambda}(E_n).    
\end{equation*}
Therefore, for any $0<\zeta<1$, there exists a sufficiently large $n$ such that $G_{\lambda}$ has a $\left(n,\zeta \gamma,\nu \right)$-attraction time.
\end{proof}

We will further discuss resilience indicators proposed in the literature in Chapter~\ref{ch:res-ind}, where we will illustrate other types of indicators and outline how they can be refined or adapted in light of the novel definitions presented above.

\subsection{Examples of resilience analysis}

We now showcase how the definitions of (asymptotic) practical resilience and attraction time can be used to gain quantitative insight into the dynamics of case studies in systems biology, thus complementing robustness analysis.
The resilience analysis of a dynamical system allows us to quantitatively study the noise-rejection properties of the system and better understand how biological regulation mechanisms maintain their function (\eg protein production) despite external perturbations (in addition to parametric uncertainties).

\subsubsection{Gene regulation}\label{subsec:genereg}
Consider the gene regulation model in Eq. \eqref{eq:hill-equation-stoch}, where $A = \{\mathbf{x}_3\}$ and the smooth transition function is
\begin{equation}
g(x) := g(x; \mathbf{x}_3)= \ell \left(\frac{x - \mathbf{x}_3 - \varphi}{\varphi}\right)+\ell \left(\frac{-x + \mathbf{x}_3 - \varphi}{\varphi}\right),
\end{equation}
where
\begin{equation}
\ell(x)=\frac{\vartheta(x)}{\vartheta(x)+\vartheta(1-x)}
\end{equation}
and
\begin{equation}
\vartheta(x)= \begin{cases}e^{-\tfrac{1}{x}} \mbox{ when } x>0\\ 0 \mbox{ when } x \leq 0 \end{cases}
\end{equation}
with $x \in \mathbb{R}$, and $\varphi = 0.0001$, as shown in Figure~\ref{fig:function}.

\begin{figure}[ht]
    \centering
    \includegraphics[width=0.4\textwidth]{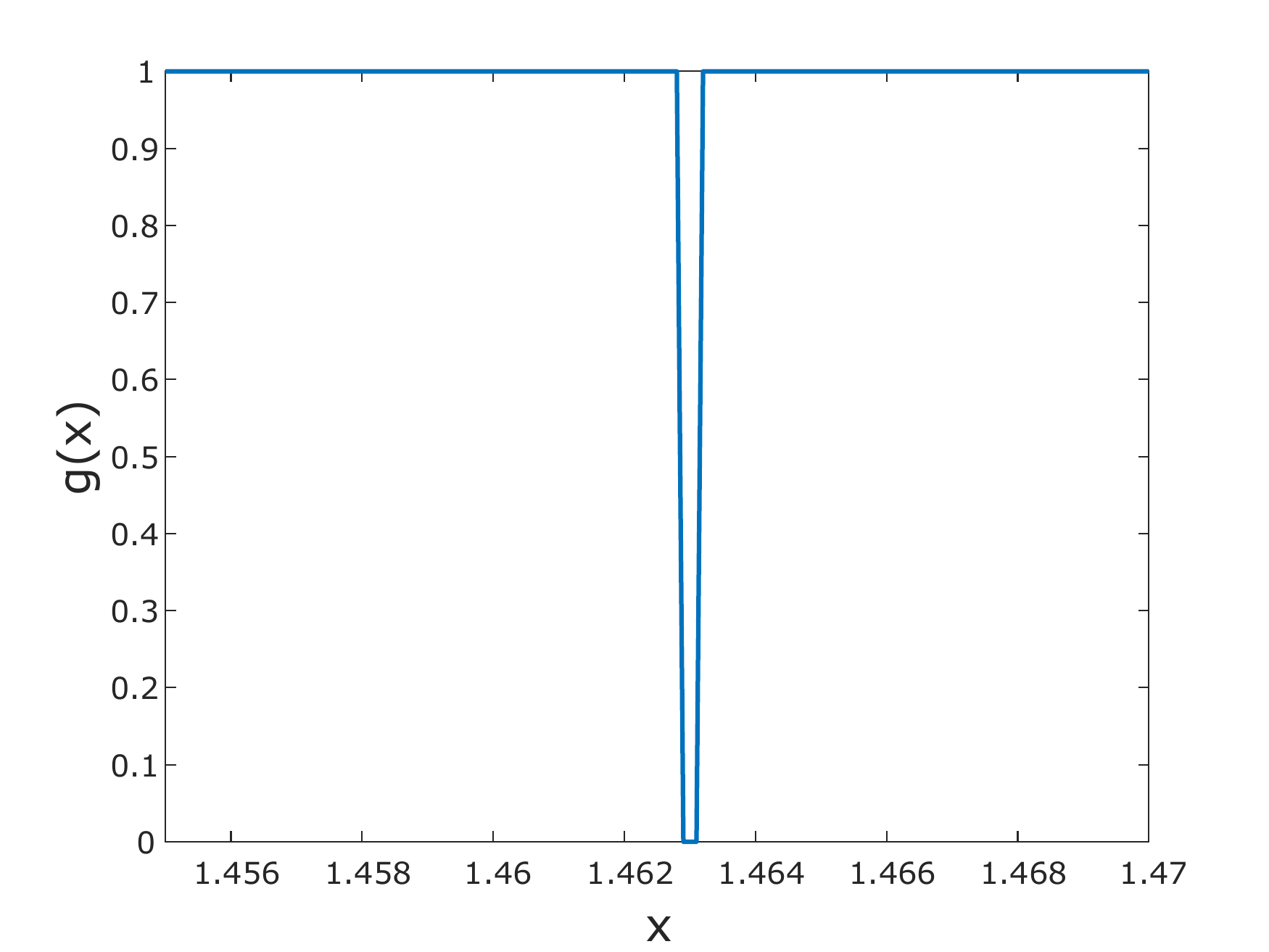}
    \caption{\footnotesize The function $g(x):=g(x; \mathbf{x}_3)$ that weighs the noise, chosen as described in Section~\ref{subsec:genereg}, with $\varphi = 0.0001$; $\mathbf{x}_3 \approx 1.463$.}
    \label{fig:function}
\end{figure}

To assess $(\tau,\gamma,\delta, \delta)$-practical resilience, we perform numerical simulations of the system \eqref{eq:hill-equation-stoch} around the stable equilibrium $\mathbf{x}_3$, for uniformly spaced initial conditions $x_0 \in (\mathbf{x}_3 - \delta; \mathbf{x}_3+\delta)$, and assess whether the trajectories lie within the interval $(\mathbf{x}_3 -  \delta; \mathbf{x}_3 + \delta)$ over a finite horizon $t_{fin}$, for a range of noise intensities $\lambda$.
Let us consider the model parameters $k=0.1$, $h=2$ and $a=2$ (within the bistable region in Figure~\ref{fig:gene-reg-phase-portrait}), for which the lower stable equilibrium is $\mathbf{x}_1 \approx 0.137$, the intermediate unstable equilibrium is $\mathbf{x}_2 = 0.5$ and the upper stable equilibrium is $\mathbf{x}_3 \approx 1.463$.
Figure~\ref{fig:attr_time} shows sample trajectories of the nominal system $G_{\lambda_0}$ with $\lambda_0=0$, corresponding to the deterministic system in \eqref{eq:hill-equation}, converging to the stable equilibrium $\mathbf{x}_3$, for different initial conditions $x_0 \in (\mathbf{x}_3 - \delta; \mathbf{x}_3+\delta)$, with $\delta = 0.8$. 

\begin{figure}[ht]
    \centering
    \includegraphics[width=0.5\textwidth]{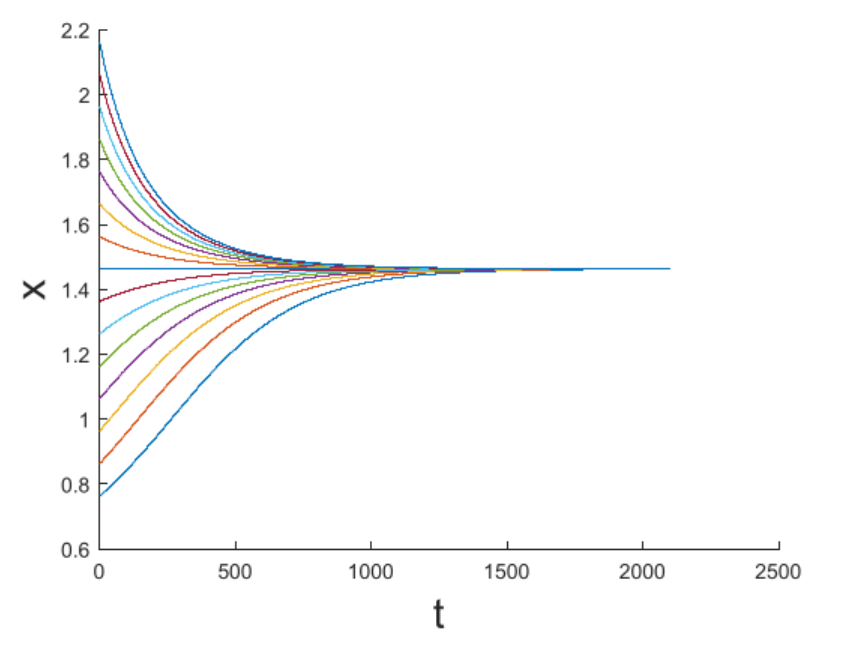}
    \caption{\footnotesize Convergence of several trajectories of the nominal deterministic system in Eq. \eqref{eq:hill-equation} to the stable equilibrium $\mathbf{x}_3$, for a range of initial conditions $x_0$ in a $\delta$-neighbourhood of the attractor, $x_0 \in (\mathbf{x}_3 - \delta; \mathbf{x}_3+\delta)$, $\delta = 0.8$. }
    \label{fig:attr_time}
\end{figure}

For the stochastic gene regulation model \eqref{eq:hill-equation-stoch}, we can numerically assess the dependency of $\mathbb{P}_\lambda$ on both $\delta$ and $\lambda$.

We consider different values of the distance $\delta \in [\delta_{min},\delta_{max}]$, with $\delta_{min} = 0.05 \mathbf{x}_3$ and $\delta_{max}=|\mathbf{x}_3 - \mathbf{x}_2|$, which is the distance from the unstable equilibrium $\mathbf{x}_2$. 
In fact, all trajectories leave the basin of attraction of the stable equilibrium $\mathbf{x}_3$ after hitting the unstable equilibrium $\mathbf{x}_2$, regardless of the noise intensity $\lambda$.
Figure~\ref{fig:res-time}, left, shows the values of $\mathbb{P}_\lambda$ depending on $\delta$ and $\lambda$: as expected, the probability of remaining within a $\delta$-neighbourhood of the attractor is larger if $\delta$ is larger and if the noise intensity $\lambda$ is smaller. Setting the desired threshold $\gamma$ for $\mathbb{P}_\lambda$ identifies specific resilience levels.

The worst-case attraction time $\tau$ over all simulated initial conditions can be used as resilience metric, yielding the values shown in Figure~\ref{fig:res-time}, right. For system \eqref{eq:hill-equation-stoch}, we compute the worst-case attraction time $\tau$, over all simulated initial conditions, by setting $\nu=\delta$ and $\mu = 1$, for each of the values of $\delta$ and $\lambda$ considered above. 
As shown by comparing Figures~\ref{fig:res-time} right and left, the region with high values of $\mathbb{P}_\lambda$ (close to 1) is correlated with smaller values of the worst-case attraction time. The upper bound on $\tau$ in Figure~\ref{fig:res-time}, right, designates the simulation horizon and identifies cases where asymptotic practical resilience was not observed within the pre-specified time. 

\begin{figure}[ht]
\centering
\begin{minipage}{.45\textwidth}
  \centering
  \includegraphics[width=\textwidth]{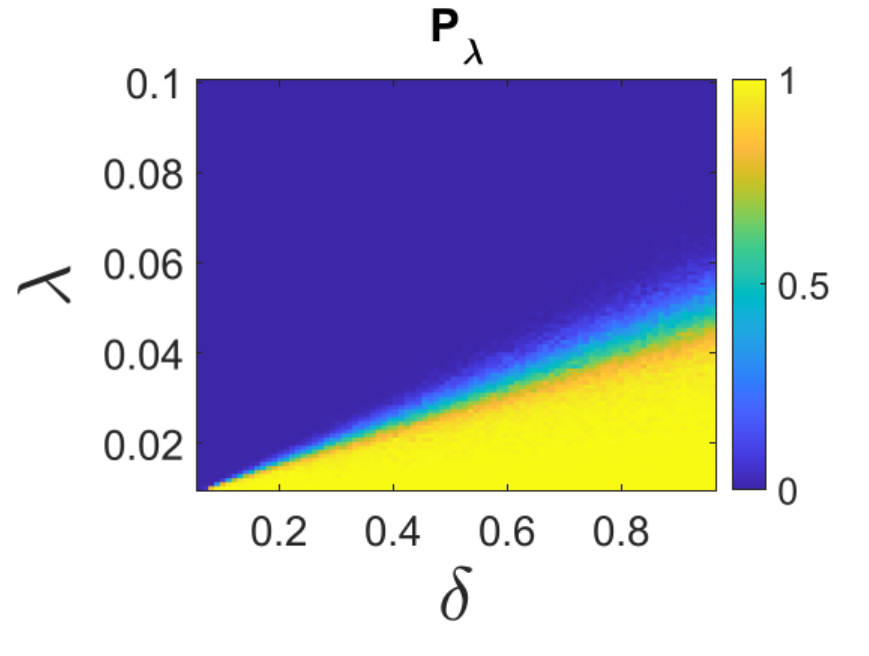}
\end{minipage}
\begin{minipage}{.45\textwidth}
  \centering
  \includegraphics[width=\textwidth]{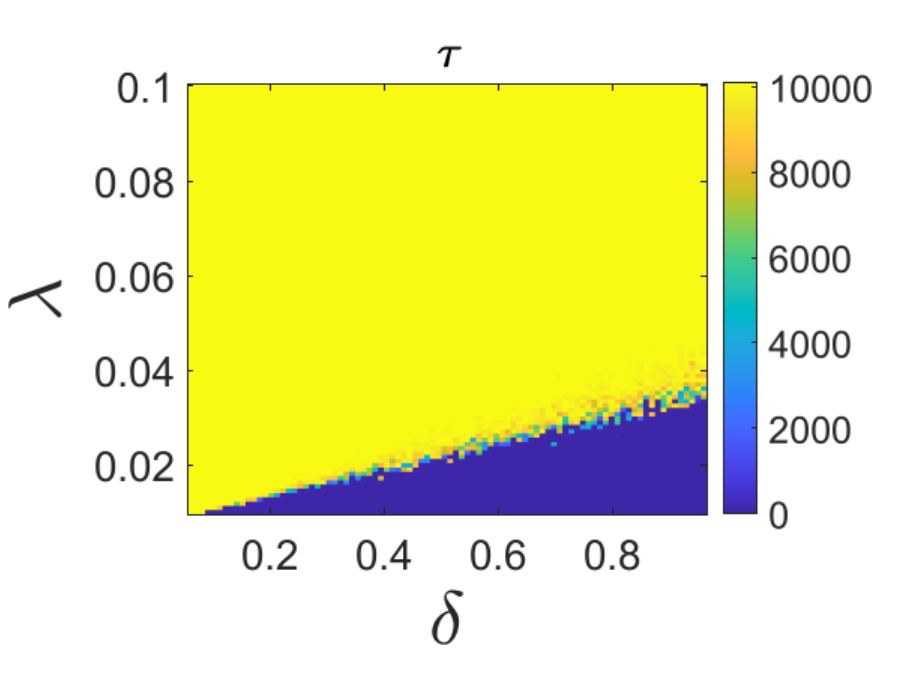}
\end{minipage}
\caption{\footnotesize For the stochastic gene regulation model in Eq. \eqref{eq:hill-equation-stoch}, with parameters $k=0.1$, $h=2$ and $a=2$ (within the bistable region in Figure~\ref{fig:gene-reg-phase-portrait}), the plots show: (left) the dependence of $\mathbb{P}_\lambda$ on $\delta$ and $\lambda$, and (right) the worst-case attraction time $\tau$ for different $\delta$ and $\lambda$, considering $\nu=\delta$ and $\mu = 1$. The region with values of $\mathbb{P}_\lambda$ close to 1 is correlated with smaller values of the worst-case attraction time.}
\label{fig:res-time}
\end{figure}

Simulations so far demonstrated the definitions for prescribed values of the system parameters, including $a$. How do resilience levels change as parameter $a$ is varied? The deterministic model \eqref{eq:hill-equation} with fixed $k=0.1$ and $h=2$ exhibits a bifurcation when $a=a_{c,1}$ (see Figure~\ref{fig:gene-reg-phase-portrait}). It is of interest to understand how proximity of $a$ to its bifurcation value $a_{c,1}$ affects the resilience of the perturbed stochastic system $G_\lambda$ in \eqref{eq:hill-equation-stoch}.

To this end, we study $(\infty, \cdot, \delta(a), \delta(a))$-practical resilience, where $\delta(a)=|\mathbf{x}_3(a) - \mathbf{x}_2(a)|$ and $a \in (a_{c,1}; a_{max})$. Here, $a_{c,1} \approx 1.77$ corresponds to the lower bifurcation value (see Figure~\ref{fig:gene-reg-phase-portrait}) and $a_{max} \approx 1.89 < a_{c_2} \approx 2.63$ is within the bistable parameter region, while $\mathbf{x}_3(a)$ is a stable equilibrium and $\mathbf{x}_2(a)$ is the unstable equilibrium (see Figure~\ref{fig:pot_land}). The resulting values of the probability $\mathbb{P}_\lambda$ are given in Figure~\ref{fig:p_critical_a}. The perturbed system trajectories reside within a $\delta(a)$-neighbourhood of $\mathbf{x_3}(a)$ when the bifurcation parameter $a$ is sufficiently far from $a_{c,1}$, for all simulated noise intensities. However, as $a$ tends to $a_{c,1}$, the probability associated with practical resilience decreases sharply. Specifying a desired $\gamma$ then allows one to identify regions of interest where practical resilience holds with the required probability.

\begin{figure}[ht]
    \centering
    \includegraphics[width=0.5\textwidth]{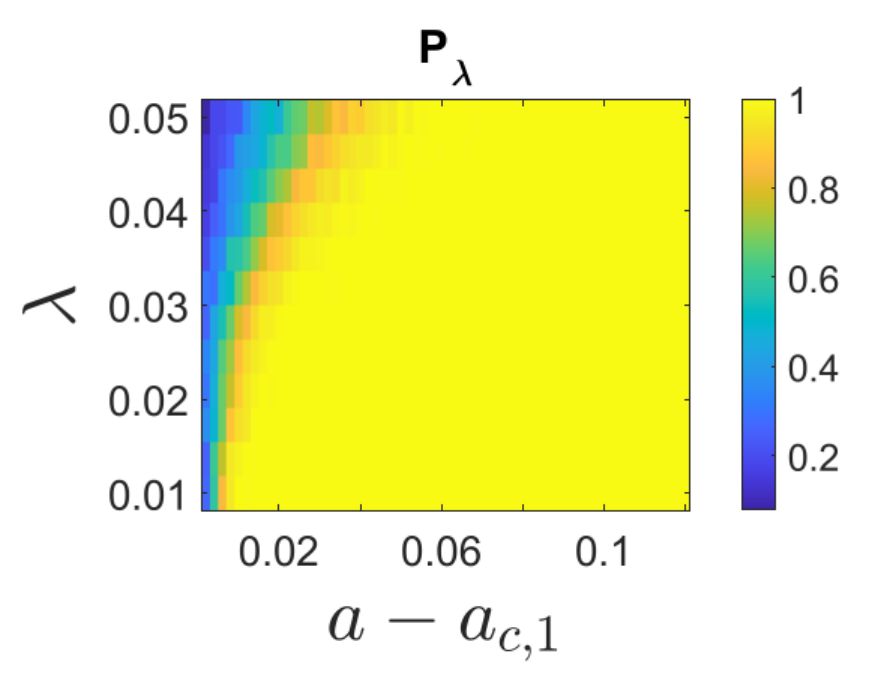}
    \caption{\footnotesize Dependence of $\mathbb{P}_\lambda$ on $\lambda$ and $a-a_{c,1}$, where $a_{c,1}$ is the lower bifurcation value (see Figure~\ref{fig:gene-reg-phase-portrait}), for system \eqref{eq:hill-equation-stoch} with $k=0.1$ and $h=2$.}
    \label{fig:p_critical_a}
\end{figure}

\subsubsection{Turing patterns}

Also the population model in Eq. \eqref{eq:PopMNodel} can help us illustrate how resilience definitions can quantify the sensitivity of a model to noise and initial conditions, over its dynamical evolution. As discussed by \cite{bashkirtseva2021stochastic}, the deterministic Levin-Segel model \eqref{eq:PopMNodel} exhibits regions of multistability, where patterns may temporarily form but ultimately dissipate and other stable structures emerge, with the limit case of spatially homogeneous equilibrium values $(\tilde{u},\tilde{v})=\left( \frac{ad}{bc-de}, \frac{ac}{bc-de} \right)$ when $D_u = D_v = 0$. Noise may further alter the picture, yielding alternative patterns during the time evolution. Practical resilience and attraction time provide summary statistics to quantitatively compare different scenarios.

Consider the system parameters $a=d=e=0.5$, $b=c=1$, $D_v = 0.005$, $D_u=1.4 \cdot 10^{-4}$.
Figure~\ref{fig:turing} shows the stable deterministic pattern obtained by discretising the deterministic equation \eqref{eq:PopMNodel} through the Crank-Nicolson numerical scheme with Neumann boundary conditions, spatial step $dx = 0.01$ and time step $dt = 0.01$ over a rectangle with spatial length $L=1$ and time horizon $t_{fin}=250$ s.

\begin{figure}[ht]
\centering
  \includegraphics[width=\textwidth]{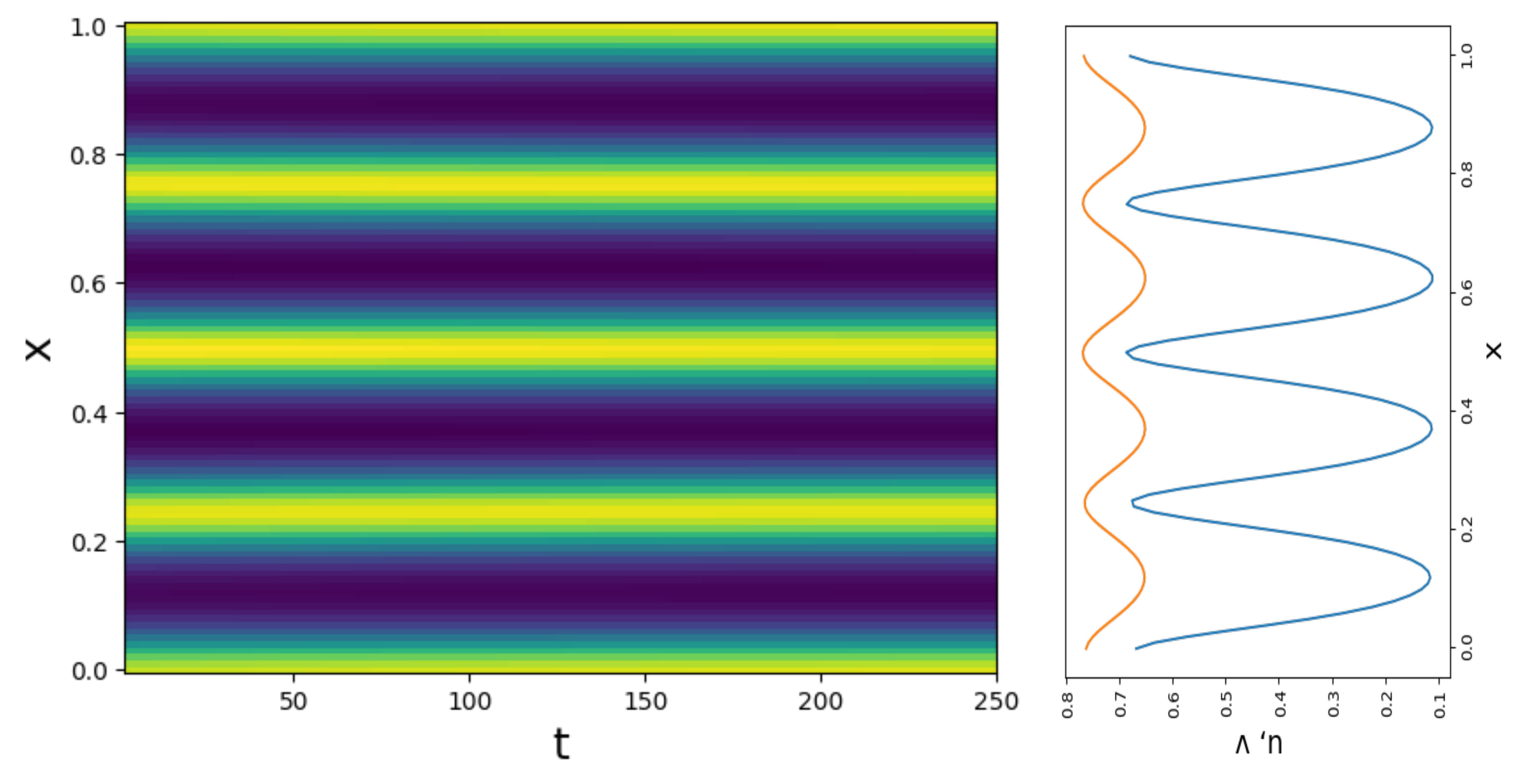}
\caption{\footnotesize Left: Turing-pattern attractor corresponding to $u$ from the deterministic system in Eq. \eqref{eq:PopMNodel} with $\lambda = \lambda_0 = 0$, visualised over space $x$ and time $t$, with the parameter values described in the main text. Right: Turing-pattern attractor for $u$, blue, and $v$, orange.}
\label{fig:turing}
\end{figure}

In Figure~\ref{fig:turing_noise}, the stochastic system is simulated, after discretization over space as in \eqref{eq:TurPatSemiDisc}, using the Milstein method \citep{bayram2018numerical}, and with the weighting function $g _{2N}\colon \mathbb{R}^{2N} \to \mathbb{R}$ that multiplies the noise given by  $g_{2N}(w; \mathbf{w})=\prod_{i=1}^{2N} g(w^{(i)};\mathbf{w}^{(i)})$, where $w^{(i)}$ ($\mathbf{w}^{(i)}$) denotes the $i$-th element of vector $w$ ($\mathbf{w}$) and $g \colon \mathbb{R} \to \mathbb{R}$ is defined as in the previous numerical case study, with $\varphi=0.0001$. System solutions emanating from initial conditions within $\pm \delta$ distance from the Turing-pattern attractor, and affected by noise with intensity $\lambda$, yield system realisations that are different from the nominal Turing pattern at each time point.

\begin{figure}[ht]
\centering
  \includegraphics[width=0.5\textwidth]{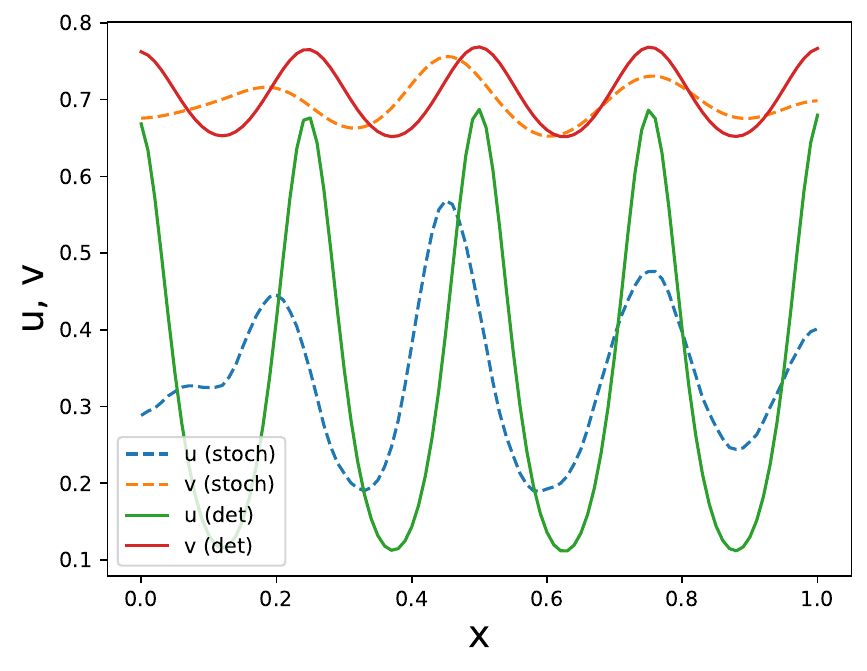}
\caption{\footnotesize Realisations of states $u(t,\cdot)$ and $v(t,\cdot)$ of system \eqref{eq:PopMNodel}, with $t=250$ s, in the deterministic case $\lambda=\lambda_0=0$ (solid) and in the stochastic case $\lambda = 40 \cdot 10^{-4}$ (dashed), with the other parameter values as described in the main text.}
\label{fig:turing_noise}
\end{figure}

We consider simulations of the stochastic system starting from sampled initial conditions close the steady-state ones, on a simulation horizon $t_{fin}=250$ s.
Practical resilience (Definition~\ref{def:PrRes}) with $\varepsilon=\delta$ and attraction time (Definition~\ref{Def:RetTime}) reveal up to which distance $\delta$ (uniform over the spatial grid) the Turing patterns can be considered resilient, depending on the noise intensity $\lambda$. Table~\ref{tab:turing} reports selected case studies. For low noise intensity $\lambda$, the system state may remain within distance $\delta$ from the deterministic Turing-pattern attractor, with a probability $\mathbb{P}_\lambda$ that increases for larger $\delta$, as expected. However, increasing $\lambda$ disrupts the patterns and drives the states away from the original attractor. Using the attraction time as a resilience indicator offers a consistent insight. We compute the worst-case attraction time over all the simulated initial conditions. As shown in Table~\ref{tab:turing}, $\tau$ is smaller when the corresponding probability is larger, and \emph{vice versa}; the indication `$-$' for $\tau$ denotes settings for which asymptotic practical resilience was not achieved within the simulation horizon $t_{fin}=250$ s. Overall, this analysis allows us to identify \emph{probabilistic} performance guarantees, beyond robustness analysis, according to a desired confidence level $\gamma$.

\begin{table}[htb]
	\centering
	\begin{tabular}{c|ccc|ccc}
          & \multicolumn{3}{c|}{$\delta$ (for $\mathbb{P}_\lambda$)} & \multicolumn{3}{c}{$\delta$ (for $\tau [s]$)}  \\
		$\mathbf{\lambda \, (\cdot10^{-4})}$    &  0.01 &   0.05 & 0.1  &0.01 & 0.05 & 0.1    \\
		\hline
1 & 0.12  & 0.855 & 0.945 & 200 & 88 & 67   \\
        9 &  0 & 0  &  0.47  & - &  - & 170  \\
        25 &  0 & 0  &  0  & - &  - & -  \\
		40 &  0 & 0  &  0  & - &  - & -  \\
		\hline
	\end{tabular}
	\caption{\footnotesize \small Values of $\mathbb{P}_\lambda$ (center) and of the worst-case attraction time $\tau$ (right) for given choices of the distance $\delta$ and of the noise intensity $\lambda$, for system \eqref{eq:PopMNodel} with parameter values as in the main text.}
	\label{tab:turing}
\end{table}

\subsection{Novelty and future directions}

Robustness and resilience are concepts that address different aspects of dynamical systems, including their ability to withstand parameter variations and external perturbations, and have been used with varying degrees of rigour in numerous disciplines. While robustness analysis has a long tradition in control theory, rigorous and formal control-theoretic definitions of resilience could not be found in the literature until very recently, thus hampering the development of a cohesive framework. Motivated by applications in systems biology, this chapter introduces definitions of resilience for a class of stochastic dynamical systems, to complement control-theoretic notions of robustness. These definitions allow for a quantitative probabilistic assessment of the impact of noise on desired system properties, related to the preservation of an attractor. Furthermore, the proposed concepts allow us to formally define the attraction time as a rigorous resilience indicator. As shown by the examples, choosing the appropriate definition among the ones we proposed, as well as suitably tuning the parameters involved in the definition, allows flexibility and applicability in various concrete problems in different contexts, and enables us to formulate and answer different types of questions, related to diverse nuances of the concept of resilience \citep{tamberg2022modeler}.

We have shown how the considered definitions can be applied to gain insight into the behaviour of widely used biological systems, motivating the inclusion of resilience in the study of systems properties. The case studies from systems biology exemplify the information that can be obtained through our newly proposed resilience concepts. Future work will include the application of the proposed framework to the numerical and analytical study of complex systems subject to parametric uncertainties and stochastic disturbances. Our new resilience definitions also pave the way to the design of novel resilience indicators, in addition to the attraction time and to the indicators that will be discussed in Chapter~\ref{ch:res-ind}.

This chapter introduces some novelties with respect to recent contributions on the topic, and complements their perspectives. \cite{artime2024robustness} surveyed the notion of robustness and resilience in network science, associating them with persistence of function against adversarial attacks targeting the network structure or against node disturbances (such as node removal); their work mostly relies on heuristic definitions and focuses on \emph{static} properties, while our distinction between robustness and resilience can embrace dynamic properties and differentiates between parametric uncertainty and stochastic perturbations. \cite{Liu2020b} offers a survey of working definitions and heuristic methods to assess network resilience; our definitions discussed above allow us to derive most of them as special cases, by appropriately tuning the parameters in the definition. \cite{Liu2020b} also surveys indicators for resilience loss -- as also done, with additional mathematical rigor, by \cite{Krakovska2024}. 

Even though resilience indicators are the main subject of the next chapter, it is useful to immediately discuss the novelties presented in this chapter with respect to existing resilience indicators. First of all, \cite{Krakovska2024} do not introduce any formal definition of resilience itself, but only implicitly define resilience through the definition of several different resilience indicators that have been proposed in the literature over time. Such indicators are tailored to two main alternative scenarios. For purely deterministic parameter-dependent systems, resilience is quantified through concepts of return time after a perturbation in the initial conditions, or through the geometry of the basin of attraction, or through bifurcation analysis in the presence of parameter variations. Conversely, for stochastic differential equations obtained from a deterministic system by adding white noise, typically having the form \eqref{eq:langevin_eq}, resilience is either quantified in probability through expected escape times (akin to the Mean First Passage Time that we will discuss in Section~\ref{subsec:escape}) or analysed by considering the deterministic dynamical system satisfied by the covariance matrix of the process and quantifying the maximum norm of its solution \citep[Section 3.4]{Krakovska2024}.
Our approach is significantly different and complementary, as it considers a purely stochastic formulation and provides rigorous \emph{definitions of resilience} quantified in terms of probability. Our proposed resilience indicator, the \emph{attraction time}, is complementary to the definitions of \emph{return time} \citep[Definitions 3.1 and 5.1]{Krakovska2024}, which are only suitable for deterministic systems, and to the definition of expected escape times, which are associated with the time needed on average to escape from the attractor -- and not with the probability of returning to a neighbourhood of the attractor after a given time.

As a final note, it would be interesting to explore the connection between the proposed definitions and a stochastic version of the stability radius \citep{HinrichsenPritchard2005}, which may allow for further quantification of resilience. We hope that bridging robustness and resilience will propel new analysis and insights, both in the control field and in multidisciplinary endeavours, to improve the modelling and the understanding of the behaviour of complex systems subject to perturbations, disturbances and uncertainties.

\newpage
\section{Data-driven detection of resilience loss}
\label{ch:res-ind}

In the previous chapter, we have provided quantitative definitions of resilience, and we have introduced the attraction time as a novel resilience indicator. Our framework effectively links formal definitions and existing indicators of resilience, such as those surveyed by \cite{Krakovska2024}. However, additional challenges arise when scarce information is available about the system and reliable models are not available. In these cases, semi-quantitative and data-driven methods can be employed to extract useful knowledge. In this chapter, we review heuristic and semi-quantitative methods discussed in the literature to infer resilience loss in poorly-known systems. We then present a methodology that addresses the discussion on robustness and resilience, by introducing the notion of \emph{generic resilience indicators} building on bifurcation analysis, which were initially defined in mathematical ecology \citep{Dakos2015, kuehn2011mathematical} and are increasingly employed in systems biology and biomedicine \citep{Trefois2015a}. Our goal is to provide a new perspective on this vibrant topic and bridge it across disciplines, by highlighting its strong connection with robustness analysis, presenting recent advances and outlining directions for future research that can leverage our quantitative definitions proposed in Chapter~\ref{ch:rob-res-sta-mod}.
Since this research area is still evolving, the available results are mostly fragmented and consisting of case-by-case studies: strong generalisable results are still missing, and several problem-specific research directions have been initiated. Hence, due to the current state of development of the field, this chapter mainly aims at connecting different disciplines and highlighting open problems.

In the case studies throughout the chapter, when we refer to attractors of dynamical systems, we mostly consider attractors defined by fixed points (equilibria). Limit cycles and pseudo-regimes \citep{lemmon2020achieving}, strange attractors \citep{strogatz2018nonlinear} or global bifurcations \citep{izhikevich2000neural} are not explicitly considered (although they could be included).

\subsection{Looking for resilience indicators: why?}

In nature, many systems are known to experience regime shifts \citep{scheffer2012anticipating}, which are mostly understood as shifts among basins of attraction \citep{ashwin2012tipping, kuehn2011mathematical}. Such shifts can pose significant threats to the system functioning and survival: examples include transitions to cancerous cell states \citep{Chen2012c}, genomic switching \citep{Aguirre2015c}, onset of epileptic seizures \citep{Maturana2020} or epidemic re-emergence \citep{ORegan2013}. Whenever the system of interest can be aptly captured by a quantitative model, the resilience definitions from Chapter~\ref{ch:rob-res-sta-mod} immediately apply.

Often, however, complete and validated models enabling quantitative predictions are missing or poorly identifiable; for example, only time-series data may be available. In addition, regime shifts are often abrupt and unexpected \citep{Boettiger2013, brock2008regime}. Consequently, monitoring, anticipation and management are particularly challenging \citep{lemmon2020achieving}, especially when only little information is available, \ie when the exact functions governing the dynamics are not known, or no viable mechanistic models are available. In these cases, a quantitative investigation of system resilience becomes harder. On the other hand, purely data-driven statistical methods (see, \eg \cite{brett2020detecting, Bury2021,kostoulas2021epidemic,macintyre2023artificial}) may fall short: they lack causal explanations, are often opaque to interpretation and are heavily based on past data, therefore being less suitable for events that are rare or for which big data are hard to gather. Looking for generic indicators to alert for resilience loss is thus an active area of research, often going under the umbrella of the Critical Transitions framework \citep{scheffer2009early}. 

The field is still relatively at its infancy and different methods have been recently suggested and tested. To the best of our knowledge, little systematic studies have been conducted to link the various, albeit problem-specific, research directions that are being developed. Preliminary reviews have been provided by \cite{Krakovska2024,Liu2020b, proverbio2022classification}. In this chapter, we survey this promising area, highlight its recent developments and its connections to the results presented in the previous chapters, and discuss open problems and research directions to which the systems-and-control community could greatly contribute.

\subsection{Regime shifts and analysis methods}
\label{sec:preliminar_res_ind}

Regime shifts from an attractor to an alternative one are known to occur, usually following three main mechanisms \citep{ashwin2012tipping, proverbio2023systematic, thompson2011predicting}: a bifurcation (qualitative dynamical change produced by parameter variations, \cite{Guckenheimer2009a}) altering the system and making the original equilibrium vanish; random (noisy) deviations, driving the system state away from the original attractor due to stochasticity \citep{ritchie2017probability}; and non-autonomous ramping parameters, yielding so-called rate-induced transitions \citep{Alkhayuon2018,slyman2023rate}. 

The first (deterministic) case is mainly investigated in bifurcation theory; we refer to \cite{Guckenheimer2009a,kuznetsov2013elements} for a formal introduction. As an example, consider the gene regulation model from Eq. \eqref{eq:hill-equation}, whose equilibria have been discussed along with their stability properties in Chapter~\ref{ch:rob-res-sta-mod}. Its bifurcation diagram, reporting the dependency of its equilibria $\hat{x}$ on the parameter $a$, is shown in Figure~\ref{fig:res_func}.  Consider initially $a > 2.6$, so that the system rests on the upper stable equilibrium branch. If the parameter $a$ is lowered in a quasi-steady-state fashion (see Appendix~\ref{sec:slowfast}) until it crosses the critical value $a \approx 1.77$, the system trajectories converge to the lower equilibrium. 

\begin{figure}[ht!]
	\centering
	\includegraphics[width=0.5\textwidth]{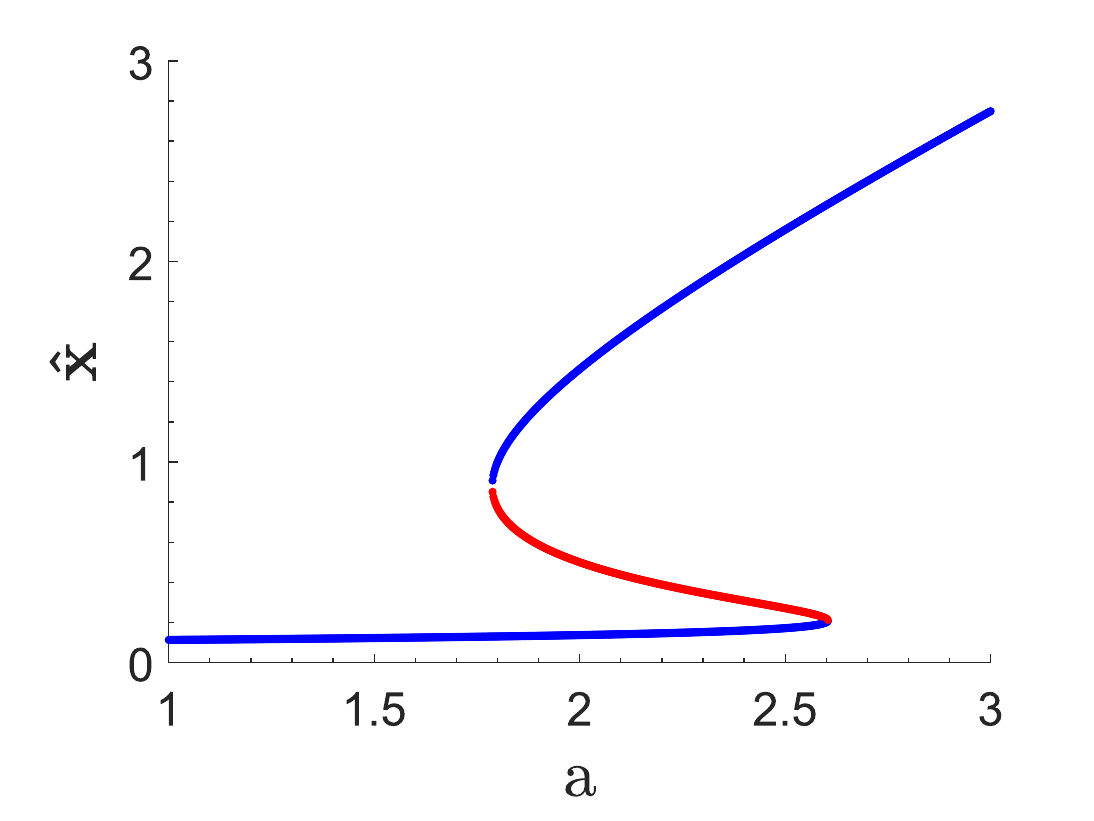}
	\caption{\footnotesize The bifurcation diagram ($a$, $\hat{x}$) of Eq. \eqref{eq:hill-equation}, where $\hat{x}$ identifies the equilibrium points. The bifurcation diagram is often referred to as a \emph{resilience function}, as it depicts the critical parameter values for regime shift. Blue denotes stable equilibria, red unstable ones.}
	\label{fig:res_func}
\end{figure}

Stochastic nonlinear systems can have behaviours that could never be observed in the deterministic case; when some characteristics of the stochastic external disturbance (\eg its variance) cross certain thresholds, transition phenomena can take place. Noise-induced transitions can occur because switching between alternative states can be triggered, with a certain probability, due to random ``jumps'' driven by the stochastic dynamics.
An example is the fluctuation of enzymes across potential barriers \citep{min2005fluctuating} in cell biology; in epidemiology, individual variations or random super-spreading events can trigger local outbreaks \citep{Lloyd-Smith2005a, Small2005}, making disease-free equilibria shift into epidemic states.

Rate-induced shifts are typical of non-autonomous systems \citep{Ashwin2017a}. They were experimented in engineered thermo-acoustics systems \citep{Bonciolini2018} and may occur in living organisms too, \eg in the case of speed-dependency concurring in regulating cellular decision making \citep{Nene2012}; however, non-autonomous systems are not the focus of the monograph and we do not delve further into this topic.

Another interesting case, recently getting much attention from the modelling and applied mathematics community \citep{kuehn2011mathematical,proverbio2022classification,zou2023uncertainty}, is that of systems primarily governed by deterministic functions, with some added stochasticity, for which partial or qualitative information is available. In the terminology of the Critical Transitions framework, this is the case of noisy bifurcation-induced shifts, where stochastic noise is added on top of slowly varying parameters, in the slow-fast system interpretation of Eq. \eqref{eq:slowfast}. 

A biological example, supported by experimental evidence, is offered by a colony of budding yeast, a unicellular organism. The bifurcation diagram reconstructed from experiments \citep{dai2012generic}, reporting how the steady-state cell density in the colony depends on a suitable chemical dilution factor (DL), is shown in Figure~\ref{fig:Dai2012}. DL acts as a control parameter; its exact effect is not entirely known yet, but evidence suggests that it alters the mutual interconnections among cells, thereby altering the normal state of the colony.
Experimentally, it is possible to change the dilution factor. At very low concentrations, the colony prospers (it survives at high cell density). Despite random loss of individual cells (see the standard deviation bars in the figure), the colony survives under moderate to high levels of dilution in the environment, up to a threshold concentration (DL $\approx 1600$) that yields sudden collapse. In this case, the colony experiences a slow approach to a bifurcation point (driven by increasing dilution factor); this scenario is characterised by stochasticity, as individual variations, cell-cell heterogeneity and random cell death contribute to colony randomness.

\begin{figure}[ht!]
	\centering
	\includegraphics[width=0.5\textwidth]{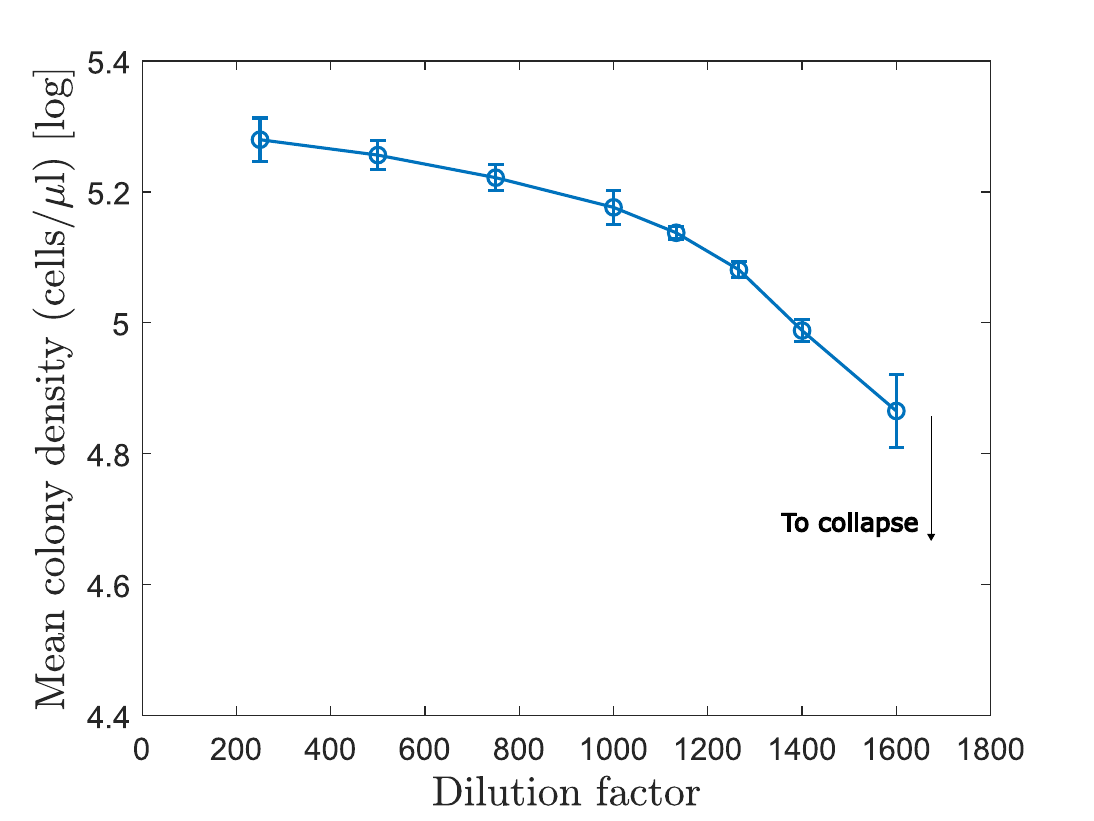}
	\caption{\footnotesize Cell density in yeast colonies versus dilution factor (biological stressor). Increasing dilution values slightly modifies the cell density, up to a critical value (last point) after which the population is driven to collapse (density of zero $\text{cells}/\mu l$, not shown in the graph). The population undergoes a bifurcation that drastically changes its regime. Elaboration on public data from \cite{dai2012generic}; a full experimental phase portrait is in the original publication.}
	\label{fig:Dai2012}
\end{figure}

The quantitative drivers of colony collapse, including dynamic intrinsic and extrinsic factors such as cell-cell variability or varying concentrations of dilution factors, can be understood within the conceptual framework of system resilience. However, as complete models are not available, hybrid methods should be employed to quantify resilience, combining data-driven techniques informed by minimal surrogate models. Can one identify generic indicators of resilience, or of loss thereof, that work for broad classes of systems and are easy to extract from empirical data, which are often sparse or noisy?

Here, we review the recent literature proposing multiple resilience indicators, demonstrate their application in examples, highlight open questions and topics that the control community could contribute to, and discuss perspectives to improve our understanding, prediction and management of resilience and regime shifts.

\subsubsection{Bifurcation normal forms as minimal models}
\label{sec:bif-normal-forms}

Bifurcations \citep{Golubitsky2003a, kuznetsov1994bifurcation} are powerful tools to analyse qualitative changes in system dynamics, due to parameter variations. For many (high-dimensional) systems, the transition from an attractor to another can be understood \emph{locally} through an approximation by a low-dimensional bifurcations, which plays an important role for dimension reduction and analysis. Bifurcations are widely discussed in systems biology \citep{Angeli2004detection,Angeli2014,Hui2011,Leite2010,Mojtahedi2016b,Moris2016} and can be connected to robustness studies using the generalised Nyquist stability criterion \citep{Iglesias2010}, thereby being useful for models from complex system theory. Deducing the existence of bifurcations from scarce data, or formally proving it in complex models, is a particularly challenging problem. Here, we just briefly discuss how to design effective inference methods.

An important notion is that of \emph{normal forms}. Sufficiently close to the critical value of a local bifurcation, a dynamical system may be mapped upon a nonlinear canonical form of a given bifurcation (mostly associated with an ODE whose right-hand side is a polynomial of the state, see Appendix~\ref{App:Normal}) via a coordinate transformation, which defines the normal form of the dynamical system.\footnote{Demonstrating the soundness of such an approximation requires the \emph{center manifold theorem} \citep{Crawford1991a} and other notions of applied bifurcation theory, which are beyond the scope of this monograph. The interested reader is referred, \eg to \cite{Golubitsky2003a, Haragus2010, kuznetsov2013elements}.} Normal forms can be used as surrogate models to reduce complex high-dimensional dynamics to local low-dimension dynamics, along a direction identified by the leading eigenvalue of the system. Studying properties of normal forms enables simplified stability analysis, whose local results can be extended to broad classes of systems characterised by the same type of bifurcation \citep{kuehn2021universal}. For this reason, bifurcation normal forms can be seen as ``universal patterns'' for critical and explosive phenomena \citep{kuehn2021universal, Thom2554}.

To visualise how normal forms can effectively approximate systems in the neighbourhood of a bifurcation point, consider the gene regulation model \eqref{eq:hill-equation} close to a regime shift (abrupt transition between different values of the protein concentration, $x$) and compare it to the normal form for the fold (\ie saddle-node) bifurcation (see Appendix~\ref{App:Normal}): Figure~\ref{fig:example_normal_forms} compares their vector fields, and Figure~\ref{fig:example_normal_forms2} their bifurcation diagrams. The fold bifurcation is also involved in the transition to epidemic outbreaks in SIR-like models with vaccination \citep{brauer2011backward}.

\begin{figure}[ht!]
	\centering
	\includegraphics[width=0.8\textwidth]{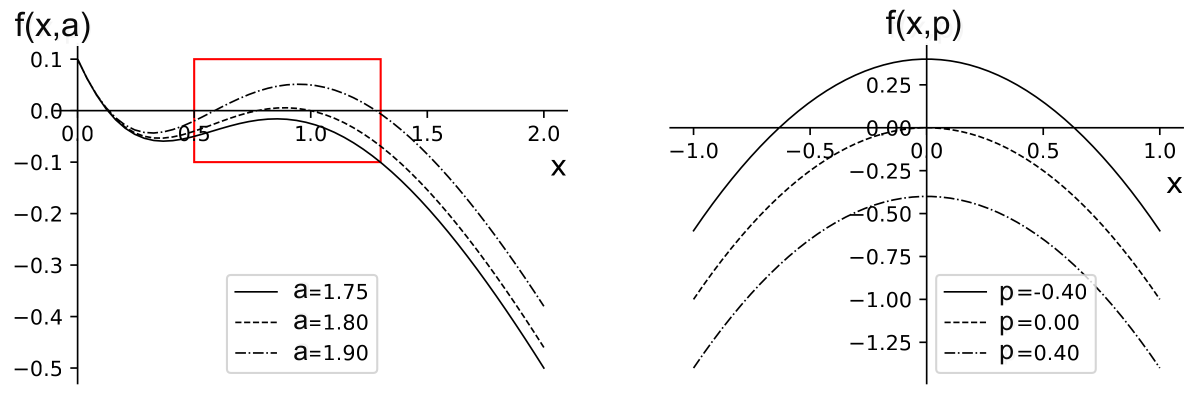}
	\caption{\footnotesize Local bifurcation behaviour and normal forms. Left: plot of $f(x,a)=f_G(x,a)$, the right-hand side of the ODE for the gene regulation model \eqref{eq:hill-equation}. Right: plot of $f(x,p) = -p-x^2$, corresponding to the fold bifurcation normal form (see Appendix~\ref{App:Normal}). The red rectangle highlights the local region where the gene regulation model is approximated by the normal form: both undergo a bifurcation as the function crosses the $x$-axis due to changes in the bifurcation parameter.}
	\label{fig:example_normal_forms}
\end{figure}

\begin{figure}[ht!]
	\centering
	\includegraphics[width=0.5\textwidth]{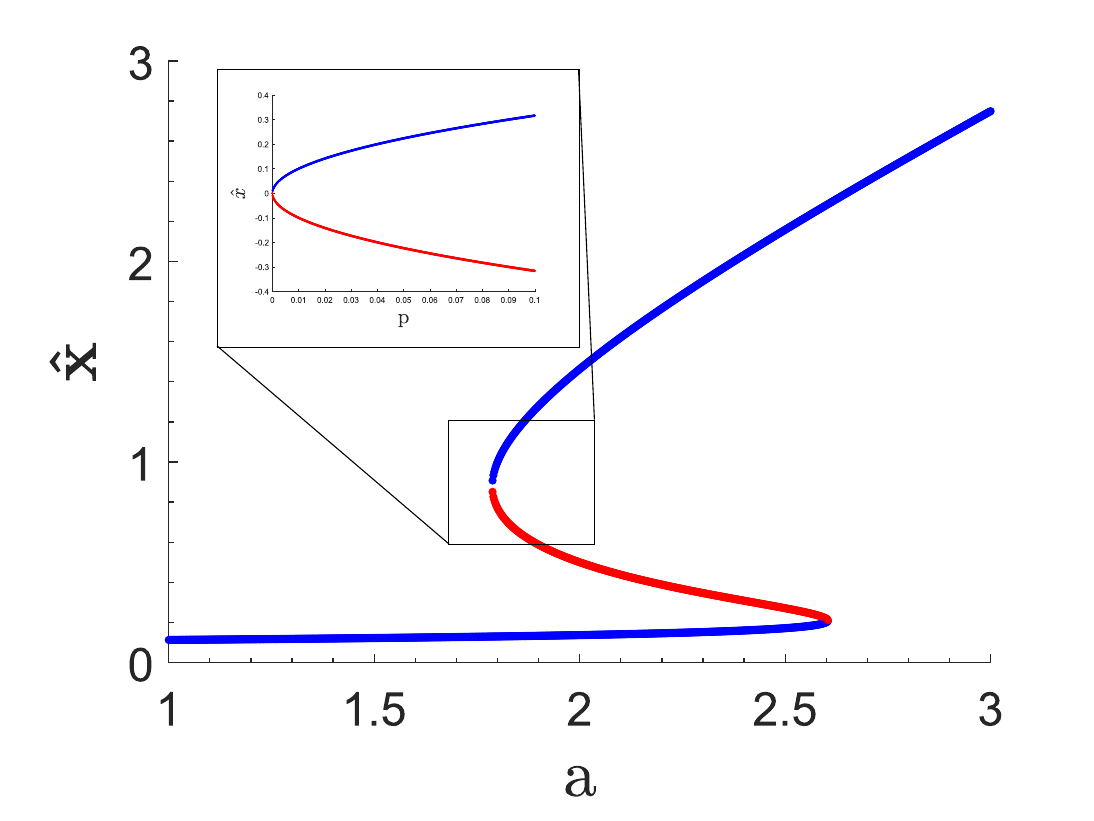}
	\caption{\footnotesize The local system behaviour can be approximated by the associated normal form also in terms of bifurcation diagram close to the bifurcation point. Big: bifurcation diagram for the gene regulation system \eqref{eq:hill-equation}. Inset: bifurcation diagram  for the fold normal form $\dot{x} = -p-x^2$. A complete analysis, including the case of $h > 2$ in \eqref{eq:hill-equation}, is provided by \cite{proverbio2022buffering}. Stable equilibria are in blue and unstable equilibria in red.}
	\label{fig:example_normal_forms2}
\end{figure}

Other abrupt regime shifts, \eg in epidemiological models \citep{ORegan2013}, can be locally captured by bifurcation normal forms. For instance, the SIR model \eqref{eq:sir} extended to include a birth or influx rate $\pi$ and a natural death or outflux rate $\mu$,
\begin{equation}
    \begin{cases}
        \dot{S}(t) =  \pi - \beta S(t) I(t) - \mu S(t)\,,\\
        \dot{I}(t) =  \beta S(t) I(t) - (\mu + \gamma)I(t)\,, \\
        \dot{R}(t) = \gamma I(t) - \mu R(t) \, ,
        \label{eq:sir_demography}
    \end{cases}
\end{equation}
is characterised by a transcritical bifurcation (see Appendix~\ref{App:Normal}) in the transition from the disease-free equilibrium to an epidemic outbreak \citep{balamuralitharan2018bifurcation}. The same bifurcation also occurs in models accounting for non-pharmaceutical interventions and quarantine; see, \eg \cite{proverbio2021dynamical} and Figure~\ref{fig:sir_trans}. 

\begin{figure}[ht!]
	\centering
	\includegraphics[width=0.5\textwidth]{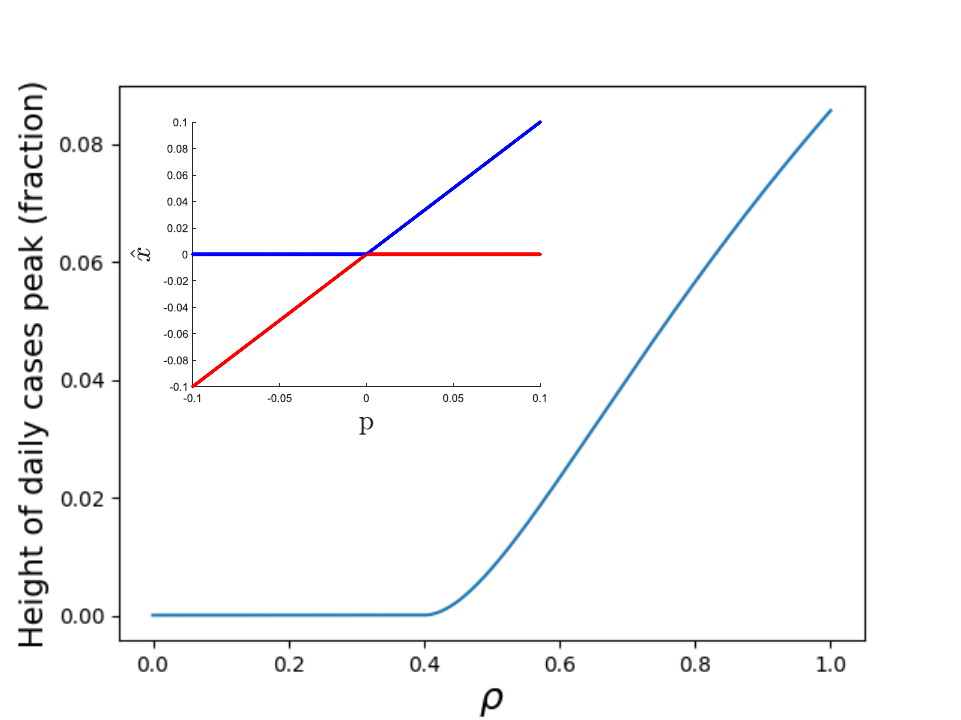}
	\caption{\footnotesize Local bifurcation behaviour and normal form. The local behaviour of an extended SIR-like model akin to Eq. \eqref{eq:sir_demography}, with $\beta \to \beta \rho$ and $\max(I)$ shown as a function of the bifurcation parameter $\rho$ \citep{proverbio2021dynamical}, can be approximated by a transcritical normal form $\dot{x} = px - x^2$ (see Appendix~\ref{App:Normal}), in the inset. Stable equilibria are in blue and unstable equilibria in red.}
	\label{fig:sir_trans}
\end{figure}

Bifurcations and their associated normal forms may fall short when multiple eigenvalues of commensurable amplitude exist; moreover, their application to natural systems heavily relies on slow-fast approximations of relative timescales (see, \eg \cite{Berglund2006, proverbio2023systematic}, and Appendix~\ref{sec:slowfast}). Still, they provide powerful minimal models that encompass numerous applications and allow for resilience analysis; they have been proposed as candidates for bridging scales and for the design of models, together with data-driven inference \citep{Tegner2016}; and they provide minimal benchmark models in case of unknown dynamics, subject to some (minimal) modelling assumptions. 

In the next section, we use bifurcation normal forms as minimal surrogate models $G_{\lambda_0}$ and augment (when necessary) their associated ODE with stochastic noise, in the same spirit as in Section~\ref{Sec:RorR}. Then, reducing the original system to its associated normal form enables the derivation of generic indicators of resilience loss, which can be applied to extract indicators of system resilience (and early warning signals to alert about loss of resilience and imminent critical transitions) from real-world time series \citep{dakos2012methods}.

\subsection{Resilience indicators}

Here, we survey several indicators suggested in the literature as proxies for resilience, whenever bifurcation normal forms augmented by stochastic terms are suitable minimal models for the system of interest.

\subsubsection{Recovery rate}

Engineering resilience is often measured in terms of recovery rates from perturbed states back to the equilibrium (see Section~\ref{sec:res_informal}), and can be understood in terms of bifurcations.

Given the system $\dot x(t)=f(x(t),p)$, where $p$ is a bifurcation parameter, let $\bar x(p)$ be an asymptotically stable equilibrium such that $f(\bar x(p), p)\equiv0$. Then, solutions emanating from initial conditions $z=\bar x(p) + u$, sufficiently close to $\bar x(p)$, converge to $\bar x(p)$ with an exponential decay rate of $\lambda_u <0$, which is the Lyapunov exponent in the $u$ direction. If the leading Lyapunov exponent (the largest, \ie the smallest in magnitude) is $\mathcal{O}(p^\alpha)$, then we call $\alpha$ the \emph{recovery exponent} \citep[Definition 2.9]{kuehn2011mathematical}.
The exponent $\alpha$ thus provides a measure of how quickly the trajectories converge to the equilibrium after a perturbation in the initial conditions, near a bifurcation, depending on the distance to the bifurcation point (corresponding to $p=p_0=0$ without loss of generality). Larger $\alpha$ corresponds to slower convergence. 
As shown by \cite{kuehn2011mathematical}, the value of the recovery exponent depends on the bifurcation type: $\alpha = 1/2$ for fold bifurcations, while $\alpha   = 1$ for pitchfork, transcritical and Hopf bifurcations.

In perturbed systems, the recovery rate can assess the proximity to catastrophic shifts \citep{van2007slow}, as demonstrated in ecological applications \citep{arnoldi2016resilience}.

\subsubsection{Escape rate}\label{subsec:escape}

The abundant literature on stochastic processes \citep{J.S.Allen2014, Gardiner1985, papoulis2002probability, van1992stochastic} often studies how likely a random walker is to jump from one basin of attraction to another due to noise. For double-well potentials, such as that in Figure~\ref{fig:pot_land}, mean passage time and Kramers' escape rate \citep{freidlin1998random,Gardiner1985,van1992stochastic} are popular measures for the likelihood of a regime shift. Escape measures quantify how long it takes to jump onto an alternative well for the first time, or how often such jumps are expected to occur. We briefly describe these measures for one-dimensional potentials and homogeneous processes; for extensions to multi-dimensional domains and non-homogeneous processes, see \cite{freidlin1998random,Gardiner1985}.

To derive the Kramers' escape rate, consider a landscape representation (analogous to the notion of Lyapunov function, \emph{cf.} \cite{conley1988}) of a system of interest, such as the one depicted in Figure~\ref{fig:pot_land}, associated with the potential function $V$. A collection of independent Brownian particles is initially located within a ``stable well'', \ie at a stable equilibrium point (\eg $x_1$ in Figure~\ref{fig:pot_land}). Being exposed to stochastic fluctuations, the particles may escape the well by reaching an unstable equilibrium (\eg $x_2$ in Figure~\ref{fig:pot_land}) and crossing the barrier towards an alternative stable equilibrium (\eg $x_3$ in Figure~\ref{fig:pot_land}). What is the rate at which this escape takes place?
Under suitable assumptions, the Kramers' escape rate is given by \citep{plischke1994equilibrium,van1992stochastic}
\begin{equation} \label{eq:kramers}
\tau_k = \frac{\sqrt{V''(x_1)|V''(x_2)|} e^{\frac{V(x_1)-V(x_2)}{D}}}{2 \pi},
\end{equation}
where $V''$ denotes the second derivative of $V$, $V(x_1)$ and $V(x_2)$ are the values of the potential function at the stable and unstable equilibrium points, and $D$ is the diffusion term of the stochastic Fokker-Planck equation \eqref{eq:fokker_plank}. The Kramers' escape rate has been used to characterise the resilience of chemical reactions across potential barriers \citep{hanggi1990reaction} and, recently, the resilience of gene regulation motifs (on average across cell populations) against noisy regime shifts \citep{proverbio2022buffering}.
For Ornstein-Uhlenbeck (OU) processes \eqref{eq:o-u}, associated with a linearised approach to one-dimensional bifurcations, it can be shown that the escape rate increases exponentially as a bifurcation point is approached (and, when the noise $D$ is large, it has non-negligible values even relatively far away from the bifurcation point, thus indicating the risk of noise-induced regime shifts); this makes it a valuable indicator of the system's resilience.

Alternatively, authors such as \cite{dennis2016allee} suggest, as a resilience indicator, the Mean First Passage Time (MFPT) $\mathcal{T}(x)$, \ie the time it takes, on average, for a single, randomly forced particle with state $x$ to randomly drift from a stable equilibrium to an unstable one.

Consider a SDE of the form \eqref{eq:langevin_eq}; in its associated Fokker-Planck representation \eqref{eq:fokker_plank}, the stochastic variable $x$ is associated with a stationary probability density function $p(x)$.
We denote by $x_1$ and $x_2$ the initial stable equilibrium and the unstable equilibrium (located at the boundary of the basin of attraction of $x_1$), respectively. Assume that $x_1 < x_2$. Then, the amount of time it takes for the system to transition from $x_1$ to $x_2$ is the mean first passage time:
\begin{equation}
\mathcal{T}(x) = 2 \int_{x_1}^{x_2} \frac{\int_0^yp(z)dz}{b(x)p(y)}dy \quad \text{ for } x_1 < x_2 \, .
\label{eq:MFPT_final}
\end{equation}
To estimate the MFPT from a stable equilibrium $x_3$ such that $x_3 > x_2$, it suffices to set the boundary conditions appropriately in \eqref{eq:MFPT_final}. 
In biology, $\mathcal{T}(x)$ was used by \cite{Sharma2016} to compare the resilience of gene regulatory systems of the form \eqref{eq:hill-equation}, as illustrated in Figure~\ref{fig:mfpt}.

\begin{figure}[ht!]
	\centering
	\includegraphics[width=0.5\textwidth]{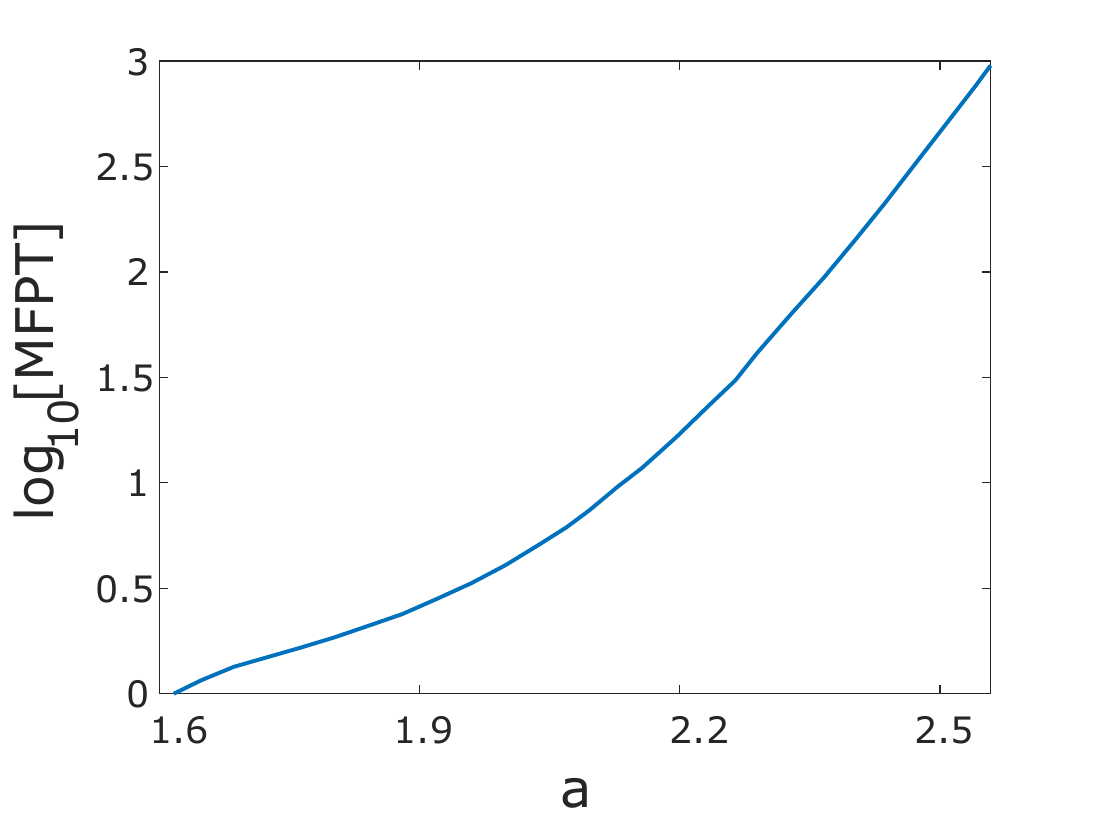}
	\caption{\footnotesize Logarithmic Mean First Passage Time \eqref{eq:MFPT_final} for the gene regulation system \eqref{eq:hill-equation} as a function of the bifurcation parameter $a$, with $k=0.1$ and $\sigma=0.2$. When $a$ is close to its bifurcation value $a \approx 1.6$, the MFPT becomes negligible, while it becomes higher for values further from the bifurcation.}
	\label{fig:mfpt}
\end{figure}

\subsubsection{Stochastic indicators}
\label{sec:monomodal_ind}

For bifurcation normal forms with added noise terms, summary statistic indicators that can be derived by studying normal forms \citep{kuehn2011mathematical} are particularly useful because their behaviour is correlated with the proximity to the bifurcation points, provided that the noise magnitude remains constant; hence, they serve as indicators for loss of resilience. In this section, we present summary statistic indicators and discuss recent results for their application to unknown biological and epidemiological systems. Currently, this theory is less developed and formal than the theory revolving around structural analysis; here, we aim to foster interest in connecting the concepts within an interdisciplinary research effort.

Stochastic indicators for systems near bifurcation points are usually derived by analysing linearised normal forms around equilibria, which can be mapped into Ornstein-Uhlenbeck processes \eqref{eq:o-u}. This mapping provides exact analytical solutions based on the theory of stochastic processes \citep{Gardiner1985}, and enables an immediate interpretation of sample summary statistics extracted from data, which can be thus compared with theoretical predictions to assess the validity of resilience indicators. 
We survey here the basic indicators \citep{scheffer2012anticipating} upon which many resilience studies are based. The results below assume quasi-steady state parameters (in the framework of timescale separation, see Appendix~\ref{sec:slowfast}) and white noise. We summarise the most relevant indicators and their functional forms; the proofs of the formal results are provided in Appendix~\ref{App:stoc-ind}, while Figure~\ref{fig:indicators} shows examples of the behaviour of these indicators as a critical point is approached.

More details on Ornstein-Uhlenbeck process are in Appendix~\ref{sec:ODESDE}.

\begin{proposition}\label{theo:var}
For an Ornstein-Uhlenbeck process
\begin{equation}\label{eq:OU}
dx = -k \ x \ dt + \sqrt{2D} dW_t,
\end{equation}
where $D$ is the diffusion intensity and $W_t$ a Wiener process, as a bifurcation point is approached (\ie as $k \to 0$), the solution is a stochastic stationary process with:
\begin{itemize}
\item variance $\text{Var}= \langle y_t^2 \rangle - \langle y_t \rangle ^2=\frac{D}{k}$ tending to infinity (namely, $\lim_{k \to 0} \frac{D}{k}=\infty$), as shown in Figure~\ref{fig:indicators}, upper left panel;
\item autocorrelation at lag-1 $AC(1)= e^{-k}$ tending to 1 (namely, $\lim_{k \to 0} e^{-k}=1$), as shown in Figure~\ref{fig:indicators}, upper central panel.
\end{itemize}
\end{proposition}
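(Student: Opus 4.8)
The plan is to solve the linear stochastic differential equation explicitly, extract the stationary first and second moments, and then read off the variance and the lag-$1$ autocorrelation, finally taking the limit $k\to 0$. First I would rewrite $dx = -kx\,dt + \sqrt{2D}\,dW_t$ as a scalar linear SDE and solve it by variation of constants, equivalently by applying It\^o's formula to $e^{kt}x_t$, obtaining
\begin{equation*}
x_t = x_0\, e^{-kt} + \sqrt{2D}\int_0^t e^{-k(t-s)}\,dW_s.
\end{equation*}
Since the It\^o integral is a zero-mean Gaussian martingale, $\mathbb{E}[x_t] = x_0\, e^{-kt}\to 0$ as $t\to\infty$, so the stationary process $y_t$ has mean $\langle y_t\rangle = 0$.

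Next I would compute the second moment via the It\^o isometry,
\begin{equation*}
\operatorname{Var}(x_t) = 2D\int_0^t e^{-2k(t-s)}\,ds = \frac{D}{k}\left(1 - e^{-2kt}\right),
\end{equation*}
whose limit as $t\to\infty$ is the stationary variance $\langle y_t^2\rangle - \langle y_t\rangle^2 = D/k$; then $\lim_{k\to 0} D/k = +\infty$ establishes the first claim. (Equivalently, the same stationary variance follows from the Gaussian stationary density of the Fokker--Planck equation \eqref{eq:fokker_plank} associated with \eqref{eq:OU}.) For the autocorrelation, I would use the same explicit representation together with the independence of the Wiener increments to compute, for $\tau\geq 0$, the stationary autocovariance
\begin{equation*}
\operatorname{Cov}(y_t, y_{t+\tau}) = \frac{D}{k}\, e^{-k\tau}.
\end{equation*}
Normalising by the variance $D/k$ yields the autocorrelation function $e^{-k\tau}$; evaluating at unit lag $\tau=1$ gives $AC(1)=e^{-k}$, and $\lim_{k\to 0} e^{-k}=1$ proves the second claim.

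None of these steps is technically deep, so the main work is conceptual rather than computational. The subtlety I expect to address carefully is twofold: first, the passage to stationarity, since the displayed moments are limits as $t\to\infty$ and the $x_0$-dependent transient must be shown to vanish so that $y_t$ is genuinely wide-sense stationary; second, the correct interpretation of ``lag-$1$'', as the expression $e^{-k}$ is the continuous-time autocorrelation at time lag $\tau=1$, equivalently the lag-$1$ autocorrelation of the process sampled at unit time step, which presupposes that the diffusion magnitude $D$ is held fixed while $k\to 0$. I would state these conventions explicitly, so that the monotone behaviour of both indicators as $k\to 0$ can be unambiguously attributed to the approach to the bifurcation point rather than to a changing noise level.
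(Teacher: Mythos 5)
Your proposal is correct and follows essentially the same route as the paper: solve the SDE explicitly by applying It\^o's formula to $e^{kt}x_t$ (variation of constants), show the $x_0$-transient decays, compute the stationary variance $D/k$ and autocovariance $\frac{D}{k}e^{-k|t-t'|}$, normalise to get $AC(1)=e^{-k}$, and take $k\to 0$. The only cosmetic difference is that you invoke the It\^o isometry where the paper uses the formal white-noise identity $\langle dW_s\, dW_{s'}\rangle=\delta(s-s')\,ds\,ds'$ — the same computation in rigorous rather than heuristic dress.
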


We recall that the autocorrelation is
\begin{equation*}
AC(t-t') = \lim_{t,t' \to \infty} \frac{\text{Cov}(y(t) y(t'))}{\sqrt{\text{Var}(y(t)) \text{Var}(y(t'))}} = e^{-k |t-t'|},
\end{equation*}
and the autocorrelation at lag-1 is $AC(t-t')$ computed for $t-t'=1$.

Other statistical moments like skewness, kurtosis and higher moments, for quasi-steady-state parameters, can be expressed as 
	\begin{equation}
		\langle \left(y - \langle y \rangle \right)^n \rangle  = \int_{-\infty}^{\infty} (y'-\mu)^np(y')dy' 
	\end{equation}
where $\mu = \langle y \rangle$ is the mean (expected value) and $p(x)$ is the probability density function. In case of white noise, $p(x)$ is Gaussian; in general, it can be derived from the Fokker-Planck equation \eqref{eq:fokker_plank}. 

Also the phenomenon known as \emph{spectral reddening} is sometimes used as a resilience indicator \citep{ Bury2020,Carpenter2011}.
\begin{proposition}[\textbf{Spectral reddening}]\label{theo:reddening}
For an Ornstein-Uhlenbeck process \eqref{eq:OU}, the solution is a stochastic stationary process with power spectral density function $S(\omega,k)= \frac{D}{\pi(k^2 + \omega^2)}$. As a bifurcation point is approached (\ie as $k \to 0$), its maximum value $\max_\omega S(\omega,k)$ increases, and it is always achieved for low frequencies $\omega$, as shown in Figure~\ref{fig:indicators}, lower panels.
\end{proposition}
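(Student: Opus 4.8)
*For an Ornstein-Uhlenbeck process \eqref{eq:OU}, the solution is a stochastic stationary process with power spectral density function $S(\omega,k)= \frac{D}{\pi(k^2 + \omega^2)}$. As a bifurcation point is approached (i.e.\ as $k \to 0$), its maximum value $\max_\omega S(\omega,k)$ increases, and it is always achieved for low frequencies $\omega$.*

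The plan is to obtain the power spectral density (PSD) via the Wiener--Khinchin theorem, exploiting the autocovariance already established in Proposition~\ref{theo:var}, and then to read off the behaviour of its maximum directly. First I would recall that, in the stationary regime (consistent with the quasi-steady-state assumption), the OU process \eqref{eq:OU} is wide-sense stationary with autocovariance $R(\tau)=\frac{D}{k}\,e^{-k|\tau|}$; this combines the variance $D/k$ and the autocorrelation $e^{-k|\tau|}$ provided by Proposition~\ref{theo:var}. The Wiener--Khinchin theorem then identifies the PSD as the Fourier transform of $R$,
$$S(\omega,k)=\frac{1}{2\pi}\int_{-\infty}^{\infty} R(\tau)\,e^{-\jmath\omega\tau}\,d\tau.$$

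The core computation is the elementary Fourier integral $\int_{-\infty}^{\infty} e^{-k|\tau|}e^{-\jmath\omega\tau}\,d\tau = \frac{2k}{k^2+\omega^2}$, obtained by splitting the integral at $\tau=0$ and summing two exponential integrals; substituting yields $S(\omega,k)=\frac{D}{\pi(k^2+\omega^2)}$, as claimed. As an alternative, more control-theoretic route, one could work in the frequency domain directly: the scalar linear system $\dot{x}=-kx+\sqrt{2D}\,\xi$ has transfer function $H(\jmath\omega)=(k+\jmath\omega)^{-1}$, and feeding in white noise of flat spectral density gives the output PSD $|H(\jmath\omega)|^2\,S_\xi = \frac{1}{k^2+\omega^2}\cdot\frac{D}{\pi}$, recovering the same expression without invoking the autocovariance.

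With $S(\omega,k)=\frac{D}{\pi(k^2+\omega^2)}$ in hand, the two assertions are immediate. For fixed $k>0$, $S$ is strictly decreasing in $\omega^2$, so its maximum over $\omega$ is attained uniquely at $\omega=0$ --- that is, always at the lowest frequency --- with value $\max_\omega S(\omega,k)=S(0,k)=\frac{D}{\pi k^2}$. Finally, $\lim_{k\to 0^+}\frac{D}{\pi k^2}=+\infty$, so the spectral peak grows without bound as the bifurcation point ($k\to 0$) is approached, while remaining concentrated at low frequencies; this is precisely the phenomenon of spectral reddening.

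I expect the only subtle point to be the justification of wide-sense stationarity together with the choice of normalisation constant in the Wiener--Khinchin transform. The constant must be fixed consistently with the convention used in Proposition~\ref{theo:var}, so that $\int_{-\infty}^{\infty} S(\omega,k)\,d\omega$ reproduces the stated variance $D/k$; verifying $\int_{-\infty}^{\infty}\frac{D}{\pi(k^2+\omega^2)}\,d\omega = \frac{D}{k}$ furnishes a clean consistency check via $\arctan$. The Fourier integral itself, the monotonicity in $\omega^2$, and the limit as $k\to 0$ are all routine.
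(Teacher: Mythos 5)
Your proof is correct and follows essentially the same route as the paper's: apply the Wiener--Khinchin identification of the PSD as $\frac{1}{2\pi}$ times the Fourier transform of the stationary autocovariance $\frac{D}{k}e^{-k|\tau|}$, split the integral at $\tau=0$ into two exponential integrals, and obtain the Lorentzian $S(\omega,k)=\frac{D}{\pi(k^2+\omega^2)}$. Your additions --- the explicit maximisation showing $\max_\omega S = S(0,k)=\frac{D}{\pi k^2}\to\infty$ as $k\to 0^+$, the alternative transfer-function derivation via $|H(\jmath\omega)|^2$, and the $\arctan$ consistency check recovering the variance $D/k$ --- go slightly beyond the paper, whose appendix proof derives only the PSD formula and leaves the reddening behaviour to be read off the expression and Figure~\ref{fig:indicators}.
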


Another relevant indicator showing an increase near bifurcation points is the Shannon entropy.
\begin{proposition}\label{theo:entropy}
For an Ornstein-Uhlenbeck process \eqref{eq:OU}, as a bifurcation point is approached (namely, as $k \to 0$), the solution is a stochastic stationary process whose (non-normalised) entropy $H_s(x) = \frac{1}{2}\left( \log (2 \pi \text{Var}) +1 \right)$ tends to infinity, as shown in Figure~\ref{fig:indicators}, upper right panel.
\end{proposition}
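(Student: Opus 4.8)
The plan is to reduce the statement to the classical formula for the differential entropy of a Gaussian law, since all the $k$-dependence enters only through the stationary variance, which Proposition~\ref{theo:var} already identifies as $\text{Var}=D/k$. Concretely, I would first argue that the stationary law of \eqref{eq:OU} is Gaussian, then substitute its variance into the Gaussian entropy formula, and finally let $k\to 0$.

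First I would establish Gaussianity of the stationary distribution. Solving the linear SDE \eqref{eq:OU} by variation of constants gives
\begin{equation*}
x(t) = x(0)\,e^{-kt} + \sqrt{2D}\int_0^t e^{-k(t-s)}\,dW_s,
\end{equation*}
where the Itô integral of a deterministic integrand against Brownian motion is a zero-mean Gaussian random variable. Hence $x(t)$ is Gaussian for every $t$, and as $t\to\infty$ its mean tends to $0$ while its variance converges to the stationary value $D/k$ computed in Proposition~\ref{theo:var}. Equivalently, one may verify directly that the zero-mean Gaussian density with variance $D/k$ is the stationary solution of the associated Fokker-Planck equation \eqref{eq:fokker_plank}. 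Thus the stationary process has density $p(x)=(2\pi\,\text{Var})^{-1/2}\exp(-x^2/(2\,\text{Var}))$ with $\text{Var}=D/k$.

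Next I would compute the differential entropy $H_s(x)=-\int_{-\infty}^{\infty} p(x)\log p(x)\,dx$. Inserting the Gaussian density, so that $\log p(x)=-\tfrac12\log(2\pi\,\text{Var})-x^2/(2\,\text{Var})$, and using $\int p\,dx=1$ together with $\int x^2 p\,dx=\text{Var}$, the two contributions collapse to
\begin{equation*}
H_s(x)=\tfrac12\log(2\pi\,\text{Var})+\tfrac12=\tfrac12\bigl(\log(2\pi\,\text{Var})+1\bigr),
\end{equation*}
which is exactly the claimed formula. Finally, substituting $\text{Var}=D/k$ and letting $k\to 0$, we have $\text{Var}\to\infty$, hence $\log(2\pi\,\text{Var})\to\infty$ and therefore $H_s(x)\to\infty$; in fact the divergence is logarithmic, $H_s(x)\sim\tfrac12\log(1/k)$.

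The calculation is routine, so the only genuinely delicate point is the first step: one must ensure the \emph{stationary} distribution exists and is Gaussian, rather than merely invoking the time-$t$ marginal. This is guaranteed here because $k>0$ makes the drift contractive, so the variance $D/k$ is finite and the law converges to a unique invariant Gaussian; but it is precisely this contraction that degenerates as $k\to 0$ -- the bifurcation limit -- causing the variance, and with it the entropy, to blow up. I would therefore phrase the argument for fixed $k>0$ and take the limit only at the very end.
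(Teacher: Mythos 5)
Your proposal is correct and follows essentially the same route as the paper: compute the differential entropy of the Gaussian stationary density to obtain $H_s=\tfrac12\left(\log(2\pi\,\text{Var})+1\right)$, then invoke $\text{Var}=D/k$ from Proposition~\ref{theo:var} so that $H_s\to\infty$ as $k\to 0$. The only difference is that you explicitly justify Gaussianity of the stationary law by solving \eqref{eq:OU} via variation of constants (or via the stationary Fokker--Planck equation), a step the paper's proof takes for granted by directly assuming $y\sim\mathcal{N}(\mu,\text{Var})$.
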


Figure~\ref{fig:indicators} shows the behaviour of these summary statistics, as functions of the distance $k$ from the bifurcation parameter, and for a fixed value of $D$ (in particular, we used $D = 0.1$).
According to the literature, the increase of each indicator as $k \to 0$ (clearly visible in Figure~\ref{fig:indicators}) indicates that the system is approaching the bifurcation point and therefore that there is a resilience loss.

\begin{figure}[ht!]
	\centering
	\includegraphics[width=\textwidth]{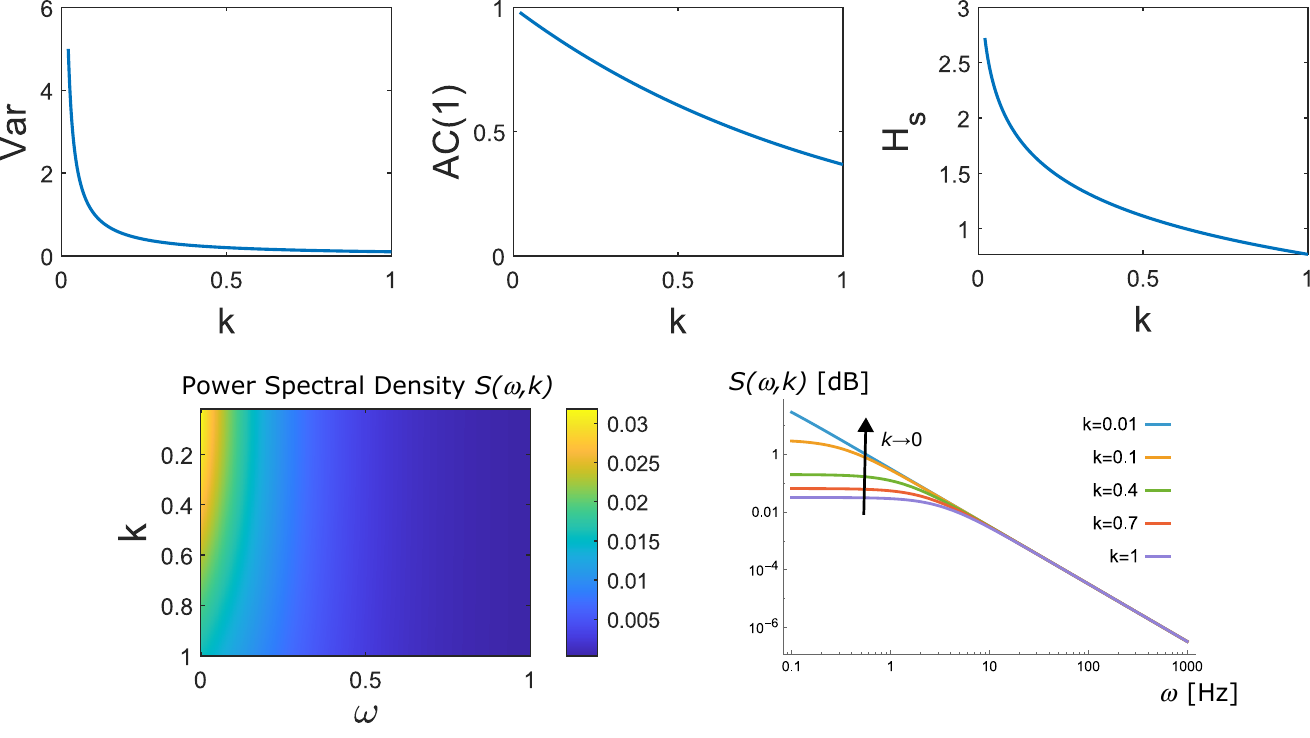}
	\caption{\footnotesize Behaviour of variance $\text{Var}$, autocorrelation at lag-1 $AC(1)$, Shannon entropy $H_s$ and power spectral density function $S(\omega,k)$ for an Ornstein-Uhlenbeck process \eqref{eq:OU} as the distance from the bifurcation parameter $k$ converges to $0$ from the right. The first three indicators increase when approaching the bifurcation point, while the spectral density function does so visibly only in the low frequencies (\emph{spectral reddening}), as is also visible in the graph on the bottom right, that shows $S(\omega,k)$ as a function of $\omega$ for different values of $k$.}
	\label{fig:indicators}
\end{figure}

The propositions above rely on surrogate models and are therefore valid regardless of the specific original system, provided that the modelling assumptions on stationarity and noise type are respected, and that the surrogate models provide a sufficiently accurate approximation of the original system, at least close to the  bifurcation point. The indicators are easily computed from distributions or from time series (using rolling windows and under the assumption of ergodic distribution), using their counterparts based on samples \citep{Kuehn2013a, proverbio2023systematic}. Hence, they can be easily extracted from simulated or empirical data. Figure~\ref{fig:indicator_var} shows an example of increasing variance before a tipping point and compares: the \emph{theoretical} expression of the stationary variance $\text{Var}=\frac{D}{k}$ as a function of the distance $k$ from the bifurcation value, as in Eq. \eqref{eq:variance}; the expression of the variance as a function of the distance $a-a_0$ from the bifurcation value, computed from \emph{numerical} simulations, for system \eqref{eq:hill-equation}; and the expression of the variance as a function of time based on \emph{empirical} results about COVID-19 data in Luxembourg \citep{proverbio2022performance}.

Resilience loss is associated with an \emph{increase} of the indicators, which needs to be quantified. When is the value of the indicator large enough to denote an imminent critical transition, or tipping point? Common methods employ optimised thresholds (to raise alerts in monitoring applications), or significant increases assessed via $\sigma$ or $p$-value analysis \citep{Boettiger2012b, proverbio2022classification, proverbio2022buffering}, or Kendall's $\tau$ scores to identify monotonic behaviour. In particular, Kendall's $\tau$ score \citep{kendall1938new} is defined as  
\begin{equation*}
    \tau = \frac{\#\text{concordant pairs} - \#\text{discordant pairs}}{M(M-1)/2},
\end{equation*}
where $M$ is the number of considered time points. Two generic points $(t_1,x_1)$ and $(t_2,x_2)$, with $t_1<t_2$, are a concordant pair if $x_1<x_2$ and a discordant pair otherwise. For constant sequences, $\tau=0$.

\begin{figure}[ht!]
	\centering
	\includegraphics[width=\textwidth]{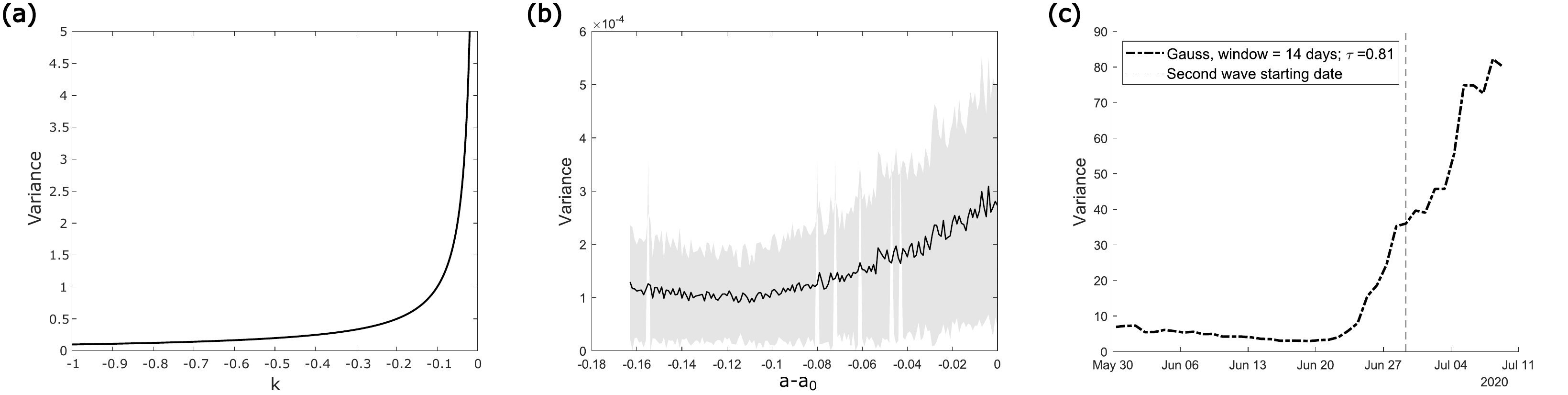}
	\caption{\footnotesize Increasing variance before a tipping point. (a) The theoretical expression of the variance as a function of the distance $k$ from the bifurcation value, shown in Eq. \eqref{eq:variance}, increases when approaching the bifurcation point ($k=0$). (b) The numerically computed expression of the variance as a function of the distance $a-a_0$ from the bifurcation value $a_0$, for the minimal model of gene activation in Eq. \eqref{eq:hill-equation}, increases when approaching the fold bifurcation point at $a_0 \approx 2.63$. The solid line corresponds to the mean over 500 repeated simulations, while the gray ribbon correspond to the mean $\pm 1$ standard deviation. (c) The variance computed from COVID-19 empirical data increases just before the second wave in Luxembourg (whose onset is marked with a dashed vertical line; adapted from \cite{proverbio2022performance}). The data were pre-processed with a Gaussian rolling window of 14 days. The increasing behaviour was assessed with a Kendall's $\tau$ score $>0.8$.}
	\label{fig:indicator_var}
\end{figure}

Stochastic indicators have been applied to case studies in systems biology, including phenomenological models for epithelial-mesenchymal transitions \citep{Sarkar2019}, biochemical reaction networks with feedback \citep{byrd2019critical}, microbiological yeast colonies \citep{dai2012generic}. In epidemiology, they have been tested on SIR- and SIS-like models \citep{brett2017anticipating,ORegan2013}, and on data from epidemics \citep{brett2020dynamical} and from the COVID-19 pandemic \citep{delecroix2023potential,proverbio2021dynamical}. Extensions to ecological systems have been proposed, \eg by \cite{scheffer2015generic}.

Their seemingly generic applicability and their ease of extraction from data supported the use of statistical indicators to gauge resilience loss in various systems. However, rigorous guarantees such as those offered by Propositions~\ref{theo:var}- \ref{theo:entropy} are valid only for stationary parameters and white noise. By leveraging theory by \cite{J.S.Allen2014, Gardiner1985, van1992stochastic}, these results have been partially generalised to state-dependent noise or ramping parameters \citep{proverbio2023systematic}, as well as non-Markovian noise \citep{Kuehn2021a}. However, in realistic scenarios encountered in natural systems \citep{Chen2017}, the results may not hold and thus the applicability of the indicators is hampered. Determining a safe space for operations, as well as finding the indicators that provide the best performance in terms of reliability and sufficient lead time, is still an open challenge. As a first step in that direction, \cite{proverbio2023systematic} developed an optimisation procedure to maximise the signal of resilience loss, and its lead time, in the case of state-dependent noise. Further refinements and extensions, including a general analysis of system structures to determine conditions for the applicability of various indicators, are left for future studies.

\subsection{Robust and structural analysis can frame resilience studies}
\label{sec:stab-for-res}

The resilience of a system is associated with a quantification of the possibility that the system undergoes a regime shift. Hence, assessing resilience loss to characterise the vulnerability of a system to a critical transition is only meaningful if such a critical transition is possible given the dynamics of the system of interest, \ie if the system \emph{can actually undergo a regime shift}, caused by stochastic noise or parametric uncertainties.
Otherwise, if we already know that no critical transition is possible (see, \eg Proposition 2.3 by \cite{Kuehn2013a}), resilience analysis may be neglected in favour of (structural) robustness considerations (see, \eg \cite{zhu2024disentangling}).

In the biological, ecological and epidemiological literature, it is often assumed that the system under study can -- and will -- undergo a critical transition \citep{Chen2012c, Liu2020b, Scheffer2009c}. However, it has been debated whether the burden of proof lies in supporting the hypothesis of unique or of alternative regimes \citep{estes2011trophic}. Alternatively, one may quantify the \emph{probability} of actually hitting a tipping point, given that a warning for resilience loss has been triggered. Under a Bayesian framework, such probability is \citep{boettiger2012early}:
\begin{equation}
    P(\text{CT} | \text{IRL}) = \frac{P(\text{IRL} | \text{CT}) P(\text{CT})}{P(\text{IRL})} \,,
    \label{eq:bayesian}
\end{equation}
where CT and IRL are the events associated with the occurrence of a Critical Transition (or, equivalently, of the system hitting a Tipping Point) and an Indication of Resilience Loss, respectively. According to Bayes' rule \eqref{eq:bayesian}, the quantity of interest $P(\text{CT} | \text{IRL})$ (\ie the probability of a critical transition actually occurring, given that an indication of resilience loss has been detected) depends on $P(\text{IRL} | \text{CT})$ (\ie the probability of detecting an indication of resilience loss given that a critical transition is about to occur), on $P(\text{IRL})$ (\ie the probability that an indication of resilience loss is detected, which is the subject of several studies inquiring the performance of related indicators, see, \eg \cite{Kuehn2021a,proverbio2023systematic}) and on $P(\text{CT})$ (\ie the \emph{a-priori} probability that the system undergoes a critical transition). 

Structural stability analysis and robust stability analysis (\emph{cf.} Chapter~\ref{ch:struct-an}) precisely address the latter point: they rely on qualitative, graph-theoretic information and on systems-and-control-theoretic methodologies to inquire whether alternative regimes (\ie attractors, or stable equilibria), including tipping points and critical transitions amongst them, may exist \emph{a priori} and therefore help rigorously assess $P(\text{CT})$. Structural analysis yields a binary outcome: it can only ensure that a critical transition is structurally impossible (hence, $P(\text{CT}) = 0$), or that a critical transition is structurally unavoidable (hence, $P(\text{CT}) = 1$), whereas it cannot offer a quantitative answer when a critical transition is possible with some probability.
Conversely, a probabilistic framework for robustness analysis, in the spirit of \cite{tempo2013randomized}, would return continuous values for $P(\text{CT}) \in [0, 1]$; see, \eg the recent preliminary work by \cite{sutulovic2024efficient} discussed in Section~\ref{sec:uq-gpc}. Robustness analysis is therefore often a necessary prerequisite to enable the correct interpretation and application of resilience results, and should be a fundamental preliminary step when studying the resilience properties of uncertain systems in nature.

\begin{example}[\textbf{Shallow lake ecosystem}]
Figure~\ref{fig:ecology} shows the reconstructed graph of interactions among species, for a biological system that captures population dynamics in a shallow lake.

\begin{figure}[ht!]
	\centering
	\includegraphics[width=\textwidth]{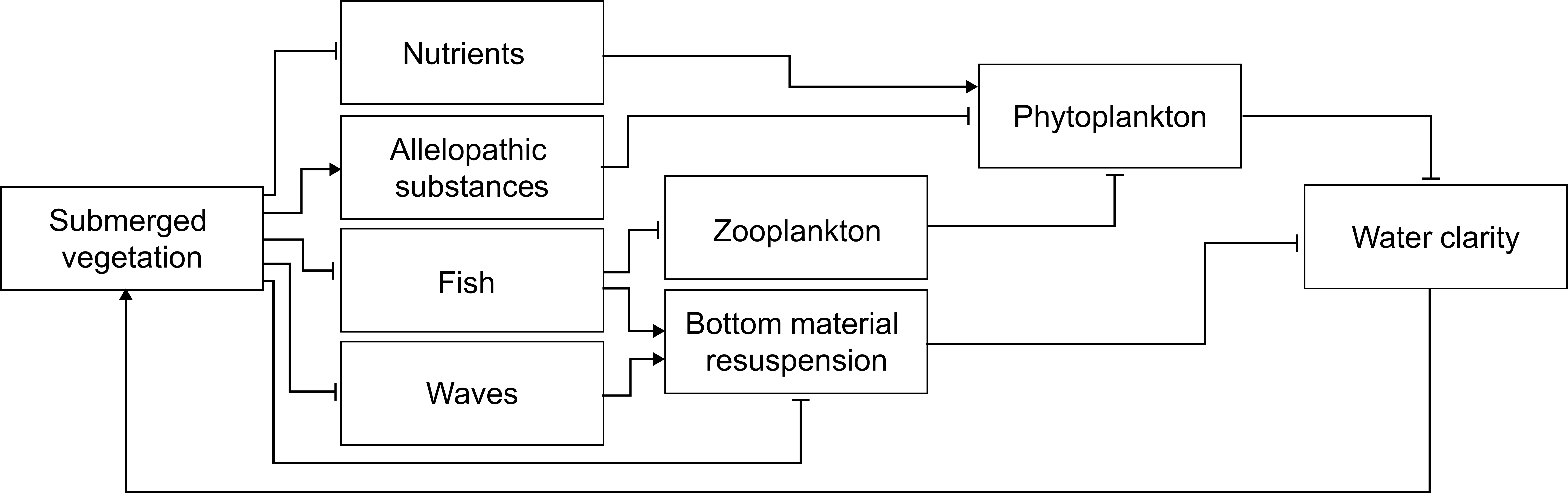}
	\caption{\footnotesize \small The reconstructed signed graph for a shallow lake system, adapted from \cite{Kefi2016}. Hammer-head arrows represent negative effects (\eg depletion, predation), while pointed arrows represent positive effects (\eg enhancement, sustainment).}
	\label{fig:ecology}
\end{figure}

Assuming knowledge of the graph, monotonicity of the transition rates between compartments (with the signs shown in the figure) and nonlinearity of the interaction from Submerged Vegetation (SV) to Water Clarity (WC) \citep{Kefi2016, scheffer2020critical}, structural methodologies from Chapter~\ref{ch:struct-an} can be applied to infer that the system can only exhibit bistability, whenever the equilibrium is destabilised, through a saddle-node bifurcation. In fact, since all the cycles in the graph are positive, the system is structurally a strong candidate bistable system according to the classification by \cite{Blanchini2014structural,Blanchini2015structuralclass,Blanchini2017e}; alternatively, checking that all the loops and paths from SV to WC are positive proves that the system can be bistable as per the results by \cite{Angeli2004}. Indeed, system bistability was empirically observed \citep{scheffer2020critical} and resilience indicators were correctly applied and employed to monitor the state of shallow lakes \citep{scheffer2012anticipating}. 
\end{example}

\begin{example}[\textbf{Brain network}]
The brain network in Figure~\ref{fig:parkinson} reconstructs cortex and basal ganglia interactions and has been suggested to be relevant to the development of Parkinson's disease \citep{Jones2014}. Robustness analysis can reveal whether its alteration may induce a regime shift from the healthy equilibrium to the disease equilibrium through bistability, thus triggering Parkinson's disease due to the bistable switch-like behaviour of the ganglia circuit.

\begin{figure}[ht!]
	\centering
	\includegraphics[width=0.8\textwidth]{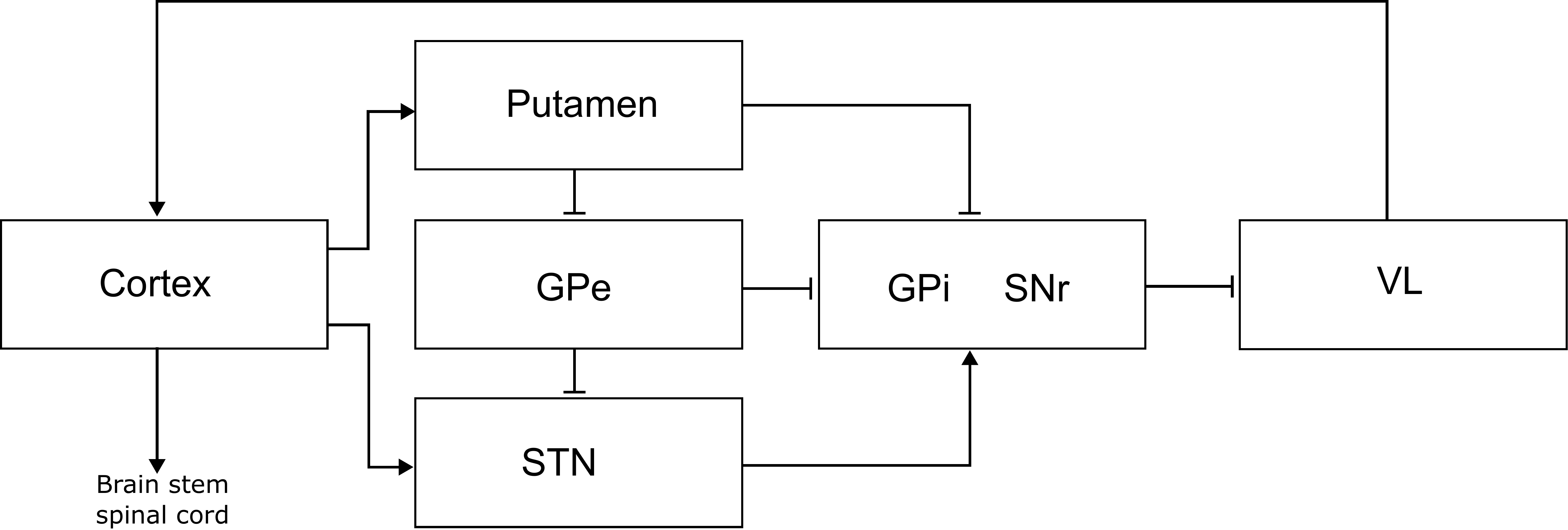}
	\caption{\footnotesize \small Reconstructed signed graph for dopaminergic pathways between the basal ganglia and the cortex (adapted from \cite{Jones2014}). The variables describe the activity of each brain region. Pointed arrows represent excitatory connections $(+)$, and hammer-head arrows inhibitory connections $(-)$. GPe: external segment of the globus pallidus; STN: subthalamic nuclus; GPi: internal segment of the globus pallidus; SNr: substantia nigra pars reticulata; VL: ventrolateral nucleus of the thalamus.}
	\label{fig:parkinson}
\end{figure}

Only one of the four cycles in the graph is positive (Cortex $\to$ Putamen $\to$ GPi SNr $\to$ VL $\to$ Cortex), hence the system cannot be structurally a strong candidate bistable (according to the classification by \cite{Blanchini2014structural,Blanchini2015structuralclass,Blanchini2017e}). Bistability phenomena may arise for specific values of the parameters, \ie the strength of the various interconnections (see, \eg \cite{Cloutier2012}), but oscillatory behaviour cannot be ruled out. For parameter values that do not allow for bistability, the computation of resilience indicators for this system would be misleading and not really informative about potential resilience loss, since structural stability analysis cannot guarantee the emergence of multiple stable regimes. 
\end{example}

The possibility to switch from a healthy equilibrium condition to a pathological equilibrium condition has been studied in two steps, first through the \emph{theoretical} structural or robust assessment of the \emph{possibility} of a bistable behaviour, and then through a \emph{computational} bifurcation analysis, in several other examples in medicine and physiology -- \eg for the onset of diseases such as Mal de Debarquement syndrome \citep{Burlando2022,Mucci2020}, amyotrophic lateral sclerosis \citep{Burlando2020} and fibromyalgia \citep{Demori2022,Demori2024} -- as well as in entomology -- \eg \cite{breda2022deeper} propose a model of honey bee health in the presence of several stressors, including parasites, pathogens and toxic compounds such as pesticides, and show that a bistability mechanism leads to two possible extreme outcomes in terms of survival or extinction of the bee colony.

Structural and robust analysis are thus precious not only for model falsification (see, \eg \cite{Angeli2012,Giordano2018}) but also to enable the correct application and interpretation of resilience indicators for unknown systems \citep{boettiger2012early}.
Going beyond a case-by-case assessment depending on the specific choice of parameters and extending formal results on the expected bifurcation type, based on qualitative and semi-quantitative information about the network structure and the properties of the system's functional expressions, is a challenging topic for future studies. Systematic methods to support informed modelling choices are fundamental in this context, since models (at least surrogate ones) are currently paramount to enable inferring bifurcations from data \citep{scheffer2015generic}. Integrating robustness and resilience would therefore offer a powerful and reliable framework to address open challenges in systems biology and epidemiology.

\subsection{Resilience indicators for complex networks}
\label{sec:ind-on-net}

The results discussed so far were focused on low-dimensional systems or on mean-field approximations of large networks. Here, we review resilience indicators tailored to encompass large-scale complex networks \citep{bianconi2021higher,Hsiao2018}. 

\subsubsection{Leading linearised eigenvalue}
For multidimensional systems, it is possible to look at the leading eigenvalue of the diagonalisation. Consider the system 
\begin{equation}\label{eq:MultiDimSyst}
    \dot{x} = f(x), \quad x\in \mathbb{R}^n
\end{equation}
and let  $\bar{x}\in \mathbb{R}^n$ be an equilibrium point; without loss of generality, $\bar{x} = 0$. We can study the local behaviour of the system by considering its linearisation around the equilibrium point, associated with the Jacobian matrix $J = \frac{\partial f}{\partial x}|_{x=\bar{x}}$. Then, the multivariate Ornstein-Uhlenbeck process corresponding to the linearisation of \eqref{eq:MultiDimSyst} can be written as
\begin{equation}
    d X(t) = J X(t) dt + \sigma d W \, ,
\end{equation}
where $X$ is an $n$-dimensional vector of stochastic processes.
Assuming that the system Jacobian is diagonalisable, we can enforce a coordinate transformation induced by the invertible square matrix $Q$, whose columns are eigenvectors of $J$, so that $X'(t) = Q^{-1}X$, $\sigma' = Q^{-1}\sigma$, and the diagonal matrix $\Lambda=Q^{-1} J Q$ has the eigenvalues of $J$ on its diagonal.
The coordinate transformation leads to the diagonal system
\begin{equation}
    d X'(t) = \Lambda X'(t) dt + \sigma' d W \, .
    \label{eq:multivarOU}
\end{equation}

If the diffusion terms have multivariate \emph{normal} distributions, statistical moments for Eq. \eqref{eq:multivarOU} can be computed exactly.
Multivariate Ornstein-Uhlenbeck processes are well studied and several results are proposed in the literature \citep{J.S.Allen2014,Gardiner1985,meucci2009review, vatiwutipong2019alternative}.
Notably, considering the leading eigenvalue in the multivariate case allows us to obtain results similar to those for the univariate case, and thus enables the use of the resilience indicators discussed in Section~\ref{sec:monomodal_ind}.

For simplicity, consider the case of $J$ with real eigenvalues, which are assumed to be ordered in matrix $\Lambda$ so that $\Lambda_{11} \geq \Lambda_{22} \geq \dots \geq \Lambda_{nn}$, and introduce 
    \begin{equation}\label{eq:CapitSigma}
        \text{vec}(\Sigma'_{\infty}) := [-(\Lambda \oplus \Lambda)]^{-1} \text{vec}(\Sigma') \, ,
    \end{equation}
    where $\Sigma'= \sigma'\sigma'^\top$, $\text{vec}$ is the vectorisation operator (which transforms matrices into column vectors by stacking the matrix columns), and $\oplus$ is the Kronecker sum, defined as $A \oplus B = A \otimes I_b + I_a \otimes B$ (where $a$ and $b$ are the sizes of the square matrices $A$ and $B$, respectively, $I_h$ is the identity matrix of size $h$, and $\otimes$ is the Kronecker product). Due to the properties of the Kronecker sum, $\Lambda \oplus \Lambda$ is diagonal, and it has been shown that only its first element approaches zero when the system gets close to a bifurcation point \citep{Gardiner1985}; its inverse is also diagonal, and its first element approaches infinity. Since
    \begin{equation}
     \text{vec}(\Sigma'_{\infty})   = \begin{bmatrix}
            \text{Var}(X') \\
            \text{Cov}(X'_1, X'_2) \\
            \vdots 
            \end{bmatrix} \, ,
    \end{equation}
where $X'_i$ is the $i$th entry of vector $X'$,
it follows that, as $t\to \infty$, the multivariate variance tends to infinity along the first eigendirection. This property is also conserved when reverting to the original coordinates and studying $\Sigma_\infty$, defined similarly to $\Sigma'_{\infty}$ in \eqref{eq:CapitSigma}. Along the same lines, the auto-covariance of $X'(t)$ can be computed as
    \begin{equation}
        \lim_{t \to \infty} \text{Cov}(X'(t), X'(t + \Delta t)) = e^{\Lambda \Delta t}\Sigma'_{\infty} \, .
    \end{equation}
    
Since matrix $\Lambda$ is diagonal,
    \begin{equation}
        e^{\Lambda \Delta t} = \text{diag}\left\{ e^{\Lambda_{11} \Delta t}, \dots e^{\Lambda_{nn} \Delta t}  \right\} \, .
    \end{equation}
    
The largest eigenvalue $\Lambda_1$ approaches 0 when the system gets close to a bifurcation point \citep{Gardiner1985}, and hence the first element of the auto-covariance matrix approaches $\text{Var}(X'_1)$, while the other elements do not display relevant changes in the vicinity of the bifurcation point. 

Stemming from these results, data-driven methods employing PCA (Principal Component Analysis) or MAF (Min/Max Autcorrelation Factor) have been proposed \citep{Weinans2021} to assess resilience in networks. Alternatively, \cite{morr2023anticipating} have directly gauged the first eigenvalue by interpolating Eq. \eqref{eq:multivarOU} to empirical data.

\subsubsection{Dimension reduction}\label{sec:ind-on-net-dimred}

Several network-based techniques have been suggested to reduce complex large-scale dynamical networks to low-dimensional systems, through aggregation procedures involving either the links or the vertices, and then extract from the low-dimensional model information related to the resilience properties of the original network. Most of these techniques are heuristic, and providing them with solid mathematical foundations is an essential open problem in the field. 

One of the first attempts is due to \cite{Gao2016}, who proposed a weighted dimension reduction on homogeneous networks. The method aims at ``collapsing'' a large system of equations into a scalar system, while retaining the qualitative stability properties of the original multidimensional system. The fundamental idea is to assign weights to each of the nodes, proportional to their degree, and thereby obtain new \emph{effective dynamics} in $\mathbb{R}$. Specifically, consider a system with state vector $x=[x_1\dots x_N]^\top \in \mathbb{R}^N$, whose scalar components have dynamics
\begin{equation}
	\dot{x_i} = F(x_i) + \sum^N_{j=1} A_{ij}G(x_i,x_j) \; \quad t\geq 0,
	\label{eq:multi_dim_eq}
\end{equation}
where $F$ describes the dynamics of each subsystem in isolation, $G$ describes the interaction between connected subsystems, and $A_{ij} \geq 0$ is an entry of the adjacency matrix expressing the strength of the interconnection (if any) between subsystems $i$ and $j$.
As an example, this model can capture a large-scale gene regulation system such as that in Eq. \eqref{eq:gene-regulation} where only activating functions are present. Assume that the out-degree (sum of the weights of the outgoing links) and the in-degree (sum of the weights of the incoming links) are the same for each node. The \emph{effective state} $x_{\text{eff}}$ of the system is obtained by the average nearest neighbours activity
\begin{equation}
	x_{\text{eff}} = \frac{\mathbf{1}^\top A \mathbf{x}}{\mathbf{1}^\top A \mathbf{1}} = \frac{\langle s^{\text{out}}x \rangle }{\langle s \rangle} \; ,
	\label{eq:xeff}
\end{equation}
where $\langle . \rangle$ is the averaging operation, $\mathbf{1}$ is the all-ones vector of the appropriate size, $s^{\text{out}}$ is the out-degree and $\langle s \rangle = \langle s^{\text{out}} \rangle = \langle s^{\text{in}} \rangle$ is the weighted average degree. Also, an \emph{effective coupling parameter} can be computed as
\begin{equation}
	\beta_{\text{eff}} = \frac{\mathbf{1}^\top A s^{in}}{\mathbf{1}^\top A \mathbf{1}} = \frac{\langle s^{\text{out}} s^{\text{in}} \rangle }{\langle s \rangle} \; .
\end{equation}
Then, the $N$-dimensional system in \eqref{eq:multi_dim_eq} can be reduced to a scalar equation 
\begin{equation}
	\dot{x}_{\text{eff}} = F(x_{\text{eff}}) + \beta_{\text{eff}}G(x_{\text{eff}},x_{\text{eff}}) \,,
 \label{eq:1d_gene-reg}
\end{equation}
and one can hope that studying the stability, robustness and resilience properties of \eqref{eq:1d_gene-reg} can shed light on the analogous properties of the original system \eqref{eq:multi_dim_eq}. 
In fact, the bifurcation diagram of $x_{\text{eff}}$ with respect to the bifurcation parameter $\beta_{\text{eff}}$ includes the transition point from one stable equilibrium to another, which is fully determined by the dynamics and not by the network topology. Mapping the network \eqref{eq:multi_dim_eq} into the aggregate scalar equation \eqref{eq:1d_gene-reg} allows us to apply the potential landscape analysis in Section~\ref{sec:potential_land}. Overall, the technique aims at estimating reduced/approximate \emph{resilience functions} \citep{li2021resilience}, which are bifurcation diagrams of bistable systems undergoing saddle-node bifurcations, akin to that in Figure~\ref{fig:res_func}; depending on the visualisation technique, they can also be easily mapped to diagrams such as that in Figure~\ref{fig:gene-reg-phase-portrait} or, by integration, to landscapes such as that in Figure~\ref{fig:pot_land}.  

In case of heterogeneous networks or of other types of resilience functions (hence, of bifurcations), alternative methods can be employed, involving, \eg spectral dimension reduction \citep{Laurence2019, Masuda2022}, decoupling methods \citep{Duan2022, Paul2017} or ad-hoc assumptions on reduction closure \citep{Jiang2018}. This research direction, relatively novel in network science, can fruitfully benefit from cross-interactions with the control and estimation community and its consolidate expertise on bistability detection \citep{Reyes2020} and model order reduction \citep{Cheng2021}, to propose methods that yield formal resilience indicators, accompanied by rigorous theoretical guarantees.

Other relevant approaches for model dimension reduction in the literature exploit parameter space compression \citep{machta2013parameter} and model manifold boundaries \citep{transtrum2014model,pare2019model}, where the latter has also been applied to Michaelis-Menten reaction dynamics \citep{pare2015mmr}.

\subsubsection{Dynamical network biomarkers}

Dynamical network biomarkers couple the theory of leading eigenvalues passing through bifurcation values and the knowledge of biological networks, so as to infer changes in the correlation between relevant nodes that constitute early warning signals for sudden alterations of the system behaviour, including deterioration of complex diseases \citep{Chen2012c}. The method involves the identification of core subnetworks, for which monitoring the evolution of variance correlation would alert for possible impending tipping points. Coupling formal methods and heuristics to derive resilience indicators has enabled important applications in systems biology and medicine \citep{Cohen2022, Yang2018a}, and initiated the study of network-specific indicators for system resilience. Although they were initially designed for distribution-type biological data, extensions to time-series networked data, including epidemiological ones, have been suggested \citep{chen2019detecting, delecroix2023potential}. A recent survey on dynamical network biomarkers is offered by \cite{Aihara2022}.
Other \emph{network biomarkers} have been proposed based on other summary statistic indicators, such as autocorrelation \citep{Mojtahedi2016b} or Jensen-Shannon divergence \citep{yan2021identifying}. Applications range from cell-fate decision, to diseases like prostate cancer, bladder urothelial carcinoma, or viral infections. However, such methods are skewed to data-driven approaches and would drastically benefit from theoretical validation, to improve their interpretability and verify their rigorous application domain.

\subsubsection{Global network-based methods}

The methods described so far rely on the linearisation of a complex network around an equilibrium. However, capturing the intrinsic nonlinearities of systems that experience sudden tipping points can provide further insight and global methods have been proposed, albeit limited to specific cases. \cite{Loppini2019} suggest that loss of low-degree nodes may be an early warning signal of resilience loss for the whole networked system, and employ algorithmic techniques to gauge structure-based resilience (akin to the concept of robustness in network theory).
\cite{masuda2024anticipating} mix various signals from multiple nodes, obtaining a computational average over the complete network. \cite{morr2024internal} discuss the role of observables in multidimensional systems, to guarantee the effective extraction of early warning signals.
\cite{Montanari2022} extend functional observability analysis (assessing whether a functional of the state variables can be reconstructed from measurements) of complex nonlinear systems to link observability, topological features of system attractors and resilience indicators so as to offer early warnings for epileptic seizures; the smallest singular value derived from time-series data is taken as a proxy for the system observability, and is shown to serve as a resilience indicator predicting switches from resting to epileptic brain states. The method still requires extensive validation on multiple time series and other models, but suggests a promising direction for the study of strongly nonlinear phenomena.

\subsection{Resilience indicators in biology and epidemiology}

Indicator-based resilience studies are particularly popular in ecology \citep{bathiany2024resilience,Carpenter2011,jiang2019harnessing,scheffer2012anticipating} and climate science \citep{ben2023uncertainties,Boers2021, boulton2022pronounced, hummel2024inconclusive,ritchie2021overshooting, thompson2011predicting}, where the interested reader may find most applications. However, in recent years they have also been investigated in systems biology and epidemiology, mostly with the aim of anticipating the emergence of undesired regimes. 

In systems biology and medicine, indicator-based resilience methods were used to anticipate tipping-like behaviours in (almost) model-free systems \citep{Korolev2014, Trefois2015a}. Theoretical studies addressing gene regulation and epithelial-mesenchimal transitions showed that indicators can be used to monitor and alert \citep{Sharma2016}, and optimisation protocols have been developed to anticipate biological regime shifts from noisy and distribution-based data \citep{proverbio2023systematic}. Hypotheses have been built around their use in assessing brain dynamical states, \eg to predict epileptic seizures \citep{Maturana2020}, although with varying success \citep{Wilkat2019}. As mentioned, dynamical network biomarkers \citep{Aihara2022} are an active area of research combining indicators and heuristics. Overall, a systematic characterisation of system features and indicator performance, together with formal characterisation and applicability assessment, is needed to advance this promising and important research, and build a complete framework that supports estimation, prediction, targeted interventions and control. 

The interest in epidemiological applications, further propelled by the COVID-19 pandemic, relies on the fact that the transition from a disease-free state to an outbreak, or to endemicity, can be understood as a bifurcation phenomenon (\cite{ORegan2013}; see also the examples provided in Section~\ref{sec:bif-normal-forms}). Theoretical and applied research in this area often aims at enriching monitoring toolboxes for prompt responses in the detection of epidemic outbreaks \citep{brett2017anticipating, brett2020detecting, Horstmeyer2018}. The effectiveness of resilience indicators has been assessed for various types of disease transmission, network topology and data sources \citep{Alonso2019, southall2020prospects}, showing varying success in correctly identifying resilience loss. This issue is primarily due to the relative time scales of contagion and noise type in data, as pointed out by \cite{Dablander2021, proverbio2022performance}; hence, more robust and transferable methods are needed. Prospects and applicability of resilience indicators in epidemiology have been surveyed by \cite{Southall2021}, from a theoretical standpoint, and \cite{delecroix2023potential}, with an application-oriented perspective. They highlight the importance of developing generic indicators that can be applied even with little knowledge about the system, but recognise the potential limitations when the available knowledge about the system is insufficient. Systematic studies that quantify or list the necessary information, including (partial) knowledge of the network structure, are still needed. Leveraging resilience indicators for (epidemic) control is also an uncharted research area \citep{Alibakhshi2025}.

\subsection{Open challenges}

The challenge of developing a coherent framework to investigate resilience, and loss thereof, has recently started being tackled. This quest shows promising avenues to which the expertise of the control community can contribute significantly; as an example, the recent work by \cite{pirani2023network} expands the investigation of critical transitions with notions from the theory of control over networks. The scarce information available about complex systems, as well as the need for generic and data-driven methods, fuelled the development of data-based indicators for resilience loss, which do not require full mechanistic and quantitative models, but are based on a careful assessment of the system underlying structure and robustness, hence pointing at the necessity to integrate the two concepts and methodological paradigms.

In this chapter, we have considered resilience loss driven by parameter variations inducing bifurcations, in the presence of stochastic fluctuations or noise. We have surveyed several approaches, their potential and limitations, focusing in particular on applications to systems biology and epidemiology, and we have pointed at opportunities for cross-pollination between network science and systems theory. Further efforts are still required in order to build a unique systematic framework that tackles system robustness and resilience. 

Moreover, it is still unclear how one can determine which bifurcations best describe the regime shift that may occur in a networked system, given its structure or relevant time-series data. It has been suggested \citep{Bury2020, ditlevsen2007climate, Meisel2015a, proverbio2022classification} that resilience indicators and stochastic indicators based on surrogate models (see Section~\ref{sec:monomodal_ind}) may be extracted from data subsets to validate bifurcation modelling choices. Also in this direction there is vast room for improvement, to systematise model selection and identification. 

Finally, as briefly mentioned in Section~\ref{sec:preliminar_res_ind}, other types of regime shifts may occur, different from noisy bifurcations. In some cases, including noise-induced phenomena, rate-induced instabilities in non-autonomous systems, or structural network modifications, combining robustness and resilience definitions could enable significant advances towards the understanding, prediction and control of complex networked systems.

\newpage
\section{Concluding discussion}
\label{ch:conclusion}

The interplay of mathematical disciplines with biology and epidemiology is unravelling new knowledge, in terms of both theoretical methodologies and algorithms, and their applications to the life sciences.
Complementing empirical insight with mathematical modelling and analysis enables novel scientific discoveries, and allows us to address old and new questions from different perspectives.
Capturing the essence of phenomena using interpretable models, and providing sharper and testable definitions to fundamental concepts, allows us to provide deeper insights into natural principles and enables both theoretical predictions and the design of control strategies to enforce the desired properties and emergent behaviours.

\subsection{Uncharted directions}
\label{ch:unch-dir}

Within the broad field of dynamical systems, this monograph focused on the narrower scope of understanding structural, robust and resilient properties of systems in biology and epidemiology (with additional hints to systems in ecology, neuroscience and medicine).
The concepts we discussed can be easily extended to other disciplines.
In fact, studying the dynamic behaviour of uncertain nonlinear interconnected systems is fundamental to understand and control phenomena not only in the life sciences, but also in climatology, sociology, opinion dynamics, economy, finance, physics and engineering (electronics, transportation, communication networks, embedded systems, multi-agent robotics), as long as they can be cast within the same modelling framework (which is often the case, provided that suitable assumptions hold). An example is the dynamic spreading of diseases, cyberattacks or opinions: predicting the evolution and containing contagion, or fake news, is crucial in all these cases and the respective dynamics are akin. Due to the strong analogies between phenomena in different contexts, analysis and control results can be then immediately applicable, or easily generalisable, to various systems across science and engineering, to gain new knowledge and better understand real phenomena, as well as to devise new paradigms for control, coordination, system design and optimisation.

Many of the topics we have discussed in this monograph offer several open problems for cutting-edge research; we have already outlined in each chapter possible promising directions for future research aimed at expanding our toolkit of mathematical theories and algorithms, including the integration of structural and probabilistic methodologies, and the development of a unified formal theory for resilience definitions and related indicators. We briefly mention here issues that we could not examine more in depth, so as to sketch other prominent research questions in the thriving field revolving around robustness and resilience for dynamical networks in nature.

\subsubsection{Bifurcations and normal forms for control}

Bifurcation normal forms are known mathematical constructs, whose relevance for control is still undermined. After developing systematic taxonomies and exploring the related properties, a general framework is necessary to distinguish and validate them as minimal models for unknown dynamics and perform model selection \citep{Bates2011, Bury2020}. 

When validated and tractable models are available, the associated normal forms can be derived by employing various techniques, such as general center manifold calculation \citep{Guckenheimer2009a,kuehn2021universal} or Taylor expansion around the ``critical'' values $x = x_c$ and $p = p_c$ up to second or third order \citep{strogatz2018nonlinear}: 
\begin{eqnarray*}
\dot{x} &=& f(x,p) = f(x_c, p_c) +(x-x_c) \left. \frac{\partial f(x,p)}{\partial x} \right\rvert _{(x_c,p_c)} \\
    & +& (p-p_c)\left. \frac{\partial f (x,p)}{\partial p} \right\rvert _{(x_c,p_c)} + \frac{1}{2} (x-x_c)^2 \left. \frac{\partial^2 f(x,p)}{\partial x^2} \right\rvert _{(x_c,p_c)} + \dots \, .
\end{eqnarray*}
However, this procedure is not always possible and it is often necessary to hypothesise the bifurcation to decide at which order the expansion should be truncated. Graphical inspection might suggest the expansion order: one can check how the diagram $(f(x,p),x)$ changes when varying $p$, and compare its local properties next to the $x$-axis with those of normal forms. Other approaches rely on spectral analysis of the Jacobian and verification of the conditions defining each normal form, and thus can rarely be applied when models are complex. 

Methods to infer bifurcation dynamics from structural information or from data (of various quality) are thus necessary, and can also foster the design of control strategies for the system. Not only attractors can be considered, but also limit cycles and pseudo-regimes \citep{lemmon2020achieving}, strange attractors \citep{strogatz2018nonlinear} or global bifurcations \citep{izhikevich2000neural}. Deriving performing resilience indicators for these cases is an intriguing challenge, as well as verifying methods for multivariate systems without resorting to one-dimensionalisation through normal forms. 
This interesting research avenue is also suggested by a recent paper focusing on resilience of dynamical systems \citep{Krakovska2024}, which provides an ample classification of resilience indicators.

Finally, several papers in control \citep{bizyaeva2022nonlinear, cathcart2023proactive,leonard2024fast} are building on the concept of complex bifurcations to steer systems towards consensus in a fast and flexible manner. Leveraging data-driven indicators that alert about an approaching bifurcation would provide information about the proximity to consensus-breaking, and would provide useful information to control systems at the edge of critical transitions.

\subsubsection{Dimension reduction and model order reduction}\label{sec:dimred}

The control community has widely studied, also in the context of systems biology, methods for model order reduction \citep{Feliu2012,Rao2013,schaft2015} and decomposition of dynamical systems \citep{Soranzo2012}, aimed at constructing models of lower dimensionality, which are more amenable to analytical study and at the same time retain key properties, such as asymptotic stability, monotonicity, or positivity.

The network community, conversely, pursues the development of dimension reduction methods aimed at preserving and predicting resilience properties and tipping mechanisms \citep{Gao2016,Kundu2022, Laurence2019, Thibeault2020, Tu2021, Vegue2023, Wu2023, Yu2022}. 
As discussed in Section~\ref{sec:ind-on-net}, the goal is to find a reduction of system dimension, based on networks properties such as mean degree, spectral eigenvalues or other network metrics, to obtain a low-dimensional system to which resilience indicators can be more easily applied. Other approaches use a data-driven version of the Central Limit Theorem \citep{Golubitsky2003a} to constrain time series around low-dimensional dynamics, and then study their resilience properties. Potential extensions are further reviewed by \cite{Liu2020b}. 

Combining the approaches for model order reduction and dimension reduction could pave the way for improved system understanding and control, thus allowing an analytical assessment of robustness and resilience (possibly with integrated structural and probabilistic approaches) in real-world applications.

\subsubsection{Higher-order networks}

Beyond classical networks \citep{Albert2002, Barabasi2013, Boccaletti2006, Newman2003}, an array of extensions have been recently proposed to study systems at all scales \citep{Klein2020}. Multilayer networks \citep{Boccaletti2014} describe multiple intertwined modes of contact formation within a set of nodes, with out-layer edges between the same nodes (\emph{multiplex}, see for instance \cite{Battiston2017}) or different ones (\emph{interconnected networks}, see for instance \cite{kinsley2020multilayer}), static or dynamic \citep{bianconi2018multilayer, Nicosia2013}, homeogenous or weighted \citep{Menichetti2014}. Higher-order networks, as well as simplicial complexes and triadic interactions \citep{Battiston2021, bianconi2021higher, Bick2023, cisneros2021multigroup, Majhi2022, muhammad2006control}, describe group interactions that are more complex than pair-wise ones \citep{Battiston2020}. These models are increasingly used to model biological regulation \citep{salnikov2018simplicial} and epidemic spreading phenomena \citep{iacopini2019simplicial,li2021contagion}. 

Unexpected and explosive transitions have been recently observed in higher-order networks \citep{millan2020explosive}. Future research may investigate their robustness and resilience, as well as derive indicators to efficiently and systematically assess such properties. Such a research endeavour requires refining and developing methods to identify higher-order networks in biological systems \citep{Lotito2022}, to derive dimension reduction techniques \citep{Ghosh2023} and to solve structural stability problems, as well as systematic studies on their efficacy with limited available information. Assessing the robustness of higher-order networks with multiple spreading phenomena would be fundamental to advance our understanding of coupled diffusion, \eg in the context of epidemics evolving together with information spreading \citep{Wang2022}. 
 
\subsubsection{Combining analytical insights and data-driven methods} 

Data-driven and machine learning approaches are starting to be used to learn resilience properties from data \citep{brett2020detecting, Bury2021, chakraborty2024early, grassia2021machine, huang2024deep}.
The limitations of machine learning-based techniques, due to data scarcity and poor interpretability, could be circumvented by coupling them with models based on bifurcations, to create training sets while going beyond the linearisation of normal forms. However, preliminary studies are currently restricted to easy cases, and aimed at anticipating critical transitions rather than fully characterising the system's resilience properties. 
Despite the growing interest, abundant, curated and systematised data on regime shifts are still missing, and the low agreement between resilience definitions (see Chapter~\ref{ch:rob-res-sta-mod}) hinders the development of more complex algorithms across fields. While this monograph paves the way to addressing the issue related to resilience definitions, developing datasets and validating algorithms is a long-term endeavour. Data-driven and model-driven approaches are expected to complement and support each other, towards a better understanding and assessment of the robustness and resilience of dynamical systems.

\subsubsection{Model validation and model falsification}

Qualitative and semi-quantitative information from structural analysis is key for model validation or falsification \citep{Anderson2009,Angeli2012,Bates2011,Porreca2012}, by comparing structural predictions to experimental results. The fundamental role of such a preliminary analysis for the subsequent application of resilience indicators has been discussed in Section~\ref{sec:stab-for-res} and can be further developed. Moreover, structural approaches help predict necessary outcomes, stemming from the system structure, or rule out incompatible behaviours, to be tested against experimental results. If behaviours that are structurally incompatible with the considered model are nonetheless observed in the data, then the model is \emph{structurally unsuitable} and deserves radical amendment. Otherwise, the model survives the checkpoint and can be further tested using additional methods. 
For example, \cite{Salerno2013} used robustness analysis to validate a model of galactose uptake genes for \emph{Saccharomyces cerevisiae}, which has been shown to exhibit structural bistability \citep{Cosentino2012b}; another example of tests on regulatory networks is provided by \cite{Batt2005}, while epidemic models are discussed \eg by \cite{Pare2020analysis}.
\cite{Angeli2012} propose an approach to invalidate biological models using monotone systems theory; conversely, \cite{Ascensao2016} provide examples of model falsification based on non-monotonicity.
Analogous techniques can be used to compare ``competitive'' alternative models and perform counterfactual analysis, as suggested by \cite{Anderson2009,Bates2011,Giordano2018,Hamadeh2011,Polynikis2009,Porreca2012}.

\subsubsection{Control}

Modelling systems in biology and epidemiology as dynamical networks allows us enforce a desired global property through targeted local actions inspired by strategies developed in engineering, such as pinning control (controlling a whole networked system through strong local feedback control actions that only affect some of the nodes, see \cite{WangSu2014} and also \cite{BDG2021}) and network-decentralised control (controlling a whole networked system through local control actions that affect the interactions among the nodes, for instance by deciding the flows associated with the links, see \cite{IftarDavison2002} and also \cite{Blanchini2013,Blanchini2015b,Blanchini2016compartmental,Giordano2016smallest}), whose concept is visualised in Figure~\ref{fig:pinn_dec}.
In this respect, a fundamental first step is to identify the dominant nodes (or links) that can be controlled in order to govern the behaviour of the whole system; integrated structural and probabilistic approaches for control and decision could be applied to whole families of uncertain networked systems, targeting the crucial parameters and key network motifs identified in the analysis phase.

\begin{figure}[ht!]
	\centering
		\includegraphics[width=0.4\textwidth]{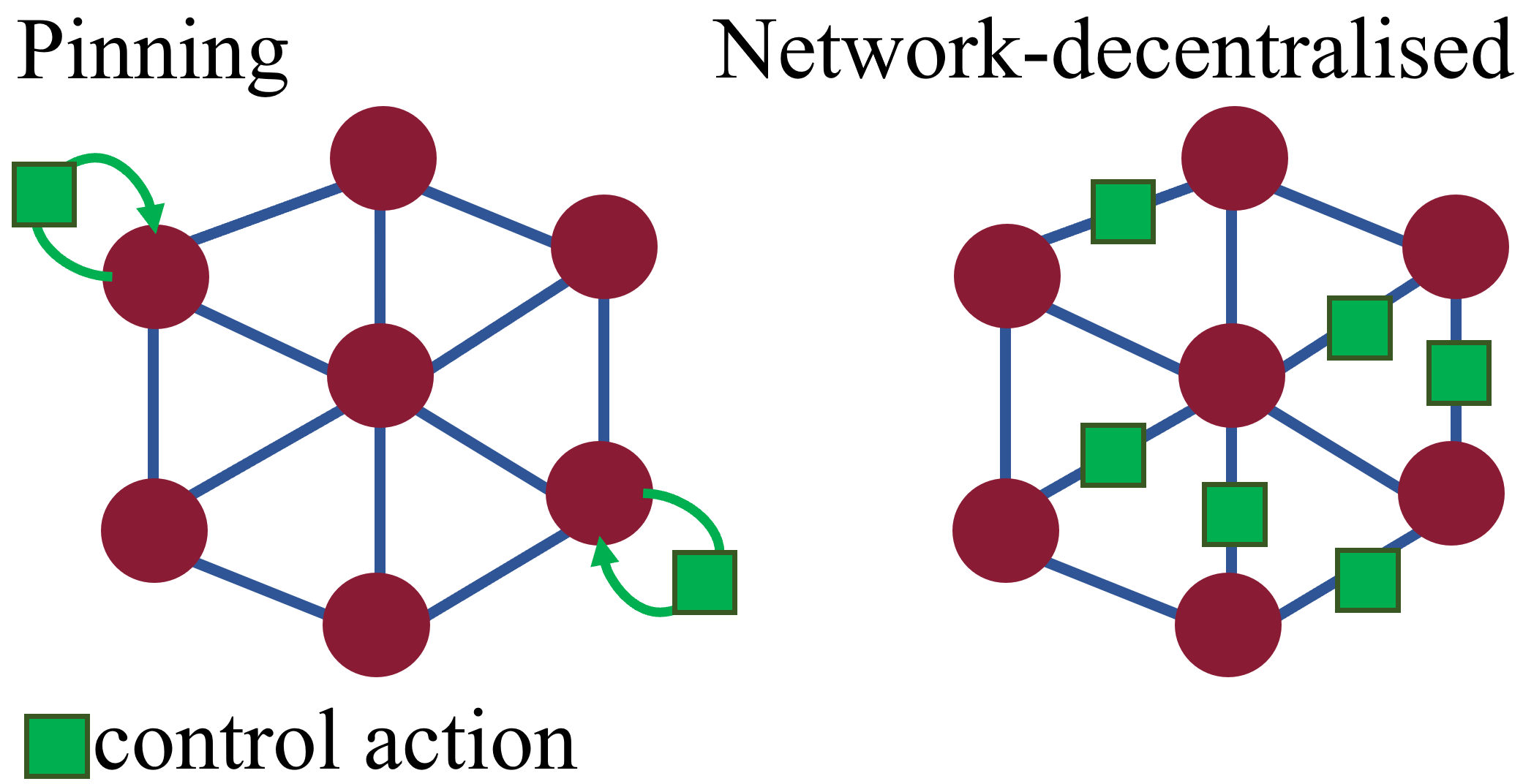}
	\caption{\footnotesize \small Pinning control and network-decentralised control strategies. }
	\label{fig:pinn_dec}
\end{figure}

Pinning and decentralised strategies are particularly relevant for natural systems, where a different paradigm is needed with respect to the classic sensor-controller-actuator feedback paradigm employed in engineering: for systems in biology and epidemiology, control decisions often need to be made based on limited information (sensors may be impossible to place, and only some specific quantities can be measured or observed) and control actuation is strongly limited in nature. Hence, only some specific types of actions are possible.
Control and decision making is particularly crucial in the case of epidemics and infectious diseases \citep{Alamo2021,Donofrio2023,HernandezVargas2022,Nowzari2016,Rowthorn2009,Sélley2015,Sharomi2015,Zaric2002}, and
pinning and network-decentralised control strategies are well-suited to represent the types of actions that can be enforced for epidemic control: some affect the individual hosts (nodes), through pharmaceutical interventions such as vaccination, therapies, treatment \citep{CalaCampana2024,Giordano2021,Hansen2010,Hernandez-Vargas2019,Perez2022} and others affect the interactions (links), through non-pharmaceutical interventions such as use of personal protective equipment that reduces pathogen transmission, physical distancing, travel restrictions, lockdown \citep{Angulo2021,Bin2021,Giordano2020,Gumel2004,Kantner2020,Köhler2020,Morato2020,Sontag2021}. Awareness of the structural robustness of the system can thus support decision-making strategies for the multi-pronged control of epidemics.
In this context, the availability of huge amounts of data (although often difficult to handle, see \cite{Alamo2021,Alamo2022}) and the inherent stochasticity related, \eg to waning immunity
and pathogen mutations suggest the importance of leveraging probabilistic information and approaches and
embedding them into structure-based control approaches.
For instance, for systems admitting a $BDC$-decomposition, such control strategies can be structurally designed based on
control polyhedral Lyapunov functions \citep{BDG2021}.

A promising research direction is to build an integrated structural and probabilistic paradigm for control and decision making, which exploits the insight obtained with the analysis approaches discussed in this monograph to provide rigorous theoretical guarantees in probability. Such approaches would also enforce properties that go beyond stability and aim at some form of optimality, as guaranteed by optimisation-based and model-predictive control approaches \citep{Alamo2021,Hansen2010,Köhler2020,Morato2020}. For instance, multi-pronged epidemic control requires multi-objective optimisation with possibly conflicting goals and constraints, related to public health, social and mental well-being, costs and economic factors, limited available resources; embedding all such requirements within a unified (holistic) formal framework would allow for precise and reliable interventions.

In a control theoretic context, and in particular in subfields of control theory such as cybernetics \citep{ishii2022security}, system resilience is being increasingly investigated, also building on game-theoretic formulations \citep{chen2019game,zhu2015game}. The formalism of game theory has not been addressed in this monograph; we just note in passing that a promising direction for future investigations is to delve deeper into the contact points between game-theoretic approaches and the framework presented in this monograph, so as to develop a more comprehensive quantitative theory of robustness and resilience.

\subsubsection{Towards real-world applications}

We finally recall the challenges of connecting theoretical results with real-world applications \citep{Bussell2019, Carli2020}, which require user-friendly formulations, reliable and consistent methods, and performing metrics. Models and analytical methods ought to offer clear insight and encompass assumptions that are consistent with real-world scenarios, so as to overcome the challenges of actual and effective implementation \citep{Alamo2022,Hansen2010, samad2020industry}. In spite of the analogies between the dynamics and the structure of models describing mechanisms in engineering and in nature, biological and epidemiological systems are inherently different from engineered systems, in that their design principles were not human-made and are not mechanistically known \emph{a priori}.
In addition, their interplay with other open natural and human-made system (\eg \cite{burzynski2021covid,iqbal2020effects, Lep2020, moran2016epidemic,Proverbio2025,PTG2025,zhou2020covid}) makes it more challenging to disentangle effects and design optimal strategies. Network models are suitable candidates for such tasks, but research in this direction is still at its infancy \citep{Fan2022, Fontaine2011}.
Importantly, models of different nature can be combined to provide independent, robust and reliable results. In order to be relevant for real applications, they should be tuned and parametrised according to available data, to make sense of what is possible to observe and collect \citep{ashwin2025early,chapman2025quantifying,liu2018measurability,Roosa2019,sparks2013challenges}; in this way, theoretical advancements can be fruitfully coupled with real-world monitoring capacity, and thus enable effective interventions.

\subsection{Conclusion}

Biomolecular systems composed of nonlinear interconnected dynamic entities are commonly encountered in cell biology, and the synthesis of biomolecular circuits with complex desired behaviours is a sought-after objective: it is thus fundamental to better understand and engineer the building blocks of life, by streamlining the analysis and synthesis of biomolecular feedback systems, and tightly controlling their function in spite of uncertainty and variability. The analysis of physiological models is crucial to elucidate the onset of diseases, and hence identify therapeutic targets; the achieved insight can pave the way for smart drugs, innovative therapies and biotechnologies to improve human health and quality of life. Studying how infectious diseases spread in an interacting population, possibly including for each individual the dynamics of pathogens and immune system, helps predict the emergence and time evolution of contagion, and contrast the epidemic spread by combining non-pharmaceutical interventions and optimal vaccination plans.

To gain rigorous theoretical insight in spite of huge uncertainties, parameter-free structural methods \citep{blanchini2021structural} have been developed for systems in the life sciences, to predict the parameter-independent, \emph{structural} properties of whole families of uncertain systems, exclusively based on the analysis of their \emph{structure}.
However, although structural methods can yield very strong conclusions, their ``yes'' versus ``no'' qualitative outcome leaves us clueless when the property under investigation does not hold structurally: does it hold for some systems in the family? If so, with which probability? Or for which ones, and what do they have in common? Probabilistic approaches \citep{Proverbio2024,tempo2013randomized,Sagar1998,Sagar2011}, including those based on uncertainty quantification and generalised polynomial chaos \citep{fisher2011optimal,kim2013generalised,pepper2019multiscale,sutulovic2024efficient,wan2006multi,xiu2002wiener,zou2023uncertainty}, can offer flexible and computationally efficient answers to these questions, allowing us to explore a whole quantitative spectrum of outcomes, gain both qualitative and quantitative insight, and reveal the sources of indeterminacy for non-structural behaviours.

Complementing the investigation of the robustness of systems to parametric uncertainties in their dynamics with a formal analysis of the preservation of properties in the face of stochastic external perturbations enables a rigorous and thorough assessment of the resilience of systems.
A combined study of the resilience and the robustness of complex systems in nature unveils their conserved mechanisms, identifies core evolutionary strategies rooted in their structure and enables targeted interventions to enforce desired properties. It gives us precious qualitative (and quantitative) information about highly uncertain models, when the underlying mechanisms are not fully known, and when complete information about the dynamics, parameters and disturbances is missing.

In this monograph, we have offered a broad and thorough overview of different types of approaches to assess robustness and resilience in dynamical networks, which are focused on challenges in the life sciences (in particular, biology and epidemiology), but have the strong potential to benefit applications in several other disciplines, and society at large \citep{helbing2015saving}. Through close collaboration with experimentalists and other theoreticians, the systems-and-control community can provide a fundamental contribution to the development of new theories, frameworks and applications that allow us to understand, predict and control natural phenomena.

\newpage
\appendix

\section*{Appendices}
\section{Mathematical background and definitions}

\subsection{Dynamical systems and ODEs}\label{App:A0}

Here, we summarise fundamental definitions and results related to dynamical systems and ordinary differential equations (ODEs); we refer to \cite{Berglund2006,dobrushkin2014applied,Golubitsky2003a, kuznetsov2013elements, strogatz2018nonlinear} and  for a more thorough exposition.

\begin{definition}[\textbf{Dynamical system}]\label{Def:DynSys}
Let $\mathcal{T}$ denote one of the additive topological groups $\mathbb{R}$ or $\mathbb{Z}$, and let  $\Omega$ be a topological space, referred to as the state space, or \emph{phase} space. A \emph{dynamical system} is the triple $\mathcal{D}=\left(\Omega, \left\{\phi_{t,s}\right\}_{t,s\in \mathcal{T}}, \mathcal{T}\right)$, where for all $t,s\in \mathcal{T}$,  $\phi_{t,s}\colon\Omega \to \Omega$ are continuous maps such that, for all $x\in \Omega$ and $u,t,s\in \mathcal{T}$, (i) $\phi_{t,t}(x) = x$, and (ii) $\phi_{u,s}(x) = \phi_{u,t} (\phi_{t,s}(x))$.
A dynamical system is \emph{homogeneous} if, for all $u,t\in \mathcal{T}$, it holds that $\phi_{u,t}=\phi_{u-t,0}$; then, we can consider the one-parameter family of maps $\phi_t := \phi_{t,0}$.
\end{definition}

\begin{definition}[\textbf{Trajectory}]\label{Def:Trajectory}
Given the homogeneous dynamical system $\mathcal{D}$, an \emph{orbit}, or \emph{trajectory}, corresponding to the initial condition $x_0\in \Omega$ is the image of $\phi_{\cdot}(x_0) \colon \mathcal{T}\to \Omega$, namely,
$\text{Or}(x_0) = \{\phi_t(x_0)  \colon t \in \mathcal{T} \}$.
\end{definition}
Definition~\ref{Def:DynSys} implies that the relation $x\sim y$, $x,y\in \Omega$ $\iff$ $x\in \text{Or}(y)$ is an equivalence relation, and motivates the following definition.
\begin{definition}[\textbf{Phase portrait}]\label{Def:phaseport}
Let $\mathcal{D}$ be a homogeneous dynamical system. The \emph{phase portrait} of  $\mathcal{D}$ is the quotient set $\Omega \big /_{\sim}$, corresponding to the relation $\sim$.
\end{definition}

The main class of dynamical systems we consider are those induced by ODEs of the form
\begin{equation}
		\begin{aligned}
			& \dot{x}(t) = f(t,x(t)),\ \ x(t_0) = x_0, \quad t\geq 0,
			\label{eq:ode_general}
		\end{aligned}
	\end{equation}
where $(t_0,x_0)\in U$ and  $f: U \to \mathbb{R}^{n}$ is sufficiently smooth. Throughout this monograph, we assume that $U=\mathbb{R}\times U'$, with $U'\subseteq \mathbb{R}^n$, and that for every $(t_0,x_0)\in U$ we have that the interval of existence and uniqueness of the solution is $I_{(t_0,x_0)}=\mathbb{R}$ (\ie the solutions are extensible to all times).

The following standard existence and uniqueness theorem for \eqref{eq:ode_general} guarantees that an ODE induces a (local) dynamical system.
\begin{theorem}[Existence and uniqueness] \label{th:ODE}
Let $U=\mathbb{R}\times U'$, with $U'\subseteq \mathbb{R}^n$, and consider the system of ordinary differential equations \eqref{eq:ode_general}, where $(t_0,x_0)\in U$ and  $f: U \to \mathbb{R}^{n}$ is continuous. If $f$ is locally Lipschitz in the second argument, uniformly with respect to the first, then there exists some open interval $t_0\in I_{(t_0,x_0)}$ and a unique local solution $x(\cdot;t_0,x_0)\in C^1(I_{(t_0,x_0)})$ for the ODE \eqref{eq:ode_general}.
\end{theorem}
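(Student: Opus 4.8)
The plan is to prove this classical result via the Banach fixed-point theorem applied to the Picard integral operator. First I would recast the initial value problem in integral form: a function $x \in C^1(I)$ solves \eqref{eq:ode_general} on an interval $I$ containing $t_0$ if and only if $x \in C(I)$ is a fixed point of the integral equation $x(t) = x_0 + \int_{t_0}^t f(s,x(s))\,\mathrm{d}s$. This equivalence follows from the fundamental theorem of calculus (using continuity of $f$ to ensure the integrand is continuous, hence $x$ is automatically $C^1$), and it converts a differential problem into a functional-analytic one amenable to contraction arguments.

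Next I would set up the complete metric space on which to run the contraction. By the local Lipschitz hypothesis (uniform in $t$), there exist $a,b>0$, a bound $M>0$, and a Lipschitz constant $L>0$ such that $f$ is bounded by $M$ and satisfies $\|f(s,\xi)-f(s,\zeta)\| \leq L\|\xi-\zeta\|$ for all $s \in [t_0-a,t_0+a]$ and all $\xi,\zeta$ in the closed ball $\bar{B}(x_0,b)\subseteq U'$. I would then fix $h \in (0,a]$ small enough and consider the set $X$ of continuous functions $x \colon [t_0-h,t_0+h] \to \bar{B}(x_0,b)$ equipped with the supremum norm, which is a closed subset of a Banach space and hence a complete metric space. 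On $X$ I define the Picard operator $(\mathcal{P}x)(t) = x_0 + \int_{t_0}^t f(s,x(s))\,\mathrm{d}s$.

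The two core estimates are the self-mapping and contraction properties. For self-mapping, the bound $\|(\mathcal{P}x)(t)-x_0\| \leq M|t-t_0| \leq Mh$ shows $\mathcal{P}(X)\subseteq X$ provided $Mh \leq b$. For the contraction property, the Lipschitz estimate yields $\|(\mathcal{P}x)(t)-(\mathcal{P}y)(t)\| \leq L\int_{t_0}^t \|x(s)-y(s)\|\,\mathrm{d}s \leq Lh\,\|x-y\|_\infty$, so $\mathcal{P}$ is a contraction once $Lh<1$. Applying the Banach fixed-point theorem gives a unique fixed point in $X$, which is the desired unique local solution, and I would take $I_{(t_0,x_0)}$ to be the open interval $(t_0-h,t_0+h)$.

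The main technical obstacle is the simultaneous satisfaction of the two smallness constraints $Mh \leq b$ and $Lh < 1$, which forces $h = \min\{a,\,b/M,\,1/(2L)\}$ and explains why the statement only claims a \emph{local} solution on some interval $I_{(t_0,x_0)}$ rather than a global one. A cleaner alternative, which I would mention as a refinement, is to replace the sup norm with a Bielecki weighted norm $\|x\|_L = \sup_t e^{-2L|t-t_0|}\|x(t)\|$; under this equivalent norm $\mathcal{P}$ becomes a contraction on the whole of $[t_0-a,t_0+a]$ regardless of the interval length, removing the $Lh<1$ constraint entirely and streamlining the uniqueness argument. Either route closes the proof; the weighted-norm variant is marginally more elegant but the elementary estimate above is sufficient for the local statement as phrased.
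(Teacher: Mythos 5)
The paper does not prove this theorem at all: it is stated in Appendix~A.1 as classical background (the Picard--Lindel\"of theorem), with the reader referred to standard textbooks, so there is no paper proof to compare against. Your proposal is the canonical proof of that classical result --- reformulation as the Picard integral equation, a self-mapping and contraction estimate on the space of continuous functions into a closed ball, and the Banach fixed-point theorem --- and it is correct, including the constraint bookkeeping $h \leq \min\{a,\, b/M,\, 1/(2L)\}$ and the Bielecki-norm refinement.

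One routine detail you gloss over: the fixed-point theorem gives uniqueness only among functions in $X$, i.e.\ those taking values in $\bar{B}(x_0,b)$, whereas the statement asserts uniqueness among \emph{all} local $C^1$ solutions on $I_{(t_0,x_0)}$. This is closed by the standard bootstrap: choosing $Mh < b$ strictly, any solution on $[t_0-h,t_0+h]$ satisfies $\|x(t)-x_0\| \leq M|t-t_0| < b$ as long as it remains in the ball, so it can never exit it, hence it lies in $X$ and coincides with the fixed point (alternatively, apply Gr\"onwall's inequality directly to the difference of two solutions). It is a one-line addition, not a flaw in the approach.
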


We focus on \emph{autonomous} ODEs \eqref{eq:ode_general}, where $f$ does not depend on time, \ie $f = f(x)$. Under these assumptions, we associate a dynamical system with the ODEs \eqref{eq:ode_general} by defining, for all $x_0\in U'$ and $t\in \mathbb{R}$, the flow
\begin{equation*}
\phi_t(x_0):=x(t;t_0=0,x_0)
\end{equation*}
with the corresponding orbits obtained from Definition~\ref{Def:Trajectory}. 

\subsubsection{Equilibria, cycles, invariant sets and stability}
\label{app:equilibria}
Consider an autonomous ODE system \eqref{eq:ode_general} (\ie $f$ therein is independent of $t$) with $(t,x)\in \mathbb{R}\times U'\subseteq \mathbb{R}^{n+1}$.

\begin{definition}[\textbf{Equilibrium}]
	A point $\hat{x} \in U'$ is called an \emph{equilibrium point}, or fixed point, or steady state, if $f(\hat{x}) = 0$.
\end{definition}

\begin{remark}[\textbf{Equilibrium versus steady-state}]
In the spirit of bridging these concepts to other disciplines like statistical physics, we point out that, when referring to ``equilibrium'', we use here the notion of ``steady-state'', given by the no-change condition $\frac{dx}{dt} = 0$ in the associated ODE. This concept is different from the mechanical definition of equilibrium as balance of forces acting on the system, $\sum_{i} F_i = 0$, or the thermodynamic requirement of absence of net flux of energy through the system, $\sum_{i} J_i = 0$ (adiabatic regime).   
\end{remark}

In addition to equilibrium points, periodic trajectories (cycles) are also of immense interest in systems and control.

\begin{definition}[\textbf{Cycle}]\label{def:Cycle}
The orbit $\text{Or}(x_0) = \left\{x(t;x_0) :  t\in \mathbb{R}  \right\}$ is a \emph{cycle} of a dynamical system if it a periodic trajectory, namely, $\text{Or}(x_0)$  is not a singleton and there exists some $\tau>0$ such that   $x(\tau;x_0) =   x_0$. The smallest $\tau > 0$ with this property is called the \emph{period} of the cycle/trajectory. If there exists $\delta>0$ such that in the $\delta$-neighbourhood of $\text{Or}(x_0)$ there are no other cycles, and there exists at least one $x_1\notin \text{Or}(x_0)$ such that $x(t; x_1)$ tends to $\text{Or}(x_0)$ as $t\to \infty$, then we say that $\text{Or}(x_0)$  is a \emph{limit cycle}.
\end{definition}

\begin{definition}[\textbf{Invariant set}]
A set $\mathcal{S}\subseteq \mathbb{R}^n$ is \emph{forward invariant}, or \emph{positively invariant}, for the dynamical system if $x_0 \in \mathcal{S}$ implies $ x(t;x_0) \in \mathcal{S}$ for all $t\geq 0$.
\end{definition}

\begin{definition}[\textbf{Stability}]
An invariant set $\mathcal{S}$ is \emph{Lyapunov stable} if, for any neighbourhood $X$, with $\mathcal{S} \subset X $, there exist a neighbourhood $V$, with $ \mathcal{S} \subset V$, such that $x(t;x_0) \in X$ for all $x_0 \in V$ and all $t>0$.  $\mathcal{S}$ is \emph{asymptotically stable} if, in addition, there exists a neighbourhood $W$, with $ \mathcal{S}\subset W$, such that
$ \lim_{t\to \infty} \operatorname{dist}\left(x(t;x_0),\mathcal{S}\right)=0$ for all $x_0 \in W$.
 \end{definition}
For autonomous ODE systems \eqref{eq:ode_general} with absolutely continuous right-hand side, the following sufficient  condition for stability is known as Lyapunov's indirect method.
\begin{theorem}
An equilibrium point $\hat{x}$ is asymptotically stable if all eigenvalues of the Jacobian matrix $\frac{\partial f}{\partial x}(\hat{x})$ have a negative real part.
	\label{theorem:linear_stability}
\end{theorem}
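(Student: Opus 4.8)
The plan is to reduce the nonlinear stability question to the linear algebra of the Hurwitz Jacobian via the construction of a quadratic Lyapunov function, and then invoke Lyapunov's direct method. First I would shift coordinates so that the equilibrium sits at the origin, setting $z = x - \hat{x}$, and write $A = \frac{\partial f}{\partial x}(\hat{x})$. By smoothness of $f$ and the definition of the derivative, the dynamics split as $\dot{z} = A z + g(z)$, where the remainder $g$ satisfies $\|g(z)\|/\|z\| \to 0$ as $z \to 0$. The hypothesis is that $A$ is Hurwitz (all eigenvalues have negative real part), and the goal is to exploit this to dominate the higher-order term $g$ in a neighbourhood of the origin.

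The key step is to produce a quadratic Lyapunov function for the linear part. I would invoke the standard fact that, when $A$ is Hurwitz, the Lyapunov equation $A^\top P + P A = -I$ admits a unique positive definite solution $P \succ 0$; this can be exhibited explicitly as $P = \int_0^{\infty} e^{A^\top t} e^{A t}\, dt$, where convergence follows from the exponential decay estimate $\|e^{A t}\| \leq M e^{-\alpha t}$ (for some $M, \alpha > 0$) guaranteed by the Hurwitz property. One then checks by differentiating under the integral sign that this $P$ satisfies the Lyapunov equation and is symmetric positive definite. With $P$ in hand, I would take $V(z) = z^\top P z$ as the candidate Lyapunov function for the full system.

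Next I would compute the derivative of $V$ along the trajectories of the nonlinear system:
\begin{equation*}
\dot{V}(z) = (A z + g(z))^\top P z + z^\top P (A z + g(z)) = z^\top (A^\top P + P A) z + 2 z^\top P g(z) = -\|z\|^2 + 2 z^\top P g(z).
\end{equation*}
To close the argument, I would use the higher-order bound on $g$: given any $\varepsilon > 0$, there exists $\delta > 0$ such that $\|g(z)\| \leq \varepsilon \|z\|$ whenever $\|z\| < \delta$, whence $2 z^\top P g(z) \leq 2 \|P\| \varepsilon \|z\|^2$. Choosing $\varepsilon$ small enough that $2 \|P\| \varepsilon < 1$ yields $\dot{V}(z) \leq -(1 - 2\|P\|\varepsilon)\|z\|^2 < 0$ on the punctured ball $0 < \|z\| < \delta$. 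Since $V$ is positive definite and $\dot{V}$ is negative definite in this neighbourhood, Lyapunov's direct method (the stability criterion associated with positively invariant sublevel sets of $V$, consistent with the set-theoretic framework of Section~\ref{sec:equilibria}) gives asymptotic stability of the origin, hence of $\hat{x}$.

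I expect the main obstacle to be the rigorous construction and verification of the positive definite solution $P$ to the Lyapunov equation: establishing convergence of the matrix integral and confirming positive definiteness requires the Hurwitz decay estimate, which in turn rests on a Jordan-form or spectral argument for $\|e^{At}\|$. Everything downstream is a routine domination estimate, so the linear-algebraic core is where the real content lies; an alternative route that sidesteps the Lyapunov equation would be to estimate trajectories directly through the variation-of-constants formula and a Grönwall-type bound, but the Lyapunov-function approach is cleaner and more in keeping with the methods surveyed earlier in the monograph.
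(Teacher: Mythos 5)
Your proof is correct and complete: the decomposition $\dot z = Az + g(z)$ with $g(z) = o(\|z\|)$, the explicit solution $P = \int_0^\infty e^{A^\top t} e^{At}\,dt$ of the Lyapunov equation, and the domination estimate $\dot V(z) \leq -(1-2\|P\|\varepsilon)\|z\|^2$ together constitute the canonical quadratic-Lyapunov-function proof of Lyapunov's indirect method. Note, however, that the paper itself offers no proof of this statement: Theorem~\ref{theorem:linear_stability} appears in the appendix as classical background (Lyapunov's indirect method), stated with a pointer to standard references rather than argued, so there is no in-paper argument to compare yours against. Your argument is precisely the one found in those standard sources; the only remark worth making is that full smoothness of $f$ is not needed, since the remainder bound $\|g(z)\|/\|z\| \to 0$ requires only differentiability of $f$ at $\hat{x}$, and that your suggested alternative (variation of constants plus a Gr\"onwall estimate) is the other standard route, trading the linear-algebraic construction of $P$ for a direct trajectory bound.
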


\subsubsection{Attractors and regimes}
\label{sec:attractor}
\begin{definition}[\textbf{Attractor and basin of attraction}]
Let $A$ be a stable invariant set and let $B(A)$ be the set of all $c\in U'$ such that 
\begin{equation*}
 \lim_{t\to \infty} \operatorname{dist}\left(x(t;c),A\right)=0,
\end{equation*}
namely, the set of all initial conditions for which the trajectory emanating from the initial condition converges to the set $A$. If $B(A)\neq \emptyset$, we call $A$ an \emph{attractor} and $B(A)$ its corresponding \emph{basin of attraction}. 
\end{definition}
In addition to model-based techniques (analytical or computational) for finding stable invariant sets \citep{BM2015}, time-series driven methods \citep{Montanari2022} to reconstruct attractors have been developed building on Takens' theorem \citep{Takens1981}, that identifies conditions under which an attractor of a discrete-time system can be reconstructed through a delay embedding.   Other methods, using the concepts of quasi-potential, were mentioned in Section~\ref{sec:potential_land}. \cite{sutulovic2025differentiator} recently proposed a method for efficient and faithful reconstruction of dynamical attractors from time series based on the homogeneous differentiator introduced by \cite{levant2003higher,levant2017sliding,levant2020robust,hanan2021low}.

\begin{definition}[\textbf{Hyperbolic and elliptic equilibria}]
Consider an autonomous system \eqref{eq:ode_general} and let $f$ be continuously differentiable. An equilibrium point $\hat{x}$ is \emph{hyperbolic} (or generic) if $\frac{\partial f}{\partial x}(\hat{x}) $ has no eigenvalues on the imaginary axis. Otherwise, the equilibrium is \emph{elliptic}.
\end{definition}
The behaviour of a dynamical system in a domain near a hyperbolic equilibrium point is qualitatively the same as the behaviour of its linearisation near this equilibrium point (Hartman-Grobman theorem; see, \eg \cite{wiggins2003equilibrium}). This fact justifies the use of linearised Langevin equations to derive indicators for resilience loss in Chapter~\ref{ch:res-ind}.

\subsection{Slow-fast systems on different time scales}\label{sec:slowfast}

We often consider ODE systems that depend on uncertain parameters $p\in \mathbb{R}^m$. In natural systems, such parameters (which are often the expression of modelling choices associated with functional dependencies among variables) may be time-varying and evolve on a different time scale with respect to the dynamics of the state $x(t)\in \mathbb{R}^n$. In the literature, noise has been approximated by rapidly-varying deterministic variables \citep{proverbio2023systematic}, employing a hierarchy of time scales. Here, we briefly recall the notion of ODEs on multiple time scales \citep{DelVecchio2016}, often written in the form
\begin{equation}
\begin{aligned}
	\epsilon \frac{dx}{dt} & = f(x,y),\qquad 
	\frac{dy}{dt} = g(x,y) ,
	\label{eq:slowfast}
\end{aligned} 
\end{equation}
where $\epsilon>0$ is a small parameter. The first equation captures fast dynamics, whereas the second captures slow dynamics. System \eqref{eq:slowfast} can be re-written in terms of the fast timescale $s=\frac{t}{\epsilon}$, $t>0$, as
\begin{equation}
\begin{aligned}
	\frac{dx^s}{ds} & = f(x^s,y^s),\qquad 
	\frac{dy^s}{ds} &= \epsilon g(x^s,y^s) \, ,
\end{aligned} 
\end{equation}
where $x^s(s) = x(\epsilon s)$ and $y^s(s) = y(\epsilon s)$. Such timescale separation is motivated by the observation of real-life phenomena evolving on significantly different time scales. Examples include fast promoter activation versus slow protein degradation, or fast neural spiking versus slow signalling control \citep{drion2018cellular}, or even viral dynamics versus population-wide epidemic spreading of influenza \citep{Yan2016} and replicator-mutator dynamics in large populations of individuals \citep{Dey2018}. Different methods to verify timescale separation in models exist, including spectral theory and large deviation theory \citep{Berglund2006}. Empirically, one normally considers the rates (or times) in which a process and its coupled processes occur and compares, \eg their orders of magnitude. For instance, in chemical neuroscience, it is known that the action of ion channels across neural membranes happens at a different (slower) timescale than neural firing \citep{Drion2012b, franci2012organizing}, and this slow-fast activation at the feedback level can be exploited to develop refined models for neural dynamics \citep{franci2014modeling,franci2016three,franci2018robust}. In these cases, the slow variable can be interpreted as a quasi-steady-state parameter, and methods from singularity theory can be readily applied (see, \eg \cite{franci2014modeling} and Chapter~\ref{ch:res-ind}). Additional methods to extract slow-fast information from data are currently being developed also in other fields such as climatology; see, \eg \cite{chekroun2021stochastic}.

For a rigorous theoretical treatment of slow-fast dynamics, the reader is referred to the foundational work on singular perturbation theory, also known as Tikhonov theory \citep{tikhonov1948dependence,tikhonov1950systems,tikhonov1952systems,wasow2018asymptotic}. For a modern treatment of quasi-steady-state approximation (QSSA) and double timescale approximations, a relevant reference is provided by \cite{bersani2020uniform}.

\subsection{From ODEs to SDEs}\label{sec:ODESDE}

Stochastic processes model random phenomena and can be leveraged to introduce parametric uncertainty into a dynamical system. Notably, stochastic differential equations (SDEs) have been related to slow-fast systems (see Appendix~\ref{sec:slowfast}) via the method of averaging  \citep{khas1966limit}. Furthermore, the numerical simulation of stochastic processes via highly oscillatory dynamics (\ie dynamics of fast variables that evolve on much shorter time scales than state variables)  has been well studied in the literature \citep{kasdin1995discrete}. Moreover, stochastic averaging was proven to be equivalent to stochastic normal forms (\cite{namachchivaya1990equivalence}; see also Section~\ref{sec:bif-normal-forms}). Such normal forms are thus widely employed first approximations \citep{Berglund2006} to model fast fluctuations on top of slowly varying state variables.
Here, we recall some of the basic definitions concerning probability theory and stochastic systems; for a more thorough discussion, see, \eg \cite{J.S.Allen2014,Berglund2006,kuehn2011mathematical,papoulis2002probability}. 

Henceforth we let $\left(\Omega, \mathcal{A},\mathbb{P} \right)$ be a probability space, where the sample space $\Omega$ is the set of all possible outcomes of the random process, the event space $\mathcal{A}$ is a $\sigma$-algebra on $\Omega$ (\ie a family of subsets of $\Omega$, $\mathcal{A} \subseteq 2^\Omega$) and $\mathbb{P}$ is a probability measure. We further denote by $\mathcal{B}(\mathbb{R}^k)$ the Borel $\sigma$-algebra on $\mathbb{R}^k,k \geq 1$.
\begin{definition}[\textbf{Random variable}]
Given $k\geq 1$, a $k$-dimensional \emph{random variable}, or \emph{stochastic variable}, is a measurable function $\check{x} \colon \Omega \to \mathbb{R}^k$; \ie the inverse image of any Borel set in $\mathbb{R}^k$ must be an event, $\check{x}^{-1}(B)\in \mathcal{A}$ for all $B\in \mathcal{B}(\mathbb{R}^k)$.
\end{definition}

\begin{definition}[\textbf{Cumulative distribution function}]
Let $\check{x}$ be a random variable. Consider the pushforward measure $\mathbb{P}\circ \check{x}^{-1}$ on $(\mathbb{R},\mathcal{B}(\mathbb{R}))$ . The \emph{cumulative distribution function} (CDF) of $\check{x}$ is the function $F_{\check{x}} \colon \mathbb{R}\to [0,1]$ defined via
$F_{\check{x}}(z) = \mathbb{P}\circ \check{x}^{-1}\left((-\infty,z] \right) = \mathbb{P}\left(\left\{\omega \in \Omega \ \colon \ \check{x}(\omega)\leq z \right\} \right)$.
The standard properties of $F_{\check{x}}$ are: 
(i) $F_{\check{x}}$  is right-continuous and non-decreasing, and (ii) $\lim_{z\to -\infty}F_{\check{x}}(z)=0$ and $\lim_{z\to \infty}F_{\check{x}}(z) = 1$.
\end{definition}

We consider random variables $\check{x}$ for which $F_{\check{x}}$ is absolutely continuous, thereby giving rise to a probability density function. 
\begin{definition}[\textbf{Probability density function}]
	The \emph{probability density function} (PDF) $f_{\check{x}}:\mathbb{R}\to \mathbb{R}$ is a Lebesgue integrable function defined as 
	\begin{equation}
		F_{\check{x}}(z) = \int_{-\infty}^z f_{\check{x}}(t) \mathrm{d}t.
	\end{equation}
We often denote $f_{\check{x}}(t) =\frac{\mathrm{d}F_{\check{x}}}{\mathrm{d}z}(t),\ t\in \mathbb{R}$. Note that $f_{\check{x}}$ is well defined and non-negative almost everywhere.
\end{definition}

Let us fix $k=1$ and refer to $\check{x}$ as a \emph{real} random variable. Henceforth, we proceed under this assumption.
\begin{definition}[\textbf{Stochastic process and sample path}]
	A \emph{stochastic process} is a family of random variables $\left\{\check{x}(t,\cdot) \right\}_{t\in \mathfrak{T}}$, indexed by some set $\mathfrak{T}$. Given $\omega \in \Omega$, we will call $\left\{ \check{x}(t,\omega)\right\}_{t\in \mathfrak{T}}$ a \emph{sample path}.
\end{definition}
We consider $\mathfrak{T}\subseteq \mathbb{R}$, whence we refer to $t$ as a time parameter. In general, both discrete and continuous stochastic processes exist, depending on the domain of $t$ and range of $\check{x}(t,\cdot), \ t\in \mathfrak{T}$, with different properties \citep{Borzì2020}. We only consider continuous stochastic processes. Continuous models can be connected with other, fine-grained, descriptions, such as the Master equations \citep{Gillespie2000}. To study the evolution of continuous processes, Langevin equations and their corresponding Fokker-Planck equations are often employed \citep{Gardiner1985}. In particular, the \emph{Langevin equation} is an SDE, widely used to describe phenomena that are subject to combinations of deterministic and stochastic processes, whose general form is 
\begin{equation}
		\dot{x}_i(t) = \frac{dx_i(t)}{dt} = a_i({x(t)}) + \sum_{m=1}^{n}b^m_i({x(t)})g_m(t),
\end{equation}
where $i=1,\dots,n$, $x = [x_1 \dots x_n ]^{\top}\in \mathbb{R}^n$ is the state vector and $\left\{g_m(t)\right\}_{m=1}^n$ are fluctuation forces (random forces), which obey a Gaussian distribution and represent the effects of \emph{white noise}.
The fluctuation forces are often assumed to be Wiener processes.

\begin{definition}[\textbf{Wiener process}]\label{Def:Wiener}
A stochastic process $\left\{W_t:=W(t,\cdot) \right\}_{t\geq 0}$ is a (one-dimensional) \emph{Wiener process}, or \emph{Brownian motion}, if  
\begin{enumerate}
    \item  $\left\{W(t,\omega) \right\}_{t\geq 0}$ are continuous in $t$ for almost all $\omega\in \Omega$,
    \item $W(0,\omega) = 0$ for almost all $\omega\in \Omega$,
    \item For $0\leq s \leq t$, the increment $W(t,\cdot)-W(s,\cdot)$ is a Gaussian random variable with zero mean and variance $t-s$.
    \item The random variables $W(t_0,\cdot), W(t_1,\cdot)-W(t_0,\cdot),\dots, W(t_k,\cdot)-W(t_{k-1},\cdot)$, for every $k\geq 1$ and $0=t_0\leq t_1\leq \cdots \leq t_k$, are independent.
\end{enumerate}
\end{definition}

The Langevin equation is then recast as   
\begin{equation}
dx  = a(x) \, dt + b(x) \, dW_t \, ,
\label{eq:langevin_eq}
\end{equation}
where $b(x)$, which can be either a state-independent (constant) or a state-dependent function, dictates the noise type. Alternatively, one can also write the Langevin equation in the form
\begin{equation}\label{eq:Langev!stord}
    \frac{dx(t)}{dt} = a(x) + b(x) \eta(t) \, ,
\end{equation}
where $\eta(t)$ is a white noise process such that $\langle \eta \rangle = 0$ and $\langle \eta(t) \eta(t') \rangle = 2 \sigma \delta(t-t')$, where $\sigma$ is the noise intensity, $\delta$ is the Dirac delta and $\langle \cdot \rangle$ is the expected value. In many cases, state-dependent noise fits empirical biological data better than additive white noise \citep{Sidney2010, Wang2012b}, and is a good representation of biological coloured noise \citep{liu2009effect}.

The Langevin equation describes the evolution of a stochastic process; the evolution of its associated probability density function is described by the associated Fokker-Planck equation. In particular, consider the one-dimensional SDE
\begin{equation*}
dx = A(x,t)dt+\sigma(x,t)dW_t
\end{equation*}
and denote by $p(x,t)$ the probability density function of $x$ at time $t$, which evolves according to the  \emph{Fokker-Planck equation}
\begin{equation}
\partial_t p(x,t) = - \partial_x \left[A(x,t) p(x,t) \right] + \frac{1}{2} \partial^2_x \left[B^2(x,t) p(x,t) \right] \, ,
\label{eq:fokker_plank}
\end{equation}
with appropriate initial conditions, where $B(x,t) = \left|\sigma(x,t)\right|$ is related to the diffusion coefficient and $A(x,t)$ is a \emph{drift} term that accounts for the deterministic forces acting on the system.
In case of multiplicative noise, with $b(x)$ non-constant in Eq. \eqref{eq:langevin_eq}, $A$ and $B$ can be combinations of $a$ and $b$ \citep{Risken1991}.

The linear case, also multidimensional \citep{Gardiner1985}, is obtained from the following second-order model for Brownian motion, in Langevin form
\begin{equation*}
\ddot{u}(t) = -k \dot{u}(t)+\sqrt{2D}\eta(t),
\end{equation*}
with appropriate (Gaussian) initial conditions. Here $u(t)$ is the position of the Brownian particle, $k$ is the drift coefficient, $D$ is the diffusion coefficient and $\eta$ is a white noise process, as in equation \eqref{eq:Langev!stord}. Setting $y =\dot{u}$ to be the velocity process yields the \emph{Ornstein-Uhlenbeck process} (OU process)
\begin{equation}
dy = -k \, y \, dt + \sqrt{2D} \, dW_t,
\label{eq:o-u}
\end{equation}
where $k$ and $D$ are drift and diffusion parameters and $\sqrt{2D} = \sigma$.  An OU process can be alternatively represented by its Fokker-Planck equation
\begin{equation}
\partial_t p(y) = \partial_y (k y p(y)) + D \, \partial^2_y p(y) \, .
\end{equation}
OU processes are central to our discussion of resilience indicators in Chapter~\ref{ch:res-ind}.

\subsection{Bifurcation normal forms}\label{App:Normal}
We provide here a more extensive description of five bifurcation normal forms, frequently encountered in systems biology and epidemiology.

\subsubsection{Fold (saddle-node) bifurcation}
\label{subsec:fold}
The basic bifurcation by which point attractors are created or destroyed occurs when the leading negative eigenvalue crosses 0. A fold (or saddle-node, or blue-sky) bifurcation is a dimension-1, codimension-1 bifurcation\footnote{The codimension is the number of parameters that need to be varied for the bifurcation to occur.} with normal form:
\begin{equation}
	\dot{x}(t) = p+x(t)^2 \,,
	\label{eq:fold}
\end{equation}
with $x \in \mathbb{R}$.
For $p<0$, the system admits two equilibria $\hat{x}_{1,2} = \pm \sqrt{-p}$; $\hat{x}_1$ is stable and $\hat{x}_2$ is unstable. At the critical value $p=p_0 =0$, the two equilibria collide and vanish; for $p>0$, the system does not admit any equilibrium. The mechanism captured in Figure~\ref{fig:gene-reg-phase-portrait} for the genetic regulation system, where two equilibria collide and disappear when $f_2(x)$ becomes tangent to $f_1(x)$, corresponds to a fold bifurcation. In the potential landscape representation, a fold bifurcation occurs when the concavity/convexity of the basin of attraction near an equilibrium point flattens and then disappears (Figure~\ref{fig:pot_land}). Fold-induced bistability is also present in epidemic models with vaccination \citep{Martins2017}, and this bifurcation typically allows us to reproduce switching behaviours, often observed in cell development \citep{Bhattacharya2010,lebar2014bistable}.
The saddle-node bifurcation diagram is shown in Figure~\ref{fig:fold_trans_diag}a. 

Importantly, any system derived from \eqref{eq:fold} through a mapping $x \mapsto -x$ or $p \mapsto -p$, such as $\dot{x}(t) = p-x(t)^2$, is also a saddle-node normal form. Moreover, near the origin, any system $\dot{x}(t) = p+x(t)^2 + O(x(t)^3)$ is locally topologically equivalent to the system $\dot{x}(t) = p+x(t)^2 $ \citep{strogatz2018nonlinear}. The saddle-node bifurcation is thus generic, since additional parameters do not change its structure. 

\subsubsection{Transcritical bifurcation}
A transcritical bifurcation captures the scenario in which a fixed point always exists (\eg populations with zero infectious individuals never experience outbreaks, regardless of the infection rate and the other system parameters), but its stability properties change when a relevant parameter is varied.
This situation is typical of SIR-like models where, in the absence of seasonality, behavioural or vaccine feedback, the endemic equilibrium and the disease-free equilibrium exchange their stability properties (\cite{Brauer2012}; see also \cite{proverbio2021dynamical} for the effect of various parameters on transcritical bifurcations). The associated normal form,
\begin{equation}
	\dot{x}(t) = px(t)-x(t)^2 \, ,
	\label{eq:transcritical}
\end{equation}
has two equilibria: $\hat{x}_1 = 0$ is stable when $p<0$ and unstable when $p>0$; $\hat{x}_2 =p$ is stable when $p>0$ and unstable when $p<0$. The critical value is thus $p=p_0=0$. The transcritical bifurcation diagram is shown in Figure~\ref{fig:fold_trans_diag}b. 

\begin{figure}[ht!]
	\centering
		\includegraphics[width=0.7\textwidth]{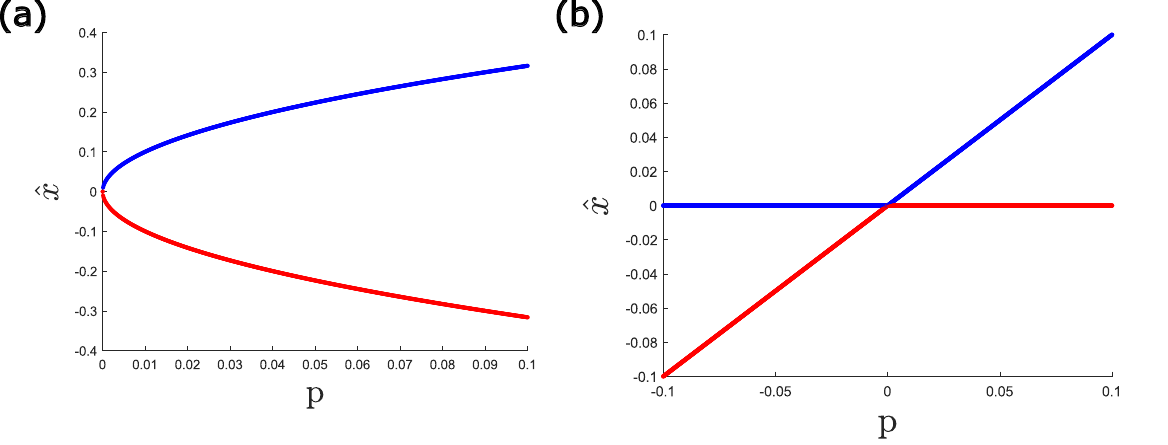}
	\caption{\footnotesize \small Bifurcation diagrams, where $\hat{x}$ is associated with equilibrium points ($f(\hat{x})=0$) and $p$ with the bifurcation parameter. (a) Saddle-node bifurcation diagram, see \eqref{eq:fold}. Stable (blue) and unstable (red) equilibria collide for a critical value of $p$ and vanish. (b) Transcritical bifurcation diagram, see \eqref{eq:transcritical}. Stable (blue) and unstable (red) equilibria intersect and exchange stability at a critical value of $p$. }
	\label{fig:fold_trans_diag}
\end{figure}

\subsubsection{Pitchfork bifurcation}
The pitchfork bifurcation leads to the emergence of two new equilibria (which are symmetric, and then the system is driven towards either one or the other), or to their disappearance, and is common in models of cell-fate decision, when new phenotype pathways become available, \eg during stem cell development \citep{Eugenio2014, Moris2016}. 
It is customary to distinguish between supercritical and subcritical pitchfork bifurcations: in the former case, two new equilibria appear, while in the latter case, two equilibria vanish. The normal form is

\begin{equation}
	\dot{x}(t) = px(t) \pm x(t)^3 \, ,
	\label{eq:subpitch}
\end{equation}
with $+$ for the subcritical bifurcation and $-$ for the supercritical.

Figure~\ref{fig:pitch_diagram_bif}a shows the bifurcation diagram in the subcritical case. The system has three equilibria for $p<0$: the equilibrium at the origin $\hat{x}_3 =0$ is stable and the two symmetric equilibria $\hat{x}_{1,2} = \pm \sqrt{-p}$ are unstable. After the critical value $p=p_0=0$, the unstable equilibria collide with the stable one and vanish, while the equilibrium $\hat{x}_3$ becomes unstable. This bifurcation is found, \eg in models of cell-fate decision, with abrupt switching \citep{Andrecut2011a}. The instability is often counterbalanced by the stabilising influence of higher order terms.

In the case of a supercritical pitchfork bifurcation, the system admits a unique equilibrium $\hat{x}_3 = 0$, which is stable, for $p<0$. For $p=p_0=0$,  $\hat{x}_3$ becomes unstable, while two other stable equilibria $\hat{x}_{1,2}= \pm \sqrt{-p}$ appear, thus leading to bistability. The bifurcation diagram is shown in Figure~\ref{fig:pitch_diagram_bif}b. Extra parameters can be added \citep{nyman2020bifurcation}, producing cusp-like bifurcations.

\begin{figure}[ht]
	\centering
		\includegraphics[width=0.7\textwidth]{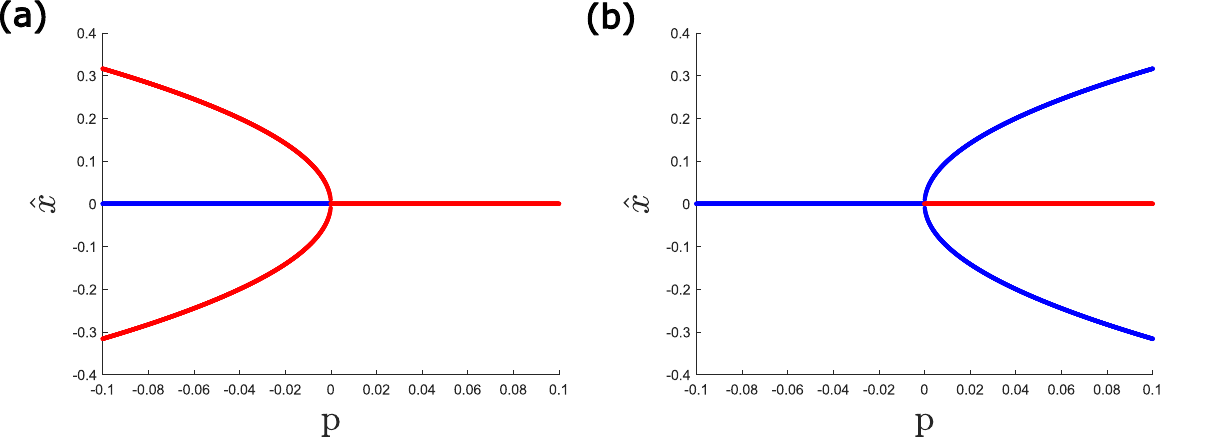}
	\caption{\footnotesize \small Pitchfork bifurcation diagrams, showing stable (blue) and unstable (red) equilibria. (a) Subcritical case: \eqref{eq:subpitch} with $+$. (b) Supercritical case: \eqref{eq:subpitch} with $-$.}
	\label{fig:pitch_diagram_bif}
\end{figure}

\subsubsection{Cusp bifurcation}
\label{subsec:cusp}
The cusp is a generic dimension-1, codimension-2 bifurcation. It is often called ``organising centre'' as it governs bistable regimes and switches across them \citep{franci2012organizing, Thom2554}, 
and its normal form is
\begin{equation}
	\dot{x}(t) = a + b x(t) - x(t)^3 \, .
	\label{eq:cusp}
\end{equation}
The bifurcation mechanism is visualised in Figure~\ref{fig:cusp}a. The projection of its stable region in the $(a,b)$ plane is shown in Figure~\ref{fig:cusp}b; Figure~\ref{fig:cusp-gen} is the cusp diagram for system \eqref{eq:hill-equation}. The cusp can be interpreted as an ``intersection'' of supercritical pitchfork and fold bifurcations, or as an unfolding around a supercritical pitchfork bifurcation (Figure~\ref{fig:cusp}c), and induces symmetry breaking in the dynamics, a known mechanism for, \eg asymmetric cellular differentiation \citep{Stanoev2021}. 

\begin{figure}[ht!]
	\centering
		\includegraphics[width=\textwidth]{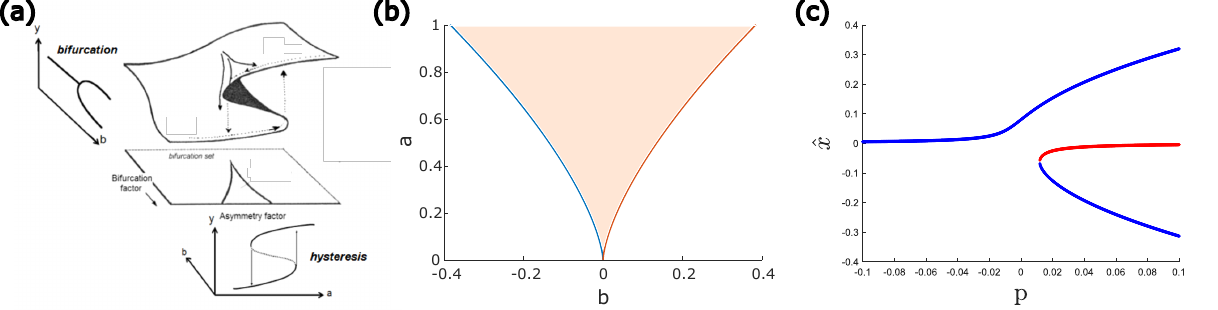}
	\caption{\footnotesize \small (a) Representation of a cusp stable manifold, its projection on the $(a,b)$ plane and the ``intersection'' of supercritical pitchfork and fold bifurcations; figure from \cite{stamovlasis2014bifurcation}. (b) Projection of the bistable region in the $(a,b)$ plane, corresponding to the shaded area. Outside the shaded area, the system is monostable. On the lines, saddle-node bifurcations occur. (c) Cusp plot as an unfolding of the pitchfork bifurcation, $\dot{x}(t) = 0.0005 + p x(t) - x(t)^3 $.}
	\label{fig:cusp}
\end{figure}

\subsubsection{Hopf bifurcation}
The Hopf bifurcation is a generic dimension-2, codimension-1 bifurcation from a stable equilibrium to a limit cycle. 
Its normal form is:
\begin{equation}
		\left\{
		\begin{array}{ll}
	           \dot{x}_1 &= p x_1 - x_2 + \ell x_1(x_1^2 + x_2^2)\,,\\
	           \dot{x}_2 &= x_1 + p x_2 + \ell x_2(x_1^2 + x_2^2) \, ,
		\end{array}
		\right.
  \label{eq:subhopf}
	\end{equation}
where $\ell$ is the \emph{Lyapunov coefficient}; the bifurcation is subcritical if $\ell>0$ and supercritical if $\ell<0$. It is characterised by complex eigenvalues crossing the imaginary axis, yielding oscillations.
The bifurcation diagram for the supercritical case is shown in Figure~\ref{fig:Hopf}.
\begin{figure}[ht!]
	\centering
		\includegraphics[width=0.5\textwidth]{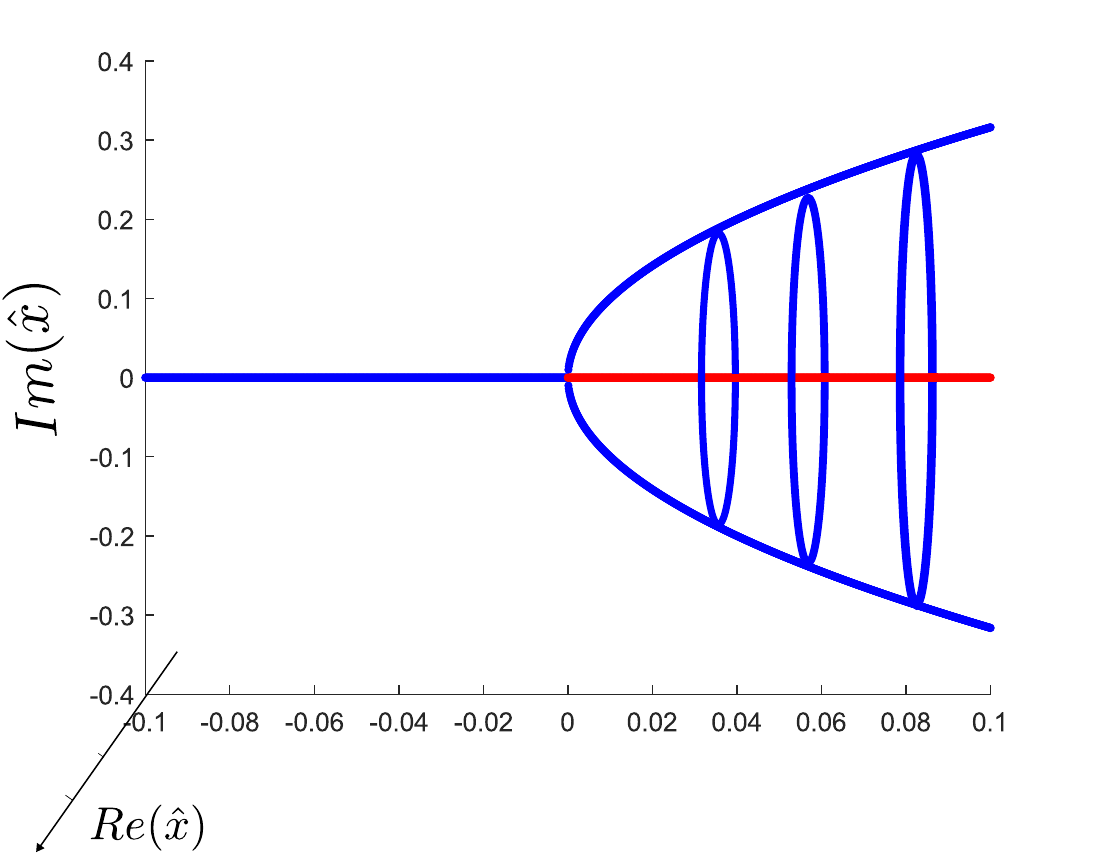}
	\caption{\footnotesize \small Supercritical Hopf bifurcation diagram. The subcritical case similarly follows the bifurcation diagram in Figure~\ref{fig:pitch_diagram_bif}a.}
	\label{fig:Hopf}
\end{figure}

\section{Derivation of stochastic indicators}\label{App:stoc-ind}

Here we provide the proofs for the results stated in Section~\ref{sec:monomodal_ind}.

\subsection{Proof of Proposition~\ref{theo:var}}

To derive expressions for stationary variance and autocorrelation, we solve Eq. \eqref{eq:o-u} with an Itô-like integration and the method of variation of constants for the solution of differential equations. In particular, we consider $y = e^{kt} y_t$ as an Ansatz of solution and we proceed with Itô differentiation: 
	\begin{align*}
d \left( e^{kt} y_t \right) &= k \, e^{kt} y_t \, dt + e^{kt} dy_t \\ &= ke^{kt} y_t \, dt + e^{kt} \left( -k \, y_t \, dt + \sqrt{2D} \, dW_t \right) = e^{kt} \, \sqrt{2D} \, dW_t.
	\end{align*}
Integrating yields $e^{kt} y_t - y_0 = \sqrt{2D} \int^t_0 e^{sk} dW_s$, which can be re-arranged to get the stochastic trajectory
\begin{equation*}
y_t = y_0 e^{-kt} + \sqrt{2D} \int_{0}^{t} e^{k(s-t)} dW_s \, .
\end{equation*}
	If the initial condition is deterministic or Gaussian, we can estimate the stationary mean
	\begin{align*}
		\langle y_t \rangle &= \langle y_0 e^{-kt} + \sqrt{2D} \int_{0}^{t} e^{k(s-t)} dW_s \rangle = \langle y_0 e^{-kt}  \rangle + \langle  \sqrt{2D} \int_{0}^{t} e^{k(s-t)} dW_s \rangle \\
		&= y_0 e^{-kt} + \sqrt{2D} \int_{0}^{t} e^{k(s-t)} \langle dW_s \rangle = y_0 e^{-kt} + 0 \xrightarrow[]{t \to \infty} 0
	\end{align*}
and the covariance $\text{Cov}(y(t) y(t')) = \langle y_t y_{t'} \rangle$, which is
	\begin{align*}
		\langle y_t y_{t'} \rangle &= y_0^2 e^{-k(t-t')} + \left( \sqrt{2D} \right) ^2 \int_{0}^{t} \int_{0}^{t'} e^{k(s-t)} e^{k(s'-t')} \langle dW_s \, dW_{s'} \rangle\\
		&= y_0^2 e^{-k(t-t')} + 2D \, e^{-k(t+t')} \int_{o}^{t} \int_{0}^{t'} e^{k(s+s')} \delta(s-s') \, ds \, ds',
	\end{align*}
where we exploit the property of Wiener processes that $ \langle dW_s \, dW_{s'} \rangle = \delta(s-s') \, ds \, ds' $, with $\delta(s-s')$ the Dirac delta. Then,
	\begin{align*}
		\langle y_t y_{t'} \rangle &=  y_0^2 e^{-k(t+t')} + 2D \, e^{-k(t+t')} \left. \frac{e^{2k - \min(s,s')}}{2k} \right\rvert_0^{\min(t,t')}\\
		&=  y_0^2 e^{-k(t+t')} + \frac{D}{k} e^{-k(t+t')} \left( e^{2k - \min(t,t') } -1  \right) \\
		&= y_0^2 e^{-k(t+t')} + \frac{D}{k}e^{-k|t-t'|} - \frac{D}{k}e^{-k(t+t')}
	\end{align*}
and, by neglecting the rapid initial transient, we finally get:
	\begin{equation} \label{eq:covar}
		\langle y_t y_{t'} \rangle  \xrightarrow[]{t,t' \to \infty} \frac{D}{k}e^{-k|t-t'|} 
	\end{equation}
	
	The stationary variance $\text{Var}=\langle y_t^2 \rangle - \langle y_t \rangle ^2$ can be computed as:
	\begin{equation} \label{eq:variance}
		\text{Var}= \langle y_t y_{t'} \rvert_{t'=t} \rangle -  \langle y_t \rangle ^2 = \frac{D}{k} \left( 1- e^{-2kt} \right) \xrightarrow{t \to \infty} \frac{D}{k}.
	\end{equation}
The autocorrelation is
$AC(t-t') = \lim_{t,t' \to \infty} \frac{\text{Cov}(y(t) y(t'))}{\sqrt{\text{Var}(y(t)) \text{Var}(y(t'))}} = e^{-k |t-t'|}$,
 hence
$AC(1) = e^{-k}$.

\subsection{Proof of Theorem~\ref{theo:reddening}}

Given a stationary stochastic process, its stationary spectral density function is defined as the Fourier transform of the Covariance function \citep{papoulis2002probability}:
	\begin{equation}
		S(\omega) = \mathcal{F}\left[ \langle y(t+\tau)y(t)  \rangle \right] = \frac{1}{2 \pi}\int_{-\infty}^{\infty}R(\tau)e^{-i\omega t} dt\,,
		\label{eq:spectrum}
	\end{equation}
	where $R(\tau)$ is the covariance function. For $t,t' \to \infty$, the covariance function has the expression in \eqref{eq:covar}, so the calculation is straightforward as long as we perform the change of coordinates $|t'-t| \to |t|$, which is admissible for stationary processes:
	\begin{align*}
		S(\omega) &= \frac{1}{2 \pi} \int_{-\infty}^{\infty} e^{-i \omega t} \frac{D}{k} e^{-k|t|} = \frac{D}{2 \pi k} \left(\int_{-\infty}^{0} e^{t(-i\omega + k)} dt + \int_{0}^{\infty}e^{-t(i\omega +k)} dt \right) \\
		&= \frac{D}{2 \pi k} \left( \frac{1}{-i\omega + k} + \frac{1}{i\omega +k}   \right) =\frac{D}{\pi k}\frac{k}{k^2 + \omega^2} = \frac{D}{\pi(k^2 + \omega^2)}.
	\end{align*}

\subsection{Proof of Theorem~\ref{theo:entropy}}

Deriving the Shannon entropy for a Gaussian distributed variable $y \sim \mathcal{N}(\mu, \text{Var})$ yields
    
	\begin{align*}
		H_s(y) & = - \int p(y')\log p(y')dy'  \\
		& = - \mathbb{E}\left[ \log \mathcal{N}(\mu, \text{Var})  \right] =\mathbb{E} \left[ \log \left[ \frac{1}{\sqrt{2 \pi \text{Var}}} \exp \left(-\frac{1}{2 \text{Var}}(x-\mu)^2 \right)  \right]  \right] \\
		& = \frac{1}{2}\log\left(2\pi\text{Var}\right) + \frac{1}{2\text{Var}}\mathbb{E}\left[(x-\mu)^2\right]  = \frac{1}{2}\left( \log (2 \pi \text{Var}) +1 \right).
	\end{align*}

\bibliography{sample-now.bib}

\end{document}